\documentclass[prd,aps,amsfonts,eqsecnum,superscriptaddress,nofootinbib,notitlepage,longbibliography,11pt]{revtex4-1}
\usepackage[letterpaper, left=1in, right=1in, top=0.9in, bottom=0.8in]{geometry}

\usepackage{packages}
\newcommand{\AC}[1]{ 
}
\newcommand{\FB}[1]{
}
\newcommand{\AB}[1]{ 
}
\newcommand{\PMH}[1]{
}
\newcommand{\jd}[1]{
}

\begin{document}
\vspace{-1.3cm}
\title{
Efficient unitary designs and pseudorandom unitaries from permutations
}


\begin{abstract}

In this work we give an efficient construction of unitary $k$-designs using $\tilde{O}(k\cdot\poly(n))$ quantum gates, as well as an efficient construction of a parallel-secure pseudorandom unitary (PRU). Both results are obtained by giving an efficient quantum algorithm that lifts random permutations over $\vS(N)$ to random unitaries over $\vU(N)$ for $N=2^n$. In particular, we show that products of exponentiated sums of $\vS(N)$ permutations with random phases approximately match the first $2^{\Omega(n)}$ moments of the Haar measure. By substituting either $\tilde{O}(k)$-wise independent permutations, or quantum-secure pseudorandom permutations (PRPs) in place of the random permutations, we obtain the above results. The heart of our proof is a conceptual connection between the large dimension (large-$N$) expansion in random matrix theory and the polynomial method, which allows us to prove query lower bounds at finite-$N$ by interpolating from the much simpler large-$N$ limit. The key technical step is to exhibit an orthonormal basis for irreducible representations of the partition algebra that has a low-degree large-$N$ expansion. This allows us to show that the distinguishing probability is a low-degree rational polynomial of the dimension $N$.
\end{abstract}

\author{Chi-Fang Chen}
\email{achifchen@gmail.com}
\affiliation{Institute for Quantum Information and Matter,
California Institute of Technology, Pasadena, CA, USA}

\author{Adam Bouland}
\affiliation{Department of Computer Science, Stanford University}

\author{Fernando G.S.L. Brand\~ao}
\affiliation{Institute for Quantum Information and Matter, California Institute of Technology, Pasadena, CA, USA}
\affiliation{AWS Center for Quantum Computing, Pasadena, CA}
\author{Jordan Docter}
\affiliation{Department of Computer Science, Stanford University} 
\author{Patrick Hayden}
\affiliation{Stanford Institute for Theoretical Physics}
\affiliation{Department of Physics, Stanford University}
\author{Michelle Xu}
\affiliation{Stanford Institute for Theoretical Physics}
\affiliation{Department of Physics, Stanford University}
\maketitle

\vspace{-1.1cm}
\tableofcontents\newpage

\setcounter{page}{1}
\section{Introduction}

Pseudorandom states and unitaries are fundamental tools in
quantum information.
By efficiently creating ensembles of states/unitaries that mimic the Haar measure, we can access Haar random properties without paying a cost exponential in the number of qubits.
Broadly speaking, two types of quantum pseudorandomness have been previously considered.
The first is \textit{information-theoretic} pseudorandomness, where the goal is to generate unitary or state $k$-designs, i.e., ensembles that match the first $k$ moments of the Haar measure. 
These are the natural quantum analogs of $k$-wise independent functions or permutations.
This yields information-theoretic $k$-copy security, and suffices for many applications, such as randomized benchmarking \cite{emerson2005scalable,knill2008randomized,dankert2009exact}, cryptography~\cite{divincenzo2002quantum,ambainis2009nonmalleable}, shadow tomography \cite{huang2020predicting}, communication~\cite{hayden2008decoupling,szehr2013decoupling}, and phase retrieval~\cite{kimmel2017phase}. The second is \textit{computational} pseudorandomness, where the goal is to create pseudorandom states and unitaries (PRSs and PRUs) that are computationally indistinguishable from Haar \cite{ji2018pseudorandom}. These are the quantum analogs of pseudorandom functions (PRFs) or permutations (PRPs), and have found significant applications both in quantum cryptography (e.g., \cite{kretschmer2021quantum,kretschmer2023quantum}) and in the complexity of physical systems, e.g., \cite{bouland2019computational,kim2020ghost,aaronson2024quantum,yang2023complexity}.
This relaxed security notion allows one to obtain properties impossible in information-theoretic settings, such as small ensemble (key) sizes, many copy security, and low entanglement \cite{ji2018pseudorandom,aaronson2024quantum}.

Many questions remain open in both branches of the quantum pseudorandomness family tree.
On the $k$-design side, a significant open problem has been to efficiently generate unitary $k$-designs on $n$ qubits. It is known that this task requires at least $\Omega(nk)$ quantum gates \cite{brandao2016local} but so far existing constructions have not achieved the bound, despite much work in the area ~\cite{webb2015clifford,zhu2016clifford,zhu2017multiqubit,mitsuhashi2023clifford, harrow2009random,harrow2009efficient, brandao2016local,harrow2023approximate,haferkamp2022random,hunter2019unitary, dankert2009exact,cleve2015near,onorati2017mixing,nakata2017efficient,bannai2019explicit, nakata2021quantum, bannai2022explicit,o2023explicit,kaposi2023constructing,haah2024efficient,chen2024efficient}.
The first efficient and systematic constructions of approximate unitary $k$-designs which work for large values of $k$ were given by \cite{brandao2016local}, who showed that random local circuits achieved the goal in $\CO(k^{10.5} n^2)$ time. This was followed by an improvement to $\tCO(k^5 n^2)$ by \cite{haferkamp2022random} using an improved analysis.
Very recently, this was improved to $\tilde{O}(k^2n^2)$ quantum gates by two independent works with different constructions \cite{haah2024efficient,chen2024efficient}, which is still quadratically off from the lower bound.
This stands in contrast to related $k$-wise independent objects, as optimal constructions with linear scaling in $k$ are known for quantum state designs (e.g., \cite{brakerski2019pseudo}) as well as for analogous classical objects, namely $k$-wise independent functions \cite{Joffe,wegman1981new,ALON1986fastsimple} or approximate $k$-wise independent permutations \cite{Kassabov2007Sym,caprace2023tame}. We note that linear scaling\footnote{We emphasize that this refers to the quantum gate count -- which is typically the bottleneck in applications -- rather than the scaling in the number of bits of classical randomness which was recently achieved by \cite{o2023explicit}. } in $k$ has previously been shown\footnote{We note there have also been plausible arguments that certain continuous time Brownian motions should attain linear scaling \cite{jian2023linear} but the cost of simulating them on a quantum computer remains to be analyzed.} only in restricted cases, such as the limit of large local dimension \cite{haferkamp2021improved}, or if the number of moments matched is small ($k\ll\CO(\sqrt{n})$) \cite{nakata2017efficient}. This problem is not only a fundamental and natural question in theoretical computer science, but has also gathered the attention of the quantum gravity community, because the linear growth in circuit complexity associated with an efficient $t$-design ensemble (see e.g., \cite{roberts2017chaos}) is believed to play a role in resolving certain paradoxes in quantum gravity \cite{brown2018second}.

On the \textit{computational} pseudorandomness side, a significant open problem is to construct pseudorandom unitaries (PRUs) from standard cryptographic assumptions, such as the existence of quantum-secure one-way functions (OWFs).
While several efficient constructions of pseudorandom state ensembles are known \cite{ji2018pseudorandom,brakerski2019pseudo,aaronson2024quantum,giurgica2023pseudorandomness,jeronimo2023subset,macommunication}, progress has been much more difficult in the unitary case.
While several PRU candidates have been proposed \cite{ji2018pseudorandom}, none of them have proven security.
However, a number of objects intermediate between a PRS and a PRU have been rigorously constructed.
For example, Ananth et al. constructed pseudorandom function-like states, a generalization of a PRS that allows one to create polynomially many independent PRSs \cite{ananth2022cryptography,ananth2022pseudorandom}.
Subsequently, Lu et al. constructed what they call pseudorandom state scramblers, which are ensembles of unitaries that produce a PRS from any \textit{fixed} input state \cite{lu2023quantum}.
There has also been a recent construction of parallel-secure pseudorandom isometries between spaces of differing dimensions \cite{ananth2023pseudorandom}, as well as another variant of a pseudorandom state scrambler for real (orthogonal) states~\cite{brakerski2024real}.
However, the existence of efficient pseudorandom unitaries with general query security remains open.

In this work, we make progress on both quantum pseudorandomness open problems simultaneously. First, we give a construction of a unitary $k$-design using quantum gates scaling near-linearly with $k$, which matches the lower bound on $k$-dependence up to logarithmic factors.
\begin{thm}[Efficient quantum $k$-designs]\label{thm:intro_kdesign_from_kwise}
    There exists an efficient quantum algorithm to generate an $\epsilon$-approximate unitary $k$-design (in diamond norm) using $\tCO(k\,\poly(n))$ one and two-qubit quantum gates. The $\tCO(\cdot)$ absorbs polylogarithmic dependence on $n, k, \epsilon^{-1}.$
\end{thm}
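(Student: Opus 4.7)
The plan is to construct the $k$-design ensemble as a short product of ``random-phase-permutation'' blocks, and to analyze its statistical distance to Haar by a polynomial method in the matrix dimension $N = 2^n$. Concretely, I would take a candidate ensemble of the form
\[
U \;=\; \prod_{t=1}^{T} \exp\!\Bigl(i \sum_{s=1}^{S} \phi_{t,s}\, P_{t,s}\Bigr),
\]
where each $P_{t,s}$ is initially an independent uniformly random permutation in $S_N$, each $\phi_{t,s}$ is an i.i.d.\ random phase, and $T, S$ are at most polylogarithmic in $N$ and $k$. The first goal is to show that with truly random permutations and phases this ensemble $\epsilon$-approximates Haar on the first $k$ moments; then I would derandomize.

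The main technical step is to bound $\bigl\| \mathbb{E}[U^{\otimes k}\otimes (U^{*})^{\otimes k}] - \mathbb{E}_{\mathrm{Haar}}[V^{\otimes k}\otimes (V^{*})^{\otimes k}] \bigr\|_\diamond$ uniformly over $k \le 2^{\Omega(n)}$. My strategy is to analyze the action of a single block on the commutant of $S_N$ acting on $(\mathbb{C}^N)^{\otimes 2k}$, which is spanned by partition diagrams rather than by permutations (the latter being the commutant of $U(N)$). Random phases followed by a random permutation should project onto the symmetric-group subalgebra inside the partition algebra, matching the Weingarten structure of Haar averages. To make this quantitative I would construct an orthonormal basis for the irreducible representations of the partition algebra in which the matrix elements of the single-block moment operator admit a low-degree expansion in $1/N$. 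Combined with a $k$-query polynomial argument, this lets me express any distinguisher's advantage as a rational polynomial of degree $O(k)$ in $N$ which I can control at finite $N$ by interpolating from the much cleaner large-$N$ limit, where partition diagrams collapse to permutations plus controlled corrections.

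Next, I would derandomize by replacing each random permutation $P_{t,s}$ by a draw from an $\tilde{O}(k)$-wise almost-independent family of permutations (on $n$-bit strings), which by known constructions requires $\tilde{O}(k\,\mathrm{poly}(n))$ bits of seed and admits an $\tilde{O}(k\,\mathrm{poly}(n))$-gate quantum implementation, and likewise draw each $\phi_{t,s}$ from an $O(k)$-wise independent distribution of discretized phases. Because the moment analysis depends only on the first $k$ joint moments of the permutation and phase distributions, the same diamond-norm bound carries over, up to an additive error I absorb into $\epsilon$ by mildly rescaling $T$ and $S$.

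Finally, the gate complexity is dominated by the cost of implementing each $\tilde{O}(k)$-wise independent permutation, $\tilde{O}(k\,\mathrm{poly}(n))$, together with the cost of simulating the block Hamiltonian $H_t = \sum_s \phi_{t,s}\, P_{t,s}$; since each $P_{t,s}$ is a $1$-sparse unitary with efficiently computable nonzero entries, $H_t$ is $S$-sparse and can be exponentiated in $\tilde{O}(S\,\mathrm{poly}(n))$ additional gates by sparse-Hamiltonian simulation at constant precision. Over $T=\mathrm{poly}(n)$ blocks this totals $\tilde{O}(k\,\mathrm{poly}(n))$ gates, giving the theorem. The main obstacle I anticipate is the second step: exhibiting a partition-algebra basis whose $1/N$ expansion is both low-degree and has coefficients of controlled magnitude all the way down to $N = 2^n$ and $k$ as large as $2^{\Omega(n)}$, because naive Weingarten expansions develop denominators that blow up at small $N$ and must be cancelled structurally rather than term by term.
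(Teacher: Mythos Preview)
Your proposal is essentially the paper's approach: the same product-of-exponentials construction, the same partition-algebra framework with an explicit orthonormal basis whose coefficients are low-degree rational in $N$, the same large-$N$ interpolation via the polynomial method, and the same derandomization by $\tilde{O}(k)$-wise independent permutations and phases followed by sparse Hamiltonian simulation.

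The one piece you underspecify is \emph{why} the large-$N$ limit of your ensemble coincides with Haar. Your line ``random phases followed by a random permutation should project onto the symmetric-group subalgebra'' is not quite the mechanism; a single block does not do this. The paper instead expands the product of exponentials into a weighted sum over \emph{words} in the $Z_a$'s, argues that in the large-$N$ limit distinct nontrivial words behave as independent phased permutations (an effective freeness statement, enabled by outer $Z_L, Z_R$ symmetrization you do not include), then applies a matrix Lindeberg argument to replace the resulting independent sum by a Ginibre matrix, and finally shows Ginibre $\approx$ Haar as $k$-fold CP maps. The $1/m^\ell$ error term in the main theorem comes from this CLT step, not from the basis construction. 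Your anticipated obstacle (the basis with controlled $1/N$ expansion) is indeed the other main technical hurdle, and the paper resolves it exactly as you suggest; but you should also expect to need the freeness-plus-CLT chain, and to make your block Hamiltonian Hermitian (the paper uses $A_m = \tfrac{1}{\sqrt{2m}}\sum_a (Z_a + Z_a^\dagger)$ with $Z_a = D_z S$ rather than scalar-phase-weighted bare permutations).
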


Second, we also give a construction of a pseudorandom unitary ensemble with nonadaptive security.
\begin{thm}[Parallel PRU from quantum-secure OWF]\label{thm:intro_PRU_from_OWF}
    The existence of a one-way function secure against quantum attack implies an efficient quantum algorithm to generate parallel-secure pseudorandom unitaries.
\end{thm}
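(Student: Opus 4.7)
The plan is to bootstrap the information-theoretic construction behind Theorem~\ref{thm:intro_kdesign_from_kwise} into a PRU by replacing the random permutations (and random phases) with quantum-secure pseudorandom objects. The starting point is the standard cryptographic chain: a quantum-secure one-way function yields a quantum-secure pseudorandom generator, hence a quantum-secure pseudorandom function (via GGM, whose quantum security is due to Zhandry), and finally a quantum-secure pseudorandom permutation on $[N]$ via Luby--Rackoff with quantum-secure PRF round functions (again established by Zhandry). The random phases on permutations can likewise be instantiated by a quantum-secure PRF keyed on the name of the permutation. Thus, under the OWF assumption alone, we can efficiently sample from an ensemble of the form $\prod_j \exp\!\bigl(i \sum_{\pi \in \vS(N)} z_{j,\pi}\, \mathrm{PRP}_{k_j}(\pi)\bigr)$ whose sample complexity is only polynomial in $n$.

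The security reduction is a two-step hybrid. In the first hybrid, I would replace every PRP invocation by a truly random permutation in $\vS(N)$ and every PRF invocation by a truly random function. Any noticeable change in the acceptance probability of a parallel-query quantum adversary $\mathcal{A}$ must give an efficient quantum distinguisher against either the PRP or the PRF, contradicting the OWF assumption. In the second step, I invoke the moment-matching statement underlying Theorem~\ref{thm:intro_kdesign_from_kwise}: with truly uniform permutations and phases, the construction is a $2^{\Omega(n)}$-design to exponentially small error in diamond norm, which is more than enough to fool any polynomial-query adversary, in particular an adversary restricted to parallel queries. Chaining the two hybrids gives computational indistinguishability from Haar under parallel polynomial-query attack.

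The main obstacle, and the reason the information-theoretic lifting theorem of this paper is the essential ingredient, lies in making the cryptographic reduction \emph{go through with only oracle access} to the underlying PRP. Because the construction is a product of a small (polylogarithmic) number of layers of the form $\exp(i \sum_\pi z_\pi \pi)$, each layer can be implemented by controlling a single (parallel) block of PRP queries, using Hamiltonian simulation of a random-phase sum of permutations as in the $k$-design proof. This means a parallel-query adversary on the unitary gets translated into a polynomial-size \emph{parallel}-query adversary on the PRP; the quantum security of Luby--Rackoff PRPs against such parallel query attacks is exactly what we need. Any attempt to use the construction adaptively (for example, allowing $\mathcal{A}$ to interleave queries with its own computation that depends on intermediate state) would require a PRP simulator that can answer adaptive queries in superposition at intermediate points of the circuit, and this is precisely what breaks in the reduction. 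This gap is why the theorem is stated for parallel (nonadaptive) security rather than full adaptive PRU security.
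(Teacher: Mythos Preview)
Your two-step hybrid is essentially the paper's argument: substitute PRPs (and PRFs for the phases) for the truly random phased permutations, use a hybrid against PRP/PRF security, then invoke the information-theoretic design bound (Theorem~\ref{thm:parallelPRU}) for the second step. So the skeleton is right.

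Two points of confusion are worth fixing. First, the displayed construction is garbled: the exponent is a sum over a \emph{small} number $m$ of independent phased permutations $\vZ_a=\vD_z\vS_a$ (with $m$ constant or polylogarithmic), not a sum over all $\pi\in\vS(N)$, and the object exponentiated is the Hermitian $\vA_m=\frac{1}{\sqrt{2m}}\sum_a(\vZ_a+\vZ_a^\dagger)$.

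Second, and more substantively, your diagnosis of \emph{why} the result is parallel-only is inverted. The cryptographic step is not the bottleneck: implementing $\vV$ via Hamiltonian simulation/QSVT makes polynomially many \emph{adaptive} quantum queries to each PRP and its inverse (Proposition~\ref{prop:cost_exponential}), so a parallel adversary on $\vV$ becomes an adaptive adversary on the PRPs---and Zhandry's quantum-secure PRP handles exactly that. What actually restricts the theorem to parallel security is the \emph{information-theoretic} hybrid: Theorem~\ref{thm:parallelPRU} only controls the diamond norm $\lVert\CN_{2k,\vV}-\CN_{2k,\vU_{Haar}}\rVert_\diamond$, which is parallel-query distinguishability. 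The large-$N$/polynomial-method argument of the paper does not directly give adaptive closeness of the truly-random construction to Haar; that would require new ideas. Your final paragraph attributes the limitation to the PRP reduction, which is the wrong step.
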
 

To achieve the above results, we demonstrate that the following meta-result:
\begin{align}
    \text{\textit{ (Pseudo)random permutations can be efficiently lifted to (pseudo)random unitaries}.}
\end{align}
More formally, we show there is an efficient quantum algorithm which, given black-box access to random permutations (and their inverses) in $\vS(N)$ for $N=2^n$, creates an ensemble of $n$-qubit unitaries that is close to the Haar measure in an appropriate sense.
In particular, we show that the ensemble is an approximate unitary $k$-design for a large value of $k$, namely $k=2^{\Omega(n)}$.
Our main results are then obtained by substituting different forms of pseudorandomness for the black box permutations from $\vS(N)$. To obtain our $k$-design result, we show that it suffices to substitute in approximate $\tilde{O}(k)$-wise independent permutations~\cite{Kassabov2007Sym,caprace2023tame}. (See~\autoref{thm:efficient_k_wise}.) Here, the key point is that high-precision $k$-wise independent permutations are known to be implementable in $\CO(\poly(n)k)$-time, so this efficiency then directly lifts to our unitary construction. To obtain our parallel PRU, we substitute in quantum secure pseudorandom permutations (PRPs, inverse allowed), which can be derived from quantum-secure one-way functions \cite{zhandry2016note}.

\subsection{Main results}

To describe our main construction, the basic building blocks are \textit{random phased permutations},
random permutation with random \textit{complex} signs:
\begin{align}
    \vZ &:= \vD_{z} \cdot \vS,\quad \text{where}\notag \tag{random phased permutations}\\
    \vS&\stackrel{unif}{\sim} \vS(N) \tag{uniformly random permutations on $N$ elements}\\
    \vD_{z}&: (\vD_{z})_{ij} = \delta_{ij} z_i\quad\text{where} \quad z_i\stackrel{i.i.d.}{\sim} \vU(1).\tag{random diagonal complex phases}
\end{align}
While we will eventually pursue pseudorandomness, in most of our discussions, it will suffice to treat the $\vZ$ as truly random. Eventually, they will be replaced as needed by suitable classical pseudorandom counterparts.

The central object we build from random phased permutations $\vZ$ is the matrix exponential
\begin{align}
    \e^{\ri \theta_m \vA_{m}} \quad \text{for}\quad \theta_m = \CO(1),
\end{align}
where these $2m$-sparse Hermitian matrices $\vA_m$ are made of i.i.d. copies of $\vZ$ and their adjoints
\begin{align}
        \vA_m &:= \frac{1}{\sqrt{2m}}\sum_{a=1}^m (\vZ_a+\vZ^{\dagger}_a) \quad \text{where}\quad \vZ_a\stackrel{i.i.d.}{\sim} \vZ,
\end{align}
which can be thought of as the adjacency matrix for a random graph weighted by random phases. 

The main result of this work is that the product of a few independent copies of these exponentiated sparse random matrices $\e^{\ri \theta_m \vA_{m}}$, left and right multiplied by independent phased permutations $\vZ_L$ and $\vZ_R$, approximates a Haar random unitary for quantum computers making nonadaptive (i.e., parallel) queries. Formally, this nonadaptive query access model considers the quantum channel associated with the $k$-fold tensor product of random unitary $\vU$
\begin{align}
    \CN_{2k,\vU}[\vrho]:= \BE_{\vU} [ \vU^{\otimes k}\vrho\vU^{\dagger \otimes k} ] \quad \text{with the diamond norm}\quad     \normp{\CN_1-\CN_2}{\diamond}: = \normp{ (\CN_1-\CN_2) \otimes \CI}{1-1},
\end{align}
to quantify the probability of distinguishing our construction from Haar using \textit{arbitrary} entangled inputs and measurements. We say that an ensemble of unitaries is an $\epsilon$-approximate $k$-design (see, e.g.,~\cite{low2010pseudo}) if the induced channel is close to the $k$-fold Haar channel in the diamond norm. That is, no possible quantum experiment (with arbitrary computation and ancillas) can distinguish the ensemble from the Haar measure given parallel access to $k$ copies of the unitary.

\begin{defn}[Approximate unitary $k$-designs]\label{defn:approx_kdesign}
    An ensemble of unitaries $\vU$ is an $\epsilon$-approximate $k$-design if $\normp{\CN_{2k,\vU}-\CN_{2k,\vU_{\text{Haar}}}}{\diamond} \leq \epsilon,$ where $\vU_{\text{Haar}}$ is drawn from Haar random unitaries $\vU(N)$.
\end{defn}

We can now state the main technical result: the product of exponentials of sums of random phased permutations is an $\epsilon$-approximate $k$-design for large values of $k$ and small values of $\epsilon$.

\begin{restatable}[Unitary designs from product of sparse exponentials]{thm}{parallelPRU}\label{thm:parallelPRU}
    For each $m,\ell \in \mathbb{Z}^{+}$, let $\vV$ be the $N$-dimensional random unitary formed by a product of the $\ell$ independent exponentials of the random matrices $\vA^{(1)}_m, \vA^{(2)}_m,\cdots, \vA^{(\ell)}_m$, left and right multiplied by random permutations $\vZ_L, \vZ_R$, i.e.
    \begin{align}
        \vV:=\vZ_L \L( \prod_{j=1}^{\ell} \e^{\ri \theta_m \vA^{(j)}_m} \R)\vZ_R,
    \end{align}
    for some $m$-dependent angle $\theta_m= \CO(1)$. Then, $\vV$ is an approximate $k$-design in diamond distance. More precisely, for $N = 2^n$,
    \begin{align}
        \normp{\CN_{2k,\vV}-  \CN_{2k,\vU_{Haar}}}{\diamond} \le \CO\L(\frac{k^8\ell^8(m^4n^8+ n^{8})}{N}+\frac{k^4}{m^{\ell}}\R).
    \end{align}
\end{restatable}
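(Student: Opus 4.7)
My overall strategy is to reduce the diamond-norm bound to an operator-norm bound on the $2k$-th moment operator
\[ M_\vV := \BE_\vV\bigl[\vV^{\otimes k}\otimes \vV^{*\otimes k}\bigr], \]
and then control this operator in two stages: an exact analysis in the large-$N$ limit, followed by a polynomial-method interpolation down to finite $N$.  The product structure $\vV = \vZ_L\prod_{j=1}^\ell \e^{\ri\theta_m \vA_m^{(j)}} \vZ_R$ together with mutual independence of the $\ell+2$ factors factorizes
\[ M_\vV = M_{\vZ_L}\cdot \bigl(M_{\e^{\ri\theta_m\vA_m}}\bigr)^\ell\cdot M_{\vZ_R}. \]
Averaging over the i.i.d.\ uniform phases in $\vZ=\vD_z\vS$ forces the left and right index multisets to agree, and averaging over $\vS$ projects onto the commutant of $\vS(N)$.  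Together, these confine the problem to the \emph{partition algebra} acting on $k$ copies of $\mathbb{C}^N$, drastically reducing the effective space on which one must control operator norms.

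Second, at $N\to\infty$ I would analyze the middle factor.  There, $\vA_m=\frac{1}{\sqrt{2m}}\sum_a (\vZ_a+\vZ_a^\dagger)$ behaves like a sum of $2m$ free Haar unitaries (with Kesten--McKay-type spectrum).  Choosing $\theta_m=\CO(1)$ appropriately, I expect the large-$N$ limit $M^{\infty}_{\e^{\ri\theta_m\vA_m}}$ to act as the identity on the Haar-invariant (i.e.\ $\vS(k)$-image) subspace while having spectral radius at most $1-\Omega(1/m)$ on its orthogonal complement within the partition algebra.  Composing $\ell$ independent copies then contracts this complement by a factor of order $m^{-\ell}$; together with the dimension of the relevant invariant subspaces this yields the $k^4/m^\ell$ term.

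Third, the key technical step is to lift the large-$N$ bound to finite $N$ via the polynomial method.  I would exhibit an orthonormal basis for the irreducible representations of the partition algebra in which each matrix element of a single factor's moment operator is a \emph{rational function of $N$} of degree at most $\mathrm{poly}(k,m,n)$.  Given such a low-degree rational dependence, the deviation of each finite-$N$ matrix element from its large-$N$ value is $\CO(\mathrm{poly}(k,m,n)/N)$, and propagating this error through the $\ell$-fold composition yields the first term $k^8\ell^8(m^4n^8+n^8)/N$.

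The main obstacle is the polynomial-method step: constructing the orthonormal basis with controlled rational-polynomial $N$-dependence, and tightly bounding the degrees to reproduce the specific exponents in $k^8\ell^8(m^4n^8+n^8)$ rather than something weaker.  A secondary challenge is the free-probabilistic combinatorics of Step~2: rigorously establishing the $\Omega(1/m)$ spectral gap on the nontrivial partition-algebra irreps of $M^\infty_{\e^{\ri\theta_m\vA_m}}$ from the Kesten--McKay-like spectral statistics of $\vA_m$ and from a careful choice of $\theta_m$.
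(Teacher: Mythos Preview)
Your outline shares the paper's architecture (partition-algebra reduction, large-$N$ analysis, polynomial-method interpolation), but the mechanism you propose for the $k^4/m^\ell$ term has a genuine gap. A one-step spectral radius of $1-\Omega(1/m)$ composed $\ell$ times yields $(1-\Omega(1/m))^\ell = e^{-\Omega(\ell/m)}$, not $m^{-\ell}$; these differ dramatically unless $m=\CO(1)$. The paper does \emph{not} iterate a factor-by-factor contraction. Instead it expands the \emph{entire} $\ell$-fold product at once as $\sum_{\vec{i}} w_{\vec{i}}\,\vW_{\vec{i}}$ with $\sum_{\vec{i}}|w_{\vec{i}}|^2=1$ and $\max_{\vec{i}}|w_{\vec{i}}|\le m^{-\ell/2}$; here $\theta_m$ is chosen as a zero of the Kesten--McKay characteristic function precisely to kill the identity word in each factor, guaranteeing that the product words $\vW_{\vec{i}}$ are all distinct. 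At $N\to\infty$, asymptotic freeness of phased permutations (Lemma~4.2) lets one replace the correlated $\vW_{\vec{i}}$ by i.i.d.\ $\vZ_{\vec{i}}$; a matrix Lindeberg argument then compares $\sum_{\vec{i}} w_{\vec{i}}\vZ_{\vec{i}}$ to a Ginibre matrix $\vG$ with error $\sum_{\vec{i}}(4k|w_{\vec{i}}|)^4 \lesssim k^4/m^\ell$, and finally $\vG \stackrel{N\to\infty}{=}\vU_{\text{Haar}}$ as $k$-fold channels. The $m^{-\ell}$ thus comes from having $m^\ell$ effectively independent summands in a CLT, not from iterating a spectral gap.

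The finite-$N$ step is also organized differently from your proposal. Rather than bounding matrix elements of a single factor's moment operator and then propagating through an $\ell$-fold product (which would worsen the $\ell$-dependence), the paper works directly with the scalar distinguishing probability $f_{\vrho,\vO}(1/N')=\tr[\vO^{(N')}(\CN_1^{(N')}-\CN_2^{(N')})\vrho^{(N')}]$, extends $\vrho,\vO$ across dimensions via the partition-algebra basis, truncates the exponentials at degree $d_1=\CO(\sqrt{m}\log(k\ell/\epsilon))$, and applies Markov's inequality \emph{once} to the whole rational function. The total pole count $\CO((k\ell d_1)^2)$ is what produces the $(k^2\ell^2 d_1^2)^4/N\sim k^8\ell^8 m^4 n^8/N$ shape. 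Separately, your reduction of the diamond norm to the operator norm of $M_{\vV}$ is exactly where spectral approaches typically pay a $\log(N^k)=\CO(nk)$ factor; the paper avoids this by controlling $f_{\vrho,\vO}$ directly and using the left/right $\vZ$-symmetrization to cap the effective ancilla dimension at $\max_{\vec{\lambda}} d_{\vec{\lambda}}$, which is independent of $N$ (Lemma~4.11).
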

The point is that the ensemble is an approximate $k$-design for a \textit{superpolynomial} value of $k$, using only very few, specifically $\tCO(m\ell)$, random permutations from $\vS(N)$. Since sampling uniformly random permutations requires exponentially many bits of randomness, we are not evading known lower bounds on the cardinality of a unitary design \cite{brandao2016local,roberts2017chaos}.
Crucially, however, the number of iterations $\ell$ can be small. For example, setting $m=2$ and $\ell=n$ yields an approximate  $k$-design for very large value of $k=2^{\Omega(n)}$, but
even $m=2$ and $\ell = \log^2(n)$ suffices for a superpolynomial design; on the contrary, spectral gap approaches (e.g.,~\cite{brandao2016local}) to unitary designs often cost $\Omega(nk)$ rounds of products. 

While truly random permutations are still costly to sample from, the key point is that by substituting \emph{pseudorandom} permutations in their place, we establish the link between classical pseudorandom permutations and quantum pseudorandom unitaries, as seen from the following two corollaries, establishing the aforementioned~\autoref{thm:intro_kdesign_from_kwise} and~\autoref{thm:intro_PRU_from_OWF}. First, we consider substituting information-theoretically secure pseudorandom permutations (i.e., $k$-wise independent permutations) in place of the truly random permutations to obtain an efficient unitary $k$-design. This requires choosing suitable parameters $m$ and $\ell$ (see~\autoref{cor:k_design_from_k_wise_full} for completeness).
\begin{restatable}[Unitary $k$-designs from $k'$-wise independence (informal)]{cor}{wisetodesign}\label{cor:k_design_from_k_wise}
There is a constant $c$ such that for $k \le 2^{cn}$, an $\epsilon$-approximate quantum unitary $k$-design can be efficiently implemented by applying our construction with
 $k'$-wise independent (discrete) phases and $k'$-wise independent permutations in place of the truly random phases/permutations, where
 \begin{align}
    k' = \CO(k\log(k/\epsilon)).
\end{align}
\end{restatable}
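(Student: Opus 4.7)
The plan is to pick parameters in Theorem~\ref{thm:parallelPRU} so the truly random construction is already an $(\epsilon/2)$-approximate $k$-design, then argue that only polynomial moments of bounded degree in each individual permutation and phase enter the diamond-norm bound, so that $k'$-wise independent substitutes reproduce those moments exactly with $k' = \CO(k\log(k/\epsilon))$.

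First I would fix $m = 2$ and $\ell = \lceil \log_2(8 k^4/\epsilon)\rceil = \CO(\log(k/\epsilon))$, so $k^4/m^\ell \le \epsilon/4$. The first term $k^8\ell^8 n^8/N$ of Theorem~\ref{thm:parallelPRU} is at most $\epsilon/4$ whenever $2^n$ dominates $\poly(n)\cdot k^8/\epsilon$, which is guaranteed by $k \le 2^{cn}$ for a small enough absolute constant $c$ (and $\epsilon$ not exponentially small). With truly random $\vZ_L, \vZ_R$ and truly random $\vZ_a$ inside each $\vA_m^{(j)}$, the unitary $\vV$ is then an $(\epsilon/2)$-approximate $k$-design.

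To reduce to bounded-degree moments, I would truncate each exponential $\e^{\ri\theta_m \vA_m^{(j)}}$ at Taylor order $P = \CO(\log(k\ell/\epsilon)) = \CO(\log(k/\epsilon))$. On the high-probability event $\{\|\vA_m\| = \CO(1)\}$, the operator-norm error per exponential decays as $(\theta_m\|\vA_m\|)^{P+1}/(P+1)!$, so this choice of $P$ bounds the total diamond-norm truncation error across all $2k\ell$ exponentials by $\epsilon/8$. In the truncated $\tilde\vV$, each inner permutation $\vZ_{a,j}$ appears at most $P$ times per copy of $\tilde\vV$, and hence at most $2kP$ times in $\tilde\vV^{\otimes k}\otimes\tilde\vV^{*\otimes k}$; $\vZ_L$ and $\vZ_R$ appear $2k$ times each. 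Therefore each entry of $\BE[\tilde\vV^{\otimes k}\otimes \tilde\vV^{*\otimes k}]$ is a polynomial of degree at most $k' := 2kP = \CO(k\log(k/\epsilon))$ in each individual permutation matrix, and of degree at most $2k \le k'$ in each individual phase.

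I would then invoke the defining property of $k'$-wise independent permutations, together with a $k'$-wise independent hash over $r$-th roots of unity with $r > k'$: any polynomial of degree at most $k'$ in the permutation or phase entries has the same expectation under the pseudorandom distribution as under the uniform one. Thus the diamond-norm bound of Theorem~\ref{thm:parallelPRU} carries over (up to the $\epsilon/8$ truncation error), yielding an $\epsilon$-approximate $k$-design. Each permutation can be applied as a reversible quantum circuit of size $\tCO(k'\,\poly(n))$ via \cite{Kassabov2007Sym,caprace2023tame}, and analogously for the hashed phases; with only $\CO(m\ell)$ permutations in total, the overall gate count is $\tCO(k'\,\poly(n)) = \tCO(k\,\poly(n))$. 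The main obstacle I anticipate is transferring the high-probability bound $\|\vA_m\| = \CO(1)$ from the uniform case to the $k'$-wise independent case; my plan is to use $\|\vA_m\|^{2P} \le \mathrm{tr}(\vA_m^{2P})$ and note that $\mathrm{tr}(\vA_m^{2P})$ is itself a polynomial of degree $2P \le k'$ in the permutation entries, so $k'$-wise independence automatically inherits the uniform-case spectral tail bound (akin to that of random $2m$-regular graphs).
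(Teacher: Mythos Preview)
Your approach is essentially the paper's: fix $m=2$, $\ell=\CO(\log(k/\epsilon))$, truncate each exponential at degree $\CO(\log(k\ell/\epsilon))$, and observe that the resulting $k$-fold channel depends on at most $k'=\CO(k\log(k/\epsilon))$ copies of each $\vZ_a$, so $k'$-wise independence reproduces it exactly. The paper phrases the degree bound via the QSVT query count (Proposition~\ref{prop:cost_exponential}) rather than Taylor truncation, but the arithmetic is the same (cf.\ Lemma~\ref{lem:f1_truncation}).

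Two small points. Your anticipated obstacle is a nonissue: since each $\vZ_a$ is unitary, $\|\vA_m\|\le \tfrac{2m}{\sqrt{2m}}=\sqrt{2m}$ \emph{deterministically}, so the truncation bound holds verbatim for the pseudorandom ensemble and no trace-method tail bound is needed. Second, your moment-matching must also cover inverses, as $\vA_m$ contains $\vZ_a^\dagger$; this follows because $\vS^{-1}=\vS^T$ for permutations and partial transpose is linear, so matching $\BE[\vS^{\otimes k'}]$ automatically matches all mixed $\vS/\vS^T$ tensors (Eq.~\eqref{eq:SST}). One thing you do not address: efficient explicit $k'$-wise independent permutations are only \emph{approximate}, and the paper closes this gap via Theorem~\ref{thm:approx_to_exact} of~\cite{Alon2012Almostkwise}, which upgrades them to exact at the cost of demanding inverse-exponential precision.
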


Here, we are making key use of the fact that our construction only relies on $k'=\tilde{\CO}(k)$-th moments of the permutations via efficient Hamiltonian simulation algorithms (e.g. \cite{berry2015simulating,gilyen2019quantum}).
Thus our result ``lifts''
the efficient construction of $k'$-wise independent permutations and functions to the construction of unitary $k$-designs. This is in a similar spirit to recent results of Brakerski and Shmueli that lift $2k$-wise independent functions to approximate quantum state $k$-designs \cite{brakerski2019pseudo}.

To achieve an efficient unitary design, we now need to leverage the fact that
$k'$-wise independent permutations and functions can be implemented in only $\CO(\poly(n)k')$ time.
 However, one caveat is that the best known classical $k'$-wise independent permutations (\autoref{defn:classical_kwise}) are only approximately $k'$-wise independent. Fortunately, explicit, high-accuracy classical $k$-wise independent permutations do exist, which are of sufficiently low error to not affect the construction.
 This gives us unitary $k$-designs that are nearly algorithmically and information-theoretically optimal (in terms of $k$ dependence and in the diamond distance). See \autoref{sec:optimal_k_design} for further discussion.
\begin{cor}[Efficient quantum $k$-designs with almost linear $k$-scaling] \label{cor:efficient-k-designs}
    There is a constant $c$ such that for $k \le 2^{cn}$, an $\epsilon$-approximate unitary $k$-design can be implemented using 
\begin{align}
\tCO(\poly(n) k )&\quad \text{one and two-qubit gates}\quad \text{and}\quad \CO(nk) \quad \text{bits of classical randomness}
\end{align}
where the notation $\tCO$ absorbs factors polylogarithmic in $n,k$ and $1/\epsilon.$ Thus, the $k$-dependence in the gate complexity and the seed length are both (nearly) optimal. 
\end{cor}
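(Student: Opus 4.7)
The plan is to combine the meta-result in Corollary~\ref{cor:k_design_from_k_wise} with known efficient implementations of $k'$-wise independent permutations and the standard sparse Hamiltonian simulation algorithm.

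First I would fix parameters in Theorem~\ref{thm:parallelPRU} as $m = 2$ and $\ell = \Theta(\log(k/\epsilon))$, so that the second error term $k^4/m^\ell = O(\epsilon)$. Restricting to $k \le 2^{cn}$ for a sufficiently small absolute constant $c>0$, the first error term $k^8 \ell^8(m^4n^8+n^8)/N$ is also $O(\epsilon)$ because $N = 2^n$ grows much faster than $k^8$ times polylogarithmic factors in $n,k,1/\epsilon$. Hence the construction of Theorem~\ref{thm:parallelPRU} uses only $m\ell = \tilde{O}(1)$ independent phased permutations (plus the two outer permutations $\vZ_L,\vZ_R$). By Corollary~\ref{cor:k_design_from_k_wise}, it then suffices to replace each truly random permutation and each random phase by a $k'$-wise independent counterpart with $k' = O(k\log(k/\epsilon)) = \tilde{O}(k)$.

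Next I would budget the gate complexity in three parts. (i) Each $k'$-wise independent permutation on $N = 2^n$ elements can be implemented using $\tilde{O}(\poly(n)\, k')$ one- and two-qubit gates from the explicit constructions of~\cite{Kassabov2007Sym,caprace2023tame}, and similarly for sampling a $k'$-wise independent discrete phase of appropriate bit precision. (ii) Each factor $\e^{\ri\theta_m \vA_m^{(j)}}$ is the exponential of a $2m$-sparse Hermitian matrix with $O(1)$ spectral norm, so by sparse Hamiltonian simulation~\cite{berry2015simulating,gilyen2019quantum} it can be implemented using $\poly(n,\log(1/\epsilon))$ oracle queries, each of which reduces to $O(1)$ calls to the underlying phased permutations from part (i). (iii) There are $\ell = \tilde{O}(1)$ such factors in the product, plus the outer permutations $\vZ_L, \vZ_R$. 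Multiplying these three budgets yields a total gate count of $\tilde{O}(\poly(n)\, k)$.

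Finally, the randomness count follows the same organization: each of the $O(m\ell) = \tilde{O}(1)$ $k'$-wise independent permutations on $[N]$ and phase ensembles has seed length $O(k'\log N) = O(k' n) = \tilde{O}(kn)$, so the total seed length is $\tilde{O}(nk)$ as claimed. The main subtlety — and what I expect to be the main obstacle — is showing that the classical constructions are only \emph{approximately} $k'$-wise independent and verifying that the small per-permutation error does not blow up after propagating through Hamiltonian simulation and the $\ell$-fold product. This is a triangle-inequality accounting step: set each permutation error below $\epsilon/(m\ell \cdot \text{sim.\ queries})$ so that the aggregate diamond-norm deviation from Theorem~\ref{thm:parallelPRU} remains $O(\epsilon)$, using only polylogarithmic overhead in the seed and gate counts. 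This accounting is carried out in~\autoref{sec:optimal_k_design}.
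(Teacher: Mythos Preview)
Your proposal follows the paper's approach closely and the overall structure is correct: fix $m=2$, $\ell=\Theta(\log(k/\epsilon))$, invoke Corollary~\ref{cor:k_design_from_k_wise} with $k'=\tilde O(k)$, and plug in the explicit $k'$-wise permutation constructions of~\cite{Kassabov2007Sym,caprace2023tame} together with the QSVT-based simulation cost of Proposition~\ref{prop:cost_exponential}.

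The one place where your description is not quite right is the handling of the \emph{approximate} $k'$-wise permutations. You describe it as ``set each permutation error below $\epsilon/(m\ell\cdot\text{sim.\ queries})$ and do a triangle inequality.'' That does not work as stated: the classical approximate-$k'$-wise guarantee (Definition~\ref{defn:classical_kwise}) only controls the TV distance of $k'$-tuple marginals, whereas the construction makes quantum superposition queries to both $\vS$ and $\vS^{-1}$, so a marginal-level bound does not directly propagate to a diamond-norm bound. The paper's route (in \autoref{sec:optimal_k_design}) is to invoke Theorem~\ref{thm:approx_to_exact} of~\cite{Alon2012Almostkwise}: an $\epsilon'$-approximate $k'$-wise permutation ensemble is $O(\epsilon' N^{4k'})$-close in TV \emph{on the full distribution of permutations} to some \emph{exact} $k'$-wise ensemble. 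TV closeness on the full distribution then transfers directly to the quantum distinguishing probability, and exact $k'$-wise independence handles inverses via~\eqref{eq:SST}. This forces you to take $\epsilon'=O(\epsilon N^{-4k'})$, i.e., exponentially small in $nk'$; crucially, the constructions of~\cite{Kassabov2007Sym,caprace2023tame} have only $O(\log(1/\epsilon'))$ dependence (Theorem~\ref{thm:efficient_k_wise}), so this still costs $\tilde O(\poly(n)k)$ gates and $\tilde O(nk)$ random bits. Once you replace your triangle-inequality sketch with this Alon--Lovett step, the argument is complete.
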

Previously, the best known systematic construction of unitary $k$-designs for superpolynomially large $k$ has gate complexity $\tCO(n^2k^2)$ ~\cite{haah2024efficient,chen2024efficient}. 

Moving on to computational pseudorandomness, we substitute cryptographically secure pseudorandom permutations (i.e., PRPs) in place of the truly random permutations to obtain a parallel-secure PRU:

\begin{cor}[Quantum secure-PRP implies parallel-PRU] Suppose quantum-secure pseudorandom permutations exist. Then our construction $\vV$ gives a pseudorandom unitary under nonadaptive queries.
\end{cor}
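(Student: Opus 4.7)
The plan is a standard hybrid argument bridging \autoref{thm:parallelPRU}, which guarantees statistical closeness to Haar when the permutations in $\vV$ are truly random, to the computational setting where those permutations are replaced by quantum-secure PRPs. Fix an adversary making $k = \mathrm{poly}(n)$ parallel queries to the unitary. First, instantiate \autoref{thm:parallelPRU} with, say, $m = 2$ and $\ell = \omega(\log n)$ (for concreteness $\ell = \log^2 n$), so that both error terms $k^8 \ell^8 (m^4 n^8 + n^8)/N$ and $k^4/m^\ell$ are negligible in $n$. With truly random permutations and truly random phases, the $k$-fold channel induced by $\vV$ is then within negligible diamond distance of the Haar $k$-fold channel, so no adversary --- even an unbounded one --- making $k$ parallel queries can distinguish this information-theoretic version of $\vV$ from Haar with more than negligible advantage.

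Next, replace each truly random permutation by an independently keyed instance of a quantum-secure PRP (with inverse), and replace the continuous phases $z_i$ by sufficiently fine discretized phases derived from a quantum-secure PRF on $\{0,\dots,N-1\}$ (a PRF exists under the same assumption via the standard PRP-to-PRF construction of Zhandry~\cite{zhandry2016note}). Because $\vV$ uses only $O(m\ell) = \mathrm{polylog}(n)$ independent permutation instances together with $\vZ_L,\vZ_R$, I proceed via a hybrid argument over this short list: in hybrid $i$, the first $i$ permutations are PRPs and the rest are truly random. Any efficient nonadaptive distinguisher between hybrids $i$ and $i+1$ yields an efficient quantum distinguisher for the PRP at position $i{+}1$ with non-negligible advantage, contradicting PRP security.

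The step that needs care is the efficiency of this reduction. Given oracle access to either a PRP or a truly random permutation (and its inverse) in position $i{+}1$, the reduction must simulate one application of $\vV^{\otimes k}$ on the adversary's state in quantum polynomial time. The PRPs in positions $\le i$ are sampled internally from their short seeds; the truly random permutations in positions $> i{+}1$ are instantiated by lazy sampling (equivalently, by a sufficiently high-wise independent permutation whose independence exceeds the total number of queries made across the simulation). Sparse-access Hamiltonian simulation applied to each $\vA_m^{(j)}$, which is $2m$-sparse and has $\|\vA_m^{(j)}\| = O(1)$ with overwhelming probability, implements $\e^{\ri \theta_m \vA_m^{(j)}}$ to inverse-polynomial precision using only $\mathrm{poly}(n,k)$ oracle calls per permutation. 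Composing $\ell = \mathrm{polylog}(n)$ such blocks and applying them to each of the $k$ parallel copies produces $\vV^{\otimes k}$ in quantum polynomial time.

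The main obstacle --- and the reason for the \emph{parallel} qualifier in the conclusion --- is that the Hamiltonian-simulation subroutine queries the PRP in quantum superposition, so what I invoke is the \emph{quantum-query} security of the PRP; this is precisely what Zhandry's construction provides. Because the external adversary submits all $k$ queries in a single round, every internal superposition query to the PRP occurs within one pass through the construction and the total number of such queries is bounded by a fixed polynomial, which is the regime of quantum-secure PRP indistinguishability. Promoting this argument to fully \emph{adaptive} query security would require controlling the state of the Hamiltonian-simulation internals across adaptive rounds and handling the usual obstructions to oracle reprogramming after intermediate quantum queries, which is outside the scope of the present corollary.
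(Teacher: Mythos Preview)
Your proposal is correct and follows essentially the same route as the paper's proof: implement $\vV$ efficiently via sparse Hamiltonian simulation with oracle access to the phased permutations, then run a hybrid argument over the $O(m\ell)$ independent permutation/phase instances, replacing truly random by pseudorandom one at a time.

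One conceptual point worth correcting: your last paragraph misattributes the source of the \emph{parallel} restriction. The hybrid argument and the quantum-query security of the PRP would go through just as well for an adaptive adversary --- Zhandry's PRP is secure against adaptive superposition queries, and the reduction still makes only polynomially many such queries regardless of how the outer adversary schedules its rounds. The actual bottleneck is \autoref{thm:parallelPRU} itself: it only bounds the diamond distance of the $k$-fold channel, which is exactly nonadaptive indistinguishability. Even with truly random permutations the paper does not establish adaptive closeness to Haar (and indeed the large-$N$ freeness argument of \autoref{lem:freeness_random_P} is specifically a parallel-query statement). So the parallel qualifier is inherited from the information-theoretic side, not from any subtlety in the computational reduction.
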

\begin{proof}
Implement our unitary $\vV$ using any efficient quantum algorithm given query access to polynomially many $\vZ_i$ to super-polynomial precision. We obtain computational pseudorandomness from information-theoretic pseudorandomness by replacing the random phased permutations with the pseudorandom alternatives and applying a simple hybrid argument. Note that the controlled permutations can be efficiently implemented given function queries of permutations (and their inverses).
\end{proof}

Note that the existence of quantum-secure PRPs (with inverses) only requires assuming the existence of quantum-secure pseudorandom functions~\cite{zhandry2016note}, which is a standard cryptographic assumption. Furthermore, if there exists a low-depth implementation of quantum-secure PRPs, then our PRU can also be low-depth with suitable parameters $m,\ell$ and standard quantum algorithmic implementations. 

\subsection{Proof idea}

In a nutshell, multiplying sparse matrices is an efficient way to get a dense matrix, but controlling the diamond distance from Haar requires careful analysis. Prior approaches to unitary $k$-designs have often been based on the spectral gaps of random walks (e.g.,~\cite{brandao2016local}), which bootstrap the statistical distance from a comparatively more tractable spectral gap. However, this approach necessarily requires a log-dimensional factor $\CO( \log(N^{k})) = \CO(nk)$ multiplicative blowup in the gate complexity due to the conversion from 2-norm to 1-norm. For cryptographic applications, the attacker may perform an arbitrary polynomial number of queries, so security requires a \textit{fixed} poly-time construction of a superpolynomial $k$-design, posing a fundamental barrier for spectral gap approaches.

The essence of our alternative argument can be captured in the following observation:
\begin{align} \label{eq:freeness}
    \vG \stackrel{(1)}{\approx} \frac{1}{\sqrt{m^{\ell}}} \sum^{m^{\ell}}_{i=1} \vZ_i\stackrel{(2)}{\approx}   \L(\frac{1}{\sqrt{m}} \sum^m_{i=1} \vZ^{(\ell)}_i\R)\cdots \L(\frac{1}{\sqrt{m}} \sum^m_{i=1} \vZ^{(1)}_i\R),
\end{align}
where the $\vZ_i$ are i.i.d. copies of a random phased permutation. The first approximation (1) is nothing more than a central limit theorem (CLT) established using a matrix Lindeberg argument similar to those in~\cite{chen2023sparse,chen2024efficient}, but it allows us to obtain a very nice random matrix: $\vG$ is drawn from the Ginibre ensemble~\cite{Ginibre1965}, which is a complex Gaussian matrix that is both left and right unitarily invariant. However, the CLT-type convergence rate is too slow (polynomial in the number of summands) and will incur a large $\Omega(\poly(k))$-cost scaling with the number of queries $k$.

The crux of our argument for circumventing the large number of i.i.d. copies is the second approximation (2): a product of sums reproduces the statistics of an \textit{independent} sum but using many fewer $(\tCO(m\ell))$ copies of $\vZ_i$, as stated in following loosely stated principle:
\begin{center}
\textit{In the large-$N$ limit, distinct words of permutations are effectively independent of each other.}
\end{center}

As an instructive example, when the dimension $N$ is large, the following correlated words of $\vZ_1,\vZ_2$ acting on $\ket{i}$ are almost independent of each other:
\begin{align}
    \L(\vZ_1\ket{i}, \vZ_2\ket{i}, \vZ_1\vZ_2\ket{i}, \vZ_2\vZ_1\ket{i} \R) \stackrel{dist}{\approx} \L(\vZ_1\ket{i}, \vZ_2\ket{i}, \vZ_3\ket{i}, \vZ_4\ket{i}\R).
\end{align}
Indeed, knowing $\vZ_1\ket{i}$ and $\vZ_2\ket{i}$ tells us nothing about $\vZ_2\vZ_1\ket{i}$, unless the very unlikely collision $\vZ_2\ket{i}\propto \ket{i}$ occurs. Applying this intuition to the product of sums, we can get $m^{\ell}$ independent $\vZ$s from merely $\tCO(m\ell)$ many truly independent $\vZ$s!

At first glance, the above analysis may appear strange, as if we get ``more randomness for free'' from a much smaller ($m \ell$) number of independent elements. Careful thought reveals the approximation in Eq.~\eqref{eq:freeness} is possible because we
only consider \emph{low-moment} properties of the permutations, i.e., we are fixing $k\ll N$ and then taking a large-$N$ limit.
In the $N\rightarrow\infty$ limit, the large amount of randomness in the permutations themselves effectively ``decouples'' the different terms, as collisions become vanishingly improbable. Strictly speaking, this precise statement only holds in a \emph{non-adaptive} setting, i.e., when the inputs are fixed in advance. Of course, sequential/adaptive queries would reveal correlations between these words -- for example, if one were able to query $\vZ_2$ \textit{after} knowing the result of $\vZ_1\ket{i}$, it would coincide with $\vZ_2\vZ_1\ket{i}$.
However, the key point is that under non-adaptive queries\footnote{We conjecture our ensemble also gives adaptive security, but we note this would require further proof ideas, such as defining a more refined notion of independence of different words.}, such attacks are not possible, and the words effectively decouple.

Unfortunately, proving the validity of the argument above in finite dimensions $N$ is nontrivial. There are nonasymptotic correlations, however tiny, between the distinct words because those words are made of only $\tCO(m\ell)$-many independent random phased permutations. To complete the proof, the remaining and most substantial technical argument is a framework to control those finite-$N$ corrections effectively.

In principle, since permutations are reasonably nice objects, we could brute force through the combinatorics using diagrammatic calculations. With more work, it is possible to go beyond the limit and perform a \textit{large-$N$ expansion}: for ``nice'' random matrices ensembles, the Weingarten calculus and its variants frequently yield sums over diagrams with coefficients being rational functions of $N$. To prove our results, however, it is necessary to control the total contribution coming from all orders in the $1/N$ expansion. Moreover, it is very challenging to systematically capture the fine-grained combinatorics and \textit{cancellations} required to deliver strong enough nonasymptotic bounds.

Our main strategy is the following large-$N$-interpolation principle that guides the path forward:

\begin{center}
\textit{``Nice'' random matrix properties at finite $N$ are controlled by the large-$N$ limit.}\\
\end{center}

Conceptually, we are looking for an \textit{interpolation} argument: instead of directly calculating a complicated random matrix quantity at finite $N$, we start with the much simpler large-$N$ limit, and control the finite-$N$ corrections by arguing that the function ``changes slowly'' as a function of $\frac{1}{N}$.

The crucial mathematical tool is a basic fact in polynomial approximation (see \autoref{sec:Markovs}): Markov's ``other'' inequality states that a real-valued, \textit{low-degree} polynomial $f(x)$ whose values are \textit{bounded} on an interval will have bounded derivatives. This seemingly innocent property of polynomials has found profound implications in establishing lower bounds in classical circuit~\cite{beigel1993polynomial} and quantum query complexity, a strategy referred to as the \textit{polynomial method}, e.g.,~\cite{beals2001quantum,aaronson2004quantum,barnum2001quantum,nayak1999quantum,kutin2005quantum,razborov2003quantum}. By identifying an often nonobvious interpolation parameter $x$ such that $f(x_1)$ and $f(x_2)$ correspond to the acceptance probabilities of the quantum algorithm on the two cases to distinguish, one can prove that the difference of the distinguishing probabilities $f(x_1)-f(x_2)$ is extremely small. While explicitly understanding \textit{all possible} adversarial quantum algorithms is essentially impossible, we often do have sufficient structural restrictions to guarantee that $f(x_1)-f(x_2)$ is a bounded degree polynomial which, remarkably, proves to be sufficient.

Our key insight is to draw a direct connection between the large-$N$ expansion in random matrix theory and polynomial approximation by setting the interpolation parameter to be (\autoref{fig:largeN_poly})
\begin{align}
    x = \frac{1}{N}\quad \text{to control the finite-$N$ corrections}\quad \labs{f\left(\frac{1}{\infty}\right)-f\left(\frac{1}{N}\right)},
\end{align}
which requires constructing a suitable function that fulfills the requirements of the polynomial methods, as the most substantial part of our proof.
\begin{figure}[t]
    \begin{center}
    \includegraphics[width=0.6\textwidth]{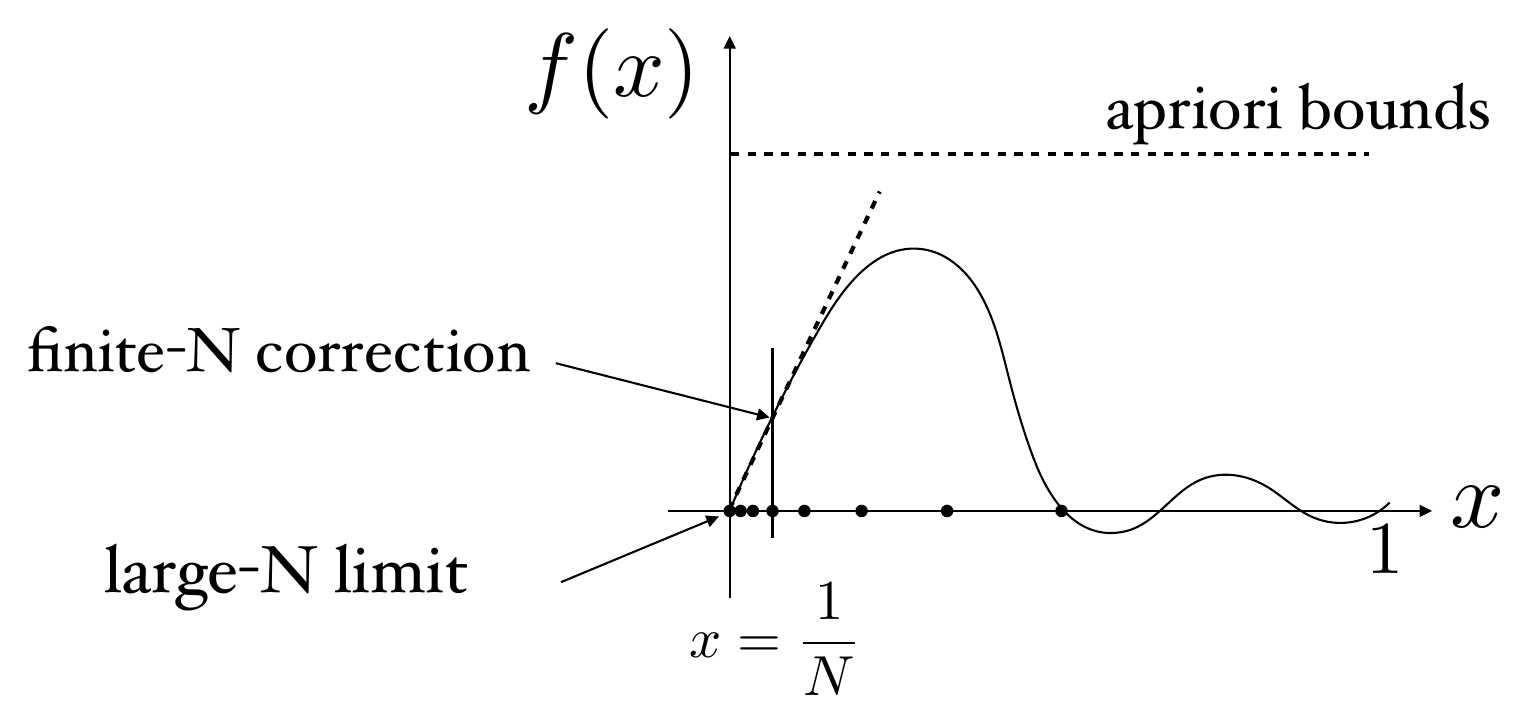}
    \end{center}
    \caption{Interpolating the $\frac{1}{N}$ expansion. For suitable quantities dependent on the dimension $N$, we control the finite-$N$ behavior by interpolating from the large-$N$ limit. This argument requires that (1) the quantity of interest be a low-degree rational function of $N$, (2) the function obeys a small apriori bound, and (3) the large-$N$ limit can be effectively calculated. 
    }
    \label{fig:largeN_poly}
\end{figure}

We further explain the above high-level intuition in the following sections.  The sum of permutations is not unitary, so we need to consider an exponential of sums, introducing an infinite series in~\autoref{sec:single_exp}; in~\autoref{sec:LargeN_main}, we argue that the distinct words $\vZ_1\vZ_3\cdots$ are independent in the large-$N$ limit; in~\autoref{sec:CLT_main}, we spell out the central limit theorem for independent sums; in~\autoref{sec:requirements_main}, we explain how to control the finite-$N$ corrections through interpolation, which is the most involved step.

\subsubsection{A single exponential as a sum of words}
\label{sec:single_exp}
To begin with, let us study one exponential of our sparse random matrices $\e^{\ri \theta\vA_m}$. The Taylor series expansion can be formally expressed as a weighted sum over \textit{words}:\footnote{This is also known as the \textit{free group} when there are no nontrivial relations between the generators.} any product of symbols $\vZ_1,\cdots ,\vZ_m$ and their inverses $\vZ_a^{-1} = \vZ_a^{\dagger}$ subject only to the trivial cancellations $\vZ_a\vZ^{\dagger}_a=\vZ^{\dagger}_a\vZ_a= \vI$, such as
\begin{align}
\vZ_1,\vZ_1^{\dagger},\vZ^{\dagger}_2\vZ_1, \vZ_1^3 \vZ_2^{\dagger} \vZ_1, \ldots
\end{align}
We denote the set of such words by $\CW(\{\vZ_a\}_{a=1}^m)$.
\begin{align}
    \e^{\ri \theta\vA_m} &= \vI + \frac{\ri\theta}{\sqrt{2m}}\sum_{a=1}^m \L(\vZ_a+\vZ^{\dagger}_a\R) - \frac{\theta^2}{2}\frac{1}{2m} \L(\sum_{a=1}^m \vZ_a+\vZ^{\dagger}_a\R)^2 + \cdots \\
    & = \sum_{i} w_i\vW_i \quad \text{where}\quad \vW_i \in \CW(\{\vZ_a\}_{a=1}^m).
\end{align}
The expression above is basically a sum over $m$ independent $\vZ_i$ and their adjoints but with higher-order corrections due to unitarity. Now, we take iterated \textit{products} of independent copies of these sparse exponentials 
\begin{align}
        \prod_{j=1}^{\ell} \e^{\ri \theta\vA^{(j)}_m} = \sum_{\vec{i}=(i_1,\cdots,i_{\ell})} w_{\vec{i}}\vW_{\vec{i}}\quad \text{where}\quad w_{\vec{i}} = (w_{i_1},\dots,w_{i_\ell}) \quad \text{and}\quad \vW_{\vec{i}}= \vW_{i_1}\cdots \vW_{i_{\ell}}.\tag*{(distinct words: $\vW_{\vec{i}}\ne \vW_{\vec{j}}$ for each $\vec{i}\ne \vec{j}$)}
\end{align}
We expect the sparsity (and the number of distinct words) to increase exponentially with $\ell$, so that merely $\ell \sim \log(N)$ iteration may already produce a very dense matrix. To see that the above product is anywhere close to Haar, the hints come from first analyzing the \textit{large-$N$ limit} where the structure drastically simplifies. 

\subsubsection{\texorpdfstring{Large-$N$ limit}{Large N limit}}
\label{sec:LargeN_main}

\autoref{lem:freeness_random_P} will establish that in the limit of infinite dimension $N$, fixing the length of the words and the number of queries $k$, all distinct nontrivial (i.e., non-identity) words can be simultaneously \textit{replaced} with \textit{independent} phased permutations. This means that we may replace our complicated, correlated sum with an independent sum in the diamond norm
\begin{align}
    \vZ_L \L(\sum_{\vec{i}} w_{\vec{i}}\vW_{\vec{i}}\R)\vZ_R&\stackrel{N\rightarrow \infty}{=} \vZ_L \L(\sum_{\vec{i}} w_{\vec{i}}\vZ_{\vec{i}}\R) \vZ_R.\tag*{(distinct words are independent in the large-$N$ limit: \autoref{lem:freeness_random_P})}
\end{align}
The additional left and right symmetrizations help prove independence in the parallel query model while also being algorithmically affordable.\footnote{We expect the result to hold even without such symmetrization.} The (phased) permutation symmetry reduces the effective dimension so much that the input state can be restricted to a space of dimension depending only on $k$ and independent of $N$, which allows us to control the statistical distance (diamond norm) without paying an $N$-dependent conversion loss, which could have been problematic when taking the large-$N$ limit. 
\subsubsection{ Central limit theorem}
\label{sec:CLT_main}
The nice feature of independent sums is that we can invoke the central limit theorem 
\begin{align}
    \sum_{\vec{i}} w_{\vec{i}}(\theta)\vZ_{\vec{i}} \stackrel{m^{\ell}\gg 1}{\approx} \vG\quad \text{since}\quad \BE[ \vZ^{(\dagger)}_i \otimes \vZ^{(\dagger)}_i] = \BE[ \vG^{(\dagger)}_i \otimes \vG^{(\dagger)}_i] \tag{in the sense of $k$-fold CP channels}
\end{align}
where $\vG$ is the \textit{Ginibre ensemble}~\cite{Ginibre1965}, a random nonhermitian matrix with i.i.d. Gaussian entries. (See \autoref{sec:notations} for the definition.) Indeed, since the second (mixed) moments match, in the limit of many summands,\footnote{ and with some requirement that weights are all reasonably small.} the fine-grained details of the weights $w_{\vec{i}}$ are washed away, and the sum converges to a Gaussian random matrix; see~\autoref{lem:lindeberg_largeN} for nonasymptotic bounds demonstrated using a Lindeberg principle~\cite{chen2023sparse}. While the Ginibre ensemble is not unitary, its effect in a $k$-fold Completely positive channel is \textit{equal} to Haar in the large-$N$ limit (\autoref{lem:Ginibre_is_Haar})
\begin{align}
    \vG &\stackrel{N\rightarrow \infty}{=} \vU_{Haar}. \tag{in the sense of $k$-fold CP channels}
\end{align}
Intuitively, the Ginibre ensemble $\vG$ has the same symmetry as a Haar random unitary (left and right unitary invariance) and also behaves like a unitary for fixed input states, as seen by the expectation $\BE[\vG^{\dagger}\vG]=\vI$. 

\subsubsection{Verifying the requirements for interpolation}
\label{sec:requirements_main}
The last and most significant step in our argument is to control the finite-$N$ corrections through an interpolation argument from the large-$N$ limit.
In particular, we choose the function of interest $f$ to be the distinguishing probability between the unitaries $\vV=\vZ_L\L(\prod_{j=1}^{\ell} \e^{\ri \theta_m\vA^{(j)}_m} \R)\vZ_R$ and $\vU_{Haar}$
by defining the key quantity
\begin{align}
    f_{\vrho,\vO}\left(\frac{1}{N}\right):=\tr\L[\vO\CN_1\vrho\R] - \tr\L[\vO\CN_2\vrho\R]\quad \text{for each}\quad \vrho\quad \text{and}\quad \vO
\end{align}
where $\CN_1 = \CN_{2k,\vV}$ and $\CN_2 = \CN_{2k,\vU_{Haar}}$ are the corresponding channels.

We aim to control the finite-$N$ distinguishing probability $f_{\vrho,\vO}(\frac{1}{N})$ from the large-$N$ limit $f_{\vrho,\vO}(\frac{1}{\infty})$ \textit{for arbitrary fixed $\vrho$ and $\vO$} that the quantum attacks may use. We spell out the structural requirements for $f_{\vrho,\vO}(\frac{1}{N})$ for interpolation:

\textbf{Low-degree rational functions of $N$} (\autoref{lem:extensions}). The random permutations are very nice objects that are defined consistently across dimensions $N'\ne N$, inducing channels $\CN_1^{(N')}$ and $\CN_2^{(N')}$. In particular, the average over permutations can be calculated using a well-defined diagrammatic expansion (\autoref{sec:diagram_Partition_algebra}), with coefficients that are low-degree rational functions of $N$.\footnote{Qualitatively, this can be regarded as the analog of Weingarten expansion in the unitary case.} However, the problem is that the test quantum state $\vrho$ and the test operator $\vO$, which are selected adversarially, do not obviously apply in other dimensions. Finding the suitable extension requires a few arguments: first, the permutation symmetry substantially reduces the number of parameters. Specifically, the Schur-Weyl duality for the commutant of permutations restricts the effective input state $\vrho$ to a much smaller object, the \textit{partition algebra} $\vP_k(N)$ (\autoref{sec:diagram_Partition_algebra}), whose \textit{algebraic structure} only depends on the number of copies $k$, in particular being \textit{independent} of $N$ as long as the dimension is large $N \ge 2k$~\cite{halverson2005partition,halverson2020set}.\footnote{In fact, the partition algebras stabilize and are all \textit{isomorphic} for large enough dimension $N\ge 2k$ (\autoref{thm:stablize}).}  
    While the partition algebra has been studied in the algebraic combinatorics literature, our argument necessitates explicitly describing the embedding with respect to the computational basis in order to verify that the embedding ``does not change too quickly'' as $N$ increases. This requires a detailed foray into the structure of the partition algebra and writing down an explicit orthogonal basis (\autoref{sec:explicit_basis}) as a diagrammatic sum to establish that the basis coefficients are rational polynomials in $\frac{1}{N}$.

    The above understanding of the algebraic structure allows us to handcraft the suitable extension $f_{\vrho,\vO}$ for other dimensions $N' \ne N$
    \begin{align}
        f_{\vrho,\vO}\left(\frac{1}{N'}\right) &=  \tr\L[\vO^{(N')}\CN^{(N')}_1\vrho^{(N')}\R]-\tr\L[\vO^{(N')}\CN^{(N')}_2\vrho^{(N')}\R]
    \end{align}
    by defining a family of test operators
    \begin{align}
        &\L(\vO^{(N')}, \vrho^{(N')} \R) \quad \text{for each }\quad 2k \le  N' \le \infty\\ \text{such that} \quad &\L(\vO^{(N)}, \vrho^{(N)} \R)= \L(\vO, \vrho\R).
    \end{align}
    The operators $\vrho^{(N)},\vO^{(N)}$ for other dimensions $N'$ relate to the original $\vrho, \vO$ in dimension $N$ by substituting the basis we found for the partition algebra. In the end, the function $f_{\vrho,\vO}(\frac{1}{N})$ is (approximately\footnote{We need to truncate the rapidly converging Taylor expansion for the exponential function.}) a rational polynomial of $N$ with degree $\poly(k)$ and poles at small integers $1,\ldots, \tilde{\CO}(2k\ell)$. 
    
    \textbf{A priori bounds.} The expression is a difference between probabilities such that
\begin{align}
    \labs{f_{\vrho,\vO}(\frac{1}{N'})} \le 2 \quad \text{for each integer} \quad N'.
\end{align}

    \textbf{Large-$N$ limits} (\autoref{lem:freeness_random_P}). The large-$N$ limits of each channel coincide a pair of nicer ones: $\CN_1$ to the sum over independent permutations $\sum w_{\vec{i}}\vW_{\vec{i}} \rightarrow \sum w_{\vec{i}}\vZ_{\vec{i}}$, and $\CN_2$ to the ``Gaussian'' model: $\vU_{Haar}\rightarrow \vG_{Ginibre}$.
\begin{align}
    \tr\L[\vO_{N'}\CN_1\vrho_{N'}\R] &\stackrel{N'\rightarrow \infty}{=}\tr\L[\vO_{N'}\CN^{(free)}_1\vrho_{N'}\R]\notag\\
    \tr\L[\vO_{N'}\CN_2\vrho_{N'}\R] &\stackrel{N'\rightarrow \infty}{=}\tr\L[\vO_{N'}\CN^{(Ginibre)}_2\vrho_{N'}\R].
\end{align}
    The independent sums $\CN^{(free)}_1$ and Gaussian $\CN^{(Ginibre)}_1$ can be compared by the Lindeberg principle~\cite{chen2023sparse,chen2024efficient}, with an error suppressed by $\poly(1/m^{\ell})$ typical in central limit theorems (\autoref{lem:lindeberg_largeN}). While the most general Lindeberg argument works in any finite dimension $N$, the large-$N$ limits simplify the calculations.

\subsection{Road map}
The remaining sections begin with a summary of the notation and the various random matrix ensembles we will encounter (\autoref{sec:notations}), followed by preliminary background (\autoref{sec:prelim}): Markov's other inequality (\autoref{sec:Markovs}), basic notions in algebra (\autoref{sec:minimal_algebra}), the representation theory of the symmetric group (\autoref{sec:rep_symmetry_group}), and lastly a self-contained introduction to the partition algebra (\autoref{sec:Parition_algebra}).  

We then derive the main lemmas: how distinct words of permutations can be replaced by independent permutations in the large-$N$ limit (\autoref{sec:asym_freeness_perm}), deriving an explicit basis for the partition algebra that is low degree in $N$ (\autoref{sec:partitionAlgebra_largeN}), then extending the test state $\vrho$ and observable $\vO$ across dimensions (\autoref{sec:extending_test_ops}). The complete proof of the main result is assembled in \autoref{sec:proof_main}.

\subsection{Related work}
We recently became aware of the independent work of Metger, Poremba, Sinha, and Yuen \cite{metger2024pseudorandom}, which obtain similar results for parallel-secure PRUs and $k$-designs via a completely different construction and analysis. 
At early stages of this work, the conceptual connection between large-$N$ limit and Markov inequality was passed between the concurrent works~\cite{Jorge_work_in_progress,chen2024efficient} through the first author. Nevertheless, in each scenario, the technical arguments needed to realize this idea appear rather different, and we include self-contained expositions.

\subsection{Acknowledgments}
We thank Chris Bowman, Dmitry Grinko, Jeongwan Haah, Aram Harrow, Jonas Haferkamp, Tom Halverson, William He, Hsin-Yuan Huang, Martin Kassabov, Yunchao Liu, Fermi Ma, Tony Metger, Quynh Nguyen, Ryan O'Donnell, Alexander Poremba, Arun Ram,  Makrand Sinha, Norah Tan, Joel Tropp, Jorge Garza Vargas, and Henry Yuen for stimulating discussions. We thank the Simons Institute for the Theory of Computing. A.B. was supported in part by the DOE QuantISED grant DE-SC0020360, the AFOSR under grant FA9550-21-1-0392, and by the U.S. DOE Office of Science under Award Number DE-SC0020266. P.H. acknowledges support from AFOSR (award FA9550-19-1-0369), DOE (Q-NEXT), CIFAR and the Simons Foundation.

\section{Notation}\label{sec:notations}
This section summarizes our notation.
We write constants and integration elements in roman font ($\e, \pi,\ri$ and $\rd t$, $\rd x$ ), scalar variables in lowercase ($a,b$), vectors in bras and kets $\ket{\psi}$, matrices in bold uppercase ($\vG, \vU$), and vectorized matrices in curly bras and kets $|\vO)$.  We use curly font for several objects: super-operators ($\CN$), sets $(\CS)$, and algorithms $(\CA)$.

\begin{align*}
      n & & \text{number of qubits} \\
  N &(=2^n) & \text{local dimension}\\
  k & &  \text{number of copies}\\
  z\in\BC & & \text{complex numbers}\\
  \BE[\cdot] & & \text{Expectation}
\end{align*}
Linear algebra:
\begin{align*}	
	\norm{\ket{\psi}}&: \quad &\text{the Euclidean norm of a vector $\ket{\psi}$}\\
 |\vO)&: \quad &\text{vectorized operator $\vO$}\\
	\norm{\vA}&:= \sup_{\ket{\psi},\ket{\phi}} \frac{\bra{\phi} \vA \ket{\psi}}{\norm{\ket{\psi}}\cdot \norm{\ket{\phi}}} \quad &\text{the operator norm of a matrix $\vA$}\\
	\vA^*&: \quad & \text{the entry-wise complex conjugate of a matrix $\vA$}\\
 	\vA^\dagger&: \quad & \text{the Hermitian conjugate of a matrix $\vA$}\\
  \vA^T&: \quad & \text{the transpose of a matrix $\vA$}\\
  	\norm{\vA}_p&:= (\tr \labs{\vA}^p)^{1/p}\quad&\text{the Schatten p-norm of a matrix $\vA$}\\
  \norm{\CN}_{p-p} &:= \sup_{\vA} \frac{\normp{\CN[\vA]}{p}}{\normp{\vA}{p}}\quad&\text{the induced $p-p$ norm of a superoperator $\CN$}\\
  \normp{\CN}{\diamond}&:=\normp{\CN\otimes \CI}{1-1}&\text{the diamond distance to capture inputs entangled with ancillas}\\
  \CN_{2k,\vO}&:=\BE \vO^{\otimes k}[\cdot] \vO^{\dagger\otimes k} &\text{the CP map associated with $k$-fold tensor product of $\vO$}
\end{align*}
Representation theory:
\begin{align*}
  \vec{\lambda}&\vdash m & \text{integer partitions $(\lambda_1,\lambda_2,\cdots)$ of $m$ such that $\sum_i \lambda_i = m$, $\lambda_{1}\ge \lambda_2\ge\cdots$}\\
  \vec{\lambda}^*& = (\lambda_2, \cdots) & \text{integer partitions with the first entry removed}\\
  \Pi &\vdash [2k]  &\text{set-partitions for $2k$ elements}\\
  \vO_{\Pi} & & \text{diagram corresponding to set-partition $\Pi$ acting on $(\BC^{N})^{\otimes k}$}\\
  \vO'_{\Pi} & & \text{like $\vO_{\Pi}$, but distinct blocks have distinct indices}\\
  \vP_k(N)&&\text{the partition algebra for $(\BC^{N})^{\otimes k}$}\\
  \Lambda_{k,N}& & \text{set of integer partitions corresponding to irreps of $\vP_{k}(N)$}\\
  \vJ_m && \text{span of diagrams with at most $m$ propagating blocks}\\
  \vsigma\in \vS_m && \text{Symmetry group acting on $m$ elements}\\
  \vec{p}_{\mu} && \text{Young symmetrizer corresponding to irreps labeled by integer partition $\vec{\mu}$}\\
  t \in SYT(\vec{\mu}) && \text{the set of standard Young tableaux associated with shape $\vec{\mu} \vdash m$.}
\end{align*}

\subsection{Random matrix ensembles}
Our technical argument relies on specific properties of several random matrix ensembles. In dimension $N$ (which is often $2^n$ for $n$ qubits), we denote 
\begin{align}
    \vU&\stackrel{Haar}{\sim} \vU(N) \tag{Haar random unitaries}\\
    \vS&\stackrel{unif}{\sim} \vS(N) \tag{uniformly random permutations}\\
    \vD_{z}&: (\vD_{z})_{ij} = \delta_{ij} z_i\quad\text{where} \quad z_i\stackrel{Haar}{\sim}\vU(1)  \tag{random diagonal phases}\\
    \vG&: (\vG)_{ij} \sim \frac{g +\ri g'}{\sqrt{2N}}. \tag{Ginibre ensemble}
\end{align}
We will use different notations for the symmetric group to highlight the two distinct roles that the group plays in the paper. $\vS(N)$ acts on individual $N$-dimensional systems while $\vS_m$ acts by permuting copies of those systems for $m\le k$. In the above, all Gaussians are i.i.d. centered with unit variance $\BE[g]=0, \BE[g^2] =1$. The Ginibre ensemble consists of non-Hermitian random matrices, which we normalize such that 
\begin{align}
    \BE [\vG^{\dagger}\vG] = 1\cdot \vI,
\end{align}
making it behave like a unitary in some regards. The random diagonal matrices are not interesting on their own and are meant to multiply with the permutation by
\begin{align}
    \quad \vZ := \vD_z \vS \tag{random phased permutations}.
\end{align}
These \textit{1-sparse} matrices are basic building blocks enabled by standard pseudorandomness assumptions (Q-PRPs, \autoref{defn:PRP}; inverses allowed). The central object we consider is the exponential $\e^{\ri \theta\vA_m }$ of the sparse random matrices
\begin{align}
      \vA_m &:= \frac{1}{\sqrt{2m}}\sum_{a=1}^m (\vZ_a+\vZ^{\dagger}_a) \quad \text{where}\quad \vZ_a\stackrel{i.i.d.}{\sim} \vZ. \tag{adjacency matrix for phased random graph}
\end{align}
The normalization ensures that $\BE[\vA_m^{\dagger}\vA_m] = \vI.$ The exponentials of the above sparse random matrices can be efficiently simulated on quantum computers using standard algorithms.

\begin{prop}[Efficient implementation{~\cite{gilyen2019quantum}}]\label{prop:cost_exponential}
For each $m, \ell\in \mathbb{Z}^+$ and $\theta \in \mathbb{R}$, the unitary for $\e^{\ri \theta_m \vA_{m}}$ can be implemented at precision $\epsilon$ at cost
\begin{align}
\CO\bigg(\log(m)m\L(\theta\sqrt{m}+\log(1/\epsilon)\R)\bigg) \quad &\text{one and two-qubit gates},\\
\quad \CO(\log(m))\quad&\text{ancillas}, \\
\quad \text{and}\quad \CO\bigg(\theta\sqrt{m}+\log(1/\epsilon)\bigg)\quad&\text{controlled-queries}
\end{align}
 to the SELECT operator (or its adjoint)
 \begin{align}
        \sum_{a=1}^m \ket{a}\bra{a}\otimes \vZ_a.
    \end{align}
\end{prop}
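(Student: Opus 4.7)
The plan is to present $e^{\ri \theta_m \vA_m}$ as an instance of Hamiltonian simulation for a Hamiltonian given as a linear combination of unitaries (LCU), and then invoke the qubitization/QSVT framework of Gilyén--Su--Low--Wiebe. The operator $\vA_m = \frac{1}{\sqrt{2m}}\sum_{a=1}^m(\vZ_a + \vZ_a^\dagger)$ is a sum of $2m$ unitaries with equal coefficients $1/\sqrt{2m}$, so its LCU $\ell^1$ norm is $\alpha := \sqrt{2m}$. This is exactly the setting in which a block encoding can be built from a $\mathrm{PREPARE}$/$\mathrm{SELECT}$ pair.

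Concretely, I would first construct a $(\sqrt{2m},\log(2m),0)$-block encoding of $\vA_m$ as follows. The $\mathrm{PREPARE}$ subroutine maps $|0\rangle^{\otimes\lceil\log 2m\rceil} \mapsto \frac{1}{\sqrt{2m}}\sum_{a=1}^{2m}|a\rangle$, which is a uniform superposition implementable by $O(\log m)$ Hadamard/Clifford gates (with a small amount of reflection to handle $2m$ not being a power of two). Splitting the ancilla index into a sign bit and an $a$-register, the extended $\mathrm{SELECT}$ operator $\sum_a (|a,0\rangle\langle a,0|\otimes \vZ_a + |a,1\rangle\langle a,1|\otimes \vZ_a^\dagger)$ is obtained from one call to the given $\mathrm{SELECT}$ oracle together with one call to its adjoint, controlled on the sign bit. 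Then $(\mathrm{PREPARE}^\dagger\otimes I)\cdot\mathrm{SELECT}\cdot(\mathrm{PREPARE}\otimes I)$ has top-left block exactly $\vA_m/\sqrt{2m}$, the requisite block encoding.

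Now I would apply the optimal Hamiltonian simulation algorithm based on quantum signal processing / qubitization (\cite{gilyen2019quantum}), which for an $\alpha$-normalised block encoding of $H$ produces an $\epsilon$-approximation of $e^{\ri H t}$ using $O(\alpha t + \log(1/\epsilon))$ queries to the block encoding and its inverse. With $\alpha=\sqrt{2m}$ and $t=\theta_m$, this gives $O(\theta\sqrt{m}+\log(1/\epsilon))$ queries to the extended $\mathrm{SELECT}$, hence the same order in queries to the original $\mathrm{SELECT}$ and its adjoint, matching the claimed query count. The two-qubit gate overhead is dominated by a $\mathrm{PREPARE}$/$\mathrm{PREPARE}^\dagger$ pair plus a controlled reflection on the ancilla register per query, each costing $O(m\log m)$ gates when one accounts for the ancilla size and the QSP phase rotations, yielding the stated $O(\log(m)\, m\,(\theta\sqrt{m}+\log(1/\epsilon)))$ total gate count, with ancilla overhead $O(\log m)$ for the block-encoding register.

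The only real subtlety here, which should be routine, is bookkeeping: making sure the sign-bit trick uses both $\mathrm{SELECT}$ and its adjoint as allowed by the proposition, and checking that the overhead of each QSP iterate (a single controlled reflection plus a single-qubit phase rotation) is indeed absorbed into the $O(\log(m)m)$ per-iteration factor. No new ideas beyond standard block-encoding/QSVT are needed; the construction is essentially plug-and-play once $\vA_m$ is written as a uniform LCU of $2m$ unitaries accessible through the promised oracle.
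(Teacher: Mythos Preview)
Your proposal is correct and takes essentially the same approach as the paper: build a block encoding of $\vA_m/\sqrt{2m}$ via LCU with uniform weights using the $\mathrm{PREPARE}/\mathrm{SELECT}$ pair, then run QSVT-based Hamiltonian simulation. The paper's own proof is just the two-line sketch ``Use LCU with uniform weights $(1/\sqrt{2m},\ldots,1/\sqrt{2m})$ to create a block-encoding for $\frac{1}{\sqrt{2m}}\vA_m$; then apply QSVT,'' so you have simply filled in the standard details.
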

\begin{proof}
Use LCU with uniform weights $(1/\sqrt{2m},\cdots,1/\sqrt{2m})$ to create a block-encoding for $\frac{1}{\sqrt{2m}}\vA_{m}$. Then, apply QSVT.
\end{proof}

\subsection{\texorpdfstring{Pseudorandomness and $k$-designs}{Pseudorandomness and k-designs}}
We briefly recall the notions of both computational pseudorandomness and $k$-designs.
\begin{defn}[Quantum-secure pseudorandom permutations \cite{zhandry2016note}]\label{defn:PRP}
We say a keyed family of permutations $\vS_{w} \in \vS(2^n)$ is quantum-secure pseudorandom permutation (PRP) if each $\vS_{w}$ can efficiently generated, and for any poly-time quantum algorithm $\CA$, 
\begin{align}
    \labs{\BE_{w}\CA(\vS_w,\vS^{-1}_w) -\BE_{\vS\in \vS(N)}\CA(\vS,\vS^{-1})} \le \frac{1}{\text{superpoly}(n)}.  
\end{align}
  That is, the difference in the algorithms' acceptance probability on a random element of the PRP ensemble and on a truly random permutation is super-polynomially small.
\end{defn}

On the other hand, a quantum unitary $k$-design (\autoref{defn:approx_kdesign}) is an information-theoretic notion. When the error is zero, this simply amounts to saying that the following $2k$-fold tensors are equal:
\begin{align}
    \BE_{\vV}[\vV^{\otimes k}\otimes \vV^{*\otimes k} ] = \BE_{\vU_{Haar}}[\vU_{Haar}^{\otimes k}\otimes \vU_{Haar}^{*\otimes k}].
\end{align}
In the presence of error, the diamond distance (\autoref{defn:approx_kdesign}) quantifies the best a quantum computationally unbounded distinguisher can do given parallel access to the unitary ensemble. Various other norms yield qualitatively different operational meanings in the presence of error, but the unconditional interconversion costs can easily scale with the dimension.

\section{Preliminaries}\label{sec:prelim}
In this section, we collect several technical facts that form the raw ingredients of the argument. In particular, we include a largely self-contained summary of the relevant facts from abstract algebra.

\subsection{Markov's ``other'' inequality for polynomials}\label{sec:Markovs}

\begin{figure}[t]
    \centering
    \includegraphics[width=0.7\textwidth]{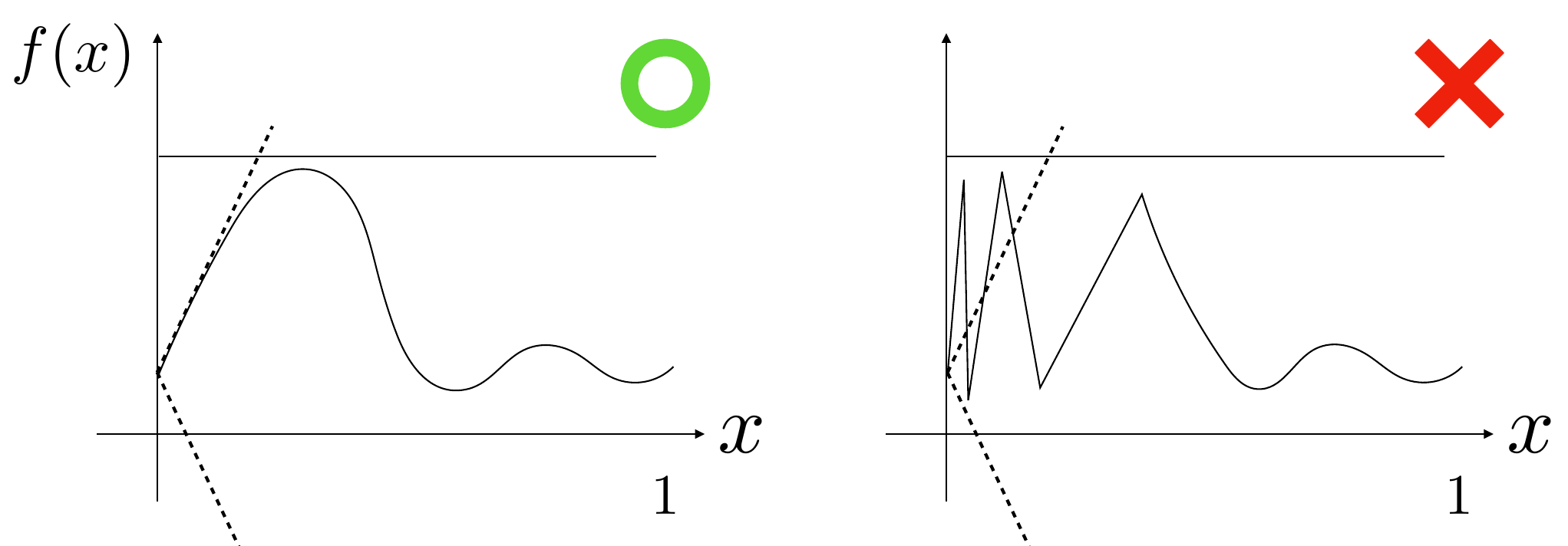}
    \caption{Markov's other inequality states that a bounded, low-degree polynomial cannot change too quickly. This simple fact can prove nontrivial query complexity lower bounds through careful choice of the interpolation parameter $x$ and the low-degree quantity $f(x)$.}
    \label{fig:polymethod}
\end{figure}

One of the very few techniques available for proving query complexity lower bounds is the \textit{polynomial method}. The starting point is a rather simple classical result in low-degree polynomials. 

\begin{lem}[Markov's other inequality~\cite{markov1889, markov1916, cheneyrivlin1966, ehlich1964}]
        Let $f(x):\mathbb{R}\rightarrow \mathbb{R}$ be a real polynomial of degree $d$.
    $$\sup_{x\in [0,1]}|f'(x)| \leq 2d^2 \max_{x\in [0,1]}\labs{f(x)}.$$
\end{lem}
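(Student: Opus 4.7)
The plan is to reduce the statement to the classical Markov inequality on the symmetric interval $[-1,1]$ and then invoke the extremality of the Chebyshev polynomials. As a first step I would affine-rescale: setting $g(y) := f((y+1)/2)$ for $y \in [-1,1]$, the polynomial $g$ has the same degree $d$, sup-norm $M := \max_{x \in [0,1]} |f(x)|$ on $[-1,1]$, and $g'(y) = \frac{1}{2} f'((y+1)/2)$, so that $\sup_{x \in [0,1]} |f'(x)| = 2 \sup_{y \in [-1,1]} |g'(y)|$. Thus the claim reduces to showing $\sup_{y \in [-1,1]} |g'(y)| \leq d^2 M$ for every degree-$d$ real polynomial $g$; by homogeneity I assume $M = 1$.

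I would then invoke the classical Chebyshev extremal property: among all real polynomials $p$ of degree $\leq d$ with $\|p\|_\infty \leq 1$ on $[-1,1]$, the Chebyshev polynomial $T_d(y) := \cos(d \arccos y)$ satisfies $|p'(y)| \leq |T_d'(y)|$ for every $y \in [-1,1]$, with $\max_{y \in [-1,1]} |T_d'(y)| = |T_d'(\pm 1)| = d^2$. The proof is by contradiction: if $|p'(y_0)| > |T_d'(y_0)|$ for some $y_0$, set $c := T_d'(y_0)/p'(y_0) \in (-1,1)$ and consider $q(y) := T_d(y) - c\, p(y)$. At the $d+1$ Chebyshev extremal nodes $y_k := \cos(k\pi/d)$, $k = 0,\ldots,d$, one has $T_d(y_k) = (-1)^k$ and $|c\, p(y_k)| < 1$, so $q$ alternates in sign across these nodes and therefore has at least $d$ zeros in $(-1,1)$. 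Combined with the enforced $q'(y_0) = 0$, a Rolle-theorem count produces $d+1$ zeros of $q$ counted with multiplicity, contradicting $\deg q \leq d$ (the degenerate case $q \equiv 0$ forces $p$ proportional to $T_d$ with ratio exceeding $1$ in magnitude, again impossible).

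Rescaling back from $[-1,1]$ to $[0,1]$ multiplies the derivative bound by $2$, yielding $\sup_{x \in [0,1]} |f'(x)| \leq 2 d^2 M$ as required.

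The main obstacle is the edge-case bookkeeping in the zero-counting step when $y_0$ coincides with an endpoint $\pm 1$ or with a Chebyshev node; these configurations require either an infinitesimal perturbation of $c$, or counting the joint vanishing $q(y_0) = q'(y_0) = 0$ as a double root and adjusting the Rolle count accordingly. A cleaner alternative that I would likely follow in a final write-up is to derive Markov's inequality from Bernstein's trigonometric inequality $\max_\theta |T'(\theta)| \leq d \max_\theta |T(\theta)|$ for real trigonometric polynomials of degree $\leq d$, substitute $T(\theta) := p(\cos\theta)$, and carefully handle the $1/\sin\theta$ Jacobian near $\theta = 0, \pi$; this recovers the factor $d^2$ which is sharp precisely at the endpoints $y = \pm 1$.
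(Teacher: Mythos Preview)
The paper does not prove this lemma; it is stated as a classical result with citations to Markov and to Cheney--Rivlin and Ehlich--Zeller, and used as a black box. So there is no in-paper argument to compare against, and your reduction by affine rescaling to $[-1,1]$ followed by Chebyshev extremality is the standard textbook route.

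There is, however, a concrete error in the sketch. The pointwise claim ``$|p'(y)|\le |T_d'(y)|$ for every $y\in[-1,1]$'' is false: for $d=2$ and $p(y)=y$ one has $p'(0)=1$ while $T_2'(0)=0$. Only the sup-norm statement $\sup_{[-1,1]}|p'|\le d^2$ is true and is all that is needed. Your sign-alternation/Rolle contradiction can be retargeted at this weaker statement by assuming instead $|p'(y_0)|>d^2\ge |T_d'(y_0)|$, which still gives $|c|<1$; but then the degenerate case $T_d'(y_0)=0$ (so $c=0$ and $q=T_d$) produces no contradiction, because the $d-1$ Rolle zeros of $q'=T_d'$ are exactly the interior extrema of $T_d$ and $y_0$ is already one of them. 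That is a genuine gap in the pure zero-counting route, not just endpoint bookkeeping. It is closed precisely by the alternative you mention: Bernstein's inequality $|p'(y)|\sqrt{1-y^2}\le d$ forces $|p'(y_0)|\le d/\sin(\pi/d)<d^2$ at every interior Chebyshev extremum, ruling out that case; combining Bernstein in the bulk with a Schur/endpoint argument is indeed how modern proofs obtain the sharp factor $d^2$.
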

The standard recipe for applying the above to lower-bounding the number of queries required for distinguishing tasks goes roughly as follows: cook up a quantity $f$ that depends on an interpolation parameter $x$ such that 
\begin{itemize}
    \item The values $f(x_1)$ and $f(x_2)$ correspond to the acceptance probabilities for the two cases to be distinguished. 
    \item The function $f(x)$ is a low-degree polynomial with a degree $d$ roughly the number of queries.
    \item The function $f(x)$ can be \textit{extended} to a larger interval $[x_1,x_3]$ with $x_3 - x_1 \gg x_2-x_1$ such that its value remains bounded.
\end{itemize}
Then, the distinguishing probability must not change too quickly between the two cases provided the degree $d$ is small: 
\begin{align}
    |f(x_1)-f(x_2)| \le 2d^2\frac{|x_1-x_2|}{\labs{x_1-x_3}} \max_{x\in [x_1,x_3]} \labs{f(x)}.
\end{align}

Applying the above requires a case-by-case adaptation, and creativity is often required to satisfy the combination of constraints. In our setting, the natural yet powerful approach is to select the interpolation parameter to be the inverse-dimension
\begin{align}
     x= \frac{1}{N}\quad \text{with interpolation range}\quad 0 \le x  \le 1,
\end{align}
which connects to the idea of the $\frac{1}{N}$ expansion in physics and in random matrix theory.

We will use the following specialized versions as a black-box.
\begin{lem}[Large-$N$ interpolation]\label{lem:large_N_interpolate}
    Let $f(x):\mathbb{R}\rightarrow \mathbb{R}$ be a real polynomial of degree $d$ and let $N, N_0$ be any nonnegative integers. Suppose that $d^2 \leq \frac{N_0+1}{4}$,
    \begin{align}
        \sup_{N \ge N_0} N \L|f\L(\frac{1}{N}\R)-f(0)\R| 
        \le \sup_{x\in [0, \frac{1}{N_0}]} |f'(x)| &\leq 4 d^2 \cdot N_0  \sup_{N \ge N_0} \L|f\L(\frac{1}{N}\R)\R|.
    \end{align}
\end{lem}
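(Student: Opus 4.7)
The lemma bundles two inequalities. The lower bound is an immediate mean value theorem (MVT) computation, while the upper bound is a rescaled Markov plus a discretization argument showing that values at the reciprocals $\{1/N\}_{N\ge N_0}$ control the sup-norm on $[0,1/N_0]$. I will treat them in turn.

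\textbf{Lower bound.} For any integer $N \ge N_0$, $1/N\in[0,1/N_0]$, so MVT yields
\begin{align}
    \L|f\L(\tfrac{1}{N}\R)-f(0)\R| = \tfrac{1}{N}\,\labs{f'(\xi)} \le \tfrac{1}{N}\sup_{x\in[0,1/N_0]}\labs{f'(x)}
\end{align}
for some $\xi\in(0,1/N)$. Multiplying by $N$ and taking $\sup$ over $N\ge N_0$ gives the first inequality.

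\textbf{Upper bound.} Rescaling the classical Markov inequality from $[0,1]$ to $[0,1/N_0]$ gives
\begin{align}
    \sup_{x\in[0,1/N_0]}\labs{f'(x)} \le 2d^2 N_0 \cdot S, \qquad S:=\sup_{x\in[0,1/N_0]}\labs{f(x)}.
\end{align}
Setting $M:=\sup_{N\ge N_0}\labs{f(1/N)}$, it remains to prove $S\le 2M$. Let $x^*$ attain $S$. Since consecutive reciprocals satisfy $1/N - 1/(N+1) = 1/(N(N+1)) \le 1/(N_0(N_0+1))$, the set $\{0\}\cup\{1/N:N\ge N_0\}$ is $1/(N_0(N_0+1))$-dense in $[0,1/N_0]$; pick an $N\ge N_0$ (or $N\to\infty$ if $x^*=0$) with $|x^*-1/N|\le 1/(N_0(N_0+1))$. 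Using MVT and the rescaled Markov bound,
\begin{align}
    \labs{f(x^*)-f(1/N)} \le \tfrac{1}{N_0(N_0+1)}\cdot 2d^2 N_0 \cdot S = \tfrac{2d^2}{N_0+1}\,S \le \tfrac{S}{2},
\end{align}
where the last inequality invokes the hypothesis $d^2\le (N_0+1)/4$. Hence $\labs{f(1/N)}\ge S/2$, so $M\ge S/2$ and $S\le 2M$. Substituting back into the Markov bound produces the second inequality.

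\textbf{Where the difficulty sits.} All three ingredients (MVT, Markov's other inequality, and a density argument) are standard, so there is no serious technical obstacle. The only subtle point is the direction $S\le 2M$: we need the $1/N$ grid to be finer than the length scale on which a degree-$d$ polynomial of sup $S$ can vary by more than $S/2$, which by Markov requires a grid spacing $\lesssim 1/(d^2 N_0)$. That is exactly the quantitative content of the hypothesis $d^2\le (N_0+1)/4$, and without it one could not bridge the discrete-to-continuous step.
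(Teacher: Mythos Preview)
Your proof is correct. The paper itself does not supply a proof of this lemma but simply cites \cite[Lemma~4.1, Lemma~4.2]{chen2024new} as ``a standard argument''; your MVT-plus-rescaled-Markov-plus-discretization approach is exactly that standard argument, and your use of the hypothesis $d^2\le (N_0+1)/4$ to bridge the discrete grid to the continuous sup is the key step.
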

See, e.g.,~\cite[Lemma 4.1, Lemma 4.2]{chen2024new} for a standard argument. In our case, the interpolation range will need to effectively shrink
\begin{align}
0\le x\le 1 \rightarrow 0 \le x \le \frac{1}{N_0}    
\end{align}
to handle the presence of gaps in the integer reciprocals $\{\frac{1}{1}, \frac{1}{2},\cdots, \frac{1}{N},\cdots \}$; this will further worsen the dependence on the degree by roughly $\CO(d^2)\rightarrow \CO(d^4)$ (since $N_0 \ge 4d^2-1$). 
\begin{lem}[Markov inequality for rational polynomials with clustered poles]
Consider a real rational polynomial of integer $N$
\begin{align}
    f\L(\frac{1}{N}\R) = \frac{a(N)}{b(N)} \quad \text{where }\quad b(N) = \prod_i (N - b_i)^{m_i}. 
\end{align}
Suppose the poles $b_i$ are located within a disc of radius $\labs{b_i} \le B$, and the total degree is $d = \sum_i m_i$. Then, suppose $N_0\ge 8dB +d^2 -1$, and $f(\frac{1}{\infty}) < \infty$, we have that
    \begin{align}
        \sup_{N \ge N_0} N \L|f\L(\frac{1}{N}\R)-f\L(\frac{1}{\infty}\R)\R| 
        \le 4d^2(N_0+10dB) \sup_{N\ge N_0}\labs{f\L(\frac{1}{N}\R)}.
    \end{align}
\end{lem}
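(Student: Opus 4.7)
The plan is to reduce to the polynomial interpolation lemma \autoref{lem:large_N_interpolate} by clearing the denominator. Substituting $x = 1/N$ and defining $\tilde{b}(x) := \prod_i (1 - b_i x)^{m_i}$ and $\tilde{a}(x) := x^d a(1/x)$, one rewrites $f(x) = \tilde{a}(x)/\tilde{b}(x)$ with $\tilde{b}(0) = 1$ and $\tilde{a}(0) = f(0)$. The hypothesis $f(1/\infty) < \infty$ forces $\deg a \le d$, so $\tilde{a}$ is genuinely a polynomial in $x$ of degree at most $d$, and similarly for $\tilde{b}$.

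First I would establish uniform two-sided control of $\tilde{b}$ on the interval $[0, 1/N_0]$. Since $N_0 \ge 8dB$ guarantees $|b_i x| \le B/N_0 \le 1/(8d)$ for every $x$ in the interval and every pole, a direct logarithmic estimate $\log \tilde{b}(x) = \sum_i m_i \log(1 - b_i x)$ gives a two-sided bound of the form $7/8 \le |\tilde{b}(x)| \le 9/8$, together with a linearization estimate $|\tilde{b}(x) - 1| \le c\, d B\, x$ for an absolute constant $c$.

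Next I would apply \autoref{lem:large_N_interpolate} to the polynomial $\tilde{a}$. At integer reciprocals one has $|\tilde{a}(1/N)| = |f(1/N)|\,|\tilde{b}(1/N)| \le (9/8) M$, where $M := \sup_{N\ge N_0}|f(1/N)|$, and the hypothesis $N_0 \ge d^2 - 1$ needed by that lemma is built into $N_0 \ge 8dB + d^2 - 1$. This yields
\[
\sup_{N \ge N_0} N\,|\tilde{a}(1/N) - \tilde{a}(0)| \;\le\; 4d^2 N_0 \cdot \tfrac{9}{8} M.
\]
To finish, I would write
\[
f(1/N) - f(0) \;=\; \frac{\bigl[\tilde{a}(1/N) - \tilde{a}(0)\bigr] - f(0)\bigl[\tilde{b}(1/N) - 1\bigr]}{\tilde{b}(1/N)},
\]
and bound the two pieces of the numerator separately: the first piece by the displayed polynomial Markov estimate, and the second by $|f(0)|\,|\tilde{b}(1/N) - 1| \le M \cdot c\, dB/N$. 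Dividing by $|\tilde{b}(1/N)| \ge 7/8$ produces a bound of the form $N|f(1/N) - f(0)| \le 4d^2(N_0 + c'\, dB)M$ for some absolute constant $c'$, matching the claimed inequality up to the explicit numerical constants.

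The main obstacle, beyond bookkeeping of the numerical constants, is the uniform two-sided control of $\tilde{b}$ on $[0, 1/N_0]$: one must exploit that confining all poles inside a disc of radius $B$ with $N_0 \ge 8dB$ keeps the product $\prod_i(1-b_i x)^{m_i}$ bounded away from zero even though the number of factors $d$ may be large. Once this step is in hand, the reduction to \autoref{lem:large_N_interpolate} is essentially mechanical, and the additive shift by $10dB$ in the final constant arises naturally from the $|\tilde{b}(1/N) - 1|$ correction term.
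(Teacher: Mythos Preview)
Your proposal is correct and follows essentially the same approach as the paper: clear the denominator by writing $f(1/N)=\tilde a(1/N)/\tilde b(1/N)$, apply \autoref{lem:large_N_interpolate} to the polynomial $\tilde a$, and control the denominator perturbation $|\tilde b(1/N)-1|$ via the clustered-pole hypothesis. The only cosmetic difference is that you combine the pieces through the single-fraction identity $f(1/N)-f(0)=\bigl[(\tilde a(1/N)-\tilde a(0))-f(0)(\tilde b(1/N)-1)\bigr]/\tilde b(1/N)$, whereas the paper splits via a two-term triangle inequality $|\tilde a(1/N)/\tilde b(1/N)-\tilde a(0)/\tilde b(1/N)|+|\tilde a(0)/\tilde b(1/N)-\tilde a(0)/\tilde b(0)|$; both routes lead to the same bound after the same bookkeeping.
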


\begin{proof}
Since the limit is bounded $f(\frac{1}{\infty}) = \CO(N^0)$, the total number of poles also bounds the numerator degree by $d = \sum_i m_i$. It is instructive to rewrite in terms of $\frac{1}{N}$ by dividing with $N^{d}$
\begin{align}
    \frac{a(N)}{b(N)} = \frac{\tilde{a}(\frac{1}{N})}{\tilde{b}(\frac{1}{N})}.
\end{align}
Markov's other inequality (\autoref{lem:large_N_interpolate}) will be able to handle $\tilde{a}(\frac{1}{N})$. To handle the denominator, observe that
\begin{align}
    \labs{\frac{\tilde{b}(\frac{1}{\infty})}{\tilde{b}(\frac{1}{N})}-1} &= \labs{\prod_{i} \frac{1}{(1-\frac{b_i}{N})^{m_i}} -1}\\
    &\le \labs{\prod_{i} {\L(1+\frac{2B}{N}\R)^{m_i}} -1}\tag{assuming $2B\le N$ and using that $\labs{(1-z)^{-1} -1}\le 2\labs{z}$ for $\labs{x}\le \frac{1}{2}$}\\
    &\le \labs{\e^{2dB/N} -1}\tag{since $1+x \le \e^{x}$}\\
    &\le \frac{4dB}{N} \tag{if $ 2dB/N\le 1$}.
\end{align}
Therefore,
\begin{align}
    \labs{\frac{\tilde{a}(\frac{1}{N})}{\tilde{b}(\frac{1}{N})} - \frac{\tilde{a}(\frac{1}{\infty})}{\tilde{b}(\frac{1}{\infty})}} &\le \labs{\frac{\tilde{a}(\frac{1}{N})}{\tilde{b}(\frac{1}{N})} - \frac{\tilde{a}(\frac{1}{\infty})}{\tilde{b}(\frac{1}{N})}} + \labs{\frac{\tilde{a}(\frac{1}{\infty})}{\tilde{b}(\frac{1}{N})} - \frac{\tilde{a}(\frac{1}{\infty})}{\tilde{b}(\frac{1}{\infty})}}\\
    &\le \frac{4d^2N_0 \sup_{N\ge N_0}\labs{\tilde{a}(\frac{1}{N})}}{N\tilde{b}(\frac{1}{N})}+ \labs{\frac{\tilde{b}(\frac{1}{\infty})}{\tilde{b}(\frac{1}{N})}-1}\cdot\labs{\frac{\tilde{a}(\frac{1}{\infty})}{\tilde{b}(\frac{1}{\infty})}}\tag{by Markov's \autoref{lem:large_N_interpolate}}\\
    &\le \frac{4d^2N_0 \sup_{N\ge N_0}\labs{f(\frac{1}{N})}}{N}\cdot \frac{\sup_{N\ge N_0}\labs{\tilde{b}(\frac{1}{N})}}{\tilde{b}(\frac{1}{N})}+ \labs{\frac{\tilde{b}(\frac{1}{\infty})}{\tilde{b}(\frac{1}{N})}-1}\cdot\labs{f\L(\frac{1}{\infty}\R)}\tag{rewrite in terms of $f$}\\
    &\le  \frac{4d^2N_0 \sup_{N\ge N_0}\labs{f(\frac{1}{N})}}{N}\cdot \L(1+ 8\frac{dB}{N_0}\R) + \frac{4dB}{N}\cdot \labs{f\L(\frac{1}{\infty}\R)}\tag{if $8dB/N_0 \le 1$}\\
    &\le  \frac{4d^2(N_0+10dB)}{N}\sup_{N\ge N_0}\labs{f\L(\frac{1}{N}\R)} 
\end{align}
The assumptions we used were that $N_0 \ge d^2-1, N_0 \ge 8 dB$, and  $N \ge 2 dB$, so it suffices to assume $N_0 \ge 8dB + d^2 -1$ to complete the proof.
\end{proof}

The above allows us to only keep track of the locations and multiplicity of the pole without worrying about the polynomial degree on the numerator.

\subsection{Basic algebraic notions}\label{sec:minimal_algebra}

The partition algebra represented on the Hilbert space $\vB(\BC(\CH)^{\otimes k})$ is an \textit{algebra} (matrix addition and multiplication) with \textit{involution} (matrix conjugate transpose). This abstract algebraic structure already imposes structure that will be helpful. We introduce the minimal algebraic assumption as follows.  
\begin{defn}[Associative algebras over the complex numbers (abbreviated as algebras in this work)]
    A $\BC$-algebra $\vA$ is a set of elements such that
    \begin{itemize}
    \item (Closed under scalar multiplication) For any $z\in \BC$ and $\vA_1 \in \vA$,
        \begin{align}
            z\vA_1 \in \vA.
        \end{align}
        \item (Closed under addition and multiplication) For any $\vA_1 , \vA_2 \in \vA$,
        \begin{align}
            \vA_1+\vA_2 &\in \vA\\
            \vA_1 \cdot \vA_2 &\in \vA.
        \end{align}
        \item (Distributive and associative) For any $\vA_1, \vA_2, \vA_3 \in \vA$,
        \begin{align}
            (\vA_1 \cdot \vA_2 )\cdot \vA_3 &= \vA_1 \cdot (\vA_2 \cdot \vA_3)\\
            (\vA_1 + \vA_2 ) + \vA_3 &= \vA_1 + (\vA_2 + \vA_3)
        \end{align}
    \end{itemize}
\end{defn}

\begin{defn}[$\dagger$-algebra]
    A $\dagger$-algebra $\vA$ is an $\BC$-algebra with involution $(\cdot)^{\dagger}$ such that
    \begin{itemize}
        \item For any $\vA_1 \in \vA$, 
        \begin{align}
            (\vA_1^{\dagger})^{\dagger} = \vA_1.
        \end{align}
        \item For any $\vA_1, \vA_2 \in \vA$,
        \begin{align}
            (\vA_1 + \vA_2)^{\dagger} &= \vA_1^{\dagger} + \vA_2^{\dagger}\\
            (\vA_1\vA_2)^{\dagger} &= \vA_2^{\dagger}\vA_1^{\dagger}.
        \end{align}
        \item For any $z\in \BC$,
        \begin{align}
        (z \vA_1)^{\dagger} = z^* \vA_1^{\dagger}.    
        \end{align}
    \end{itemize}
\end{defn}
\begin{defn}[$\dagger$-homomorphism]\label{defn:dagger_homo}
    A $\dagger$-homomorphism is a map between $\dagger$-algebras $f:\vA \rightarrow \vA'$ such that 
    \begin{align}
        f(\vA_1+\vA_2) &= f(\vA_1)+f(\vA_2),\\
        f(\vA_1\cdot\vA_2) &= f(\vA_1)\cdot f(\vA_2),\\
        f(\vA_1^\dagger) &= f(\vA_1)^\dagger.
    \end{align}
\end{defn}
\begin{defn}[Ideals]\label{defn:ideals}
    A left-ideal $\vJ$ of a ring $\vR$ is a subring that is closed under multiplying the ring on the left
    \begin{align}
        \vR \vJ \subset \vJ.
    \end{align}
    Vice versa for the right ideals. A \textit{two-sided ideal} is both a left and right ideal.
\end{defn}

\begin{defn}[Factor]
    An algebra whose center consists only of multiples of the identity is called a factor.
\end{defn}

\begin{defn}[Matrix algebra]
    For any finite-dimensional complex Hilbert space $\CH$, the \textit{matrix algebra} $\vB(\CH)$ is the set of linear transformations $\CH \rightarrow \CH$. Moreover, the adjoint or, equivalently, the conjugate transpose w.r.t. any orthogonal basis defines a canonical \textit{involution} of the algebra.
\end{defn}

\begin{thm}
[$\dagger$-Subalgebras of matrix algebras]
\label{thm:structure_finite_dim}
    Any $\dagger$-subalgebra $\vA$ of a finite dimensional matrix algebra $\vB(\BC^{N})$ is unitarily conjugate to a direct sum of factors (labeled by $\lambda$)
    \begin{align}
    \vU \vB  \vU^{\dagger} = \bigoplus_{\lambda}\vM_{\lambda}\otimes \vI \quad \text{where}\quad \vM_{\lambda} = \vB(\BC^{d_{\lambda}}).
    \end{align}
    Further, any two-sided ideal $\vJ$ of $\vA$ must have the structure
    \begin{align}
    \vU \vJ  \vU^{\dagger} = \bigoplus_{\lambda}\vM'_{\lambda}\otimes \vI \quad \text{where}\quad \vM'_{\lambda} = 
    \vM_{\lambda} \quad \text{or}\quad \vM'_{\lambda} = 0,
    \end{align}
    and the quotient is isomorphic to another ideal 
   \begin{align}
       \frac{\vA}{\vJ} \simeq \vJ' \simeq \bigoplus_{\lambda}\vM''_{\lambda}\otimes \vI \quad \text{where}\quad \vM''_{\lambda} = 
    0 \quad \text{or}\quad \vM''_{\lambda} = \vM_{\lambda}.
   \end{align}
    The quotient can be naturally understood as a projector $\CQ_{\vJ}: \vA \rightarrow \vJ' \subset \vA $ by $\CQ_{\vJ}[\vM_{\lambda}\otimes \vI]= \vM_{\lambda''}\otimes \vI$, which is also a ${\dagger}$-homomorphism.
\end{thm}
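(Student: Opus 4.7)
The plan is to invoke the finite-dimensional Artin-Wedderburn structure theorem in its $C^*$-algebra form, i.e., the classification of finite-dimensional $\dagger$-subalgebras of $\vB(\BC^N)$ up to unitary equivalence. Each of the three claims follows cleanly once we establish an orthogonal decomposition of $\BC^N$ into $\vA$-isotypic components. The $\dagger$-structure is used in exactly one place, to ensure that this decomposition is orthogonal (hence yielding \emph{unitary} conjugation rather than merely linear similarity); the rest is purely algebraic.

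For the first claim, I would proceed by orthogonal complementation. If $W \subseteq \BC^N$ is any $\vA$-invariant subspace with orthogonal projector $P$, then $\vA P = P \vA P$. Taking adjoints and using $\vA^{\dagger} = \vA$ gives $P \vA = P \vA P$, so $W^{\perp}$ is also $\vA$-invariant. Iterating yields an orthogonal decomposition $\BC^N \simeq \bigoplus_{\lambda} W_{\lambda} \otimes M_{\lambda}$ into isotypic components, with the $W_{\lambda}$ pairwise inequivalent irreducible $\vA$-modules and $M_{\lambda}$ multiplicity spaces. Choosing orthonormal bases for each $W_{\lambda}$ and each $M_{\lambda}$ yields a unitary $\vU$ under which $\vA$ maps into $\bigoplus_{\lambda} \vB(W_{\lambda}) \otimes \vI_{M_{\lambda}}$. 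Burnside's theorem (equivalently, the Jacobson density theorem in finite dimension), combined with Schur's lemma applied to the commutant, then shows that this inclusion is surjective onto each block, giving $\vU \vA \vU^{\dagger} = \bigoplus_{\lambda} \vM_{\lambda} \otimes \vI$ with $\vM_{\lambda} = \vB(\BC^{d_{\lambda}})$ as claimed.

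Once $\vA$ is in this decomposed form, the second and third claims are routine. Each factor $\vM_{\lambda} = \vB(\BC^{d_{\lambda}})$ is a simple algebra: any nonzero two-sided ideal in a matrix algebra contains a rank-one operator and hence the whole algebra, so its only two-sided ideals are $\{0\}$ and $\vM_{\lambda}$. Projecting a two-sided ideal $\vJ \subseteq \vA$ onto each block via the central minimal idempotents therefore forces $\vJ = \bigoplus_{\lambda \in S} \vM_{\lambda} \otimes \vI$ for some subset $S$ of labels. Setting $\vJ' := \bigoplus_{\lambda \notin S} \vM_{\lambda} \otimes \vI$, which is manifestly also a two-sided ideal, we have $\vA = \vJ \oplus \vJ'$ as algebras since distinct blocks annihilate. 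The projection $\CQ_{\vJ}$ onto the $\vJ'$ summand is then an algebra homomorphism with kernel exactly $\vJ$, yielding $\vA / \vJ \simeq \vJ'$, and it is a $\dagger$-homomorphism because the adjoint respects the block decomposition (each $\vM_{\lambda} \otimes \vI$ is itself $\dagger$-closed inside $\vB(\BC^N)$).

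The main obstacle, modest as it is, is the orthogonal complementation step and the surjectivity onto each factor; the former is precisely where the $\dagger$-hypothesis enters, and the latter requires Burnside/Jacobson density. Everything downstream is bookkeeping on the classical Artin-Wedderburn decomposition. I would present it in essentially this order, since the block-diagonal form makes the ideal and quotient assertions almost tautological.
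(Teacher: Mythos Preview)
Your proposal is correct and aligns with the paper's approach: the paper simply states that the result is a consequence of the Artin-Wedderburn theorem and cites \cite{halverson2005partition} for a self-contained exposition, without giving any further details. Your sketch fills in exactly the standard argument underlying that citation (orthogonal complementation via the $\dagger$-structure, Burnside/Jacobson density for surjectivity onto each block, simplicity of matrix algebras for the ideal classification), so there is nothing to compare---you have supplied the details the paper omits.
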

\begin{proof}
This is a consequence of the Artin-Wedderburn theorem. See~\cite{halverson2005partition} for a self-contained exposition.
\end{proof}
\begin{defn}[Group algebra with complex coefficients]\label{defn:groupAlgebra}
    For a group $\vG$, the group algebra $\BC\vG$ consists of formal weighted linear combination of elements $\vG_i \in \vG$
    \begin{align}
        \sum_i z_i\vG_i \in \BC\vG, 
    \end{align}
    with multiplication given by the group multiplication $\vG_1\cdot \vG_2$.
\end{defn}

\subsection{Representations of the symmetric group}
\label{sec:rep_symmetry_group}
This section includes a minimal review of the representation theory of the symmetric group that will be helpful for working with the partition algebra; see~\cite{sagan2013symmetric} for a textbook introduction.
\subsubsection{Integer partitions}\label{sec:integerPartitions}

\begin{figure}[t]
    \centering
    \includegraphics[width=0.5\textwidth]{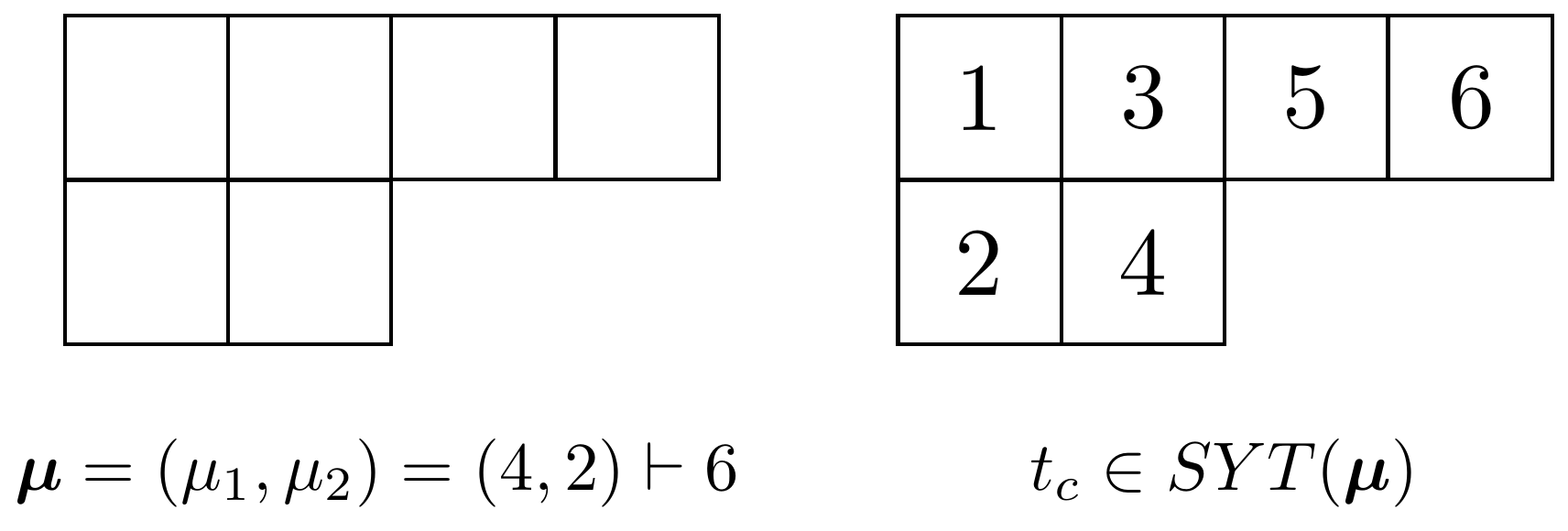}
    \caption{(Left) A Young diagram corresponding to the integer partition $\vec{\mu}\vdash 6$. (Right) The column-reading Young tableau $t_c \in SYT(\vec{\mu})$ as a special case of standard Young tableaux.}
    \label{fig:young}
\end{figure}
For any nonnegative integer $N\in \mathbb{Z}$ and vector of nonnegative integers $\vec{\lambda}$, we say that $\vec{\lambda}$ is an \textit{integer partition} of $N$, or 
\begin{align}
    \vec{\lambda} = (\lambda_1,\cdots,\lambda_{\ell}) \vdash N \quad \text{whenever} \quad \lambda_1+\cdots+\lambda_{\ell}=N \quad\text{and}\quad \lambda_1\ge \lambda_2\cdots \ge \lambda_{\ell}.
\end{align}
We will also denote the related notions by
\begin{align}
    \labs{\vec{\lambda}}:= \lambda_1+\cdots+\lambda_{\ell}\quad\text{and}\quad \vec{\lambda}^* := (\lambda_2,\cdots, \lambda_{\ell}).
\end{align}
In our usage for labeling the irreducible representations of the partition algebra $\vP_k(N)$ by $\vec{\lambda}\in \Lambda_{k,N}$, we will often consider a partition of the local dimension $N$ such that the first row is $ \labs{\vec{\lambda}}-\labs{\vec{\lambda}^*} \ge N-k$. Since the length of the first row is fixed by the remaining boxes $\lambda_1 = N- \labs{\vec{\lambda}^*}$, we will also only focus on the partition of the remaining boxes, often denoted by $\vec{\lambda^*}=\vec{\mu}\vdash m$.

\subsubsection{Young diagrams and Young tableaux}
Any integer partition $\vec{\mu}\vdash m$ corresponds to a \textit{Young diagram} (Fig.~\ref{fig:young}). In particular, one can fill in numbers $1,\ldots, m$ in arbitrary order, giving a \textit{Young tableau}. For a given shape $\vec{\mu}\vdash m$, a \textit{standard Young tableau} $t \in SYT(\vec{\mu})$ has its rows increase from left to right, and the columns increase from top to bottom. For each integer partition $\vec{\mu}$, the \textit{column-reading tableau} $t_c \in SYT(\vec{\mu})$ fills in the Young diagram by going down the columns from left to right from $1$ to $m$. 

\subsubsection{Irreducible modules}

This section describes the irreducible modules of the symmetric group algebra, and the notions presented here will prove useful when studying the partition algebra. For each Young tableau $t$ of shape $\vec{\mu}$, define the particular permutation
\begin{align}
    \vsigma_t\quad \text{such that}\quad \vsigma_t: t_c \rightarrow t \in SYT(\vec{\mu}).
\end{align}
That is, $\vsigma_t$ is the permutation that sends the column-reading Young tableau $t_c$ to the Young tableau $t$.

For each integer partition $\vec{\mu}\vdash m$, consider the \textit{Young symmetrizer}
\begin{align}
    \vec{p}_{\vec{\mu}} := \sum_{\vsigma\in C(t_c)} \sum_{\vsigma'\in R(t_c)} \text{sgn}(\vsigma)\vsigma\vsigma' \in \BC\vS_m
\end{align}
where $\BC\vS_m$ is the group algebra (\autoref{defn:groupAlgebra}) for the symmetric group $\vS_m$, $C(t_c)$ is the subgroup of permutations that preserves the rows of $t_c$, $R(t_c)$ the one for the columns, and sgn$(\vsigma)$ is the sign of the permutation $\vsigma$.

\begin{lem}[Irreducible module of {$\BC\vS_m$~\cite{sagan2013symmetric}}]\label{lem:irrep_Sm}
For each integer partition $\vec{\mu}\vdash m $, the space
\begin{align}
\BC\text{-span}\{\vsigma_t \vec{p}_{\vec{\mu}}\}_{t\in SYT(\vec{\mu})} \subset \BC\vS_{m}    
\end{align}
is a copy of the irreducible module $\vS^{\vec{\mu}}_m$ labeled by $\vec{\mu}$ in the left regular representation of $\BC\vS_m$.
\end{lem}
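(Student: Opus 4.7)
The strategy is to identify the left ideal $\BC\vS_m \cdot \vec{p}_{\vec{\mu}}$ with the classical Specht module and to realize the generators $\{\vsigma_t \vec{p}_{\vec{\mu}}\}_{t \in SYT(\vec{\mu})}$ as pullbacks of James' standard polytabloid basis. First I would establish the classical \emph{von Neumann lemma}: any permutation $\vsigma \in \vS_m$ that cannot be factored as $\vsigma = \tau \rho$ with $\tau \in C(t_c)$ and $\rho \in R(t_c)$ must place two entries from the same row of $t_c$ into a common column of the Young diagram, and any such $\vsigma$ forces $\vec{p}_{\vec{\mu}} \vsigma \vec{p}_{\vec{\mu}} = 0$ by the antisymmetry of the column factor. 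Expanding $\vec{p}_{\vec{\mu}}^2$ and using this dichotomy yields the quasi-idempotency
\begin{align}
\vec{p}_{\vec{\mu}}^2 = \frac{m!}{|SYT(\vec{\mu})|}\,\vec{p}_{\vec{\mu}},
\end{align}
so that $\vec{p}_{\vec{\mu}}$ is, up to a nonzero scalar, a primitive idempotent in $\BC\vS_m$. Combined with Maschke's semisimplicity theorem, this implies $\BC\vS_m \cdot \vec{p}_{\vec{\mu}}$ is a simple left $\BC\vS_m$-module.

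Next I would realize this simple module as the Specht module $\vS^{\vec{\mu}}_m$. Let $M^{\vec{\mu}}$ denote the permutation module on $\vec{\mu}$-tabloids and consider the $\BC\vS_m$-module map $\phi : \BC\vS_m \to M^{\vec{\mu}}$ defined by $\phi(\vsigma) = \vsigma \cdot \{t_c\}$. Because the row factor $\sum_{\vsigma' \in R(t_c)} \vsigma'$ of $\vec{p}_{\vec{\mu}}$ fixes the tabloid $\{t_c\}$, a direct computation gives $\phi(\vec{p}_{\vec{\mu}}) = |R(t_c)|\cdot e_{t_c}$, where $e_{t_c} = \sum_{\vsigma \in C(t_c)} \text{sgn}(\vsigma)\, \{\vsigma \cdot t_c\}$ is the polytabloid of the column-reading tableau. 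Using the covariance $\vsigma \cdot e_t = e_{\vsigma \cdot t}$, this extends to $\phi(\vsigma_t \vec{p}_{\vec{\mu}}) = |R(t_c)| \cdot e_t$ for every Young tableau $t$ of shape $\vec{\mu}$. Restricted to $\BC\vS_m \cdot \vec{p}_{\vec{\mu}}$, the map $\phi$ is nonzero and lands inside the Specht module $\vS^{\vec{\mu}}_m := \mathrm{span}\{e_t\}$; since both are simple modules, $\phi$ descends to an isomorphism $\BC\vS_m \cdot \vec{p}_{\vec{\mu}} \cong \vS^{\vec{\mu}}_m$.

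The final step is to conclude that $\{\vsigma_t \vec{p}_{\vec{\mu}}\}_{t \in SYT(\vec{\mu})}$ spans this left ideal. Under $\phi$, these elements map to $|R(t_c)|$ times the polytabloids $\{e_t : t \in SYT(\vec{\mu})\}$ indexed by standard Young tableaux, which form a $\BC$-basis of $\vS^{\vec{\mu}}_m$ by James' standard basis theorem. Consequently the $\vsigma_t \vec{p}_{\vec{\mu}}$ are linearly independent, and since their number $|SYT(\vec{\mu})|$ matches $\dim(\BC\vS_m \cdot \vec{p}_{\vec{\mu}}) = \dim \vS^{\vec{\mu}}_m$, they form a $\BC$-basis of the left ideal. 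Thus $\BC\text{-span}\{\vsigma_t \vec{p}_{\vec{\mu}}\}_{t \in SYT(\vec{\mu})}$ coincides with $\BC\vS_m \cdot \vec{p}_{\vec{\mu}}$, which is a copy of the irreducible module $\vS^{\vec{\mu}}_m$ sitting inside the left regular representation of $\BC\vS_m$.

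The main obstacle is James' standard basis theorem itself, whose proof uses a row-lexicographic (last-letter) order on tableaux together with Garnir straightening relations to rewrite any non-standard polytabloid as a $\BC$-linear combination of standard ones. I would invoke it as a black-box from the cited textbook, since reproducing it in full is standard but combinatorially involved. An alternative indirect route is to bypass the explicit basis, establish only $\dim(\BC\vS_m \cdot \vec{p}_{\vec{\mu}}) = |SYT(\vec{\mu})|$ via the regular-representation decomposition together with $\sum_{\vec{\mu} \vdash m} |SYT(\vec{\mu})|^2 = m!$, and then deduce linear independence by a cardinality match; but this hides the polytabloid structure that makes the claim conceptually clean and that is useful later for the partition-algebra application.
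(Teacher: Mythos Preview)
The paper does not supply its own proof of this lemma; it is stated as a citation to Sagan's textbook \cite{sagan2013symmetric}. Your proposal reconstructs precisely the standard argument from that reference: quasi-idempotency of the Young symmetrizer via the von Neumann combinatorial lemma, identification of the left ideal $\BC\vS_m\cdot\vec{p}_{\vec{\mu}}$ with the Specht module through the polytabloid map, and the standard basis theorem to pin down $\{\vsigma_t\vec{p}_{\vec{\mu}}\}_{t\in SYT(\vec{\mu})}$ as a basis. This is exactly what the paper is invoking as a black box, so your write-up is both correct and faithful to the intended source.
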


\begin{lem}[An orthonormal basis]\label{lem:ONbasis_Sm}
    There exist $\vec{u}_t\in \BC\text{-span}\{\vsigma_{t'}\}_{t'\in SYT(\vec{\mu})}$ such that 
\begin{align}
    \BC\text{-span}\{\vsigma_t \vec{p}_{\vec{\mu}}\}_{t\in SYT(\vec{\mu})} = \BC\text{-span}\{\vu_t \vec{p}_{\vec{\mu}}\}_{t\in SYT(\vec{\mu})}
\end{align}
and the map 
    \begin{align}
        \vec{u}_t \vec{p}_{\vec{\mu}}\vec{p}^{\dagger}_{\vec{\mu}}\vec{u}^{\dagger}_{t'} \rightarrow \ket{t}\bra{t'}\quad \text{is a $\dagger$-isomorphism}.
    \end{align}
 Therefore, the span
    \begin{align}
    \BC\text{-span}\{\vec{u}_t \vec{p}_{\vec{\mu}}\vec{p}^{\dagger}_{\vec{\mu}}\vec{u}^{\dagger}_{t'} \}_{t,t'\in SYT(\vec{\mu})} \subset \BC\vS_{m}
\end{align}
is a factor as a $\dagger$-subalgebra.
\end{lem}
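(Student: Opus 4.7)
The strategy is to build $\vec{u}_t$ by Gram--Schmidt orthonormalization of the basis $\{\vsigma_t \vec{p}_{\vec{\mu}}\}_{t\in SYT(\vec{\mu})}$ inside $\BC\vS_m$, viewed as a finite-dimensional Hilbert space under the standard inner product $\langle x,y\rangle := \tau(y^{\dagger} x)$, where $\tau$ denotes the coefficient of the identity (so permutations form an orthonormal basis). Because Gram--Schmidt expresses each output as a $\BC$-linear combination of the inputs, the resulting orthonormal basis takes the form $\{\vec{u}_t \vec{p}_{\vec{\mu}}\}_{t\in SYT(\vec{\mu})}$ with each $\vec{u}_t \in \BC\text{-span}\{\vsigma_{t'}\}_{t'\in SYT(\vec{\mu})}$, and by construction the two spans of the lemma agree. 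This delivers the first claim for free.

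The substantive step is to upgrade orthonormality to the matrix unit relations
\begin{align*}
    E_{st}E_{s't'} = \delta_{t,s'}\,E_{st'}, \qquad E_{tt'} := \vec{u}_t \vec{p}_{\vec{\mu}}\vec{p}^{\dagger}_{\vec{\mu}}\vec{u}^{\dagger}_{t'}.
\end{align*}
For this I appeal to Artin--Wedderburn (\autoref{thm:structure_finite_dim}): both $\vec{p}_{\vec{\mu}}$ and $\vec{p}^{\dagger}_{\vec{\mu}}$ are primitive quasi-idempotents living inside the single $\vec{\mu}$-factor of $\BC\vS_m \simeq \bigoplus_{\vec{\lambda}} M_{d_{\vec{\lambda}}}(\BC)$ (the adjoint preserves the central support since $\vS_m$ has real characters), and, by dimension-counting the irreducible left ideals they generate, they act as rank-one matrices in that factor. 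Consequently the corners
\begin{align*}
    \vec{p}^{\dagger}_{\vec{\mu}}\, \BC\vS_m\, \vec{p}_{\vec{\mu}} \quad\text{and}\quad \vec{p}_{\vec{\mu}}\, \BC\vS_m\, \vec{p}^{\dagger}_{\vec{\mu}}
\end{align*}
are each one-dimensional, spanned respectively by the nonzero positive semidefinite elements $\vec{p}^{\dagger}_{\vec{\mu}}\vec{p}_{\vec{\mu}}$ and $\vec{p}_{\vec{\mu}}\vec{p}^{\dagger}_{\vec{\mu}}$. Hence there exist scalars with $\vec{p}^{\dagger}_{\vec{\mu}} \vec{u}^{\dagger}_t \vec{u}_{s'} \vec{p}_{\vec{\mu}} = c_{t,s'}\,\vec{p}^{\dagger}_{\vec{\mu}}\vec{p}_{\vec{\mu}}$ and $(\vec{p}_{\vec{\mu}}\vec{p}^{\dagger}_{\vec{\mu}})^2 = \gamma\,\vec{p}_{\vec{\mu}}\vec{p}^{\dagger}_{\vec{\mu}}$ with $\gamma > 0$. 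Applying $\tau$ to the first identity and invoking orthonormality yields $c_{t,s'} = \delta_{t,s'}/\tau(\vec{p}^{\dagger}_{\vec{\mu}}\vec{p}_{\vec{\mu}})$, so substituting both identities into
\begin{align*}
    E_{st}E_{s't'} \;=\; \vec{u}_s \vec{p}_{\vec{\mu}}\bigl(\vec{p}^{\dagger}_{\vec{\mu}} \vec{u}^{\dagger}_t \vec{u}_{s'} \vec{p}_{\vec{\mu}}\bigr)\vec{p}^{\dagger}_{\vec{\mu}}\vec{u}^{\dagger}_{t'} \;=\; \delta_{t,s'}\, c\gamma\, E_{st'}
\end{align*}
recovers the matrix unit relations up to a single positive constant $c\gamma$, which I absorb via the uniform rescaling $\vec{u}_t \mapsto (c\gamma)^{-1/2}\vec{u}_t$.

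Once the $E_{tt'}$ obey matrix unit relations, the $\dagger$-compatibility $E^{\dagger}_{tt'} = E_{t't}$ is immediate from the definition, so the linear extension of $E_{tt'} \mapsto \ket{t}\bra{t'}$ is a $\dagger$-homomorphism from the $d_{\vec{\mu}}^2$-dimensional span onto $M_{d_{\vec{\mu}}}(\BC)$. Matching dimensions and simplicity of $M_{d_{\vec{\mu}}}(\BC)$ force it to be an isomorphism, and the image is therefore a factor, yielding the final claim. The main anticipated obstacle is the clean bookkeeping of normalization constants threaded through the many occurrences of $\vec{p}_{\vec{\mu}}$ and $\vec{p}^{\dagger}_{\vec{\mu}}$; my plan defers all rescaling until after the matrix unit relations have been verified up to a single global factor, rather than attempting to canonicalize $\vec{p}_{\vec{\mu}}$ itself at the outset.
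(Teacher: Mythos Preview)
Your proposal is correct and follows essentially the same approach as the paper, which simply says ``run any orthogonalization procedure and fix it'' without further detail. Your argument is a careful expansion of that one-liner: Gram--Schmidt to get the $\vec{u}_t$, then Artin--Wedderburn plus the rank-one nature of the Young symmetrizer to verify the matrix unit relations and absorb the global constant.
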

\begin{proof}
    For each irrep labeled by $\vec{\mu}$, run any orthogonalization procedure and fix it. The point is that this solely depends on the irrep label $\mu$ for the symmetric group (and not on the dimension of the computation basis).
\end{proof}
Note that the above notions can be discussed abstractly without considering the computational basis (or we could represent the permutation by acting on $m$ many copies of $N$-dimensional Hilbert space $\vS_m$). 
In contrast, the multiplication rule for the partition algebra $\vP_k(N)$ depends on $N$, and we expect the basis to also explicitly depend on $N$. We will see that the inclusion $\BC[\vS_m] \otimes \vI_{k-m} \subset \vP_k(N)$ will be very helpful in analyzing the structure of the partition algebra, and the objects $\vec{u}_t \vec{p}_{\vec{\mu}} \in \vP_{k}(N)$ defined above will play a crucial role.

\subsubsection{The computational basis}
Consider $k$ copies of $N$-dimensional Hilbert spaces $(\mathbb{C}^{N})^{\otimes k}$ and its matrix algebra $\vB((\mathbb{C}^{N})^{\otimes k}).$ A natural choice of orthonormal basis is the computational (orthonormal) basis $\ket{m}$,
which naturally extends by the tensor product structure
\begin{align}
\CH =\mathbb{C}^{N} &= \text{Span}\{ \ket{m}\}_{m \in [N]}\\
\text{inducing}\quad     (\mathbb{C}^{N})^{\otimes k} &= \text{Span}\{ \ket{m_1}\otimes \cdots\otimes\ket{m_k}\}_{m_i \in [N]},\\
    \quad \text{inducing}\quad \vB((\mathbb{C}^{N})^{\otimes k}) & = \text{Span}\{ \underset{=:\vO_{\vec{m}}}{\underbrace{\ket{m_1}\otimes \cdots\otimes\ket{m_k}\cdot \bra{m_{k+1}}\otimes \cdots\otimes\bra{m_{2k}}}} \}_{m_i \in [N]},\label{eq:Om}\\
    &\stackrel{vec.}{\equiv} \text{Span}\{ \underset{=:|\vO_{\vec{m}})}{\underbrace{\ket{m_1}\otimes \cdots\otimes\ket{m_k}\cdot \ket{m_{k+1}}\otimes \cdots\otimes\ket{m_{2k}}}} \}_{m_i \in [N]}\label{eq:vecm}.
\end{align}

The algebra $\vB((\mathbb{C}^{N})^{\otimes k})$ as a $\BC$-vector space is naturally endowed with the Hilbert-Schmidt inner product $\tr[\vO^{\dagger}_1\vO_2]$. When the multiplication structure is unimportant, we may \textit{vectorize} (\textit{vec.} as above) the matrices into vectors of doubled dimension to focus on the Hilbert space structure. The inner produce remains the same by $(\vO_1|\vO_2):=\tr[\vO^{\dagger}_1\vO_2]$.

Formally, for a matrix $\vA$, let us denote its vectorization (or purification) by 
\begin{align}
|\vA):= (I \otimes T)\vA 
\end{align}
using the ``transpose'' map $T\bra{i}=\ket{i}$ to turn bras into kets. 
Consequently, the definition extends to the \emph{vectorization} of a superoperator by
\begin{align*}
\mathcal{N}[\cdot]=\sum_j  \alpha_j \vA_j[\cdot]\vB_j \rightarrow \vec{\mathcal{N}}=\sum_j \alpha_j \vA_j\otimes\vB^T_j
\end{align*}
where $\vB^T_j$ denotes the transpose of the matrix $\vB_j$ in the computational basis $\ket{i}$. We use curly fonts $\CN$ for superoperators and bold fonts $\vec{\CN}$ for the vectorized superoperators, which are themselves matrices. 

\subsection{The partition algebra}\label{sec:Parition_algebra}

In quantum information, we often consider the commutant of tensor copies of unitaries
\begin{align}
     \L(\{\vU^{\otimes k}\}_{\vU \in \vU(N)}\R)' \subset \vB((\mathbb{C}^N)^{\otimes k}).
\end{align}
This commutant algebra is completely characterized by Schur-Weyl duality and the representation theory of the symmetric group. In this article, we are interested in the commutant of tensored \textit{permutations} $\vS^{\otimes k}$, called the \textit{partition algebra} $\vP_{k}(N)$, which depends on the number of copies $k$ and the dimension $N$:
\begin{align}
     \vP_{k}(N) := (\{\vS^{\otimes k}\}_{\vS \in \vS(N)})' \subset \vB((\mathbb{C}^N)^{\otimes k})\quad \text{where}\quad \vS \ket{i} = \ket{S(i)} \quad \text{for each}\quad i = 1,\ldots, N.
\end{align}
In the case $N \ge 2k$, Schur-Weyl duality also holds as follows:
\begin{align}
    \vP_k(N) &\equiv \bigoplus_{\vec{\lambda} \in \Lambda_{k,N}} \vP_{\vec{\lambda}} \otimes \vI_{\vS_{\vec{\lambda}}}\quad \text{and}\quad \BC\text{-span}\{\vS^{\otimes k}\}_{\vS\in \vS(N)} = \bigoplus_{\vec{\lambda} \in \Lambda_{k,N}} \vI_{\vP_{\vec{\lambda}}}\otimes \vS_{\vec{\lambda}}\\
    \text{acting on}\quad (\BC^N )^{\otimes k} &\equiv \bigoplus_{\vec{\lambda} \in \Lambda_{k,N}} \CH_{\vP_{\vec{\lambda}}} \otimes \CH_{\vS_{\vec{\lambda}}}
\end{align}
where the irreps are labeled by integer partitions 
\begin{align}
    \vec{\lambda} \in \Lambda_{k,N} := \{ \vec{\lambda} \vdash N \mid 0 \le \labs{\vec{\lambda}^*}\le k\}.
\end{align}
Indeed, since the permutations acting naturally on the computational basis are a special subset of the unitaries, $\vS(N) \subset \vU(N)$, the commutant is larger $ (\{\vU^{\otimes k}\}_{\vU \in \vU(N)})' \subset \vP_{k}(N)$. Crucially for our later argument, the algebra ``stops changing'' after $N \ge 2k$\footnote{This is a different regime from some other applications where $N \ll 2k,$ e.g., \cite{fei2023efficient,nguyen2023mixed,grinko2023efficient,grinko2023gelfand}.}.
\begin{thm}
[{\cite{halverson2020set}}]\label{thm:stablize}
    If $N \ge 2k$, the dimension of the partition algebra (sum of dimensions of the irreps) is $dim(\vP_{k}(N)) = Bell_{2k}$, which is independent of $N$. In fact, for each fixed $k$, the partition algebra stabilizes in terms of $N$. That is, the irreducible algebra representations are isomorphic (as algebras)
    \begin{align}
        \vP_{\vec{\lambda}} \simeq \vP_{\vec{\lambda'}} \quad \text{whenever}\quad \vec{\lambda^*} = \vec{\lambda^{'*}}\quad \text{and}\quad \labs{\vec{\lambda}},\labs{\vec{\lambda'}} \ge 2k.
    \end{align}
\end{thm}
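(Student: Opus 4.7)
My plan is to establish both claims through an explicit diagrammatic basis arising from the commutant description $\vP_k(N) = \mathrm{End}_{\vS(N)}((\BC^N)^{\otimes k})$, using the $\vS(N)$-orbit structure on index tuples in $[N]^{2k}$.

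For the dimension count, expand an arbitrary $\vX \in \vP_k(N)$ in the computational basis of \eqref{eq:Om} as $\vX = \sum_{\vec{m}} X_{\vec{m}} \vO_{\vec{m}}$. The commutation condition $\vS^{\otimes k} \vX (\vS^{\otimes k})^\dagger = \vX$ for every $\vS \in \vS(N)$ forces $X_{\vec{m}}$ to depend only on the $\vS(N)$-orbit of $\vec{m} \in [N]^{2k}$ under the diagonal action. Two tuples share an orbit precisely when they induce the same set partition $\Pi(\vec{m}) \vdash [2k]$ whose blocks are the level sets of $i \mapsto m_i$. Thus the orbit sums $\vO'_{\Pi}$, one for each \emph{realizable} set partition, span $\vP_k(N)$, and they are linearly independent because distinct orbits have disjoint support in $\{\vO_{\vec{m}}\}$. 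A set partition with $b$ blocks is realizable by some tuple in $[N]^{2k}$ iff $N \ge b$, and since $b \le 2k$, the hypothesis $N \ge 2k$ makes every $\Pi \vdash [2k]$ realizable. Therefore $\dim \vP_k(N)$ equals the total number of set partitions of $[2k]$, which is $Bell_{2k}$.

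For the irrep stabilization, I use Schur-Weyl in reverse: $\dim \CH_{\vP_{\vec{\lambda}}}$ equals the multiplicity $m(\vec{\lambda}, k, N)$ of the $\vS(N)$-Specht module $S^{\vec{\lambda}}$ in $(\BC^N)^{\otimes k}$. Iterating the branching rules for the tower $\vP_0(N) \subset \vP_1(N) \subset \cdots$ together with Pieri-type restriction identifies $m(\vec{\lambda}, k, N)$ with a count of vacillating tableaux, namely sequences of partitions obtained by single-box moves that end at $\vec{\lambda}$ and that remain valid Young diagrams of size $N$ throughout. For $N \ge 2k$ the top row $\lambda_1 = N - |\vec{\lambda}^*|$ dominates every other row by at least $k$, so the constraint $\lambda_1 \ge \lambda_2$ never becomes binding along any walk, and the count reduces to a purely $\vec{\mu}$- and $k$-dependent combinatorial quantity independent of $N$. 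Since each factor $\vP_{\vec{\lambda}}$ is a full matrix algebra determined up to $\dagger$-isomorphism by its dimension, $\vP_{\vec{\lambda}} \simeq \vP_{\vec{\lambda}'}$ whenever $\vec{\lambda}^* = \vec{\lambda}'^*$; the identity $\sum_{\vec{\lambda}} m(\vec{\lambda}, k, N)^2 = Bell_{2k}$ serves as a consistency check.

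The main obstacle lies in rigorously verifying that for $N \ge 2k$ no walk in the Young lattice is ever blocked by the first-row constraint. This should follow by induction on the step index: after $j$ steps the intermediate shape differs from $\emptyset$ in at most $j \le k$ boxes, so the first row has length $\ge N-k$ while any lower row has length $\le k$, and the gap $N - 2k \ge 0$ guarantees admissibility of every subsequent single-box move in rows $\ge 2$. This argument is already spelled out in the partition-algebra literature, and I would import it from \cite{halverson2005partition, halverson2020set} rather than reproducing the full combinatorics.
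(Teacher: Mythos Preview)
The paper does not prove this theorem; it is stated with a citation to \cite{halverson2020set} and used as a black-box input. So there is no ``paper's own proof'' to compare against.

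Your proposal is essentially the standard argument from the partition-algebra literature and is correct in outline. The dimension count via $\vS(N)$-orbits on $[N]^{2k}$ is exactly how the paper later presents the spanning set $\{\vO'_\Pi\}$ (Section~\ref{sec:diagram_Partition_algebra}), and the realizability criterion $N \ge |\Pi|$ combined with $|\Pi| \le 2k$ gives $Bell_{2k}$ immediately. For the stabilization, your reduction to the $N$-independence of vacillating-tableau counts is the approach of Halverson--Ram and \cite{halverson2020set}; the inductive bound you sketch (after $j$ half-steps the first row has length $\ge N-j$ and every lower row has length $\le j$, so $N \ge 2k$ keeps $\lambda_1 \ge \lambda_2$ throughout a length-$2k$ walk) is the right mechanism. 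One small point: be precise that the relevant walks have $2k$ half-steps (remove-or-stay followed by add-or-stay, $k$ times), so the worst-case deviation is indeed $k$ boxes below the first row and $k$ boxes removed from it, matching the threshold $N \ge 2k$. Since you explicitly plan to import this from \cite{halverson2005partition,halverson2020set}, your proposal is in line with how the paper itself treats the result.
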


\begin{prop}[Bounds on the Bell number, e.g.,{~\cite{low2010pseudo}}] 
The Bell numbers are bounded by
    \begin{align}
        Bell_{n} \le n!\quad \text{for each}\quad n \in \mathbb{Z}.
    \end{align}
\end{prop}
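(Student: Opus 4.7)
The plan is to exhibit a simple surjection from permutations of $[n]$ onto set-partitions of $[n]$, which immediately yields $Bell_n \le n!$ by a counting argument. Given any $\sigma \in \vS_n$, define $\Phi(\sigma)$ to be the set-partition of $[n]$ whose blocks are the supports of the disjoint cycles of $\sigma$. This map is well-defined because every permutation has a unique disjoint cycle decomposition (up to reordering the cycles and the starting point within each cycle), so the induced unordered collection of supports is canonical.

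To verify surjectivity, take any set-partition $\Pi = \{B_1,\dots,B_r\}$ of $[n]$. Impose an arbitrary cyclic ordering on each $B_i$ to obtain cycles $c_i$, and form $\sigma := c_1 c_2 \cdots c_r$. The cycles of $\sigma$ are exactly the $c_i$, so $\Phi(\sigma) = \Pi$. Hence $\Phi$ surjects onto set-partitions, and since a surjection requires the domain to be at least as large as the codomain, we conclude
\begin{align}
n! \;=\; \labs{\vS_n} \;\ge\; \labs{\text{set-partitions of }[n]} \;=\; Bell_n.
\end{align}

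A slightly more refined version of the same argument gives an exact formula that makes the inequality transparent: the fiber $\Phi^{-1}(\Pi)$ over a set-partition $\Pi$ with block sizes $\labs{B_1},\dots,\labs{B_r}$ consists of all ways to choose a cyclic ordering within each block, and so has size $\prod_{i=1}^r (\labs{B_i}-1)!$. Summing over all set-partitions,
\begin{align}
n! \;=\; \sum_{\Pi \vdash [n]} \prod_{i=1}^r (\labs{B_i}-1)! \;\ge\; \sum_{\Pi \vdash [n]} 1 \;=\; Bell_n,
\end{align}
where the inequality uses $(\labs{B_i}-1)! \ge 1$.

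There is no real obstacle: the claim is an elementary combinatorial inequality and the surjection above suffices. For completeness, an alternative induction proof proceeds from the standard recurrence $Bell_{n+1} = \sum_{k=0}^n \binom{n}{k} Bell_k$ and the comparison $\sum_{k=0}^n \binom{n}{k} k! = n!\sum_{j=0}^n 1/j! \le e\cdot n! \le (n+1)!$ valid for $n\ge 2$, with the small cases $n=0,1$ checked by hand; but the cycle-decomposition argument is cleaner and is what I would present.
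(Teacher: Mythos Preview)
Your proof is correct: the surjection $\Phi:\vS_n\to\{\text{set-partitions of }[n]\}$ sending a permutation to its cycle-type partition is well-defined and onto, and the fiber count $\prod_i(\labs{B_i}-1)!$ is right. The paper does not actually prove this proposition; it is stated with a citation and used as a background fact, so there is no ``paper's proof'' to compare against. Your cycle-decomposition argument is the standard elementary proof and would be perfectly acceptable here.
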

\begin{lem}[Hook formula{~\cite[Theorem 3.10.2]{sagan2013symmetric}}] \label{lem:dimension_S_lambda}
    For each integer partition $\vec{\lambda}\in \Lambda_{k,N}$, the dimension of the corresponding irreducible representation of the symmetric group is a polynomial of $N$
    \begin{align} \label{eq:Sn_irrep_dims}
        \dim \vS_{\vec{\lambda}} = \frac{N!}{\prod_{(i,j)\in \vec{\lambda}}h(i,j)} = \frac{ N^{b_0(\lambda)}(N - 1)^{b_1(\lambda)}\cdot (N - 2k -1)^{b_{2k-1}(\lambda)} }{\prod_{(i,j)\in \vec{\lambda}^*} h(i,j)} = s_{\vec{\lambda}}(N) 
    \end{align}
    where $h(i,j)$ is the hook length of the Young diagram associated with shape $\vec{\lambda}$
    \begin{align}
    b_j  = 0, 1 \quad \text{and}\quad deg (s_{\vec{\lambda}}) = \sum b_j = \labs{\vec{\lambda}^*} \le k. 
    \end{align}
\end{lem}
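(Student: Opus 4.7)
My plan is to derive the stated formula from the classical hook length formula $\dim \vS_{\vec{\lambda}} = N!/\prod_{(i,j) \in \vec{\lambda}} h(i,j)$ (already invoked as Sagan, Thm.~3.10.2) by isolating the first-row hooks, which are the only ones carrying $N$-dependence. For any box $(i,j)$ with $i \geq 2$, both the arm $\lambda_i - j$ and the leg $\lambda'_j - i$ are untouched by deletion of the first row, so $h(i,j)$ inside $\vec{\lambda}$ equals the hook length of the corresponding box inside $\vec{\lambda}^*$. This immediately reproduces the denominator $\prod_{(i,j)\in\vec{\lambda}^*} h(i,j)$ of the target formula, and the remaining task reduces to establishing
\[
\frac{N!}{\prod_{j=1}^{\lambda_1} h(1,j)} \;=\; \prod_{c=0}^{2k-1}(N - c)^{b_c},
\qquad b_c \in \{0,1\}, \quad \sum_c b_c = |\vec{\lambda}^*|.
\]

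For this, I would analyze the first-row hooks $h(1,j) = (\lambda_1 - j) + (\lambda'_j - 1) + 1 = \lambda_1 - j + \lambda'_j$. Since $\lambda'_j$ is weakly decreasing in $j$ and $-j$ is strictly decreasing, the sequence $h(1,j)$ is strictly decreasing in $j$, hence the $\lambda_1$ first-row hook values are pairwise distinct positive integers. Combined with $\lambda'_1 \leq 1 + |\vec{\lambda}^*|$ and $\lambda_1 = N - |\vec{\lambda}^*|$, the maximum hook satisfies $h(1,1) = \lambda_1 + \lambda'_1 - 1 \leq N$. Thus the hook values form a size-$\lambda_1$ subset $H \subset \{1, \ldots, N\}$, and its complement $C := \{1,\ldots,N\} \setminus H$ has exactly $N - \lambda_1 = |\vec{\lambda}^*|$ elements. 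Cancelling the factorial termwise against these hooks yields
\[
\frac{N!}{\prod_{j=1}^{\lambda_1} h(1,j)} \;=\; \prod_{c \in C} c.
\]

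To locate $C$ inside the range claimed by the lemma, observe that for every column $j > \lambda_2$ we have $\lambda'_j = 1$, so $h(1,j) = \lambda_1 - j + 1$ sweeps through every value in $\{1, 2, \ldots, \lambda_1 - \lambda_2\}$. Hence $C \subset \{\lambda_1 - \lambda_2 + 1, \ldots, N\}$, and using $\lambda_2 \leq |\vec{\lambda}^*| \leq k$ gives $\lambda_1 - \lambda_2 \geq N - 2k$, so each element of $C$ has the form $N - c$ with $c \in \{0, 1, \ldots, 2k-1\}$. Defining $b_c := \mathbf{1}[N - c \in C]$ then produces exactly the claimed factorization, with $\sum_c b_c = |C| = |\vec{\lambda}^*|$ and each $b_c \in \{0,1\}$, and the degree bound $\deg s_{\vec{\lambda}} = |\vec{\lambda}^*| \leq k$ is automatic. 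The only nontrivial combinatorial input is the distinctness of the first-row hook values — morally the classical beta-number description of hooks — but this follows in one line from the monotonicity of $\lambda'_j$, so no serious obstacle arises and the remainder is bookkeeping.
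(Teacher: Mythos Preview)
The paper does not actually prove this lemma; it is stated with a citation to Sagan's textbook and used as a black box, with only the informal remark that ``the zeros are located at integer values in a bounded range and have multiplicity at most one.'' Your argument supplies exactly the details the paper omits, and it is correct: the separation of hooks into first-row hooks (which carry all the $N$-dependence) and the $\vec{\lambda}^*$-hooks (which do not) is the right move, the strict monotonicity of $j\mapsto h(1,j)=\lambda_1-j+\lambda'_j$ gives distinctness, and your bound $\lambda_1-\lambda_2\ge N-2k$ correctly pins the missing factors to the window $\{N-2k+1,\ldots,N\}$, yielding $b_c\in\{0,1\}$ and $\sum_c b_c=|\vec{\lambda}^*|\le k$.

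One minor remark: the paper's displayed formula writes the last factor as $(N-2k-1)^{b_{2k-1}}$, which does not match your (correct) range $c\in\{0,\ldots,2k-1\}$; this appears to be a typo in the paper rather than an issue with your proof, since the paper's own downstream use only needs the qualitative statement that the zeros are simple and lie among small integers.
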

From \autoref{eq:Sn_irrep_dims}, viewing the dimension as a function of $N$, we see that the zeros are located at integer values in a bounded range and have multiplicity at most one. Strictly speaking, we can restrict to a stronger symmetry as we actually use phased permutations (see, e.g., ~\autoref{lem:freeness_random_P}). Nevertheless, most of the discussions will consider the permutation $\vS(N)$ as it is more general and well-studied.
\subsubsection{Set-partitions}\label{sec:set_partitions}
Following the notation of~\cite{low2010pseudo}, consider partitions $\Pi$ of a set with $\ell$ elements. (It will be convenient to distinguish notationally from the integer partitions discussed in~\autoref{sec:integerPartitions}.)
We write
\begin{itemize}
    \item $
    \Pi \vdash \{1,\cdots, \ell\}$ \, iff $\Pi$ is a set-partition of $\{1,\cdots,\ell\}$.
\item $(i,j) \in \Pi$ \, iff $(i, j)$ are in the same \textit{block} (also known as a \textit{cycle}) of $\Pi$.
\item $\labs{\Pi}:= (\text{number of blocks of }\Pi).$
\item $\Pi_1 \le \Pi_2$ iff $ (i,j) \in \Pi_1 \implies (i,j) \in \Pi_2$, which reads ``$\Pi_1$ is a refinement of $\Pi_2$.''
\end{itemize}
For example, let $\Pi = \{\{1,3\},\{2\}\}$, then
\begin{align}
    \Pi \vdash \{1,2,3\}, \quad (1,3) \in \Pi, \quad \text{and}\quad \labs{\Pi} =2.
\end{align}
The partial order structure between integer partitions will become important, and we will make use of generalized inclusion-exclusion principles. This can be understood in linear algebraic terms.

\begin{lem}[Linear extension of partial order~\cite{davey2002introduction}]\label{lem:extensionPartialOrder}
Any finite set with partial order can be extended as a subset of a totally ordered set. Therefore, the partial order can be (nonuniquely) represented as an upper triangular $0/1$-matrix $\vM$ satisfying $M_{ij} = 1$ if and only if $i\geq j$. 
\end{lem}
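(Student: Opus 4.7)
The plan is to prove the statement in two stages: first establish the existence of a linear (total) extension of the partial order on a finite set, then use this extension to produce the matrix representation.

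For the existence of a linear extension, I would proceed by induction on the cardinality $n$ of the set. The base case $n=1$ is trivial since any one-element poset is already totally ordered. For the inductive step, I would use the key observation that any finite nonempty poset has at least one minimal element $x_0$: if it did not, one could iteratively pick strictly smaller elements and, by the pigeonhole principle on a finite set, eventually produce a cycle, contradicting antisymmetry of the partial order. Removing $x_0$ yields a poset on $n-1$ elements, which by the inductive hypothesis admits a linear extension $x_1 <_T x_2 <_T \cdots <_T x_{n-1}$. Prepending $x_0$ gives a total order $x_0 <_T x_1 <_T \cdots <_T x_{n-1}$. To verify this extends the original partial order, note that whenever $y \le_P x_0$ we must have $y = x_0$ by minimality, while for comparisons among $x_1, \ldots, x_{n-1}$ the extension property is inherited from the inductive hypothesis.

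For the second stage, once a linear extension $<_T$ is chosen, relabel the elements of the poset as $1, 2, \ldots, n$ in accordance with $<_T$. Define $\vM$ by $M_{ij} = 1$ iff $i \ge j$ as integers. Since the labeling respects $<_T$ and $<_T$ refines the partial order, any pair related by the partial order is also recorded as a $1$ in $\vM$. The matrix is triangular by construction (with the $1$s on one side of the diagonal according to convention). The nonuniqueness noted in the statement is immediate: any antichain in the poset can be linearly ordered in arbitrarily many ways, each yielding a different extension and thus a different relabeling.

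The main technical point, and the only spot that needs care, is the existence of a minimal element in a finite poset, which is essentially a finite descent argument and is routine. Everything else is an induction and a bookkeeping exercise in matrix indexing. I do not anticipate any genuine obstacle, as this is a classical finite-dimensional version of the Szpilrajn extension theorem (equivalently, correctness of topological sort), whose standard inductive proof requires no machinery beyond the definitions of partial and total order.
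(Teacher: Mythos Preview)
The paper does not supply its own proof of this lemma; it simply cites a textbook. Your inductive argument for the existence of a linear extension (pick a minimal element, recurse) is correct and is the standard topological-sort proof.

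There is, however, a slip in your second stage. You write ``Define $\vM$ by $M_{ij} = 1$ iff $i \ge j$ as integers.'' That is the zeta matrix of the \emph{total} order, which is the same fixed triangular matrix of all ones regardless of what the original partial order was, so it does not ``represent'' the partial order in any meaningful sense. The intended reading of the lemma (and the way it is used downstream, e.g.\ for the matrix $K_{\Pi\Pi'} = \indicator(\Pi' \ge \Pi)$ and its M\"obius inverse) is that $\ge$ refers to the \emph{partial} order: set $M_{ij} = 1$ iff $i \ge_P j$, and then note that because the labeling came from a linear extension, $i \ge_P j$ implies $i \ge j$ as integers, so every nonzero entry sits on one side of the diagonal. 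That is the zeta matrix of the poset, upper triangular with ones on the diagonal, hence invertible. Your argument contains the right implication ($i \ge_P j \Rightarrow i \ge_T j$) but attaches it to the wrong matrix; the fix is a one-line change in the definition of $\vM$.
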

\begin{lem}[Restricting partial order]\label{lem:restricting_partial_order}
    For any finite set with partial order induced a partial order for any subset. Restriction of an upper triangular matrix on a subset of indices remains upper triangular.
\end{lem}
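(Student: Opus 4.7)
The plan is to verify the two assertions separately, both of which follow directly from the definitions once the right perspective is taken. For the first assertion, I would simply check that the three defining axioms of a partial order (reflexivity, antisymmetry, and transitivity) are preserved under restriction to a subset $T \subseteq S$: each axiom is a universally quantified statement over elements, so its validity on $S$ automatically yields validity on any subset $T$. This part is immediate and requires no real work.

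For the second assertion, the plan is to leverage \autoref{lem:extensionPartialOrder}. That lemma supplies a total order on the full index set which is compatible with the partial order, and the matrix $\vM$ is upper triangular precisely with respect to this total order. Now, if I pick a subset $T$ of indices and relabel its elements according to the order they inherit from the ambient total order, say $i_1 < i_2 < \cdots < i_{|T|}$, then the restriction of a total order to any subset remains a total order. Under this inherited numbering, the submatrix $\vM|_{T \times T}$ has $(i_a, i_b)$-entry equal to $1$ iff $i_a \geq i_b$ in the ambient order, which is equivalent to $a \geq b$ in the relabeled order. Hence the restricted matrix is upper triangular. There is no substantive obstacle here; the only mild care needed is the bookkeeping of the relabeling, which is entirely routine.
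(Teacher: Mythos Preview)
Your proposal is correct. The paper states this lemma without proof, treating it as an elementary fact; your direct verification of the partial-order axioms together with the observation that restricting the linear extension from \autoref{lem:extensionPartialOrder} to a subset preserves the total order (and hence the triangular structure of the submatrix) is exactly the natural argument one would supply.
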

\begin{lem}[Inverting triangular matrices]\label{lem:inv_upper_triangular}
    Any upper triangular matrix with nonzero diagonals is invertible by another upper triangular matrix.
\end{lem}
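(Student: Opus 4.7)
The plan is to construct the inverse explicitly by exploiting nilpotency of the strictly upper triangular part. Decompose the given $n\times n$ upper triangular matrix as $\vM = \vD + \vN$, where $\vD$ is the diagonal part (invertible by hypothesis, with $\vD^{-1}$ also diagonal) and $\vN$ is strictly upper triangular. Factor $\vM = \vD(\vI + \vT)$ where $\vT := \vD^{-1}\vN$ is still strictly upper triangular, since multiplying by a diagonal on the left only rescales rows.

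The key observation is that any $n\times n$ strictly upper triangular matrix satisfies $\vT^n = 0$. A short induction on $k$ shows that $(\vT^k)_{ij} = 0$ whenever $j < i + k$, so for $k \ge n$ all entries vanish. This forces the Neumann series to terminate, yielding the explicit formula
\begin{equation}
(\vI + \vT)^{-1} \;=\; \sum_{k=0}^{n-1} (-\vT)^k,
\end{equation}
which one verifies directly by multiplying through by $(\vI + \vT)$ and observing the telescoping together with $\vT^{n} = 0$. Each term $(-\vT)^k$ is upper triangular, since products of upper triangular matrices are upper triangular (from $(\vA\vB)_{ij} = \sum_{k} A_{ik}B_{kj}$, the constraints $A_{ik}=0$ for $k<i$ and $B_{kj}=0$ for $j<k$ force $j \ge i$ for any nonzero contribution). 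Hence $(\vI + \vT)^{-1}$ is upper triangular, and therefore $\vM^{-1} = (\vI + \vT)^{-1}\vD^{-1}$, being a product of two upper triangular matrices, is upper triangular.

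There is no real obstacle here; the statement is a standard consequence of nilpotency of the strict upper triangular part, and the argument above gives an explicit formula rather than a mere existence proof. The only points to verify are (i) nilpotency of strictly upper triangular matrices, which is the short induction described, and (ii) closure of upper triangular matrices under multiplication, which is immediate from the matrix product formula. An alternative, equally short proof proceeds by induction on $n$ using the block decomposition $\vM = \bigl(\begin{smallmatrix} a & \vec{v}^T \\ 0 & \vM' \end{smallmatrix}\bigr)$ with $a \ne 0$ and applying the inductive hypothesis to $\vM'$, but the Neumann-series route makes the structure of the inverse most transparent.
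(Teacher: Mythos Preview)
Your proof is correct but takes a genuinely different route from the paper. The paper dispatches the lemma in one line by invoking the adjugate formula $\vM^{-1} = (\det \vM)^{-1}\,\mathrm{adj}(\vM)$: for an upper triangular $\vM$ the determinant is the product of the (nonzero) diagonal entries, and the relevant cofactors vanish so that the adjugate is itself upper triangular. Your argument instead factors $\vM = \vD(\vI + \vT)$ with $\vT$ strictly upper triangular and nilpotent, then writes $(\vI+\vT)^{-1}$ as the terminating Neumann series $\sum_{k=0}^{n-1}(-\vT)^k$. The adjugate route is shorter as a citation; your route is more constructive and makes explicit that each entry of $\vM^{-1}$ is a polynomial in the entries of $\vM$ divided only by products of diagonal entries, which is in fact well suited to the surrounding context where one tracks rational dependence on parameters.
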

\begin{proof}
This follows immediately from the expression for the inverse of a matrix in terms of its adjugate~\cite{horn2012matrix}.
\end{proof}

\subsubsection{Explicit presentation}
\begin{figure}[t]
    \centering
    \includegraphics[width=0.8\textwidth]{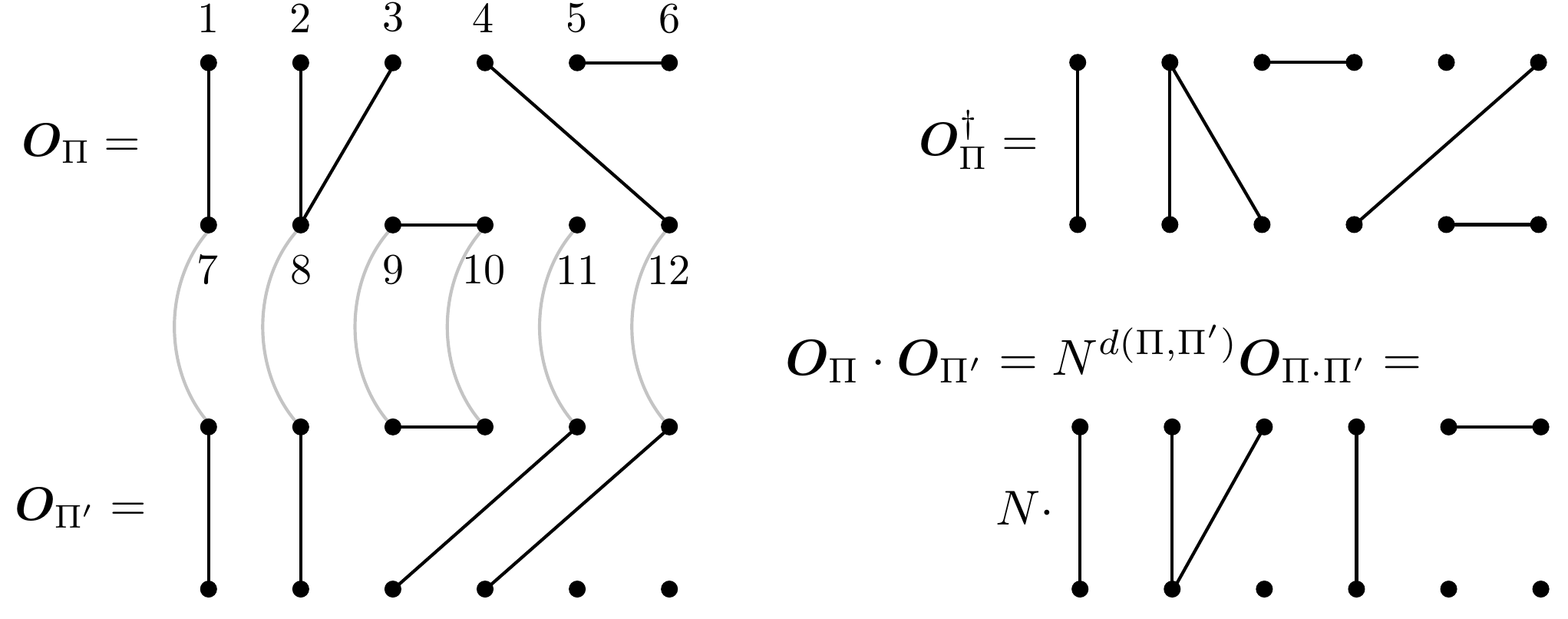}
    \caption{(Left) The diagrams $\vO_{\Pi}$ and $\vO_{\Pi'}$ corresponding to set-partitions $\Pi= \L((1,7), (2,3,8), (9,10),(4,12),(11),(5,6)\R)$ and $\Pi' = \L((1,7),(2,8),(3,4),(5,9), (10,6),(11),(12)\R)$ for $k=6$. (Top right) The conjugation $\vO_\Pi^{\dagger}$ is simply flipping the diagram upside down. (Bottom right) Multiplying two diagrams amounts to identifying the adjacent rows (indicated by gray lines) and simplifying the diagram to remove any connected components lying completely within the identified rows. A factor of $N$ appears for each component removed during diagram simplification, accounted for by $d(\Pi,\Pi')$. Thus, although the diagrams and set-partitions do not explicitly refer to the local dimension $N$, the partition algebra is a priori dependent on $N$.}
    \label{fig:O_dagger_multiply}
\end{figure}

\label{sec:diagram_Partition_algebra}
The partition algebra $\vP_{k}(N)$ can be defined abstractly but, for concreteness, we will explicitly present it in the computational basis throughout this work. We need to introduce new notation to capture this high-dimensional object. In particular, for our purposes, the notation in this section assumes we are in the stable range where $N \ge 2k >0$. Let
\begin{align}
    \vec{m} &= (m_1,\cdots, m_{2k}) \in \{1,\cdots, N\}^{2k}\tag{tuple of indices}\\
    M'_{\Pi} &:= \{ \vec{m}: m_i = m_j \quad\text{iff}\quad (i,j)\in \Pi\}. \tag{set of $\vec{m}$ whose partition is exactly $\Pi$}\\
    M_{\Pi} &:= \{ \vec{m}: m_i = m_j \quad\text{if}\quad (i,j)\in \Pi\}. \tag{set of $\vec{m}$ whose partition is refined by $\Pi$}
\end{align}
The two types of sets $M_{\Pi}$, $M'_{\Pi}$ are related by the union
\begin{align}
    M_{\Pi} = \bigcup_{\Pi' \ge \Pi} M'_{\Pi'},  
\end{align}
where each $M'_{\Pi'}$ are disjoint from each other.

The cardinality of sets $M_{\Pi}, M'_{\Pi}$ is given by an elementary calculation:
\begin{align} \label{eq:size_M_sets}
\labs{M'_{\Pi}} = N (N-1)\cdots (N- \labs{\Pi}+1)\quad \text{and}\quad \labs{M_{\Pi}} = N^{\labs{\Pi}}.
\end{align}
In particular, observe that both quantities are polynomials in $N$, and the zeros are located at integer values -- this seemingly innocent structure will play a crucial role in our interpolation argument. Using the above notation, the partition algebra $\vP_{k}(N)\subset \vB((\mathbb{C}^N)^{\otimes k})$ is a $\dagger$-algebra presented by a $\mathbb{C}$-linear combination of the following operators (using the notation of \eqref{eq:Om}): 
\begin{align}
    \vP_{k}(N) = \text{Span}_{\mathbb{C}}\{\vO_{\Pi}\}_{\Pi}\quad \text{where}\quad
    \vO_{\Pi} &:=  \sum_{\vec{m} \in M_{\Pi}} \vO_{\vec{m}}\quad 
    \text{for each} \quad \Pi \vdash \{1, \cdots, 2k\}.
\end{align}
These can be described diagrammatically, as illustrated in~\autoref{fig:O_dagger_multiply}.
The \textit{involution} (i.e., adjoint) acts as 
\begin{align}
    ( \vO_{\Pi} )^{\dagger} = \vO_{\Pi^{\dagger}} \quad \text{where}\quad \Pi^{\dagger} \vdash \{1, \cdots, 2k\}.
\end{align}
Graphically, the involuted partition $ \Pi^{\dagger}$ amounts to flipping it ``upside down,'' which is consistent with the matrix adjoint. The multiplication operation (realized as matrix multiplication) acts nicely on these operators by 
\begin{align}
    \vO_{\Pi_2}\vO_{\Pi_1} = \vO_{\Pi_2\cdot \Pi_1} \cdot N^{d(\Pi_2,\Pi_1)}\label{eq:O2O1=O21},
\end{align}
where the multiplication rule for set-partitions $\Pi_2\cdot \Pi_1$ is summarized in~\autoref{fig:O_dagger_multiply} and given in more detail in~\cite{halverson2005partition,halverson2020set}. The integer $d$ is a function of both $\Pi_1$ and $\Pi_2$ but independent of $N$. Graphically, 
\begin{align}
    d = \# ( \text{components without free legs after multiplying $\Pi_1$ and $\Pi_2$} ).
\end{align}
While $\vP_k(N)$ depends on both $k$ and $N$, the algebraic structure is mainly determined by $k$ since $\Pi \vdash \{1, \cdots, 2k\}$; the dependence on $N$ only comes in as a multiplicative weight $N^{d(\Pi_2,\Pi_1)}$. Alternatively, we may describe the algebra in the orthogonal basis
\begin{align}
    \vO'_\Pi &:=\sum_{\vec{m} \in M'_{\Pi}} \vO_{m} \quad \text{such that}\quad \tr[ \vO'^{\dagger}_{\Pi'}\vO'_\Pi] = \delta_{\Pi,\Pi'}\cdot \norm{\vO'_\Pi}^2_2 \quad\text{for each}\quad \Pi,\Pi' \vdash \{1, \cdots, 2k\}.\label{eq:O'pi},
\end{align}
as drawn in \autoref{fig:Oprime_Ovectorize}.
This new set of operators is related to the original ones by a \textit{M\"obius} inversion (a generalized inclusion-exclusion principle for partial orders,~\autoref{lem:extensionPartialOrder}, \autoref{lem:inv_upper_triangular}) 
\begin{align}
    \vO_{\Pi} &=  \sum_{\Pi' \ge \Pi} \vO'_{\Pi'} \\
    &= \sum_{\Pi' \vdash \{1, \cdots, 2k\} } K_{\Pi\Pi'} \vO'_{\Pi'}\quad \text{where}\quad K_{\Pi\Pi'} := \indicator(\Pi' \ge \Pi)\label{eq:K}\\
    \vO'_{\Pi} &=  \sum_{\Pi' \vdash \{1, \cdots, 2k\} } (K^{-1})_{\Pi\Pi'} \vO_{\Pi'}\label{eq:Kinv} 
\end{align}
where $\vK^{-1}$ is the M\"obius inversion of $\vK$; indeed, the matrix $\vK$ is upper triangular with ones on the diagonals and thus invertible (\autoref{lem:inv_upper_triangular}). Importantly, the entries of matrix $\vK$ only depend on the combinatorial structure of $\Pi\vdash 2k$ and are independent of $N$.

For simplicity, we can ignore the multiplicative structure and vectorize the operators (by turning bras to kets~\eqref{eq:vecm}). In this picture, it is convenient to think about the orthogonal basis $|\vO'_\Pi)$ which, as we've seen, can be expressed by a linear combination of $|\vO_{\Pi})$.
\begin{align}
    |\vO'_\Pi) &:=\sum_{\vec{m} \in M'_{\Pi}} |\vO_{\vec{m}}) \quad \text{where}\quad (\vO'_{\Pi}|\vO'_\Pi) = \norm{\vO'_\Pi}^2_2=\labs{M'_{\Pi}}  \quad\text{for each}\quad \Pi\vdash \{1, \cdots, 2k\}\label{eq:norm_O'pi}.
\end{align}

Then, the action of tensored permutations can be written as a superoperator projecting $\vB((\mathbb{C}^N)^{\otimes k}) \rightarrow \vP_{k}(N) \subset \vB((\mathbb{C}^N)^{\otimes k})$
\begin{align}
    \CN_{2k,\vS}:=\BE [\vS^{\otimes k}( \cdot )\vS^{\dagger \otimes k} ]&= \sum_{\Pi \vdash 2k} \frac{\vO'_{\Pi} \tr[\vO'_{\Pi} (\cdot) ]}{\norm{\vO'_{\Pi}}^2_2},\label{eq:average_projector}
\end{align}
or as an operator after vectorization
\begin{align}
     \vec{\CN}_{2k,\vS} := \BE [\vS^{\otimes k}\otimes \vS^{*\otimes k} ]&= \sum_{\Pi \vdash 2k} \frac{|\vO'_\Pi)(\vO'_\Pi|}{\norm{\vO'_{\Pi}}^2_2}, 
\end{align}
where the Hilbert-Schmidt norm reads $\norm{\vO}_2 = \sqrt{\tr[\vO^{\dagger}\vO]} = \sqrt{(\vO|\vO)}$. Importantly, for each $\Pi$, the zeros of the normalization factors $\labs{M'_{\Pi}}$ are contained in the integer set $\{0,1,\cdots, 2k\}$ by \autoref{eq:size_M_sets}. The operators $\vO_{\Pi}$ are also nice in the sense that
\begin{align}
     \tr[\vO^{\dagger}_{\Pi_1}\vO_{\Pi_2}] = N^{c(\Pi_1,\Pi_2)} \quad \text{for each partition}\quad \Pi \vdash \{1, \cdots, 2k\}\label{eq:trO1O2}
\end{align}
for an integer $c(\Pi_1,\Pi_2)$ depending only on $\Pi_1,\Pi_2$ (and independent of $N$); graphically, 
\begin{align}
    c(\Pi_1,\Pi_2) = \# ( \text{connected components after fully contracting $\Pi_1^\dagger$ and $\Pi_2$})\label{eq:cPi1Pi2}.
\end{align}
This also gives the inner product 
\begin{align}
    \frac{\tr[\vO^{\dagger}_{\Pi_1}\vO_{\Pi_2}]}{\norm{\vO_{\Pi_1}}_2\norm{\vO_{\Pi_2}}_2} = \frac{1}{\sqrt{N}^{\delta c(\Pi_1,\Pi_2)}} \quad \text{where}\quad \delta c(\Pi_1,\Pi_2) = c(\Pi_1,\Pi_1) + c(\Pi_2,\Pi_2) - 2c(\Pi_1,\Pi_2).
\end{align}
\begin{figure}[t]
    \centering
    \includegraphics[width=0.7\textwidth]{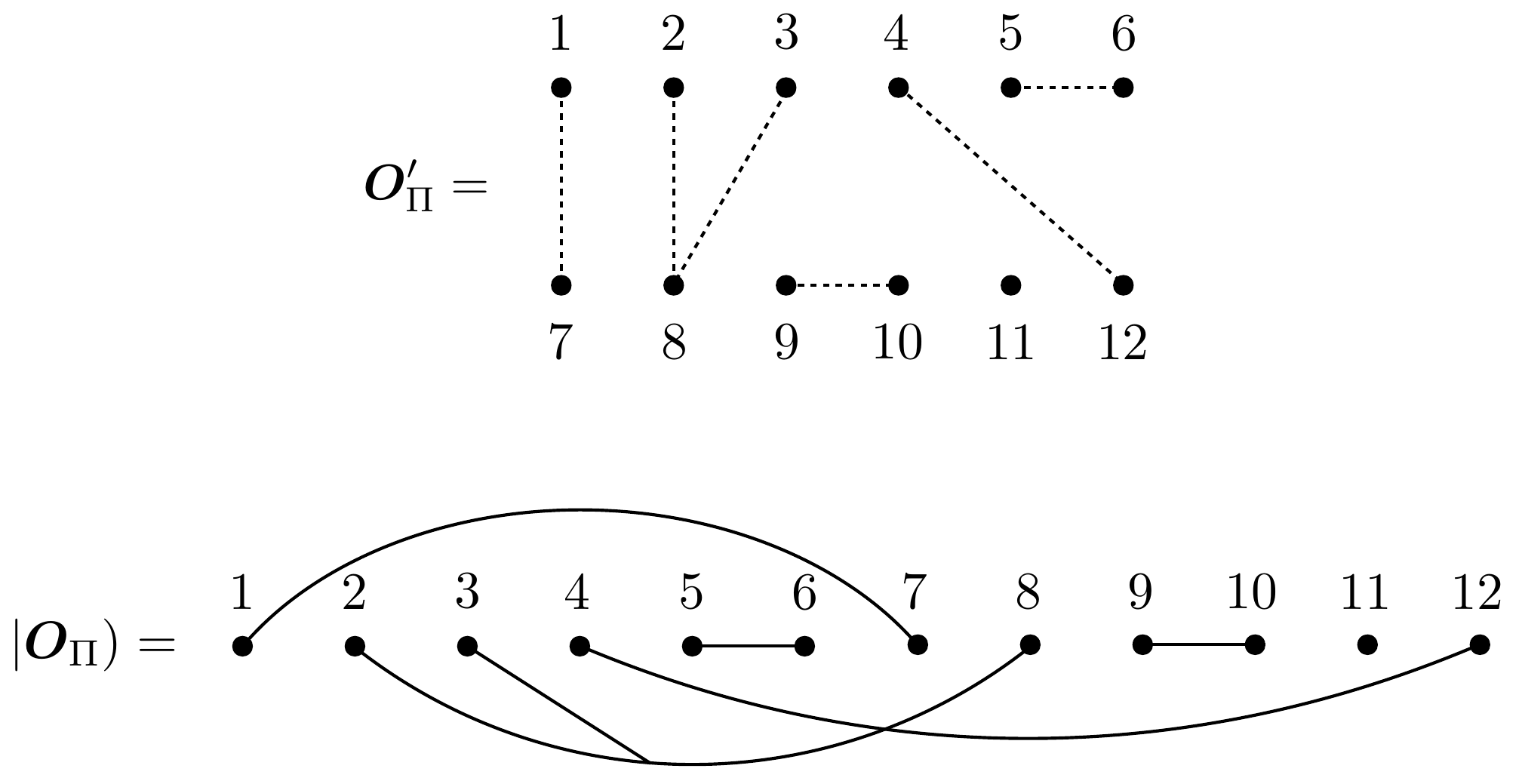}
    \caption{Related notions. The operator $\vO'_{\Pi}$ is modified from $\vO_{\Pi}$ such that distinct blocks have distinct labels (denoted by dotted lines). The vectorization $|\vO_{\Pi})$ treats the bras and kets equally.}
    \label{fig:Oprime_Ovectorize}
\end{figure}

\begin{figure}[t]
    \centering
    \includegraphics[width=0.3\textwidth]{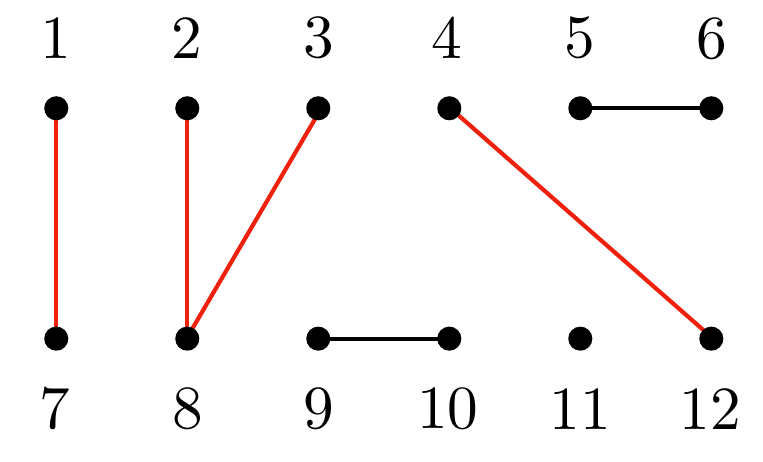}
    \caption{An illustration for propagating blocks, highlighted in red. The lines between $(5,6)$ and $(9,10)$ do not propagate since they are completely within the top and the bottom. The number of propagating blocks is $3$, i.e., the above diagram is in $\vJ_3(N)$ but not in $\vJ_2(N).$}
    \label{fig:propagating}
\end{figure}

\subsubsection{Algebraic structure}
The two-sided ideals $\vJ_m(N)$ for $0 \leq m \leq k$ defined by
\begin{align}
    \vJ_{m} (N):= \BC\text{-span}\{\vO_{\Pi} \,|\, \Pi\vdash 2k \, \text{has at most $m$ propagating blocks}\}
\end{align}
will play a central role in understanding the algebraic structure.
A block in $\Pi$ is propagating if it contains both nodes at the bottom and the top (\autoref{fig:propagating}).
Indeed, since diagram multiplication, from either the left or the right,  cannot increase the number of propagating blocks, we have that
\begin{align}
    \vP_k(N)\vJ_m(N) = \vJ_m(N) \vP_k(N) \subset \vJ_m(N).
\end{align}
These ideals are naturally related by
\begin{align}
    \vJ_0\subset \vJ_1 \subset \cdots \vJ_k = \vP_{k}(N). 
\end{align}
\begin{figure}[t]
    \centering
    \includegraphics[width=0.3\textwidth]{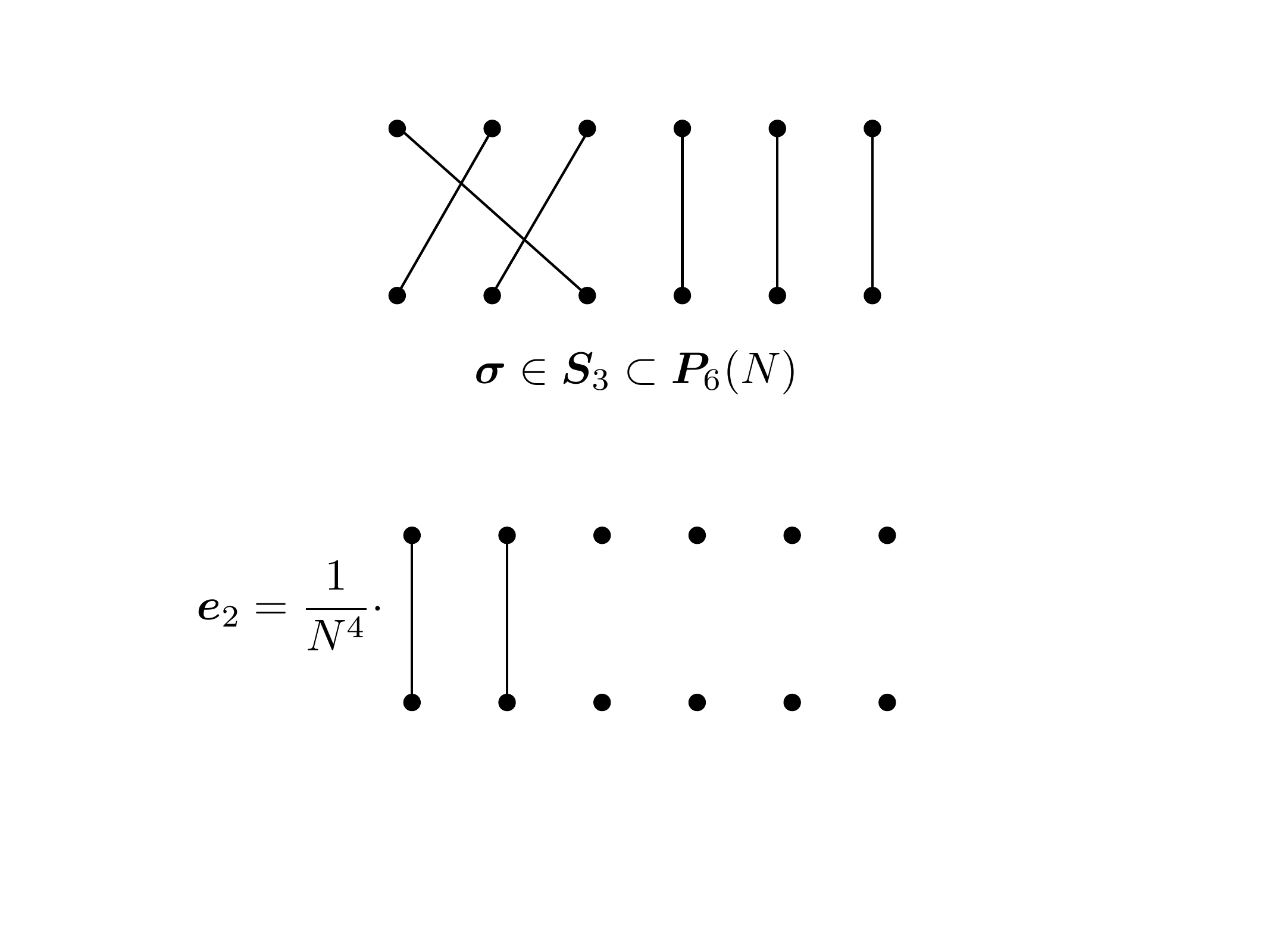}
    \caption{The symmetric group $\vS_m$ embedded into the partition algebra $\vP_k(N)$ by acting on the leftmost nodes.}
    \label{fig:SinP}
\end{figure}
\begin{lem}[Projection onto subalgebra]
The quotient projector acting on the partition algebra $\vP_k(N)$ can be written as
\begin{align}
    \CQ_{\vJ_{m}}[\cdot] = \sum_{\labs{\vec{\lambda^*}}> m } \vec{\Pi}_{\vec{\lambda}}\cdot\vec{\Pi}_{\vec{\lambda}},
\end{align}  
where the projectors $\vec{\Pi}_{\vec{\lambda}} \in \vP_{k}(N)$ correspond to the irreps of the partition algebra $\vP_{k}(N)$ labeled by $\vec{\lambda}\in \Lambda_{k,N}$.
\end{lem}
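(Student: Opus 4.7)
The plan is to identify the ideals $\vJ_m(N)$ with appropriate direct sums of matrix factors under the Artin-Wedderburn decomposition, and then read off the quotient projector from \autoref{thm:structure_finite_dim}.

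First, I would invoke \autoref{thm:structure_finite_dim} to decompose the $\dagger$-algebra as
\begin{align}
    \vP_k(N) \;\simeq\; \bigoplus_{\vec{\lambda} \in \Lambda_{k,N}} \vP_{\vec{\lambda}} \otimes \vI_{\vS_{\vec{\lambda}}},
\end{align}
with central idempotents $\vec{\Pi}_{\vec{\lambda}}$ projecting onto the $\vec{\lambda}$-isotypic component. Since each such central projector lies in $\vP_k(N)$ and commutes with everything in the algebra, the map $\vA \mapsto \vec{\Pi}_{\vec{\lambda}}\vA\vec{\Pi}_{\vec{\lambda}}$ is a $\dagger$-homomorphism that picks out the $\vec{\lambda}$-block.

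Next, I would identify $\vJ_m(N)$ in this decomposition. The key representation-theoretic fact is the correspondence, due to Martin and Halverson--Ram (see also~\cite{halverson2005partition,halverson2020set}), between the propagating-number filtration of diagrams and the labels $\vec{\lambda} \in \Lambda_{k,N}$: the irreducible $\vP_k(N)$-module labeled by $\vec{\lambda}$ has ``propagating number'' $|\vec{\lambda}^*|$, and consequently
\begin{align}
    \vJ_m(N) \;=\; \bigoplus_{|\vec{\lambda}^*| \le m} \vP_{\vec{\lambda}} \otimes \vI_{\vS_{\vec{\lambda}}}.
\end{align}
The containment $\bigoplus_{|\vec{\lambda}^*|\le m} \vP_{\vec{\lambda}} \otimes \vI \subseteq \vJ_m(N)$ can be verified by checking that the central idempotent $\vec{\Pi}_{\vec{\lambda}}$ is expressible as a $\BC$-linear combination of diagrams with at most $|\vec{\lambda}^*|$ propagating blocks; the converse follows by dimension count using the irrep dimensions of $\vP_k(N)$ and the fact that propagating number is monotone under diagram multiplication.

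Finally, applying the second half of \autoref{thm:structure_finite_dim}, the quotient projector $\CQ_{\vJ_m}$ onto the complementary two-sided ideal is precisely the direct sum of projections onto the factors with $|\vec{\lambda}^*| > m$, which can be implemented by two-sided multiplication with the corresponding central idempotents:
\begin{align}
    \CQ_{\vJ_m}[\vA] \;=\; \sum_{|\vec{\lambda}^*| > m} \vec{\Pi}_{\vec{\lambda}}\, \vA\, \vec{\Pi}_{\vec{\lambda}}.
\end{align}
The main obstacle is the middle step: establishing that the diagrammatic notion of ``at most $m$ propagating blocks'' matches the representation-theoretic label $|\vec{\lambda}^*| \le m$. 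This is really an external input from the classical structure theory of the partition algebra, and the cleanest way to invoke it is to cite the standard construction of the cell/standard modules for $\vP_k(N)$, in which the cell labeled by $\vec{\lambda}$ sits inside $\vJ_{|\vec{\lambda}^*|}(N) / \vJ_{|\vec{\lambda}^*|-1}(N)$ by construction.
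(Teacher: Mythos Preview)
Your approach is correct and follows the same overall structure as the paper's proof: decompose $\vP_k(N)$ into its matrix factors, identify $\vJ_m(N)$ with the subset of factors having $|\vec{\lambda}^*|\le m$, and then read off $\CQ_{\vJ_m}$ from \autoref{thm:structure_finite_dim}. The only substantive difference is in how the key identification $\vJ_m(N)=\bigoplus_{|\vec{\lambda}^*|\le m}\vP_{\vec{\lambda}}\otimes\vI$ is justified. You import it from the standard cell-module construction for the partition algebra in the literature, whereas the paper establishes it by a forward reference to its own explicit orthonormal basis (\autoref{lem:orthonormal}), noting that the matrix units for the $\vec{\lambda}$-factor are built from elements of $\vJ_{|\vec{\lambda}^*|}$ modulo $\vJ_{|\vec{\lambda}^*|-1}$. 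Your route is cleaner as a standalone argument; the paper's route has the advantage of being self-contained and of dovetailing with the explicit basis it needs anyway for the low-degree-in-$N$ analysis.
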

\begin{proof}
    For any $\vec{\lambda}\in \Lambda_{k,N}$, consider the orthogonal basis representatives we construct later in~\autoref{lem:orthonormal}, and note that the matrix element is written as some element in $\vJ_{\labs{\vec{\lambda}^*}}$ modulo $\vJ_{\labs{\vec{\lambda}^*}-1}.$ Therefore, if $ \labs{\vec{\lambda^*}}>m$, $\CQ_{\vJ_{m}}$ acts as the identity, and otherwise, it acts as zero, which coincides with the advertised expression.
\end{proof}
It will also be crucial to consider the symmetric group algebra as a subalgebra
\begin{align}
    \BC\vS_m \subset \vP_k(N) \quad \text{for each}\quad 1 \le m \le k
\end{align}
where $\vS_m$ acts on the leftmost $m$ nodes (\autoref{fig:SinP}).

\section{New lemmas}
In this section, we prove the key new lemmas that are foundational for our main results. 

\subsection{Asymptotic freeness of random permutations}
\label{sec:asym_freeness_perm}
The power of large-$N$ limits is that a version of ``freeness'' kicks in for random permutations: different \textit{words} are effectively independent from each other. The concepts of free groups and free words will be particularly helpful.
\begin{defn}[Free words]
    For a countable set of symbols $\{s_i\}$, we denote the set of free words by 
    \begin{align}
        \CW(\{s_i\}_i) = \{e, s_1, s_2, s_1s_2, s_2s_1,s_1^{-2}, \cdots \},
    \end{align}
    which are also the elements of the free group generated by $\{s_i\}_i$.
\end{defn}
\begin{lem}[Effective asymptotic freeness of phased random permutations]\label{lem:freeness_random_P} 
    Consider the free words generated by a set of $m$ independent uniformly random phased permutations
    \begin{align}
        \CW(\{\vZ_a\}_{a=1}^m) \quad \text{where}\quad \vZ_a \stackrel{i.i.d.}{\sim} \vZ.
    \end{align}
    Fix integers $\ell,k$ and fix a set of $\ell$ distinct words $\vW_1,\ldots, \vW_{\ell} \in \CW$. Consider a corresponding set of $\ell$ independent permutations
    \begin{align}
    \vZ_{i} := \begin{cases}
        \stackrel{i.i.d.}{\sim} \vZ\quad &\text{if}\quad \vW_i \ne \vI\\
        \vI \quad &\text{if}\quad \vW_i= \vI.
    \end{cases}
    \end{align}
    Then, for each $\vec{j} \in [\ell]^{2k}$, in the large-$N$ limit,
    \begin{align}
        &\lim_{N\rightarrow \infty} \lnormp{\CN_{2k,\vZ}\circ\bigg( \CN_{ \vec{j}, \{\vW_i\}} -\CN_{\vec{j}, \{\vZ_i\}} \bigg)\circ \CN_{2k,\vZ}}{\diamond} = 0,
    \end{align}
    where
    \begin{align}
    \CN_{\vec{j}, \{\vW_i\}} &:=\BE \vW_{j_1}\otimes \cdots \vW_{j_k} [\cdot] \vW^{\dagger}_{j_{k+1}}\otimes \vW^{\dagger}_{j_{2k}},  \quad \quad \vec{j} \in [\ell]^{2k}, \label{eq:channel_from_perm_set} \\
    \CN_{2k,\vZ} &:= \BE \vZ^{\otimes k} [\cdot] \vZ^{\dagger\otimes k}
\end{align}
\end{lem}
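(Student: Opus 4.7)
The plan is to exploit the strong algebraic covariance imposed by the left and right $\CN_{2k,\vZ}$ symmetrizations to reduce the infinite-dimensional diamond-norm comparison to a finite-dimensional matrix-element comparison, then to argue by a collision analysis that the matrix elements of the ``words'' channel agree with those of the ``independent-copies'' channel in the $N\to\infty$ limit.

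First, I would observe that $\CN_{2k,\vZ}$ projects onto the commutant of $\vZ^{\otimes k}$, a $*$-subalgebra of the partition algebra $\vP_k(N)$ cut down to ``balanced'' diagrams by the diagonal phase average, of dimension at most $Bell_{2k}$ and independent of $N$. A standard covariance/ancilla-reduction argument then bounds $\lnormp{\CN_{2k,\vZ}\circ M\circ \CN_{2k,\vZ}}{\diamond}$ by a $k$-dependent factor times the induced $1\to 1$ norm of $M$ restricted to this commutant, since any diamond-norm-saturating input can be assumed (via the symmetrizer) to lie in the commutant tensored with an ancilla of $k$-dependent dimension. It therefore suffices to show that for every pair of basis operators $|\vO'_\Pi),|\vO'_{\Pi'})$ spanning this commutant, the normalized matrix element difference
\[
\frac{(\vO'_{\Pi'}|\,\CN_{\vec{j},\{\vW_i\}}-\CN_{\vec{j},\{\vZ_i\}}\,|\vO'_\Pi)}{\|\vO'_\Pi\|_2\,\|\vO'_{\Pi'}\|_2}
\]
tends to zero as $N\to\infty$.

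Next I would expand these matrix elements in the computational basis. The independent-copy channel factorizes across the $\vZ_i$, with each factor contributing the partition-algebra projector from \autoref{eq:average_projector} restricted to balanced diagrams. For the word channel, each $\vW_{j_s}$ is a product of the underlying generators $\vZ_a$, and the joint expectation over generators can be organized as a sum over all ways of pairing up the individual generator occurrences. The crucial point is that because the $\vW_i$ are \emph{distinct} free words with the unitarity relations $\vZ_a\vZ_a^\dagger=\vI$ already built in, the pairings within a single word only reproduce the standard single-permutation averages, and the ``non-colliding'' cross-word contributions exactly reproduce the factorized independent-copy answer.

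The remaining and principal task is to bound the ``colliding'' pairings, i.e.\ those that identify a generator occurrence in $\vW_i$ with one in a different word $\vW_{i'}$ (or with a non-adjacent occurrence within the same word). Each such forced identification of a uniform random permutation's value on two distinct inputs costs a factor of $1/N$, since $\Pr[\vS(i)=j]=1/N$ for $i\ne j$, while the total number of collision patterns is a combinatorial quantity depending only on $k$, $\ell$, and the word lengths, all fixed as $N\to\infty$. The main obstacle is to organize this combinatorial expansion so that the non-colliding contributions reproduce, basis element by basis element, the fully factorized independent-copy matrix entries, with every accidental coincidence explicitly paid for by powers of $1/N$. The free-group structure of $\CW(\{\vZ_a\})$ together with the distinctness of the chosen words is exactly what makes such a bookkeeping possible, and I expect to execute it through a diagrammatic inclusion-exclusion in the same spirit as the M\"obius relation between $\vO_\Pi$ and $\vO'_\Pi$ in \eqref{eq:K}--\eqref{eq:Kinv}.
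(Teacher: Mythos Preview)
Your reduction to the commutant of $\vZ^{\otimes k}$ and the bounded-ancilla argument are correct, but the claim ``it therefore suffices'' to control the $2$-norm--normalized matrix elements is where the proposal breaks. The basis operators satisfy $\|\vO'_\Pi\|_1/\|\vO'_\Pi\|_2=\sqrt{|M'_\Pi|}\sim N^{|\Pi|/2}$, and the channel generically sends an input supported on partition $\Pi$ to outputs supported on strictly finer partitions (more blocks), so converting your Hilbert--Schmidt matrix elements back to a trace-norm bound on the output loses powers of $N$. The paper does \emph{not} go through matrix elements. It works directly in trace norm on factorized rank-one inputs $\vO_m=|m_1,\ldots,m_k\rangle\langle m_{k+1},\ldots,m_{2k}|$, shows $\lnormp{(\CN_{\vec{j},\{\vW_i\}}-\CN_{\vec{j},\{\vZ_i\}})[\vO_m]}{1}\to 0$, and then exploits two features specific to the balanced set $B_{2k}$: every block of a balanced $\Pi$ is propagating, so $\|\vO'_\Pi\|_1=|M'_\Pi|$ and $\vO'_\Pi/\|\vO'_\Pi\|_1$ is a uniform average of unit-trace-norm $\vO_m$'s; and $\vO'_\Pi\vO'^{\,\dagger}_{\Pi'}\propto\vO'_{\Pi''}$ within $B_{2k}$, which lets one pass from a general PSD input $\vrho=(\sum_\Pi c_\Pi \vO'_\Pi/\|\vO'_\Pi\|_2)(\cdots)^\dagger$ to single-diagram inputs with only a $k$-dependent prefactor. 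Your outline is missing precisely this trace-norm decomposition, and without it the diamond-norm conclusion does not follow from vanishing matrix elements.

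Your collision bookkeeping is also heavier than what is needed, and the ``sum over pairings of generator occurrences'' is not the right model: permutation moments are governed by set partitions rather than Wick pairings. The paper's argument on a factorized input $\vO_m$ is much more direct. Write $\vZ_a=\vD_z\vS_a$ and first sample the unsigned permutations $\vS_a$; with probability $1-O(1/N)$ the trajectories of the distinct words through the distinct indices of $m$ never collide, and conditioned on this event the random phases accumulated by each $\vW_{j_s}$ on each block of $m$ are jointly independent, exactly as for fresh i.i.d.\ $\vZ_i$'s. The distinctness of the free words is used only to guarantee that, absent collisions, the outputs are uniform and independent; no diagrammatic inclusion--exclusion is required.
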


Roughly speaking, we want to show that in the large-$N$ limit, distinct nontrivial words can be replaced by independent random permutations $\{ \vW_{i} \} \rightarrow \{\vZ_i\}$ (identity remains the identity) in the sense of tensored mixed moments (which may not be completely-positive). Here, we are showing this channel equivalence holds, when the overall channel is first symmetrized by a random phased permutation before and after our words by the channel $\CN_{2k,\vZ}$. This symmetrization allows us to restrict the test operators to a \textit{proper subalgebra} of the partition algebra with nicer properties. Let
\begin{align}
    B_{2k}:=\left\{ \Pi\vdash\{1,\cdots,2k\}| \text{each block has an equal number of sites on top and bottom.}\right\}
\end{align}
then, the commutant of phased permutations are exactly the associated diagrams. In particular, every block is propagating.
\begin{lem}[Commutant of $\vZ^{\otimes k}$]\label{lem:commZ}
\begin{align}
    (\vZ^{\otimes k})^{'} = \BC\text{-Span}\{ \vO'_{\Pi}|\Pi\in B_{2k}\} = \BC\text{-Span}\{ \vO_{\Pi}|\Pi\in B_{2k}\}.
\end{align}    
\end{lem}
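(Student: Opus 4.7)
The plan is to chain two invariance conditions. Since $\vZ=\vD_z\vS$ reduces to an ordinary permutation when $z_i\equiv 1$, any operator commuting with every $\vZ^{\otimes k}$ must in particular commute with every $\vS^{\otimes k}$, so by the defining property of the partition algebra it lies in $\vP_k(N)=\BC\text{-Span}\{\vO'_\Pi : \Pi\vdash 2k\}$. What remains is to isolate those linear combinations $\vX = \sum_\Pi c_\Pi \vO'_\Pi$ that are additionally invariant under $\vD_z^{\otimes k}(\cdot)\vD_z^{\dagger\otimes k}$ for every diagonal phase $z$.

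I would compute this phase action directly in the computational basis. Conjugation by $\vD_z^{\otimes k}$ multiplies the matrix unit $\vO_{\vec m}$ by the scalar $\prod_{a\in[N]} z_a^{e_a(\vec m)}$, where $e_a(\vec m):= \labs{\{i\le k : m_i=a\}} - \labs{\{k<i\le 2k : m_i=a\}}$ records the top-versus-bottom imbalance of the value $a$. Since the rank-one matrices $\{\vO_{\vec m}\}_{\vec m}$ are linearly independent, invariance of $\vX$ for \emph{all} choices of $z$ forces, for each $\vec m$, either $c_{\Pi(\vec m)}=0$ or all exponents $e_a(\vec m)$ vanish, where $\Pi(\vec m)$ is the unique partition with $\vec m\in M'_{\Pi(\vec m)}$. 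Because the definition of $M'_{\Pi}$ requires distinct blocks to carry distinct values, each exponent $e_a(\vec m)$ reads off the imbalance of a single block of $\Pi(\vec m)$; vanishing for all $a$ is therefore exactly the statement $\Pi(\vec m)\in B_{2k}$. Combined with the easy converse --- any $\vO'_\Pi$ with $\Pi\in B_{2k}$ is manifestly $\vZ$-invariant since every summand $\vO_{\vec m}$ picks up trivial phase and $\vO'_\Pi\in\vP_k(N)$ --- this yields the first equality.

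For the second equality I would invoke the triangular M\"obius relations \eqref{eq:K}--\eqref{eq:Kinv}, together with the observation that $B_{2k}$ is closed under coarsening: merging two balanced blocks produces a balanced block, so $\Pi'\ge\Pi\in B_{2k}$ implies $\Pi'\in B_{2k}$. Hence $\vO_\Pi=\sum_{\Pi'\ge\Pi}\vO'_{\Pi'}$ lies in $\BC\text{-Span}\{\vO'_{\Pi'}:\Pi'\in B_{2k}\}$ whenever $\Pi\in B_{2k}$, and dually $\vO'_\Pi=\sum_{\Pi'\ge\Pi}(K^{-1})_{\Pi\Pi'}\vO_{\Pi'}$ lies in $\BC\text{-Span}\{\vO_{\Pi'}:\Pi'\in B_{2k}\}$ whenever $\Pi\in B_{2k}$, so the two spans coincide.

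The main step, and the only one requiring genuine care, is the phase-invariance argument. The subtlety worth flagging is that the conclusion depends on working with $M'_\Pi$ rather than $M_\Pi$: only the primed sets force distinct blocks to carry distinct values, which in turn prevents the exponents $e_a(\vec m)$ from cross-cancelling between different blocks. Beyond this observation, I do not anticipate any real technical obstacle.
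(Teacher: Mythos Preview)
Your proof is correct and follows essentially the same route as the paper's: both reduce to the partition algebra via the permutation symmetry, then use phase invariance to isolate the balanced partitions $B_{2k}$, and finally use upward-closure of $B_{2k}$ under coarsening for the span equality. The only cosmetic differences are that the paper phrases phase-invariance via the averaging projection $\CN_{2k,\vZ}$ rather than pointwise invariance for each fixed $z$, and concludes the second equality by a dimension count rather than by invoking both directions of the M\"obius transform.
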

\begin{proof}[Proof of~\autoref{lem:commZ}]
    Recall that the random phased permutation can be written as $\vZ := \vD_z \vS$. We already have that the commutant of $(\vS^{\otimes k})^{'}$ is the partition algebra $\vP_{k}(N)$ with orthogonal basis $\{\vO'_{\Pi}| \Pi \vdash 2k \}$. It follows that $(\vZ^{\otimes k})^{'} \subset (\vS^{\otimes k})^{'} = \vP_{k}(N)$. Any $\vO'_{\Pi}$ for $\Pi\in B_{2k}$ is contained in the commutant $(\vZ^{\otimes k})^{'}$ as the balanced signs from $\vD_z$ cancel each other out, and the unsigned permutations $\vS$ fixes the diagram
    \begin{align}
        \CN_{2k,\vZ}[\vO'_{\Pi}]&=\CN_{2k,\vS}\circ\CN_{2k,\vD_{z}}[\vO'_{\Pi}]\\
        &=\CN_{2k,\vS}[\vO'_{\Pi}]\tag{Since $\Pi\in B_{2k}$}\\
        &= \vO'_{\Pi}.\tag{Since $\vO'_{\Pi}\in \vP_{k}(N)$}
    \end{align}
    Conversely, for any other diagram $\Pi\notin B_{2k}$, there exists a non-balanced block, and thus the operator $\vO'_{\Pi}$ would vanish under the symmetrization $\CN_{2k,\vZ}$ due to the unbalanced signs on that block. Next, we show that $\vO_{\Pi}$ for $\Pi\in B_{2k}$ also form a basis for the commutant $(\vZ^{\otimes k})^{'}$. This is because each diagram $\vO_{\Pi}$ for $\Pi\in B_{2k}$ can written as $\vO'_{\Pi}$ for $\Pi\in B_{2k}$
    \begin{align}
    \vO_{\Pi} = \sum_{\Pi'\ge \Pi} \vO'_{\Pi'}
\end{align}
and that each $\Pi\in B_{2k}$ can only be refinements of $\Pi'\in B_{2k}$
\begin{align}
     \Pi'\ge \Pi, \Pi \in B_{2k} \implies \Pi' \in B_{2k}.
\end{align}
Therefore,
\begin{align}
    \BC\text{-Span}\{ \vO_{\Pi}|\Pi\in B_{2k}\} \subset \BC\text{-Span}\{ \vO'_{\Pi}|\Pi\in B_{2k}\}.
\end{align}
Since the two vector spaces have the same dimensions, they must be equal.
\end{proof}

We can now prove the asymptotic freeness of phased random permutations.
\begin{proof}[Proof of~\autoref{lem:freeness_random_P}]
We begin with controlling the 1-norm for particular ``factorized'' inputs (see~\eqref{eq:Om})
\begin{align}
    \lnormp{\bigg( \CN_{ \vec{j}, \{\vW_i\}} -\CN_{\vec{j}, \{\vZ_i\}}\bigg)[\vO_{m}]}{1} \quad \text{for any fixed}\quad m \in M'_{\Pi}.
\end{align}
This effectively becomes a classical probability problem since the input is factorized. The case for $\vZ$ is simple
\begin{align}
    \CN_{\vec{j}, \{\vZ_j\}}[\vO_{m}] \stackrel{N\rightarrow\infty}=
    \begin{cases}
        \frac{\vO_{\Pi(\vec{j},m)}}{\norm{\vO_{\Pi(\vec{j},m)}}_1}&\quad \text{if each block of $m$ is acted by ``balanced'' top-bottom pairs of $\vZ_j$ and $\vZ^{\dagger}_j$}\\
        0&\quad \text{else}
    \end{cases}
\end{align}
where the convergence is accounted in 1-norm. Basically, the signs need to cancel each other within each block of $m$ for the expression to contribute (i.e., if $\vZ_3$ appears in the top row of a block, then $\vZ_3^{\dagger}$ must appear in the bottom row.). When they are perfectly balanced and the signs all cancelled, they give a mixture of random $\vO_{m'}$ that (up to $1/N$ correction in trace distance) corresponds to a particular partition $\Pi(\vec{j},m)\ge \Pi$ which refines the partition $\Pi$ of $m$ according the types of $\vZ_{j}$ present in the same block. In particular, this newly produced partition remains in $\Pi(\vec{j},m) \in B_{2k}$, and is propagating.
The case of $\vW_{j}$ is very much the same, up to a vanishing ``collision probability'' $\sim \frac{1}{N}$ (fixing the length of words and $k$). To see this, first sample the permutations within $\vW_{j}$ without the signs. With high probability $1-\CO(\frac{1}{N})$ (``without collision''), the signs for each $\vW_j$ in different blocks will be independent. That is, after sampling over signs (conditioned on the permutations), we have a situation very similar to $\vZ_i$s:
\begin{align}
    \CN_{\vec{j}, \{\vW_i\}}[\vO_{m}] \stackrel{N\rightarrow\infty}= \begin{cases}
        \frac{\vO_{\Pi(\vec{j},m)}}{\norm{\vO_{\Pi(\vec{j},m)}}_1}&\quad \text{if each block of $m$ is acted by ``balanced'' $\vW_j$ and its conjugates $\vW^{\dagger}_j$}\\
        0&\quad \text{else}.
    \end{cases}
\end{align}
When the signs perfectly cancel out, in the large-$N$ limit, distinct words acting on distinct inputs are independent from each other, which again gives $\frac{\vO_{\Pi(\vec{j},m)}}{\norm{\vO_{\Pi(\vec{j},m)}}_1}$. That is,
\begin{align}
     \lim_{N\rightarrow \infty}\lnormp{\bigg( \CN_{ \vec{j}, \{\vW_i\}} -\CN_{\vec{j}, \{\vZ_i\}}\bigg)[\vO_{m}]}{1} = 0\label{eq:Om_vanishes}.
\end{align}
Now, we use the above to prove for general unentangled PSD $\vZ$-symmetrized inputs $\vrho = \CN_{2k,\vZ}[\vrho]$. Any such input can be written as a weighted sum over (orthogonal) diagram operators squared
\begin{align}
    \vrho = \L(\sum_{\Pi\in B_{2k}} c_{\Pi} \frac{\vO'_{\Pi}}{\norm{\vO'_{\Pi}}_2} \R)\L(\sum_{\Pi\in B_{2k}} c_{\Pi} \frac{\vO'_{\Pi}}{\norm{\vO'_{\Pi}}_2} \R)^{\dagger}\quad \text{such that}\quad \sum_{\Pi} \labs{c_{\Pi}}^2 =1,\label{eq:rho_OOdagger}
\end{align}

In particular, we observe that each cross terms has bounded 1-norms
\begin{align}
    \lnormp{ \frac{\vO'_{\Pi}}{\norm{\vO'_{\Pi}}_2}\frac{\vO^{'\dagger}_{\Pi'}}{\norm{\vO'_{\Pi'}}_2}}{1} \le 1 \tag{Cauchy-Schwarz}
\end{align}
and that
\begin{align}
    \vO'_{\Pi}\vO'_{\Pi'} \propto\vO'_{\Pi''}\quad \text{for}\quad \Pi,\Pi',\Pi^{''} \in B_{2k}\label{eq:O'O'O'}.
\end{align}
This allows us to study the inputs as individual diagrams $\vO'_{\Pi}$ (instead of arbitrary linear combinations). Remarkably, $\vO'_{\Pi}$ (when normalization by its 1-norm) behaves very much like a classical probability distribution in the following sense
\begin{align}
\frac{\vO'_{\Pi}}{\norm{\vO'_{\Pi}}_{1}} = \frac{1}{M'_{\Pi}}\sum_{m\in M'_{\Pi}} \vO_m \quad \text{where}\quad \norm{\vO_m}_1 =1.\label{eq:OpiOm}
\end{align}
This is because the ``balanced'' structure allows us to calculate its 1-norm by permuting the top row $\norm{\vO_{\Pi}}_{1} = \norm{\vO_{\sigma} \vO_{\Pi}}_{1} = M'_{\Pi}$; otherwise the 1-norm normalization may not have a good connection to $\frac{1}{M'_{\Pi}}\sum_{m\in M'_{\Pi}} \vO_m$. It is crucial here that all blocks are propagating, as a nice feature of these ``balanced'' diagrams in $B_{2k}.$

Based on the above, we can then put together the bounds by decomposing inputs into $\vO_m$s (at a multiplicative cost depending only on $k$)
\begin{align}
    \lnormp{\bigg( \CN_{ \vec{j}, \{\vW_i\}} -\CN_{\vec{j}, \{\vZ_i\}}\bigg)[\vrho]}{1} &\le \sum_{\Pi,\Pi'} \labs{c_{\Pi}}\labs{c_{\Pi'}}\lnormp{\bigg( \CN_{ \vec{j}, \{\vW_i\}} -\CN_{\vec{j}, \{\vZ_i\}}\bigg)[\frac{\vO'_{\Pi}}{\norm{\vO'_{\Pi}}_2}\frac{\vO'_{\Pi'}}{\norm{\vO'_{\Pi'}}_2}]}{1}\tag{By~\eqref{eq:rho_OOdagger}}\\
    &\le  \sum_{\Pi,\Pi'} \labs{c_{\Pi}}\labs{c_{\Pi'}} \cdot \sup_{\Pi\in B_{2k}} \frac{\lnormp{ \bigg( \CN_{ \vec{j}, \{\vW_i\}} -\CN_{\vec{j}, \{\vZ_i\}} \bigg)[\vO'_{\Pi}] }{1} }{\normp{\vO'_{\Pi}}{1}}\tag{By~\eqref{eq:O'O'O'}}\\
    &\le \sum_{\Pi,\Pi'} \labs{c_{\Pi}}\labs{c_{\Pi'}} \cdot \sup_{\Pi\in B_{2k}} \sup_{m\in M'_\Pi} \lnormp{ \bigg( \CN_{ \vec{j}, \{\vW_i\}} -\CN_{\vec{j}, \{\vZ_i\}} \bigg)[\vO'_{m}] }{1}\tag{By~\eqref{eq:OpiOm}} \\
    &\stackrel{N\rightarrow \infty}=0 \tag{Since $\sum_{\Pi,\Pi'} \labs{c_{\Pi}}\labs{c_{\Pi'}}$ is independent of $N$ and by~\eqref{eq:Om_vanishes}}.
\end{align}
Accounting for ancillas (whose dimension is independent of $N$) incurring a $N$-independent multiplicative blowup (\autoref{lem:small_ancilla}) and inputs with positive and negative parts, we deduce that the diamond norm is also zero in the large $N$ limit.
\end{proof}

\begin{lem}[Norm properties from independence]\label{lem:freeness_random_P_NI}
    Fix an integer $\ell$ and consider a set of phased permutations $\vZ_{1}, \cdots \vZ_{\ell}\stackrel{i.i.d.}{\sim} \vZ$ such that all but at most one of them are independent uniformly random (with at most one identity). 
    Then, in the large-$N$ limit, for each $\vec{j} \in [\ell]^{2k}$, 

(1) If $\vZ_i$ are paired up:  
\begin{align}
    \lim_{N\rightarrow \infty} \normp{\CN_{2k,\vZ}^{\dagger}\circ\CN^{\dagger}_{\vec{j}, \{\vZ_i\}}[\vI] - \vI}{\infty} =0.
\end{align}

(2) Else:
\begin{align}
    \lim_{N\rightarrow \infty} \normp{\CN_{2k,\vZ}^{\dagger}\circ\CN^{\dagger}_{\vec{j}, \{\vZ_i\}}[\vI]}{\infty}=0.
\end{align}
\end{lem}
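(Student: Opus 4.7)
First I would make the operator explicit. Using $\vI$ on the input and factoring the tensor product,
\begin{align*}
\CN^\dagger_{\vec{j},\{\vZ_i\}}[\vI] = \BE\bigotimes_{i=1}^k \vZ^\dagger_{j_i}\vZ_{j_{k+i}},
\end{align*}
which splits over the $k$ slots before averaging. I interpret ``paired up'' in case (1) as the strict positional condition $j_i = j_{k+i}$ for every $i \in [k]$ — this is the unique index pattern that forces each factor to equal $\vI$ identically (already in $k=2$ with $\vec{j}=(a,b,b,a)$ the tensor evaluates to $\CO(1/N)\cdot\mathrm{SWAP}$, not $\vI$, so multiset-level pairings are insufficient). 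Under strict pairing, $\CN^\dagger_{\vec{j}}[\vI] = \vI$ and $\CN_{2k,\vZ}$ fixes $\vI$, so the quantity in (1) vanishes exactly (not just asymptotically).

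For case (2), I would first reduce by the contraction property of $\CN_{2k,\vZ}$: it is a convex mixture of unitary conjugations, so $\|\CN_{2k,\vZ}[\vX]\|_\infty \le \|\vX\|_\infty$, reducing the claim to $\|\CN^\dagger_{\vec{j}}[\vI]\|_\infty \to 0$. Next, the joint distribution of the $\vZ_i$'s is invariant under simultaneous conjugation by a fresh phased permutation, so $\CN^\dagger_{\vec{j}}[\vI]$ commutes with $\vZ^{\otimes k}$ and therefore lies in the commutant $\BC\text{-span}\{\vO_\Pi : \Pi \in B_{2k}\}$ (\autoref{lem:commZ}). Writing $\CN^\dagger_{\vec{j}}[\vI] = \sum_{\Pi \in B_{2k}} c_\Pi \vO_\Pi$, each balanced diagram $\vO_\Pi$ acts as a partial isometry on the computational basis vectors respecting the block structure of $\Pi$, so $\|\vO_\Pi\|_\infty = 1$, and the index set has size at most $Bell_{2k}$, independent of $N$. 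The task then reduces to showing $|c_\Pi| = \CO(1/N)$ for every $\Pi$.

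Computing $c_\Pi$ via inner products against the orthogonal basis $\vO'_\Pi$, I would expand the matrix elements
\begin{align*}
\langle r|\vZ^\dagger_a\vZ_b|s\rangle = \overline{z^a_{S_a(r)}}\, z^b_{S_a(r)}\, \delta_{S_a(r),\,S_b(s)},
\end{align*}
average the iid uniform $\vU(1)$ phases via Wick contractions (for each $a$, the multiset of $(a,S_a(r))$ labels from daggered positions must match those from un-daggered positions), and then average over the uniform permutations $\vS_a$. The unique Wick pattern that cancels phases slot-by-slot without imposing any coincidence is precisely the strict positional pairing. In case (2), at least one slot is unpaired, so every surviving Wick pattern pairs phases across distinct slots, forcing at least one permutation coincidence of the form $S_a(r)=S_a(r')$ or $S_a(r)=S_b(r')$ among independent uniform permutations. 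Each such coincidence supplies an $\CO(1/N)$ factor; since the number of Wick patterns depends only on $k$ and $\ell$, this yields $|c_\Pi| = \CO(1/N)$.

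The main obstacle I anticipate is the combinatorial bookkeeping — classifying admissible Wick contractions and rigorously verifying that the absence of strict pairing forces at least one $1/N$ coincidence factor. A cleaner alternative would invoke \autoref{lem:freeness_random_P} directly: using $\CN^\dagger_{\vec{j}}[\vI] = \CN^\dagger_{\vec{j}}\circ\CN_{2k,\vZ}[\vI]$ to introduce the left-right $\CN_{2k,\vZ}$-sandwiching, view each product $\vZ^\dagger_{j_i}\vZ_{j_{k+i}}$ with $j_i \neq j_{k+i}$ as a distinct non-trivial word in $\CW(\{\vZ_a\})$; that lemma replaces them in the large-$N$ limit by fresh independent phased permutations, and since $\BE[\vZ^{\mathrm{fresh}}] = 0$ the corresponding slot's expectation vanishes, killing the whole tensor product.
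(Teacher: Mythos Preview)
Your main approach is correct but takes a different route from the paper. You expand the operator $\CN^\dagger_{\vec{j}}[\vI]$ in the commutant basis $\{\vO_\Pi:\Pi\in B_{2k}\}$ (via conjugation invariance of the joint law of the $\vZ_i$, which is correct, as is $\|\vO_\Pi\|_\infty=1$ for balanced $\Pi$) and then bound each coefficient $|c_\Pi|$ by a Wick/collision argument. The paper instead dualizes to $\sup_{\|\rho\|_1=1}\bigl|\tr[(\BE\,\vW_{i_1}\otimes\cdots\otimes\vW_{i_k})\,\CN_{2k,\vZ}[\rho]]\bigr|$ with $\vW_{i_s}=\vZ_{j_s}^\dagger\vZ_{j_{k+s}}$, and then reuses the reduction-to-factorized-$\vO_m$ machinery already established in the proof of \autoref{lem:freeness_random_P} (the steps \eqref{eq:rho_OOdagger}, \eqref{eq:O'O'O'}, \eqref{eq:OpiOm}) to reduce the whole thing to the single-product estimate $\bigl|\BE\prod_s\langle m_{k+s}|\vW_{i_s}|m_s\rangle\bigr|=\CO(k/N)$ whenever some $\vW_{i_s}$ is nontrivial. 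Both routes terminate in the same collision bound; the difference is that the paper decomposes the \emph{test state} while you decompose the \emph{operator}. The paper's route completely sidesteps the coefficient bookkeeping you flagged as the main obstacle, at the price of having those three equations already in hand from the previous proof.

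Your alternative (B), however, does not work as a black-box invocation. \autoref{lem:freeness_random_P} controls the diamond norm of a sandwiched \emph{channel} difference, not $\|\CN^\dagger[\vI]\|_\infty$; it does not directly license replacing the words $\vW_{i_s}$ by fresh $\vZ$'s inside the bare tensor expectation $\BE[\vW_{i_1}\otimes\cdots\otimes\vW_{i_k}]$ and then using $\BE[\vZ^{\text{fresh}}]=0$. What the paper does is exactly the correct realization of your intuition in (B): it re-runs the \emph{reduction technique} from inside that lemma's proof rather than invoking the lemma's conclusion.
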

\begin{proof}[Proof of~\autoref{lem:freeness_random_P_NI}] 
We begin by evaluating
\begin{align}
    \CN^{\dagger}_{\vec{j}, \{\vZ_i\}}[\vI] = \BE \vW_{i_1}\otimes \cdots\otimes \vW_{i_k}
\end{align}
where the words are each the product of the left and right $\vZ_i$'s.  Following the arguments from the proof of~\autoref{lem:freeness_random_P}, we optimize over $\vO'_{\Pi}$ for $\Pi\in B_{2k}$ due to the symmetrization $\CN_{2k,\vZ}^{\dagger}$, and it suffices to control inputs as the ``factorized'' operators $\vO_m$:
\begin{align}
    \labs{\tr[\BE \vW_{i_1}\otimes \cdots\otimes \vW_{i_k}\vO_m]}&=\labs{\BE \bra{m_{k+1}}\vW_{i_1}\ket{m_1}\otimes \cdots\otimes \bra{m_{2k}}\vW_{i_k}\ket{m_k} }\\
    &= \CO(\frac{k}{N}) \quad \text{if any $\vW_{i_j}$ are nontrivial}.
\end{align}
Thus, unless all $\vZ_i$ are paired up with their conjugates,
\begin{align}
    \normp{\CN_{2k,\vZ}^{\dagger}\circ\CN^{\dagger}_{\vec{j}, \{\vZ_i\}}[\vI]}{\infty} & = \sup_{\norm{\vrho}_1=1} \labs{\tr[ \CN_{2k,\vZ}^{\dagger}[\BE \vW_{i_1}\otimes \cdots\otimes \vW_{i_k}]\vrho]}\\
    &\le f(k) \sup_m \labs{\tr[ \BE \vW_{i_1}\otimes \cdots\otimes \vW_{i_k}\vO_m]} \tag{By~\eqref{eq:rho_OOdagger},\eqref{eq:O'O'O'}, and~\eqref{eq:OpiOm}.}\\
    &\stackrel{N\rightarrow \infty}{=} 0,
\end{align}

as advertised.
\end{proof}

\begin{cor}[A rescaled CPTP map]\label{cor:rescaled}
\begin{align}
    \lim_{N\rightarrow \infty} \lnormp{\CN_{2k,\vS}^{\dagger}\circ \CN^{\dagger}_{\vec{j}, \{\vZ_i\}}[\vI] - \vI \cdot \btr[\CN_{2k,\vS}^{\dagger}\circ\CN^{\dagger}_{\vec{j}, \{\vZ_i\}}[\vI]]}{\infty} = 0.
\end{align}
\end{cor}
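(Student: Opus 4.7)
The plan is to reduce the corollary to \autoref{lem:freeness_random_P_NI} by showing that when acting on $\vR := \CN^{\dagger}_{\vec{j},\{\vZ_i\}}[\vI]$, the extra diagonal-phase averaging implicit in $\CN_{2k,\vZ}^{\dagger}$ is redundant. Because $\vZ=\vD_z\vS$ with $\vD_z$ and $\vS$ independent, a direct manipulation gives the factorization
\begin{align}
\CN_{2k,\vZ}^{\dagger} = \CN_{2k,\vS}^{\dagger}\circ \CN_{2k,\vD_z}^{\dagger},
\end{align}
so the entire task reduces to proving the identity $\CN_{2k,\vD_z}^{\dagger}[\vR] = \vR$.

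The heart of the argument is to show that $\vR$ commutes with $\vD_0^{\otimes k}$ for every diagonal phase matrix $\vD_0$. Writing $\vR = \BE[\vU_1\otimes\cdots\otimes\vU_k]$ where each $\vU_j$ is a word in the internal $\vZ_l = \vD_l\vS_l$, conjugation yields
\begin{align}
\vD_0^{\otimes k}\vR\vD_0^{-\otimes k} = \BE\bigl[\vU_1(\vZ')\otimes\cdots\otimes\vU_k(\vZ')\bigr], \qquad \vZ_l' := \vD_0\vZ_l\vD_0^{-1},
\end{align}
since conjugation is a ring homomorphism commuting with products and adjoints. Expanding $\vZ_l'= \vD_l\cdot(\vD_0\vS_l\vD_0^{-1})$ shows that $\vZ_l'$ has the form $(\vD_l\tilde{\vD}_{l,0})\vS_l$, where $\tilde{\vD}_{l,0}$ is a deterministic diagonal phase matrix depending on $\vD_0$ and $\vS_l$. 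Because the uniform phase distribution is invariant under multiplication by any fixed phase, the i.i.d.\ random diagonal $\vD_l$ absorbs $\tilde{\vD}_{l,0}$ without changing its distribution, and independence across $l$ is preserved. Hence $(\vZ'_l)_l$ has the same joint distribution as $(\vZ_l)_l$, the expectation is unchanged, and $\vD_0^{\otimes k}\vR\vD_0^{-\otimes k}=\vR$; equivalently $\vR \in (\vD_z^{\otimes k})'$, so $\CN_{2k,\vD_z}^{\dagger}[\vR]=\vR$.

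Combining the factorization with this commutation identity gives $\CN_{2k,\vS}^{\dagger}[\vR] = \CN_{2k,\vZ}^{\dagger}[\vR]$ exactly at every finite $N$. Invoking \autoref{lem:freeness_random_P_NI} on the right-hand side, the operator converges in $\lVert\cdot\rVert_\infty$ to $\vI$ in the paired case and to $0$ otherwise, so $\CN_{2k,\vS}^{\dagger}[\vR] \to c\vI$ with $c\in\{0,1\}$. Since the normalized trace is $1$-Lipschitz with respect to the operator norm, $|\btr[\vA]-c|\le \lVert \vA - c\vI\rVert_\infty$, so that $\btr[\CN_{2k,\vS}^{\dagger}[\vR]] \to c$, and a triangle-inequality estimate gives
\begin{align}
\bigl\lVert\CN_{2k,\vS}^{\dagger}[\vR] - \vI\cdot\btr[\CN_{2k,\vS}^{\dagger}[\vR]]\bigr\rVert_\infty \le 2\,\bigl\lVert\CN_{2k,\vS}^{\dagger}[\vR]-c\vI\bigr\rVert_\infty \xrightarrow{N\to\infty} 0,
\end{align}
which is the desired conclusion.

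The main obstacle is the phase-invariance step for $\vR$: one must patiently verify that conjugating by an arbitrary deterministic diagonal $\vD_0$ merely rewrites each $\vZ_l$ as another random phased permutation with the same joint distribution, across all $l$ simultaneously. This is ultimately elementary, but does require carefully tracking how $\vD_0$ filters through each word and is absorbed by the i.i.d.\ uniform phases. Everything else is a clean application of the preceding lemma together with standard norm inequalities.
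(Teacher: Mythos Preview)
Your proof is correct. The paper states this corollary without proof, treating it as immediate from \autoref{lem:freeness_random_P_NI}; your argument supplies the missing bridge between that lemma (which uses the stronger $\vZ$-symmetrization $\CN_{2k,\vZ}^{\dagger}$) and the corollary (which uses only the weaker $\vS$-symmetrization $\CN_{2k,\vS}^{\dagger}$). The key observation you make explicit---that $\vR=\CN^{\dagger}_{\vec{j},\{\vZ_i\}}[\vI]$ already lies in the commutant of all diagonal phases because conjugation by $\vD_0$ is absorbed into the i.i.d.\ phase distributions of the internal $\vZ_l$---is exactly what is needed to conclude $\CN_{2k,\vS}^{\dagger}[\vR]=\CN_{2k,\vZ}^{\dagger}[\vR]$ and then invoke the lemma. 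The final triangle-inequality step pulling in the normalized trace is also clean.
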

\begin{cor}[Complete positivity in the large-$N$ limit]\label{cor:CP_large_N}
Suppose $\sum_i \labs{w_i}^2 =1$, then the CP map associated with $2k$-fold tensor product of $\sum_i w_i \vZ_i$ is trace-preserving in the large-$N$ limit
    \begin{align}
        \lim_{N\rightarrow \infty} \normp{\CN_{2k,\vS}^{\dagger} \circ  \CN^{\dagger}_{2k,\sum_i w_i \vZ_i}[\vI] - \vI}{\infty} =0.
    \end{align}
\end{cor}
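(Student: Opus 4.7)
The plan is to reduce to Corollary~\ref{cor:rescaled} via linearity, then evaluate the remaining scalar. Expanding the CP channel slot-by-slot,
\begin{align*}
\CN^\dagger_{2k,\sum_i w_i\vZ_i}[\vI]
= \BE\L[\L(\sum_{a,b} w_a^* w_b\,\vZ_a^\dagger\vZ_b\R)^{\otimes k}\R]
= \sum_{\vec{a},\vec{b}\in[\ell]^k} \L(\prod_s w_{a_s}^* w_{b_s}\R)\CN^\dagger_{(\vec{a},\vec{b}),\{\vZ_i\}}[\vI],
\end{align*}
where each expectation $\BE\bigotimes_s \vZ_{a_s}^\dagger\vZ_{b_s}$ is identified with $\CN^\dagger_{\vec{j},\{\vZ_i\}}[\vI]$ for the concatenated multi-index $\vec{j}=(\vec{a},\vec{b})\in[\ell]^{2k}$. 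Applying $\CN_{2k,\vS}^\dagger$ and invoking Corollary~\ref{cor:rescaled} term-by-term, each of the $\ell^{2k}$ summands converges in operator norm to $\vI\cdot\btr[\CN^\dagger_{(\vec{a},\vec{b}),\{\vZ_i\}}[\vI]]$. Since the number of multi-indices and the coefficients are independent of $N$, the triangle inequality recollects the sum as
\begin{align*}
\CN_{2k,\vS}^\dagger \circ \CN^\dagger_{2k,\sum_i w_i\vZ_i}[\vI] \stackrel{N\rightarrow\infty}{=} \vI\cdot\btr\L[\CN^\dagger_{2k,\sum_i w_i\vZ_i}[\vI]\R],
\end{align*}
using cyclicity of the trace to pull the outer $\vS$-conjugation inside $\btr$.

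Next I would evaluate the scalar. Since $\btr[\vA^{\otimes k}]=\btr_1[\vA]^k$ factors on tensor products, the scalar equals $\BE[X^k]$ with
\begin{align*}
X:=\btr_1\L[\L(\sum_i w_i\vZ_i\R)^\dagger\L(\sum_i w_i\vZ_i\R)\R]=\sum_i|w_i|^2+\sum_{a\neq b}w_a^*w_b\,\frac{\tr[\vZ_a^\dagger\vZ_b]}{N}.
\end{align*}
The diagonal equals $1$ exactly by the hypothesis $\sum_i|w_i|^2=1$. Each off-diagonal trace $\tr[\vZ_a^\dagger\vZ_b]=\sum_{i:\,S_a(i)=S_b(i)}z_{a,S_a(i)}^*z_{b,S_b(i)}$ has mean zero (by independence and centering of the phases of $\vZ_a,\vZ_b$) and magnitude at most $|F_{ab}|:=|\{i:S_a(i)=S_b(i)\}|$, whose expectation is $1$. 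Combined with the uniform bound $0\le X\le\|\sum_i w_i\vZ_i\|_\infty^2\le(\sum_i|w_i|)^2\le\ell$, bounded convergence gives $\BE[X^k]\to 1$.

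The main obstacle is that Corollary~\ref{cor:rescaled} only gives pointwise (in $\vec{j}$) convergence, so one must aggregate $\ell^{2k}$ such limits without error accumulation; this works here because $\ell$, $k$, and the weights $\{w_i\}$ are fixed as $N\to\infty$. What remains is a short moment bookkeeping whose critical structural inputs are the normalization $\sum_i|w_i|^2=1$ (forcing the diagonal to contribute exactly $1$) and the $O(1/N)$ decay of off-diagonal normalized traces, an artifact of the independent random phases carried by distinct phased permutations.
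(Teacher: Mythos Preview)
Your proof is correct and takes essentially the same route as the paper. The paper's version is marginally more direct: it splits off the diagonal contribution $\vI^{\otimes k}$ exactly at the operator level (using $\sum_i|w_i|^2=1$ and $\vZ_i^\dagger\vZ_i=\vI$) and then applies Lemma~\ref{lem:freeness_random_P_NI} case~(2) to the remaining cross terms, bypassing the detour through Corollary~\ref{cor:rescaled} and the scalar bounded-convergence argument $\BE[X^k]\to 1$.
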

That is, $\CN_{2k,\sum_i w_i \vZ_i}\circ \CN_{2k,\vS}$ is a rescaled trace-preserving map. They are also completely positive as the $\vZ_i$ need to pair up with each other.

\begin{proof}
    Expand 
\begin{align}
    \CN^{\dagger}_{2k,\sum_i w_i \vZ_i}[\vI] &= \BE \left(\sum_i w_i \vZ_i\right)^{\dagger}\left(\sum_i w_i \vZ_i\right) \otimes \left(\sum_i w_i \vZ_i\right)^{\dagger}\left(\sum_i w_i \vZ_i\right)\cdots \\
    & = \vI^{\otimes 2k} + (\text{cross terms})\tag*{($\vZ^{\dagger}\vZ = \vI$)}\\
    & \stackrel{N \rightarrow \infty}{=} \vI^{\otimes 2k} \tag*{(\autoref{lem:freeness_random_P_NI})},
\end{align}
where the cross terms result from multiplying out the bilinear tensor products. For example, for $k=2$, 
\begin{align}
    (\sum_i w_i \vZ_i)^{\dagger}(\sum_i w_i \vZ_i) \otimes (\sum_i w_i \vZ_i)^{\dagger}(\sum_i w_i \vZ_i) &= \L(\vI +  \sum_{i\ne j} w_iw_j \vZ_i^{\dagger}\vZ_j \R)\otimes  \L(\vI +  \sum_{i\ne j} w_iw_j \vZ_i^{\dagger}\vZ_j \R)\\
    &= \vI\otimes \vI + (\text{cross terms}).
\end{align}
Note that the number of cross-terms is independent of $N$.
\end{proof}
\begin{cor}[Combination of permutations and Gaussians]\label{cor:Z+G}
    The above holds with any linear combination $\sum_i w_i \vZ_i + \sum_i w'_i \vG_i$ provided that $\sum_i \labs{w_i}^2 +\sum_i \labs{w'_i}^2 =1$.
\end{cor}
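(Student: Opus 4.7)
The plan is to mirror the proof of Corollary~\ref{cor:CP_large_N} and handle the Ginibre terms via Gaussian integration. Setting $\vO := \sum_i w_i \vZ_i + \sum_i w'_i \vG_i$, the key identity is
\begin{align*}
    \BE[\vO^{\dagger}\vO] = \Big(\sum_i |w_i|^2 + \sum_i |w'_i|^2\Big)\vI = \vI,
\end{align*}
using $\vZ_i^{\dagger}\vZ_i = \vI$ deterministically, the hypothesis $\sum_i|w_i|^2+\sum_i|w_i'|^2=1$, and the Ginibre normalization $\BE[\vG_i^{\dagger}\vG_i] = \vI$. Hence we write $\vO^{\dagger}\vO = \vI + \vX$ with $\BE[\vX] = 0$, where $\vX$ collects mixed-permutation terms $w_i^*w_j \vZ_i^{\dagger}\vZ_j$ ($i \neq j$), permutation--Gaussian terms $w_i^*w'_j \vZ_i^{\dagger}\vG_j$ and their conjugates, and off-diagonal or fluctuating Gaussian terms $w'^*_i w'_j \vG_i^{\dagger}\vG_j - \delta_{ij}|w'_i|^2 \vI$.

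Expanding $\BE[(\vI+\vX)^{\otimes k}]$ as in the proof of Corollary~\ref{cor:CP_large_N} yields $\vI^{\otimes k}$ plus an $N$-independent number of tensor products of cross-term factors. I would integrate over $\vG$ first, exploiting independence of the Gaussians from the permutations together with the entry covariance $\BE[\vG_{ij}\vG^*_{kl}] = \delta_{ik}\delta_{jl}/N$ and $\BE[\vG_{ij}\vG_{kl}]=0$. By Wick's theorem, any monomial with an unpaired $\vG$ entry vanishes in expectation, and each surviving Gaussian contraction contributes an explicit $N^{-r}$ prefactor (one power of $1/N$ per pair, $r \geq 1$) multiplied by a monomial in the $\vZ_i$'s with an $N$-independent number of factors.

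After Gaussian integration, applying $\CN^{\dagger}_{2k,\vS}$ leaves finitely many terms to bound: the purely permutation cross terms ($r=0$) vanish directly by~\autoref{lem:freeness_random_P_NI}, whereas each term with $r \geq 1$ carries an $N^{-r}$ prefactor times a $\vZ$-symmetrized operator whose infinity norm is $\CO(1)$ uniformly in $N$, and hence vanishes in the limit. Combining with the leading $\vI^{\otimes k}$ contribution completes the proof of the claim; the analogous argument shows $\BE[\vO^{\otimes k}\otimes \vO^{*\otimes k}]$ also agrees with the identity up to $\CN_{2k,\vS}$-symmetrization, giving complete positivity up to rescaling in the large-$N$ limit.

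The main obstacle is the Wick bookkeeping for the Gaussian contractions: one must verify that the covariance structure of the Ginibre entries produces a net $1/N$ factor for every non-trivial pairing, and that non-adjacent contractions spanning distinct tensor slots (which produce swap-like contributions) do not accidentally generate unsuppressed terms. Once this is checked term-by-term, the finite sum of vanishing cross-terms delivers the conclusion, and nothing beyond the original template of Corollary~\ref{cor:CP_large_N} is needed.
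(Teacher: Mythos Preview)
Your approach is correct but takes a genuinely different route from the paper. The paper does not perform any Wick calculus on the Ginibre terms; instead it invokes the matrix central limit theorem at fixed $N$, replacing each $\vG_i$ by $\lim_{m\to\infty}\frac{1}{\sqrt{m}}\sum_{j=1}^m \vZ_{i,j}$ for fresh i.i.d.\ phased permutations, thereby reducing Corollary~\ref{cor:Z+G} to Corollary~\ref{cor:CP_large_N}. The only thing to check is that the CLT error terms have $N$-independent operator-norm bounds, so that the $m\to\infty$ and $N\to\infty$ limits commute.

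Your direct Wick computation is more hands-on: it avoids the limit-swapping argument and makes the $1/N$ suppression explicit, at the price of the bookkeeping you flag in the last paragraph. That bookkeeping does go through---for instance, $\BE[(\vG^{\dagger}\vG-\vI)^{\otimes s}]=\sum_{\sigma\text{ derangement}}N^{c(\sigma)-s}\vO_\sigma$ with $c(\sigma)\le s/2$, so the net power of $N$ is always $\le -1$---but note that the post-Wick residue is not literally ``a monomial in the $\vZ_i$'s'': cross-slot contractions produce swap-type operators $\vO_\sigma$ intertwined with the $\vZ$'s. This is harmless (each factor has operator norm $1$), but your write-up should say so explicitly. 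Finally, your closing remark about complete positivity is unnecessary: $\CN_{2k,\vO}$ is CP for every realization of $\vO$, so only the trace-preservation claim has content.
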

\begin{proof}
    Recall the central limit theorem (for any fixed dimension $N$)
    \begin{align}
        \lim_{m\rightarrow \infty} \sum_{i=1}^m \frac{1}{\sqrt{m}}\vZ_i\stackrel{dist.}{\sim} \vG.
    \end{align}
The error in the central limit theorem is bounded in the operator norm by 
\begin{align}
    \frac{1}{m} f_1(w_i,k) +\frac{1}{m^2} f_2(w_i,k) + \cdots, 
\end{align}
where the leading contributions are cases when each $\vZ_i$ shows up either two or zero times (reproducing the Wick contractions up to $\CO(1/m)$ corrections), and the subleading contributions are cases when there are terms $\vZ_i$ that show up four times.
In particular, the norm bounds of the error terms are functions independent of $N$ (depending on the pattern of the occurrences among the $k$ indices and the $w_i$ coefficients), so the limits $m\rightarrow \infty$ and $N\rightarrow \infty$ commute. 
\end{proof}
\subsection{\texorpdfstring{Large $N$ structure}{Large N structure}}
\label{sec:partitionAlgebra_largeN}
In order to use the Markov inequality, we need to confirm that the distinguishing probability is indeed a rational polynomial of $N$. Nicely, expressions involving random permutations, like random unitary, have a pronounced large-$N$ expansion.

\begin{lem}[Large-$N$ expression in random phased permutations]\label{lem:ONO_polyN}
Consider $\CN_{ \vec{j}, \{\vW_i\}}$ as in~\autoref{lem:freeness_random_P}, and suppose the length of the words is bounded by $\ell$. Then, for each pair of partitions $\Pi_1,\Pi_2 \vdash 2k$,
\begin{align}
    \tr\L[ \vO_{\Pi_2}\CN_{ \vec{j}, \{\vW_i\}}[\vO_{\Pi_1}]\R]
\end{align}
is a rational function in $N$ where the poles can be located up to $2k\ell-1$, each with multiplicity at most $2k\ell$.
\end{lem}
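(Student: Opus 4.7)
My plan is to compute the trace directly in the computational basis and reduce it, via independence of the random generators, to a product of elementary per-generator moments, each of which is rational in $N$ with poles at small nonnegative integers. First, I would expand $\vO_{\Pi_1}$ and $\vO_{\Pi_2}$ as sums over computational basis configurations $\vec{m}\in M_{\Pi_1}$, $\vec{m}'\in M_{\Pi_2}$ (as in~\autoref{sec:diagram_Partition_algebra}), and expand each word $\vW_{j_i}$ as a product of its constituent random phased permutations $\vZ_{a_1}^{\pm 1}\cdots \vZ_{a_{r_i}}^{\pm 1}$ with $r_i\le \ell$. After inserting completeness relations for the internal indices of each word expansion, the entire trace becomes a multi-index sum of products of matrix elements of the form $[\vZ_a]_{p,q}=z_{a,p}\delta_{p,S_a(q)}$ and $[\vZ_a^{\dagger}]_{p,q}=\overline{z_{a,q}}\delta_{q,S_a(p)}$.

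The key step is to use the independence of the generators $\{\vZ_a\}$ to factorize the expectation over them. For each fixed $a$, I would first take the expectation over the diagonal phases $\vD_{z_a}$, which kills all contributions for which the $z_{a,\cdot}$ factors fail to pair with their conjugates, leaving a ``balance'' constraint on the basis indices touching that generator. What remains for that generator is a purely combinatorial expectation over $\vS_a\sim\vS(N)$ of a product of delta functions enforcing a set of image constraints. This is precisely the probability that a uniform permutation satisfies a prescribed collection of image constraints, which equals either $0$ (inconsistent) or
\begin{align}
\frac{(N-d_a)!}{N!}=\frac{1}{N(N-1)\cdots(N-d_a+1)},
\end{align}
where $d_a$ counts the number of distinct effective image constraints imposed on $\vS_a$.

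The pole bookkeeping is then immediate. Each word position contributes at most $\ell$ generator factors and there are $2k$ positions, for a total of at most $2k\ell$ generator factors across the whole expression; consequently $d_a\le 2k\ell$ for every $a$, so each per-generator factor is rational in $N$ with only simple poles, all located at integers in $\{0,1,\ldots,2k\ell-1\}$. Multiplying over the distinct generators $a$ that actually contribute, the multiplicity of any single pole location is bounded by the number of such generators, which is likewise at most $2k\ell$. The remaining sum over the free computational basis indices contributes only polynomial factors of $N$ in the numerator and introduces no new poles.

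The main technical care I expect is in the bookkeeping: pinning down precisely what ``distinct effective constraint'' means without over- or under-counting, and verifying that the phase-pairing reductions do not sneak in hidden poles through combinatorial prefactors. These are, however, routine once the factorized structure is in place; the essential structural input, that moments of uniform $\vS(N)$ are rational in $N$ with simple poles at small nonnegative integers, is elementary, and the bound $d_a\le 2k\ell$ is a direct consequence of the word-length hypothesis.
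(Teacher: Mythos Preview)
Your proposal is correct and takes essentially the same approach as the paper: both reduce the claim to the elementary fact that moments of a uniform $\vS(N)$ permutation are rational in $N$ with simple poles at small nonnegative integers, and then bound pole locations and multiplicities by counting generator occurrences across the $2k$ words of length $\le \ell$. The only difference is presentational---the paper packages the same computation via the diagrammatic/projector formalism of~\autoref{sec:diagram_Partition_algebra} (decomposing $\vZ=\vD_z\vS$, expanding $\BE\vD_z^{\otimes\cdot}$ and $\BE\vS^{\otimes\cdot}$ as diagram sums via~\eqref{eq:average_projector}, and contracting), whereas you carry out the equivalent calculation directly in the computational basis.
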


The strategy is to expand the expression by canonical pieces of tensors (\autoref{fig:components}) and argue that any contraction must yield a polynomial of $N$. This formal dependence on $N$ is the only structural property we need; no further combinatorial calculations are required.

\begin{figure}[t]
    \centering
    \includegraphics[width=0.8\textwidth]{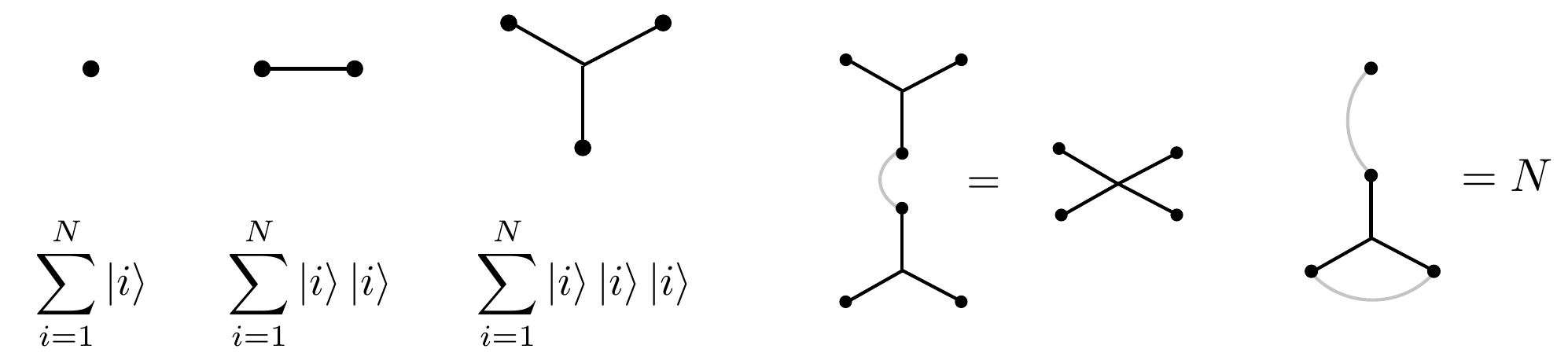}
    \caption{Diagrams for tensor components. Each dot represents a node, and each star-like component is a sum over index $i$.}
    \label{fig:components}
\end{figure}

\begin{lem}[Breaking $\vO_{\Pi}$ into components]\label{lem:breaking_into_tensors} The vectorized operator is factorized into individual tensors
\begin{align}
    |\vO_{\Pi}) = \bigotimes_i |T_{b}) \quad \text{where}\quad |T_b) = \sum_{i=1}^N \ket{i}^{\otimes t_b}
\end{align}
and $\sum_b t_b = 2k$.
\end{lem}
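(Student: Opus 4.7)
The plan is to unfold both sides of the asserted factorization directly from the definitions and check they agree up to the natural grouping of tensor factors by block.

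First, I would write out the left-hand side explicitly. By definition $\vO_\Pi = \sum_{\vec m \in M_\Pi} \ket{m_1}\cdots\ket{m_k}\bra{m_{k+1}}\cdots\bra{m_{2k}}$, and the vectorization map $T:\bra{i}\mapsto\ket{i}$ turns this into
\begin{align}
|\vO_\Pi) \;=\; \sum_{\vec m \in M_\Pi} \ket{m_1}\otimes \ket{m_2}\otimes\cdots\otimes \ket{m_{2k}}.
\end{align}
Now, the set $M_\Pi$ was defined so that $m_i = m_j$ whenever $(i,j)$ lie in the same block of $\Pi$, with no other constraints. So if we list the blocks $b = 1,\ldots,|\Pi|$ with block $b$ having $t_b$ positions, then specifying a tuple $\vec m \in M_\Pi$ is the same as freely choosing one value $i_b \in [N]$ per block, and the resulting tensor $\bigotimes_{j=1}^{2k}\ket{m_j}$ places the ket $\ket{i_b}$ at each of the $t_b$ positions belonging to block $b$.

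Next, I would regroup the tensor factors by block. Let $P$ be the permutation of tensor factors that reorders the $2k$ positions so that all positions of block $1$ come first, then all positions of block $2$, etc. Under this reordering,
\begin{align}
P\,|\vO_\Pi) \;=\; \sum_{i_1,\ldots,i_{|\Pi|}\in[N]} \bigotimes_{b=1}^{|\Pi|} \ket{i_b}^{\otimes t_b} \;=\; \bigotimes_{b=1}^{|\Pi|} \sum_{i_b=1}^{N} \ket{i_b}^{\otimes t_b} \;=\; \bigotimes_{b=1}^{|\Pi|} |T_b),
\end{align}
where the middle step is just the distributive law (independent sums over independent tensor slots). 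The constraint $\sum_b t_b = 2k$ is immediate since the blocks of $\Pi$ partition the $2k$ sites.

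The only nontrivial bookkeeping is the tensor-factor permutation $P$: the statement in the lemma as written identifies $|\vO_\Pi)$ with $\bigotimes_b |T_b)$, which is strictly accurate only up to $P$. I would either absorb $P$ into the convention (treating $\bigotimes_b$ as the unordered tensor product over blocks, with the block-to-position assignment understood from $\Pi$), or state the equality as $|\vO_\Pi) = P^{-1}\bigotimes_b |T_b)$. There is no real obstacle here — the result is essentially a rewriting — and no step requires anything beyond the definition of $\vO_\Pi$, the vectorization convention, and the elementary fact that a sum over tuples obeying only block-wise equality constraints factorizes block by block.
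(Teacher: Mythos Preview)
Your proof is correct and essentially captures the only natural argument: the paper itself states this lemma without proof, treating it as a direct unwinding of the definition of $|\vO_\Pi)$ and $M_\Pi$. Your observation about the tensor-factor permutation $P$ is a valid bookkeeping point that the paper suppresses in its notation.
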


\begin{lem}[The power of $N$ is the number of components]
    For any tensor contraction made of components $T_i$, it evaluates exactly to 
    \begin{align}
        N^{c}\quad \text{where} \quad c = \#(\text{connected components after full contraction}).
    \end{align}
\end{lem}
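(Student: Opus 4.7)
The plan is to track the value of a full tensor contraction combinatorially. Each building block $|T_b) = \sum_{i_b=1}^N \ket{i_b}^{\otimes t_b}$ has $t_b$ legs that all carry a single summation index $i_b$, so the star-like tensor already forces all of its legs to take a common value in $[N]$. A full tensor contraction pairs up all legs across the $T_b$'s, and each such pairing introduces a Kronecker delta identifying the two indices.

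First, I would write the value of the contraction explicitly as
\[
\sum_{i_1,\ldots,i_B \in [N]} \prod_{(a,b)\in E} \delta_{i_a, i_b},
\]
where $B$ is the number of components and $E$ is the multiset of edges recording which components have legs contracted against one another. The key point underlying this reduction is that, because all $t_b$ legs of $T_b$ share the index $i_b$, the delta function produced by any individual leg contraction only depends on the two component labels, not on which legs were involved.

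Next, I would introduce the auxiliary graph $G=([B], E)$ and observe that the product of deltas forces $i_a = i_b$ along every edge of $G$, hence by transitivity forces all vertices in a connected component of $G$ to share a common value. Writing $c$ for the number of connected components of $G$ and $i_C \in [N]$ for the common value on component $C$, the constrained sum collapses to $\prod_C \sum_{i_C \in [N]} 1 = N^{c}$. The final step is to identify this graph-theoretic $c$ with the diagrammatic count stated in the lemma: since each $T_b$ is internally connected by its shared index, two components of the contracted diagram merge into one connected piece precisely when there is a chain of leg contractions between them, which is exactly the adjacency encoded by $E$.

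There is no substantive obstacle; the whole argument is bookkeeping of the delta-function identifications produced by the contractions. The only point requiring a line of care is the translation between "connected components after full contraction" as a visual diagram and "connected components of $G$"; this identification is immediate because inside any single $T_b$ all legs are already tied together by $i_b$, so contractions between components and contractions between legs define the same equivalence relation.
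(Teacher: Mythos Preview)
Your argument is correct and is exactly the natural one: each $T_b$ carries a single index, leg contractions impose Kronecker deltas between component indices, and the resulting constrained sum factors over connected components to give $N^c$. The paper in fact states this lemma without proof, so there is nothing to compare against; your write-up fills in precisely the bookkeeping the authors left implicit.
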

In particular, this implies~\eqref{eq:trO1O2} as a special case.
\begin{lem}[Expanding diagonal phases as diagrams]
There exists coefficients $\alpha(\Pi)$ independent of $N$ such that
    \begin{align}
        |\BE \vD_{z}^{\otimes k} \otimes \vD_{z}^{\dagger \otimes k}) =\sum_{\Pi \vdash 4k} \alpha(\Pi)|\vO_{\Pi}).
    \end{align}
\end{lem}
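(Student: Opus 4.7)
The plan is to compute the vectorized expectation explicitly, recognize that its support decomposes as a disjoint union of $M'_\Pi$ for an $N$-independent collection of ``admissible'' partitions $\Pi\vdash 4k$, and then convert to the $\vO_\Pi$ basis via the Möbius inversion \eqref{eq:Kinv}, whose entries do not depend on $N$.

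First I vectorize using the conventions of \autoref{sec:diagram_Partition_algebra}: $|\vD_z)=\sum_{i=1}^{N} z_i\ket{i}\ket{i}$ and $|\vD_z^{\dagger})=\sum_{i=1}^{N}\bar z_i\ket{i}\ket{i}$, so
\begin{align}
|\vD_z^{\otimes k}\otimes \vD_z^{\dagger\otimes k})=\sum_{j_1,\ldots,j_{2k}=1}^{N}\L(\prod_{a=1}^{k}z_{j_a}\R)\L(\prod_{a=k+1}^{2k}\bar z_{j_a}\R)\,\ket{j_1 j_1 j_2 j_2\cdots j_{2k} j_{2k}}.
\end{align}
Because the $z_i$ are i.i.d.\ Haar on $\vU(1)$, the moment identity $\BE[z_i^p\bar z_i^q]=\delta_{pq}$ gives that the coefficient after expectation is $1$ if and only if for every value $v\in\{1,\ldots,N\}$ the multiplicity of $v$ in $\{j_1,\ldots,j_k\}$ equals that in $\{j_{k+1},\ldots,j_{2k}\}$, and $0$ otherwise.

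Second, I reorganize the surviving tuples by their equality pattern. Each contributing $\vec m=(j_1,j_1,\ldots,j_{2k},j_{2k})\in\{1,\ldots,N\}^{4k}$ determines a set-partition $\Pi(\vec m)\vdash 4k$ that automatically refines the fixed pair partition $\Pi_0=\{\{1,2\},\{3,4\},\ldots,\{4k-1,4k\}\}$; the balance condition above rephrases as the purely combinatorial requirement that within every block of $\Pi(\vec m)$ the number of pairs $\{2a-1,2a\}$ with $a\leq k$ (``$z$-pairs'') equals the number with $a>k$ (``$\bar z$-pairs''). Let $\mathcal{B}\subset\{\Pi\vdash 4k\}$ denote this set of admissible balanced partitions; by construction $\mathcal{B}$ is specified without reference to $N$. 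Grouping tuples by their partition and using $\sum_{\vec m\in M'_\Pi}\ket{m_1\cdots m_{4k}}=|\vO'_\Pi)$, I obtain
\begin{align}
|\BE\vD_z^{\otimes k}\otimes \vD_z^{\dagger\otimes k})=\sum_{\Pi\in\mathcal{B}}|\vO'_\Pi).
\end{align}

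Finally, I invoke \eqref{eq:Kinv}, which expresses $|\vO'_\Pi)=\sum_{\Pi'\vdash 4k}(K^{-1})_{\Pi\Pi'}|\vO_{\Pi'})$ with the inversion matrix $K^{-1}$ determined only by the poset structure of set-partitions of $\{1,\ldots,4k\}$ and independent of $N$. Defining $\alpha(\Pi'):=\sum_{\Pi\in\mathcal{B}}(K^{-1})_{\Pi\Pi'}$ gives the claimed expansion, and the coefficients $\alpha(\Pi')$ are manifestly independent of $N$. There is no real obstacle: the whole argument reduces to observing that the Haar-phase expectation imposes a balance condition that is decoupled from the ambient dimension, while the conversion between $\vO'$- and $\vO$-bases uses only the $N$-independent combinatorial Möbius machinery already established in \autoref{sec:diagram_Partition_algebra}.
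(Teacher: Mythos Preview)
Your argument is correct and follows essentially the same route as the paper: compute the expectation termwise, use the $\vU(1)$ moment identity to see that only ``balanced'' index patterns survive with unit coefficient, regroup as $\sum_{\Pi}\alpha'(\Pi)|\vO'_\Pi)$ with $N$-independent $\alpha'$, and then apply the M\"obius inversion~\eqref{eq:Kinv}. Your version is slightly more explicit in identifying $\alpha'(\Pi)=\mathbf{1}[\Pi\in\mathcal{B}]$ and in spelling out the balance condition on blocks, but the underlying mechanism is identical.
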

\begin{proof}
    \begin{align}
        |\BE \vD_{z}^{\otimes k} \otimes \vD_{z}^{\dagger \otimes k})&=\BE  \L[(\sum_{i=1}^N z_i\ket{i}\ket{i})^{\otimes k} \otimes (\sum_{i=1}^N z^*_i\ket{i}\ket{i})^{\otimes k}\R]\\
        &= \sum_{i=1}^N z_i^kz_i^{*k} (\ket{i}\ket{i} \otimes \ket{i}\ket{i})^{\otimes k} + \sum_{N\ge i_2>i_1\ge 1} (\cdot) + \cdots +\sum_{N\ge i_{2k}>\cdots > i_1\ge 1}(\cdot)\\
        &= \sum_{\Pi \vdash 4k} \alpha'(\Pi)|\vO'_{\Pi})\\
        &=\sum_{\Pi \vdash 4k} \alpha(\Pi)|\vO_{\Pi})\tag*{(M\"obius inversion~\eqref{eq:Kinv})}.
    \end{align}
    The second line sums over possible patterns of distinct $z_i$s, and uses that each $z_i$ must pair with its complex conjugate since 
    \begin{align}
        \BE[ z^b_i] = \begin{cases}
            1 &\text{if integer }\quad b=0\\
            0 &\text{else}
        \end{cases}.
    \end{align}
    Thus, the nonvanishing contribution has merely unity coefficient $\labs{z_i}^2=1$. Note that the Hilbert space dimension is $N^{4k}$ instead of $N^{2k}$ as the object we consider is a (vectorized) superoperator.
\end{proof}

\begin{proof}[Proof of~\autoref{lem:ONO_polyN}]
Decompose the phased permutations by the diagonal phases and the permutation $\vZ = \vD_z \vS$. Then, the superoperator $\CN_{ \vec{j}, \{\vW_i\}}$ can be decomposed into sum and products of 
\begin{align}
    \BE \vS^{a} \otimes \vS^a, \quad   \BE \vD_z^{a} \otimes \vD_z^{*a}\quad \text{for }\quad  1\le a \le 2k\ell
\end{align}
which can be written as diagrams $\vO_{\Pi}$ and expanded as tensor components (\autoref{lem:breaking_into_tensors}). After contraction, the poles may accumulate through the occurrences of $\BE \vS^{a} \otimes \vS^a$ for different possible independent permutations $\vS_1,\vS_2...$. These expectations produce poles via~\eqref{eq:average_projector} such that after taking account of all of them, the poles can be located up to $2k\ell-1$, each with multiplicity at most $2k\ell$. \footnote{This is likely a loose bound but is qualitatively sufficient.}
\end{proof}

\subsection{Basis for the partition algebra}
\label{sec:basis_parti}

The crux of our large-$N$ interpolation argument is to extend the distinguishing probability for $N$ to other values of $N'\ne N$. In particular, we need to ``smoothly'' extend the test operators $\vrho, \vO$ to different other dimensions. Fortunately, we do know that the partition algebra $\vP_{k}(N)$ \textit{stabilizes} for large enough $N\ge 2k$ and the individual factors $\{\vP_{\vec{\lambda}}\}_{\vec{\lambda}}$ depend \textit{only} on $k$ and are \textit{independent} of $N$ if $N\ge 2k$ (\autoref{thm:stablize}). Naturally, it is tempting to keep the test operator the same in this invariant decomposition into factors.

However, our polynomial interpolation argument requires that the block-diagonalizing unitary transformation ``depends nicely'' on $N$ in the following sense. 

\begin{quest}[Writing irreps in the computation basis] 
\label{quest:basischange}
For each $k$, assume $N\ge 2k$. Does there exist a choice of orthonormal basis $\ket{v^{\vec{\lambda}}_i}$ for each factor $\vP_{\vec{\lambda}}$ of the partition algebra such that the following holds: the matrix elements $\vE_{ij}^{\vec{\lambda}} = \ket{v^{\vec{\lambda}}_i}\bra{v_j^{\vec{\lambda}}} \otimes \vI_{\vS_{\vec{\lambda}}} $ satisfy that
\begin{align}
    \tr[\vE_{ij}^{\vec{\lambda}}\vO_{\Pi}] \quad \text{is a degree $\CO(k)$ rational polynomial of $N$ for each}\quad \vec{\lambda}, i, j, \Pi?
\end{align}
\end{quest}

We do not need \textit{any} knowledge of the rational polynomial except for its \textit{degree}, and the polynomial can very well depend on $\vec{\lambda}, i, j, \Pi$. To answer the above question, we need to dive into the partition algebra and construct an orthogonal basis. Although our ensemble actually uses a stronger symmetry $\vZ$ (random phased permutations) instead of $\vS(N)$, it is sufficient to consider the weaker symmetry.

\subsubsection{An abstract construction of irreps}
For each irrep labeled by $\vec{\lambda}\in \Lambda_{k,N}$, \cite{halverson2020set} constructed the corresponding irreducible module by taking an abstract quotient
\begin{align}
    \CV_{\vec{\lambda}} 
        &:= \frac{\vP_{k}(N) \vec{e}_{\vec{\lambda}}}{\vJ_{m-1}(N) } \quad \text{as a $\BC$-vector space for $m:=\labs{\vec{\lambda}^*}$}\\ 
    \quad &\text{where}\quad \vec{v}_1( \vec{v}_2+ \vJ_{m-1}) \equiv \vec{v}_1 \vec{v}_2 + \vJ_{m-1}
    \quad \text{and} \\
    &\quad \quad \quad \vec{e}_{\vec{\lambda}}: = \vec{p}_{\vec{\lambda}^*}\vec{e}_{m}\quad \text{for}\quad     \vec{e}_{m}:= \frac{1}{N^{k-m}}\vI^{m}\otimes (\text{isolated nodes})
\end{align}
such that $\vec{e}_m^2=\vec{e}_m$. The diagrammatic form of $\vec{e}_m$ is depicted in \autoref{fig:em}.
\begin{figure}[t]
    \centering
    \includegraphics[width=0.35\textwidth]{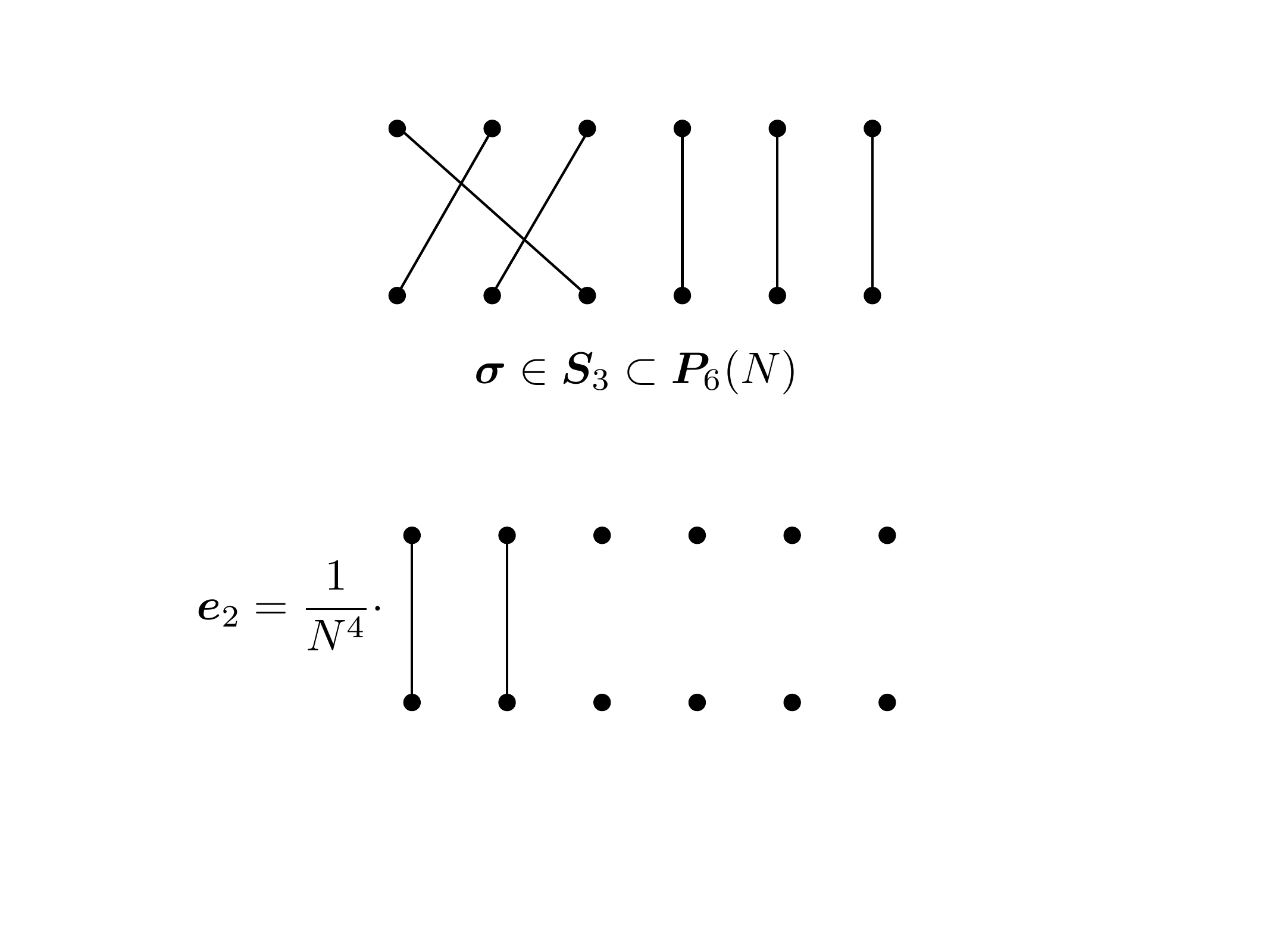}
    \caption{The diagram for $\vec{e}_m$ for $m=2$. It has identities on the leftmost nodes, and the rest are isolated nodes. The normalization is such that $\vec{e}_m^2=\vec{e}_m$.}
    \label{fig:em}
\end{figure}
\begin{thm}[\cite{halverson2020set}]\label{thm:abstract_construction_irreps}
    The quotient $\CV_{\vec{\lambda}}$ is isomorphic to the irreducible module labeled by $\vec{\lambda}\in \Lambda_{k,N}$.
\end{thm}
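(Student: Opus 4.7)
The plan is to prove the theorem using the standard corner-algebra (Schur functor) technique: identify $\vec{e}_m$ as an idempotent, compute the corner algebra $\vec{e}_m \vP_k(N) \vec{e}_m$ modulo the ideal $\vJ_{m-1}$, and transport the Specht module of the symmetric group back up to $\vP_k(N)$. First I would verify that $\vec{e}_m^2 = \vec{e}_m$: the $k-m$ pairs of isolated singletons in the middle of $\vec{e}_m \cdot \vec{e}_m$ contract into $k-m$ closed bulk loops, each contributing a factor of $N$, which the normalization $1/N^{k-m}$ exactly cancels. Consequently $\vec{e}_{\vec{\lambda}} = \vec{p}_{\vec{\lambda}^*}\vec{e}_m$ behaves, up to the Young-symmetrizer normalization, as an idempotent modulo $\vJ_{m-1}$.

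The central step is to establish the isomorphism
\begin{align*}
\vec{e}_m \vP_k(N) \vec{e}_m \;\Big/\; \bigl(\vec{e}_m \vP_k(N)\vec{e}_m \cap \vJ_{m-1}\bigr) \;\cong\; \BC\vS_m.
\end{align*}
Diagrammatically, sandwiching any $\vO_\Pi$ between two copies of $\vec{e}_m$ forces the leftmost $m$ nodes on both sides to be propagating endpoints while the remaining nodes become isolated singletons. If $\Pi$ has fewer than $m$ propagating blocks the result lives in $\vJ_{m-1}$; if it has exactly $m$, the propagating structure must realize a bijection of the $m$ endpoints on each side, i.e., an element of $\vS_m$. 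A short bookkeeping step shows that every bulk loop produced by the sandwich contributes a power of $N$ that is absorbed by the normalization of $\vec{e}_m$, so multiplication of two such residues agrees with multiplication in $\BC\vS_m$ term-by-term.

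Once this identification is in hand, the remainder is formal. The Schur functor $M \mapsto \vec{e}_m M$ from $\vP_k(N)/\vJ_{m-1}$-modules to $\BC\vS_m$-modules is exact and, in the semisimple regime $N \geq 2k$, is a bijection between simples not annihilated by $\vec{e}_m$ and simples of $\BC\vS_m$. Applying this to the cyclic left module $\CV_{\vec{\lambda}}$ yields $\vec{e}_m \CV_{\vec{\lambda}} \cong \BC\vS_m \vec{p}_{\vec{\lambda}^*}$, which by \autoref{lem:irrep_Sm} is the Specht module for $\vec{\lambda}^* \vdash m$ and hence irreducible. Therefore $\CV_{\vec{\lambda}}$ is an irreducible $\vP_k(N)$-module; since the labels $\vec{\lambda} \in \Lambda_{k,N}$ with $0 \le |\vec{\lambda}^*| \le k$ index exactly as many modules as the Bell-number dimension count from \autoref{thm:stablize} allows, these exhaust the simples, completing the proof.

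The main obstacle is the bookkeeping in the corner-algebra identification: one must audit the $N$-powers produced by all possible bulk loops when composing $\vec{e}_m \vO_\Pi \vec{e}_m$ and confirm they cancel the $N^{k-m}$ normalization cleanly, and one must invoke $N \geq 2k$ at the correct moment to rule out accidental dimension collapses among the $\vec{p}_{\vec{\lambda}^*}\vec{e}_m$ --- precisely the stabilization regime of \autoref{thm:stablize}. Outside this regime, the functor $\vec{e}_m(\cdot)$ can annihilate genuine simples and one would instead need to pass to simple heads, which is conceptually heavier but not needed here.
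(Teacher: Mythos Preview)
The paper does not prove this theorem: it is stated with a citation to \cite{halverson2020set} and used as a black box, so there is no proof in the paper to compare against. Your proposal is therefore not a reconstruction of the paper's argument but an independent sketch of why the cited result holds.

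On its own merits, your outline is the standard and correct strategy. The identification $\vec{e}_m \vP_k(N) \vec{e}_m / \vJ_{m-1} \cong \BC\vS_m$ is exactly the corner-algebra computation used in the literature on partition (and more generally, diagram) algebras, and the Schur functor transfer of simples is the right mechanism. Two places where your sketch is thinner than a full proof: first, you assert that the Schur functor $M \mapsto \vec{e}_m M$ is a bijection on simples not killed by $\vec{e}_m$, which is a general fact about idempotent truncation in semisimple algebras but still requires the observation that $\CV_{\vec{\lambda}}$ is generated as a $\vP_k(N)$-module by $\vec{e}_m \CV_{\vec{\lambda}}$ (this is where cyclicity on $\vec{e}_{\vec{\lambda}}$ is used). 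Second, your closing counting argument --- that the $\vec{\lambda}$ with $0 \le |\vec{\lambda}^*| \le k$ exhaust the simples --- is better replaced by the direct observation that the Schur functor already yields pairwise non-isomorphic simples (distinct $\vec{\lambda}^*$ give distinct Specht modules, and distinct $m$ give modules killed by different $\vJ_{m-1}$), together with the known number of simple $\vP_k(N)$-modules in the semisimple regime. The Bell-number dimension formula of \autoref{thm:stablize} is the dimension of the whole algebra, not the number of simples, so invoking it for the count is slightly off.
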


Since we are interested in finding an orthonormal basis, we first need an \textit{inner product} on this irreducible module $\CV_{\vec{\lambda}}$ (which was not discussed in~\cite{halverson2020set}). There is a very natural approach: the abstract quotient by the two-sided ideal $\vJ_{m-1}$ can be realized as the projector $\CQ_{\vJ_{m-1}}$ onto a subalgebra $\CQ_{\vJ_{m-1}}[\vP_{k}(N)] \subset \vP_{k}(N)$. (See \autoref{thm:structure_finite_dim}.) Therefore, the natural choice for the inner product is then to simply induce from the Hilbert-Schmidt inner product of the partition algebra by
\begin{align}
    \braket{ v^{\vec{\lambda}}_1, v^{\vec{\lambda}}_2 } := \frac{1}{\tr[\vI_{\vS_{\lambda}}]} \tr[\CQ_{\vJ}[ \vec{v}^{\vec{\lambda}}_1]^{\dagger} \CQ_{\vJ}[  \vec{v}^{\vec{\lambda}}_2]]\quad \text{for each}\quad \vec{v}^{\vec{\lambda}}_1,\vec{v}^{\vec{\lambda}}_2 \in \vP_{k}(N) \vec{e}_{\vec{\lambda}}\label{eq:innerproduct_lambda}.
\end{align}
We can verify that this inner product is \textit{consistent} (dropping scripts ${\vec{\lambda}}$ and $m-1$ for the moment)
\begin{align}
     \braket{ v_1+\vJ_1, v_2+\vJ_2 }_{\vec{\lambda}}&= \frac{1}{\tr[\vI_{\vS_{\lambda}}]} 
    \tr[\CQ_{\vJ}[ \vec{v}_1+\vJ_1] \CQ_{\vJ}[  \vec{v}_2+\vJ_2]]\\
    &=  \frac{1}{\tr[\vI_{\vS_{\lambda}}]} \tr[\CQ_{\vJ}[ \vec{v}_1]^{\dagger} \CQ_{\vJ}[  \vec{v}_2]] \\
    &= \braket{ v_1, v_2}_{\vec{\lambda}} \quad \text{for each}\quad \vec{v}_1,\vec{v}_2 \in \vP_{k}(N) \vec{e}_{\vec{\lambda}}\quad \text{and}\quad \vJ_1,\vJ_2 \in \vJ
\end{align}
and \textit{compatible} with the adjoint from the Hibert space on which $\vP_{k}(N)$ is represented
\begin{align}
    \braket{ v_1, \vO v_2 }_{\vec{\lambda}} &=  \frac{1}{\tr[\vI_{\vS_{\lambda}}]} \tr[\CQ_{\vJ}[ \vec{v}_1]^{\dagger} \CQ_{\vJ}[ \vO\vec{v}_2]]\\
    &=  \frac{1}{\tr[\vI_{\vS_{\lambda}}]} \tr[\CQ_{\vJ}[ \vec{v}_1]^{\dagger} \CQ_{\vJ}[ \vO]\CQ_{\vJ}[ \vec{v}_2]]\\
    &=  \frac{1}{\tr[\vI_{\vS_{\lambda}}]} \tr[\L(\CQ_{\vJ}[ \vO]^{\dagger} \CQ_{\vJ}[ \vec{v}_1]\R)^{\dagger} \CQ_{\vJ}[ \vec{v}_2]]\\
    &=  \frac{1}{\tr[\vI_{\vS_{\lambda}}]} \tr[\L( \CQ_{\vJ}[ \vO^{\dagger}\vec{v}_1]\R)^{\dagger} \CQ_{\vJ}[ \vec{v}_2]] = \braket{ \vO^{\dagger} v_1,  v_2 }_{\vec{\lambda}}.
\end{align}
The second, third, and fourth lines use that $\CQ_{\vJ}[\cdot]$ is a $\dagger$-homomorphism. (Recall~\autoref{defn:dagger_homo} and \autoref{thm:structure_finite_dim}.) 

In a nutshell, there is a very simple intuition for this abstract vector space: it is simply one \textit{column} in the corresponding factor of the partition algebra
\begin{align}
    \frac{\vP_{k}(N) \vec{e}_{\vec{\lambda}}}{\vJ_{m-1}(N) } \sim \BC\text{-Span}\{\ket{v^{\vec{\lambda}}_i}\bra{0}\} \in \vP_{\vec{\lambda}}.
\end{align}

\subsubsection{Explicit basis}\label{sec:explicit_basis}

For our later usage, we will also need to extract the explicit structure of the basis from the abstract construction. We will be building on the basis provided by~\cite{halverson2020set}. The construction requires some further concepts from the algebraic structure of the partition algebra (see~\autoref{fig:F_NF}).
\begin{defn}[$m$-factor]
    For any $0 \le m\le k$, the set of $m$-factors $F_m$ are diagrams where the bottom leftmost $m$ nodes propagate and the rightmost $k-m$ nodes are isolated. 
    In particular, the set of noncrossing $m$-factors $NF_m$ are those $m$-factors whose propagating edges do not cross when the diagram is drawn under the rule that propagating edges connect to the rightmost vertex of the block in the top row.
\end{defn}
Note that by permuting the bottom row (i.e., concatenating a permutation below), any $m$-factor can be made noncrossing.

\begin{lem}[A basis for $\CV_{\vec{\lambda}}${~\cite{halverson2020set}}]\label{lem:different_reps}
Let $m = \labs{\vec{\lambda}^*}$. Then,
\begin{align}
      \CV_{\vec{\lambda}} = \text{Span}\{\vO_{\omega}\vsigma_t \vec{p}_{\vec{\lambda}^*}| \omega \in NF_m , t \in SYT(\vec{\lambda}^*) \} \quad (\text{mod } \vJ_{m-1}).
\end{align}
\end{lem}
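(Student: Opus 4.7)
The plan is to establish the lemma by proving spanning and then linear independence (via a dimension count). Since $\CV_{\vec{\lambda}} = \vP_k(N)\vec{p}_{\vec{\lambda}^*}\vec{e}_m / \vJ_{m-1}(N)$, the task is to show every $\vO_\Pi \vec{p}_{\vec{\lambda}^*}\vec{e}_m$ is congruent modulo $\vJ_{m-1}$ to an element of $\BC\text{-span}\{\vO_\omega \vsigma_t \vec{p}_{\vec{\lambda}^*} : \omega\in NF_m,\, t\in SYT(\vec{\lambda}^*)\}\cdot\vec{e}_m$, and then to verify the cardinality of this set matches $\dim \CV_{\vec{\lambda}}$.

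For spanning, I would first analyze $\vO_\Pi \vec{e}_m$ on its own. Because $\vec{e}_m$ is the identity on the leftmost $m$ nodes and isolated on the remaining $k-m$ nodes, the product $\vO_\Pi\vec{e}_m$ has at most $m$ propagating blocks, with equality only when every propagating strand of $\Pi$ threads through the identity portion of $\vec{e}_m$. Working modulo $\vJ_{m-1}$, only the top stratum (exactly $m$ propagating blocks) contributes, and any such diagram has the canonical bottom shape of $\vec{e}_m$. Reading off the connectivity above the cut, such a diagram decomposes as $\vO_\omega\cdot\vsigma\cdot\vec{e}_m$, where $\omega$ is an $m$-factor encoding the non-propagating top structure together with the bottom endpoints of the strands, and $\vsigma\in\vS_m$ encodes how the $m$ propagating strands are permuted. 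By absorbing the crossing information of $\omega$ into $\vsigma$, I can always take $\omega \in NF_m$; this uses only that adjacent transpositions on the bottom $m$ strands can be pushed through to the left as elements of $\vS_m \subset \vP_k(N)$ without leaving $\CV_{\vec{\lambda}}$ or crossing strata. Right-multiplying the identity $\vO_\Pi\vec{e}_m \equiv \sum c_{\omega,\vsigma} \vO_\omega \vsigma \vec{e}_m \pmod{\vJ_{m-1}}$ by $\vec{p}_{\vec{\lambda}^*}$ and then using Lemma~\ref{lem:irrep_Sm} to expand $\vsigma\vec{p}_{\vec{\lambda}^*}$ in the basis $\{\vsigma_t \vec{p}_{\vec{\lambda}^*}\}_{t\in SYT(\vec{\lambda}^*)}$ yields spanning.

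For linear independence, I would invoke the dimension formula for the irreducible representations of the partition algebra at generic $N$: summed over $\vec{\lambda}\in\Lambda_{k,N}$ with $|\vec{\lambda}^*|=m$, the contribution to $\dim\vJ_m/\vJ_{m-1}$ equals $|NF_m|^2\cdot m!$, which factors as $|NF_m|^2 \cdot\sum_{\vec{\mu}\vdash m}|SYT(\vec{\mu})|^2$. Since Theorem~\ref{thm:abstract_construction_irreps} identifies $\CV_{\vec{\lambda}}$ as irreducible and the factor $\vP_{\vec{\lambda}}$ appears with multiplicity $\dim\CV_{\vec{\lambda}}$, matching the above sum forces $\dim\CV_{\vec{\lambda}}=|NF_m|\cdot|SYT(\vec{\lambda}^*)|$, which equals the cardinality of the proposed spanning set. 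Combined with spanning, this upgrades the set to a basis.

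The main obstacle is the diagrammatic factorization step: showing cleanly that every exact-$m$-propagating diagram arising as $\vO_\Pi\vec{e}_m$ has the form $\vO_\omega\vsigma\vec{e}_m$ with $\omega\in NF_m$, in a way that respects the equivalence modulo $\vJ_{m-1}$. This requires a careful definition of how $\omega$ is read off from a diagram (in particular, the noncrossing convention that propagating edges connect to the rightmost vertex of the corresponding top block) and verifying that rechoosing the permutation $\vsigma$ to cancel any crossings does not mix with lower strata. The remaining manipulations, including transferring $\vec{p}_{\vec{\lambda}^*}$ across $\vec{e}_m$ via the inclusion $\BC\vS_m \subset \vP_k(N)$ acting on the leftmost nodes, and the dimension count, are direct applications of results already in hand.
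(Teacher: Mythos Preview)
The paper does not provide its own proof of this lemma; it is quoted directly from \cite{halverson2020set} (as flagged in the lemma heading) and used as a black box. There is therefore nothing in the paper to compare your argument against.

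That said, your outline is essentially the standard route taken in the cited reference: reduce $\vP_k(N)\vec{e}_m$ modulo $\vJ_{m-1}$ to the top stratum of exactly-$m$-propagating diagrams, factor each such diagram as an $m$-factor times a permutation in $\vS_m$, straighten the $m$-factor to lie in $NF_m$ by absorbing the crossing into the permutation, and finally project the $\vS_m$-part onto the irreducible module via $\vec{p}_{\vec{\lambda}^*}$ using Lemma~\ref{lem:irrep_Sm}. The dimension count you describe is also the standard one. One point worth making explicit in your write-up: your spanning step analyzes $\vO_\Pi\vec{e}_m$ first and then right-multiplies by $\vec{p}_{\vec{\lambda}^*}$, whereas the module is $\vP_k(N)\vec{p}_{\vec{\lambda}^*}\vec{e}_m$; this swap is justified because $\vec{e}_m$ and $\vec{p}_{\vec{\lambda}^*}$ commute (each is the identity on the nodes where the other acts nontrivially), but you should state this rather than leave it to the reader.
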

We now construct an explicit \textit{orthonormal} basis for the vector space $\CV_{\vec{\lambda}}$ for each irrep $\vec{\lambda}\in \Lambda_{k,N}$. This might be of independent interest in the representation theory of partition algebras.

\begin{thm}[Orthonormal basis for $\CV_{\vec{\lambda}}$]\label{thm:ourExplicitBasis}
For each $\vec{\lambda}\in \Lambda_{k,N}$, the set of vectors
\begin{align}
    \{ \vec{v}^{\vec{\lambda}}_i\} =  \L\{ \frac{1}{n(\omega,t)}\hat{\vO}_{\omega}\vec{u}_t \vec{p}_{\vec{\lambda}^*}| \omega \in NF_m , t \in SYT(\vec{\lambda}^*) \R\}\quad (\text{mod}\quad \vJ_{m-1})
\end{align}
forms an orthornormal basis for the irreducible representation $\CV_{\vec{\lambda}}$.
\end{thm}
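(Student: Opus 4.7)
The strategy is to orthogonalize in two stages: first the vertical index $t \in SYT(\vec{\lambda}^*)$ via the matrix-unit structure of \autoref{lem:ONbasis_Sm}, and then the horizontal index $\omega \in NF_m$ via Gram-Schmidt, where $m := |\vec{\lambda}^*|$. Since \autoref{lem:different_reps} already produces a spanning set of the correct cardinality $|NF_m|\cdot|SYT(\vec{\lambda}^*)|$ for $\CV_{\vec{\lambda}}$ modulo $\vJ_{m-1}$, and since $\{\vec{u}_t\}_t$ and $\{\vsigma_t\}_t$ span the same subspace of $\BC\vS_m$, the set $\{\vO_\omega\vec{u}_t\vec{p}_{\vec{\lambda}^*}\}$ is likewise a basis modulo $\vJ_{m-1}$. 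It therefore suffices to (i) show the inner product \eqref{eq:innerproduct_lambda} factorizes into a horizontal scalar times a diagonal-in-$t$ factor, (ii) orthogonalize the horizontal piece to produce $\hat{\vO}_\omega$, and (iii) normalize by $n(\omega,t)$.

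For step (i), I would invoke the $\dagger$-isomorphism of \autoref{lem:ONbasis_Sm}: the elements $\vec{u}_t\vec{p}_{\vec{\lambda}^*}\vec{p}_{\vec{\lambda}^*}^{\dagger}\vec{u}_{t'}^{\dagger}$ realize matrix units $|t\rangle\langle t'|$ inside the $\vec{\lambda}^*$-factor of $\BC\vS_m \subset \vP_k(N)$. Combined with the fact that $\CQ_{\vJ_{m-1}}$ is a $\dagger$-homomorphism (\autoref{thm:structure_finite_dim}), this yields the factorization
\begin{align*}
\CQ_{\vJ_{m-1}}\!\bigl[\vec{p}_{\vec{\lambda}^*}^{\dagger}\vec{u}_{t'}^{\dagger}\hat{\vO}_{\omega'}^{\dagger}\hat{\vO}_\omega\vec{u}_t\vec{p}_{\vec{\lambda}^*}\bigr] \;=\; \delta_{t,t'}\,A(\omega',\omega)\,\vec{p}_{\vec{\lambda}^*}^{\dagger}\vec{p}_{\vec{\lambda}^*} \pmod{\vJ_{m-1}},
\end{align*}
for some scalar $A(\omega',\omega)$ depending only on the horizontal diagrams. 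Applying $\tr[\cdot]/\tr[\vI_{\vS_{\vec{\lambda}}}]$ then yields the inner product in the desired diagonal-in-$t$ form.

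For steps (ii)--(iii), I would enumerate $NF_m = \{\omega_1, \omega_2, \ldots\}$ in any fixed order and define $\hat{\vO}_{\omega_r}$ inductively as the component of $\vO_{\omega_r}$ orthogonal (with respect to the Hermitian form $A$) to $\hat{\vO}_{\omega_1},\ldots,\hat{\vO}_{\omega_{r-1}}$, working throughout modulo $\vJ_{m-1}$. The form $A$ is positive-definite on this quotient by the Hilbert-Schmidt positivity on the factor $\vP_{\vec{\lambda}}$ (\autoref{thm:structure_finite_dim}) together with the linear independence of $\{\vO_\omega\}_{\omega\in NF_m}$ modulo $\vJ_{m-1}$ from \autoref{lem:different_reps}, so Gram-Schmidt produces nonzero $\hat{\vO}_\omega$ with $A(\hat{\vO}_{\omega'},\hat{\vO}_\omega) = \delta_{\omega,\omega'}\alpha_\omega$ for some $\alpha_\omega > 0$. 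Setting $n(\omega,t) := \sqrt{\alpha_\omega\,\tr[\vec{p}_{\vec{\lambda}^*}^{\dagger}\vec{p}_{\vec{\lambda}^*}]/\tr[\vI_{\vS_{\vec{\lambda}}}]}$ then normalizes each basis vector to unit length under \eqref{eq:innerproduct_lambda}.

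The main obstacle is justifying the factorization in step (i), specifically that $\hat{\vO}_{\omega'}^{\dagger}\hat{\vO}_\omega$ --- viewed as a linear combination of diagrammatic products $\vO_{\omega''}^{\dagger}\vO_{\omega'''}$ --- either preserves all $m$ propagating blocks (reducing via \eqref{eq:O2O1=O21} to a power of $N$ times a permutation of the top $m$ nodes lying in $\BC\vS_m$ that commutes through $\vec{p}_{\vec{\lambda}^*}$) or else drops to fewer than $m$ propagating blocks and is killed by the quotient $\CQ_{\vJ_{m-1}}$. The \emph{noncrossing} hypothesis on $NF_m$ is essential here, as it fixes a canonical matching of propagating strands under diagram multiplication so that the induced $\BC\vS_m$-element is uniquely determined, and hence can be reabsorbed into the Young symmetrizer to produce the clean scalar $A(\omega',\omega)$. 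Once that reduction is in place, the rest of the argument is a direct application of the $\dagger$-homomorphism property of $\CQ_{\vJ_{m-1}}$.
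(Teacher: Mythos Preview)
Your proposal diverges from the paper's proof in a way that matters. In the paper, $\hat{\vO}_\omega$ is \emph{not} defined by Gram--Schmidt; it is the explicit M\"obius-type modification of $\vO_\omega$ in which the indices on the top $k$ nodes and the bottom-left $m$ nodes are forced to be distinct (equation~\eqref{eq:hatvO} and \autoref{lem:OhatO_mfactor}). With this specific definition, the paper proves the combinatorial identity (\autoref{lem:multiplicative_Oomega})
\[
\hat{\vO}_{\omega'}^{\dagger}\hat{\vO}_{\omega}\;\equiv\; c(\omega)\,\delta_{\omega\omega'}\,\vec{e}_m \pmod{\vJ_{m-1}},
\]
which is strictly stronger than your factorization: the Gram matrix $A(\omega',\omega)$ is \emph{already diagonal}, so no Gram--Schmidt is needed. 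The proof of this identity (Steps~1--3 in \autoref{lem:multiplicative_Oomega}) is a direct diagram chase: the distinctness constraint forces the top-row partitions of $\omega$ and $\omega'$ to match exactly (otherwise two distinct indices contract to zero), and then the noncrossing condition pins $\omega=\omega'$. Orthonormality and the explicit normalization $n(\omega,t)=\sqrt{c(\omega)}$ then follow in \autoref{lem:orthonormal} by verifying the matrix-unit relations for $\CQ_{\vJ}[\vec v_i\vec v_j^{\dagger}]$.

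Your Gram--Schmidt route would produce \emph{some} orthonormal basis, but not the one the theorem is about, and it would forfeit the property the construction was designed for: because the paper's $\hat{\vO}_\omega$ is an integer combination of diagrams independent of $N$ and $c(\omega)$ is an explicit falling factorial, each basis element is a low-degree rational function of $N$ (\autoref{lem:degree_projector}). A Gram--Schmidt definition would bury the $N$-dependence inside inverse Gram matrices of size $|NF_m|$, and the paper explicitly flags this as the reason for introducing $\hat{\vO}_\omega$ rather than $\vO_\omega$. Your ``main obstacle'' paragraph is pointed at the right place, but the paper resolves it not by arguing that the induced $\BC\vS_m$-element can be reabsorbed, but by choosing $\hat{\vO}_\omega$ so that the product collapses to a scalar multiple of $\vec{e}_m$ outright.
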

Note that as compared to \autoref{lem:different_reps}, $\vsigma_t$ has been replaced by the $\vec{u}_t$ introduced in \autoref{lem:ONbasis_Sm} and $\vO_\omega$ was replaced by $\hat{\vO}_\omega$. Roughly speaking,
for a noncrossing $m$-factor $\omega$,  
\begin{align}
    \hat{\vO}_{\omega} := (\text{sum over distinct indices for the upper nodes } \{1,\cdots, k\}\text{ and lower nodes }\{1,\cdots, m\}).\label{eq:hatvO}
\end{align}
The precise definition can be found in the proof of \autoref{lem:OhatO_mfactor}. These operators are analogs of the $\vO'_{\omega}$ but focus only on a subset of vertices. (That is, when running the inclusion-exclusion principle, we keep the lower right $k-m$ nodes untouched.) This will be crucial for ensuring that the basis vectors are orthogonal and the subspace projector remains a \textit{rational function} of $1/N$.\footnote{This point is surprisingly delicate because the subspace dimension is large (superexponential in $k$), and tiny overlaps can accumulate and wiggle the span.} Later, we will choose the normalization constant $n(\omega)$ to ensure the vectors are exactly unit length (\autoref{lem:orthonormal}). 
The proof that these operators form an orthonormal basis will take place in two steps.
\begin{itemize}
    \item The new vectors span the same space (\autoref{lem:OhatO_mfactor} and \autoref{lem:same_span}).
    \item The new vectors are orthogonal (\autoref{sec:proving_ortho_basis}).
\end{itemize}

\begin{figure}[t]
    \centering
    \includegraphics[width=0.5\textwidth]{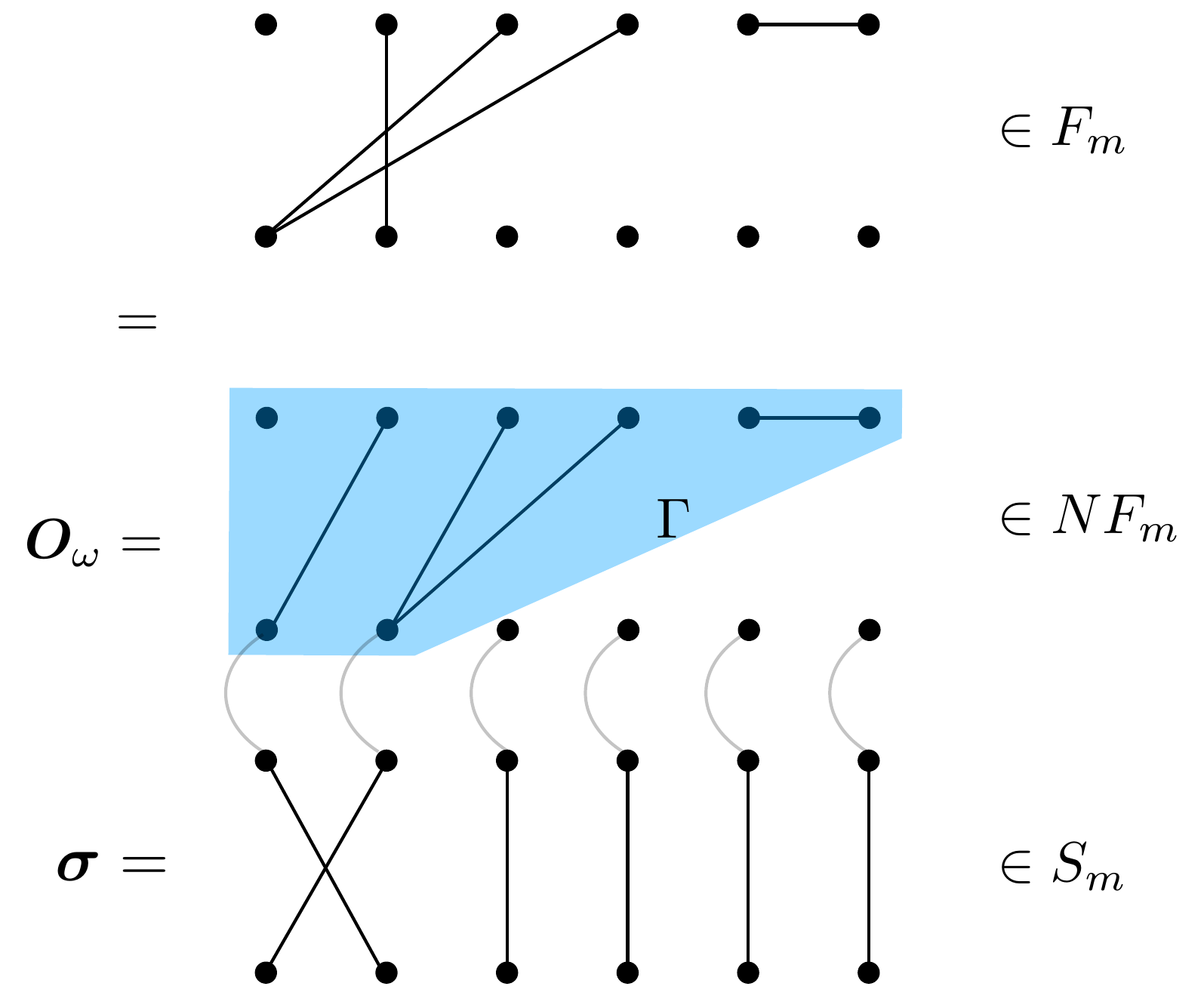}
    \caption{Illustration for $m$-factors ($F_m$) for $m=2$, which can also be written as a noncrossing $m$-factor ($\omega\in NF_m$) multiplied by a permutation $\vsigma \in \vS_m$ at the bottom (which uniquely makes it noncrossing). }
    \label{fig:F_NF}
\end{figure}

\begin{lem}[Alternate representatives]\label{lem:OhatO_mfactor}
For each integer $1\le m\le k$,
    \begin{align}
    \text{Span}\{\hat{\vO}_{\omega}| \omega \in F_m \} \equiv \, \text{Span}\{\vO_{\omega}| \omega \in F_m\} \quad (\text{mod }\vJ_{m-1} ).
    \end{align}
\end{lem}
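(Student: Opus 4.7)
The plan is to establish a unitriangular change of basis between the two spanning sets, yielding equality of spans (in fact even before passing to the quotient by $\vJ_{m-1}$). First, I would pin down the precise meaning of $\hat{\vO}_\omega$ for $\omega\in F_m$. Partition the blocks of $\omega$ into \emph{essential} blocks---those meeting at least one upper node or one of the leftmost propagating lower nodes $bot_1,\ldots,bot_m$---versus the remaining isolated lower singletons $\{bot_{m+1}\},\ldots,\{bot_k\}$. Then $\hat{\vO}_\omega$ is the sum of $\vO_{\vec m}$ over index assignments respecting $\omega$'s partition, with pairwise distinct labels on the essential blocks but free labels on the isolated lower singletons. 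Equivalently, by a M\"obius-type inclusion-exclusion,
\begin{equation*}
\hat{\vO}_\omega \;=\; \sum_{\omega'\ge_R\omega}\vO'_{\omega'},
\end{equation*}
where $\omega'\ge_R\omega$ means every merging step from $\omega$ to $\omega'$ involves at least one isolated lower singleton of $\omega$.

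Next, I would establish a canonical factorization of coarsenings. For any $\omega''\ge\omega$, define $\omega^\flat$ by pooling together, within each block of $\omega''$, the essential blocks of $\omega$ that land in it, while keeping the isolated lower singletons in that block separate. Then $\omega^\flat$ is obtained from $\omega$ by \emph{bad} mergings only---mergings purely among essential blocks---and one has $\omega''\ge_R\omega^\flat$. The map $\omega''\mapsto\omega^\flat$ organizes $\{\omega''\ge\omega\}$ as a disjoint union indexed by $\omega^\flat\ge_{\text{bad}}\omega$, with the fiber over $\omega^\flat$ being exactly $\{\omega''\ge_R\omega^\flat\}$. Plugging into $\vO_\omega=\sum_{\omega''\ge\omega}\vO'_{\omega''}$ and grouping by $\omega^\flat$ yields the exact identity
\begin{equation*}
\vO_\omega \;=\; \sum_{\omega^\flat\ge_{\text{bad}}\omega}\;\sum_{\omega''\ge_R\omega^\flat}\vO'_{\omega''} \;=\; \sum_{\omega^\flat\ge_{\text{bad}}\omega}\hat{\vO}_{\omega^\flat}.
\end{equation*}

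The key structural observation is that bad mergings only rearrange essential blocks: they leave the leftmost $m$ bottoms inside propagating blocks and keep the rightmost $k-m$ bottoms as isolated singletons, so every $\omega^\flat$ appearing in the sum is itself in $F_m$. (Some $\omega^\flat$ may have strictly fewer than $m$ propagating blocks if two propagating blocks of $\omega$ get merged, but these are still valid members of $F_m$ under the definition, and become redundant only in the quotient.) Thus the identity expresses $\vO_\omega$ as $\hat{\vO}_\omega$ plus a combination of $\hat{\vO}_{\omega^\flat}$ for strictly coarser $\omega^\flat\in F_m$. This is upper-unitriangular with respect to any linear extension (\autoref{lem:extensionPartialOrder}) of the refinement order on $F_m$; \autoref{lem:inv_upper_triangular} then inverts the system, expressing each $\hat{\vO}_\omega$ as an upper-unitriangular combination of $\vO_{\omega^\flat}$'s for $\omega^\flat\in F_m$. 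Equality of the two spans follows immediately, which is even stronger than the claim modulo $\vJ_{m-1}$.

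The main technical step is verifying that $\omega''\mapsto\omega^\flat$ really does set up a bijection $\{\omega''\ge\omega\}\leftrightarrow\{(\omega^\flat,\omega'')\,:\,\omega''\ge_R\omega^\flat\ge_{\text{bad}}\omega\}$. This amounts to a careful case analysis inside each block of $\omega''$, showing that the essential blocks of $\omega$ contained in it must be unambiguously merged together in $\omega^\flat$, while the isolated lower singletons contained in it remain separate in $\omega^\flat$ and get absorbed only in the final lift $\omega^\flat\to\omega''$. Once this bookkeeping is in place, the preservation of $F_m$ under bad mergings is immediate by inspection, and the unitriangular argument goes through without any appeal to the quotient structure.
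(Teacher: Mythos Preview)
Your argument is correct and closely parallels the paper's, though organized from the opposite side. The paper reduces to partitions $\Gamma\vdash[k+m]$ (absorbing the $k-m$ isolated bottom singletons into a fixed tensor factor), so that $\hat{\vO}_\omega$ is literally the $\vO'_\Gamma$ on that smaller lattice; ordinary M\"obius inversion there gives $\hat{\vO}_\omega=\sum_{\Gamma'}K^{-1}_{\omega\Gamma'}\vO_{\Gamma'}$, and the key step is the dichotomy that every $\Gamma'\vdash[k+m]$ either has exactly $m$ propagating blocks (hence lies in $F_m$) or fewer (hence lies in $\vJ_{m-1}$). Your factorization of coarsenings into ``bad'' (essential-only) versus ``singleton-absorbing'' steps is the same M\"obius structure viewed from the $\vO\to\hat{\vO}$ direction; the two proofs are essentially dual.

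The one overreach is your parenthetical that $\omega^\flat$ with fewer than $m$ propagating blocks ``are still valid members of $F_m$ under the definition.'' The paper's own proof of this lemma and its later use of the decomposition $F_m=NF_m\cdot\vS_m$ treat $F_m$ as requiring \emph{exactly} $m$ propagating blocks, one per leftmost bottom node. Under that reading, merging two propagating blocks exits $F_m$, and your stronger claim (equality of spans before the quotient) is false: already for $m=k$ it would assert $\text{Span}\{\vO'_\omega:\omega\in F_k\}=\text{Span}\{\vO_\omega:\omega\in F_k\}$ on the nose, which fails since the M\"obius expansion of either side involves diagrams with merged bottom nodes. The quotient is genuinely needed here. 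The repair is immediate: bad-coarsening can never increase the number of propagating blocks (each merge of essential blocks is top-only$\,\cup\,$top-only, top-only$\,\cup\,$propagating, or propagating$\,\cup\,$propagating), so any $\omega^\flat$ that has left the strict $F_m$ satisfies $\vO_{\omega^\flat}\in\vJ_{m-1}$ and, by your own inverted identity $\hat{\vO}_{\omega^\flat}=\sum_{\omega^{\flat\flat}\ge_{\text{bad}}\omega^\flat}\mu\,\vO_{\omega^{\flat\flat}}$, also $\hat{\vO}_{\omega^\flat}\in\vJ_{m-1}$. With that one-line observation your unitriangular inversion proves the lemma as stated.
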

\begin{proof}[Proof of~\autoref{lem:OhatO_mfactor}]
To study $m$-factors, it suffices to consider diagrams where the bottom right $k-m$ nodes are isolated. To that end, let
$T_1^{k-m} =(\sum_{i=1}^N\bra{i})^{\otimes k-m}$ and introduce the shorthand
\begin{align}
    \vO_{\Gamma\otimes T_1^{k-m}} =:\vO_{\Gamma}\quad \text{where}\quad \Gamma\vdash[k+m]=\{1,\cdots,k,k+1,\cdots k+m\}.
\end{align}
The induced partial order $\ge$ on $\Gamma \vdash [k+m]$ (recall~\autoref{sec:set_partitions}) allows us to run a tailored M\"obius inversion argument so that (as in~\eqref{eq:K},\eqref{eq:Kinv})
\begin{align}
    \vO_{\Gamma} &=  \sum_{\stackrel{\Gamma' \ge \Gamma,}{\Gamma'\vdash[k+m]}} \vO'_{\Gamma'} =: \sum_{\Gamma'\vdash[k+m]} K_{\Gamma\Gamma'}\vO'_{\Gamma'}\\
    \vO'_{\Gamma} &= \sum_{\Gamma'\vdash[k+m]} K^{-1}_{\Gamma\Gamma'}\vO_{\Gamma'}\label{eq:O'Gamma}
\end{align}
where $\vO'_{\Gamma'}$ are orthogonal (recall~\eqref{eq:O'pi}: distinct blocks in $\Gamma$ have distinct indices) and are related to the original diagrams $\vO_{\Gamma}$ by upper triangular matrices $K_{\Gamma\Gamma'}$ and $K^{-1}_{\Gamma\Gamma'}$ (total order extended from the partial order $\ge$ by~\autoref{lem:extensionPartialOrder}). Note that the bottom right $k-m$ nodes are untouched throughout the proof. Consider~\eqref{eq:O'Gamma} for each $m$-factor $\omega \in F_m$, 
\begin{align}
\vO'_{\omega} &= \sum_{\Gamma'\vdash[k+m]} K^{-1}_{\omega\Gamma'}\vO_{\Gamma'}\\
&=\sum_{\omega'\in F_m} K_{\omega\omega'}^{-1}\vO_{\omega'} + \sum_{\Gamma'\in \vJ_{m-1}} K_{\omega\Gamma'}^{-1}\vO_{\Gamma'} 
    \tag{Any partition $\Gamma$ is either in $F_m$ or $\vJ_{m-1}$}\\
        &\equiv \sum_{\omega'\in F_m} K_{\omega\omega'}^{-1}\vO_{\omega'}\quad (\text{mod $\vJ_{m-1}$}),    
\end{align}  
where we again use $\Gamma' \in \vJ_{m-1}$ as a shorthand for $\Gamma'\otimes T_1^{k-m} \in \vJ_{m-1}$.
The second line is the observation that any partition $\Gamma\vdash[k+m]$ corresponds to a diagram with at most $m$ propagating blocks (since there are only $m$ nodes at the bottom), and thus any diagram with (1) exactly $m$-propagating blocks (i.e., an $m$-factor) is in $F_m$ and those with (2) at most $m-1$ propagating block is by definition in $\vJ_{m-1}$. Now, since the restriction of an upper triangular matrix remains upper triangular (or, partial order remains consistent on subsets:~\autoref{lem:restricting_partial_order}), we have that
\begin{align}
    K^{-1}_{\Gamma\Gamma'}\quad&\text{for}\quad \Gamma,\Gamma' \vdash [k+m]\quad \text{is upper triangular implies the same for}\\
    K^{-1}_{\omega\omega'}\quad &\text{for}\quad \omega,\omega' \in F_m. 
\end{align}

Further, since the diagonals are nonzero (just ones), there exists a matrix $K^{(F_m)}$ defined only on $F_m\rightarrow F_m$ that inverts $K^{-1}_{\omega\omega'}$ on $F_m\rightarrow F_m$:
\begin{align}
    \sum_{\omega'\in F_m} K^{(F_m)}_{\omega\omega'} K^{-1}_{\omega'\omega''} = \delta_{\omega\omega''} \quad \text{for each}\quad \omega, \omega'' \in F_m.
\end{align}
Thus, 
\begin{align}
     \sum_{\omega'\in F_m} K^{(F_m)}_{\omega\omega'} \vO'_{\omega'} \equiv \vO_{\omega} \quad (\text{mod $\vJ_{m-1}$}), 
\end{align}
stating that the two spans are equal up to elements in $\vJ_{m-1}$.
\end{proof}

\begin{lem}[Same span for irreps]\label{lem:same_span}
    For each $\vec{\lambda}\in \Lambda_{k,N}$, let $m=\labs{\vec{\lambda}^*}$, then
    \begin{align}
\text{Span}\{\vO_{\omega}\vsigma_t \vec{p}_{\vec{\lambda}^*}| \omega \in NF_m , t \in SYT(\vec{\lambda}^*)\} &\equiv \text{Span}\{\hat{\vO}_{\omega} \vec{u}_t \vec{p}_{\vec{\lambda}^*}| \omega \in NF_m , t \in SYT(\vec{\lambda}^*)\} \quad (\text{mod } \vJ_{m-1})
    \end{align}
\end{lem}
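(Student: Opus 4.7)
The plan is to deal with the two substitutions $\vsigma_t \to \vec{u}_t$ and $\vO_\omega \to \hat{\vO}_\omega$ separately, exploiting the fact that $\vJ_{m-1}$ is a two-sided ideal of $\vP_k(N)$. The first of these is essentially immediate: by \autoref{lem:ONbasis_Sm}, we have the subspace equality $\text{Span}\{\vsigma_t \vec{p}_{\vec{\lambda}^*}\}_{t\in SYT(\vec{\lambda}^*)} = \text{Span}\{\vec{u}_t \vec{p}_{\vec{\lambda}^*}\}_{t\in SYT(\vec{\lambda}^*)}$ inside $\BC \vS_m$, since $\vec{u}_t$ was defined as an invertible change of basis within $\BC\text{-span}\{\vsigma_{t'}\}$. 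Left-multiplying by $\hat{\vO}_\omega$ for each fixed $\omega \in NF_m$ and taking the union over $\omega$ gives
\begin{align}
\text{Span}\{\hat{\vO}_\omega \vec{u}_t \vec{p}_{\vec{\lambda}^*}\}_{\omega, t} = \text{Span}\{\hat{\vO}_\omega \vsigma_t \vec{p}_{\vec{\lambda}^*}\}_{\omega, t},
\end{align}
so it suffices to prove that this common span agrees with $\text{Span}\{\vO_\omega \vsigma_t \vec{p}_{\vec{\lambda}^*}\}_{\omega \in NF_m, t \in SYT(\vec{\lambda}^*)}$ modulo $\vJ_{m-1}$.

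For the harder substitution, I would invoke \autoref{lem:OhatO_mfactor}, which supplies an invertible change of basis modulo $\vJ_{m-1}$ between $\{\hat{\vO}_\omega\}_{\omega \in F_m}$ and $\{\vO_\omega\}_{\omega \in F_m}$ on the larger index set $F_m \supset NF_m$. Since $\vJ_{m-1}$ is two-sided, this equivalence is preserved under right multiplication by $\vsigma_t \vec{p}_{\vec{\lambda}^*}$; thus, for each $\omega \in NF_m$, we can write
\begin{align}
\vO_\omega \vsigma_t \vec{p}_{\vec{\lambda}^*} \equiv \sum_{\omega' \in F_m} c_{\omega\omega'}\, \hat{\vO}_{\omega'} \vsigma_t \vec{p}_{\vec{\lambda}^*} \quad(\text{mod }\vJ_{m-1})
\end{align}
and symmetrically in the other direction. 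This is almost the target, except that the right-hand sum ranges over all $F_m$ rather than only the noncrossing elements.

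The last and main bookkeeping step, which I anticipate to be the subtle one, is to convert the $F_m$ sum back into an $NF_m$ sum by absorbing the ``crossing'' degrees of freedom into the Young symmetrizer factor. The paper observes that every $\omega' \in F_m$ admits a decomposition $\omega' = \omega'' \cdot \vsigma$ with $\omega'' \in NF_m$ and $\vsigma \in \vS_m \subset \vP_k(N)$ acting on the leftmost $m$ nodes; at the level of diagrams this gives $\hat{\vO}_{\omega'} = \hat{\vO}_{\omega''} \vO_\vsigma$, since permuting the $m$ propagating lower nodes preserves the ``distinct-indices'' constraint defining $\hat{\vO}$. Therefore
\begin{align}
\hat{\vO}_{\omega'} \vsigma_t \vec{p}_{\vec{\lambda}^*} = \hat{\vO}_{\omega''}\, (\vO_\vsigma \vsigma_t)\, \vec{p}_{\vec{\lambda}^*} = \hat{\vO}_{\omega''} \vsigma'' \vec{p}_{\vec{\lambda}^*}
\end{align}
for $\vsigma'' := \vsigma \cdot \vsigma_t \in \vS_m$. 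Applying \autoref{lem:irrep_Sm}, $\vsigma'' \vec{p}_{\vec{\lambda}^*} \in \BC \vS_m \vec{p}_{\vec{\lambda}^*} = \text{Span}\{\vsigma_{t'} \vec{p}_{\vec{\lambda}^*}\}_{t' \in SYT(\vec{\lambda}^*)}$, producing the required element of $\text{Span}\{\hat{\vO}_{\omega''} \vsigma_{t'} \vec{p}_{\vec{\lambda}^*} \mid \omega'' \in NF_m, t' \in SYT(\vec{\lambda}^*)\}$. This yields one inclusion, and the reverse inclusion follows from the identical argument with the roles of $\vO$ and $\hat{\vO}$ swapped. The main potential obstacle is verifying the diagrammatic compatibility $\hat{\vO}_{\omega' \cdot \vsigma} = \hat{\vO}_{\omega'} \vO_\vsigma$, which requires tracking exactly which upper and lower indices remain distinct after post-composing with a permutation of the propagating legs.
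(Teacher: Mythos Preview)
Your proposal is correct and follows essentially the same approach as the paper: both use \autoref{lem:OhatO_mfactor} on $F_m$, the factorization $F_m \simeq NF_m \times \vS_m$ (with the diagrammatic identity $\hat{\vO}_{\omega''}\vO_\vsigma = \hat{\vO}_{\omega'}$), then \autoref{lem:irrep_Sm} and \autoref{lem:ONbasis_Sm} to reduce the $\vS_m$ part. The only difference is the order of operations---the paper first passes from $F_m$ to $NF_m \times \vS_m$ at the level of spans and then right-multiplies by $\vec{p}_{\vec{\lambda}^*}$, whereas you multiply first and factor afterward---and the paper explicitly confirms the compatibility $\hat{\vO}_{\omega}\vsigma = \hat{\vO}_{\omega'}$ that you flagged as the main thing to check.
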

\begin{proof}[Proof of~\autoref{lem:same_span}]
Starting with~\autoref{lem:OhatO_mfactor}, 
\begin{align}
    \text{Span}\{\hat{\vO}_{\omega}| \omega \in F_m\} &\equiv \quad \text{Span}\{\vO_{\omega}| \omega \in F_m\} \quad (\text{mod }\vJ_{m-1}),
\end{align}
we can pull out permutations to make the partitions noncrossing, giving
\begin{align}
    \text{Span}\{\hat{\vO}_{\omega}\vsigma | \omega\in NF_m, \vsigma \in S_m \} &\equiv \quad \text{Span}\{\vO_{\omega}\vsigma | \omega\in NF_m, \vsigma \in S_m \} \quad (\text{mod }\vJ_{m-1}),
\end{align}
noting that we drop the distinctness constraint on the permutation as the distinctness of $\hat{\vO}_{\omega}$ suffices to ensure $\hat{\vO}_{\omega}\vsigma = \hat{\vO}_{\omega'}$ for some $\omega \in F_m$. Then, we can right multiply the set by
\begin{align}
    \text{Span}\{\vO_{\omega}\vsigma\vec{p}_{\vec{\lambda}^*}| \omega \in NF_m, \vsigma \in S_m \} &\equiv \text{Span}\{\hat{\vO}_{\omega}\vsigma\vec{p}_{\vec{\lambda}^*}| \omega \in NF_m , \vsigma \in S_m \} \quad (\text{mod}\ \vJ_{m-1}),
\end{align}
noting that we relax the quotient $(\text{mod}\ \vJ_{m-1}\vec{p}_{\lambda^*})$ to $(\text{mod}\ \vJ_{m-1})$ by the ideal property $\vJ_{m-1}\vec{p}_{\lambda^*}\subset \vJ_{m-1}$.

By \autoref{lem:irrep_Sm}, however, $\BC\{\vsigma\vec{p}_{\vec{\lambda}^*}\}_{\vsigma\in S_m}=\BC\{\vsigma_t\vec{p}_{\vec{\lambda}^*}\}_{t \in SYT(\vec{\lambda}^*)}$, which allows us to restrict the set of permutations by
\begin{align}
    \text{Span}\{\vO_{\omega}\vsigma_t \vec{p}_{\vec{\lambda}^*}| \omega \in NF_m , t \in SYT(\vec{\lambda}^*)\} &\equiv \text{Span}\{\hat{\vO}_{\omega} \vsigma_t \vec{p}_{\vec{\lambda}^*}| \omega \in NF_m , t \in SYT(\vec{\lambda}^*)\} \quad (\text{mod } \vJ_{m-1}).
\end{align}
Finally, we can use \autoref{lem:ONbasis_Sm} and, in particular, that $\text{Span}\{\vec{u}_t \vec{p}_{\vec{\lambda}^*}\}= \text{Span}\{\vsigma_t \vec{p}_{\vec{\lambda}^*}\}$, to reach
\begin{align}
    \text{Span}\{\vO_{\omega}\vsigma_t \vec{p}_{\vec{\lambda}^*}| \omega \in NF_m , t \in SYT(\vec{\lambda}^*)\} &\equiv \text{Span}\{\hat{\vO}_{\omega} \vec{u}_t \vec{p}_{\vec{\lambda}^*}| \omega \in NF_m , t \in SYT(\vec{\lambda}^*)\} \quad (\text{mod } \vJ_{m-1}),
\end{align}
as advertised.
\end{proof}

\subsubsection{Orthogonality and matrix elements}\label{sec:proving_ortho_basis}

The very nice properties of the basis vectors we chose in \autoref{thm:ourExplicitBasis} are that they are orthogonal and immediately give the matrix elements of the factors of the partition algebra.
\begin{lem}[Orthonormal basis for each irrep of $\vP_{k}(N)$]\label{lem:orthonormal}
    For each $\vec{\lambda}\in \Lambda_{k,N}$, 
    set the normalization such that
    \begin{align}
             n(\omega,t) := \sqrt{c(\omega)} \quad \text{for each}\quad \omega \in NF_m\quad \text{and}\quad t \in SYT(\vec{\lambda}^*).
    \end{align}
    where 
   \begin{align}
    c(\omega) := \frac{N^{k-m}(N-m)!}{(N-m-d(\omega^{\dagger},\omega))!} = N^{k-m}\cdot (N-m) \cdot (N-m-1)\cdots \le N^{k-m+d(\omega^{\dagger},\omega)}.
   \end{align} 
   Then, $\ket{v_j^{\vec{\lambda}}}$ forms an orthonormal basis for the irrep labeled by $\vec{\lambda}\in \Lambda_{k,N}$, and 
    \begin{align}
        \CQ_{\vJ}[\vec{v}_i^{\dagger}\vec{v}_j] \simeq \ket{v^{\lambda}_i}\bra{v^{\lambda}_j}\otimes \vI_{\vS_{\vec{\lambda}}} \quad \text{for each}\quad i,j
    \end{align}
    gives the matrix elements for the irrep labeled by $\vec{\lambda}\in \Lambda_{k,N}$.
\end{lem}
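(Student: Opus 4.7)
The plan is to first establish orthogonality, then verify unit norm with the specified $n(\omega,t)=\sqrt{c(\omega)}$, and finally read off the matrix-element structure. Since \autoref{lem:same_span} already gives that the proposed vectors span $\CV_{\vec{\lambda}}$, orthogonality combined with unit norm immediately yields an orthonormal basis. Throughout, write $m = \labs{\vec{\lambda}^*}$.

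The central technical step I would prove is the diagrammatic identity
\begin{equation}
\hat{\vO}_\omega^\dagger\, \hat{\vO}_{\omega'} \;\equiv\; \delta_{\omega,\omega'}\, c(\omega)\, \vI^{\otimes m}\otimes (\text{isolated on the last } k-m \text{ nodes}) \pmod{\vJ_{m-1}},
\end{equation}
by drawing $\hat{\vO}_\omega^\dagger$ stacked above $\hat{\vO}_{\omega'}$ and carefully propagating the distinctness constraints of \eqref{eq:hatvO} through the middle-row contraction. When $\omega \ne \omega'$, the noncrossing top-block structures of $\omega$ and $\omega'$ cannot both be realized in the product without identifying two propagating legs, collapsing one and dropping the result into $\vJ_{m-1}$. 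When $\omega = \omega'$, the only surviving contraction matches propagating legs identically; the free sums over isolated-node indices contribute $N^{k-m}$, and the falling factorial $(N-m)(N-m-1)\cdots$ with $d(\omega^\dagger,\omega)$ terms arises from the distinct-index constraint on the top nodes not pinned by propagating legs.

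With this identity, the inner product \eqref{eq:innerproduct_lambda} for any $\vec{v}_{(\omega,t)}$ and $\vec{v}_{(\omega',t')}$ reduces to a trace within the subalgebra $\BC\vS_m \subset \vP_k(N)$ multiplied by a constant depending only on $\omega$: cross terms with $\omega\ne\omega'$ vanish by the Kronecker delta, and otherwise the identity-on-first-$m$-nodes factor commutes with $\vec{u}_t,\vec{p}_{\vec{\lambda}^*}\in \BC\vS_m$. By the $\dagger$-isomorphism of \autoref{lem:ONbasis_Sm}, $\tr[\vec{p}_{\vec{\lambda}^*}^\dagger \vec{u}_t^\dagger \vec{u}_{t'} \vec{p}_{\vec{\lambda}^*}]$ factors as $\delta_{t,t'}$ times a standard constant that cancels against $\dim\vS_{\vec{\lambda}}$ in the denominator. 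Tracking the overall scalar yields $\braket{\vec{v}_{(\omega,t)},\vec{v}_{(\omega',t')}} = \delta_{\omega,\omega'}\delta_{t,t'}\, c(\omega)/n(\omega,t)^2$, so choosing $n(\omega,t)^2 = c(\omega)$ delivers the orthonormal basis. The matrix-element statement then follows because $\CQ_{\vJ}$ is a $\dagger$-homomorphism onto the factor block $\vM_{\vec{\lambda}}\otimes\vI_{\vS_{\vec{\lambda}}}$: orthonormal representatives in $\vP_k(N)\vec{e}_{\vec{\lambda}}$ automatically produce matrix units under the appropriate product of $\vec{v}_i$ with $\vec{v}_j^\dagger$.

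The hard part will be the diagrammatic identity above, namely rigorously tracking the distinctness constraints of $\hat{\vO}_\omega$ through diagram multiplication and showing that any mismatch $\omega\ne\omega'$ necessarily collapses a propagating block. The noncrossing condition $\omega\in NF_m$ is essential here, since it pins down how propagating legs attach to top blocks and lets the middle-row matching be verified block-by-block. Once this identity is established, every remaining step is routine algebra.
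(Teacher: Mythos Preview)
Your proposal is correct and takes essentially the same approach as the paper: the diagrammatic identity you call the ``central technical step'' is precisely what the paper isolates as \autoref{lem:multiplicative_Oomega}, and from there both arguments reduce to \autoref{lem:ONbasis_Sm} together with the fact that $\CQ_{\vJ}$ is a $\dagger$-homomorphism. The only cosmetic difference is that the paper verifies the matrix-unit relations $\vE_{ij}\vE_{k\ell}=\delta_{jk}\vE_{i\ell}$ directly (which sidesteps the trace normalization you leave implicit), whereas you compute the inner product first and infer matrix units afterward.
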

The proof of orthogonality relies on the following multiplicative properties of $\hat{\vO}_{\omega}$.  This is precisely why we define our basis using $\hat{\vO}_{\omega}$ instead of $\vO_{\omega}$; despite having a simpler definition, the latter has a more involved algebraic structure.
\begin{lem}[Multiplicative properties of $\hat{\vO}_{\omega}$]For all $\omega, \omega' \in NF_m$, \label{lem:multiplicative_Oomega}
\begin{align}
    \hat{\vO}_{\omega'}^{\dagger}\hat{\vO}_{\omega}\equiv c(\omega) \delta_{\omega'\omega}\cdot \vec{e}_m
     \quad\text{(mod $\vJ_{m-1}$)}\label{eq:delta_omegaomega'}.
\end{align}    
\end{lem}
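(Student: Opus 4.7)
The plan is to analyze the diagram product $\hat{\vO}_{\omega'}^\dagger \hat{\vO}_\omega$ by expanding each factor as a sum of elementary tensors
\begin{align}
\hat{\vO}_\omega = \sum_{\vec{i} \in \hat{M}_\omega} \vO_{\vec{i}}, \qquad \hat{\vO}_{\omega'}^\dagger = \sum_{\vec{j} \in \hat{M}_{\omega'}} \vO_{\vec{j}}^\dagger,
\end{align}
where $\hat{M}_\omega$ is the set of index assignments in which distinct blocks of $\omega$ (restricted to the upper $k$ nodes together with the lower-left $m$ nodes) receive distinct index values, while the lower-right $k-m$ isolated nodes are summed freely. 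Multiplication glues the middle row and imposes that the upper indices of $\omega$ from $\hat{\vO}_\omega$ agree with the upper indices of $\omega'$ from $\hat{\vO}_{\omega'}^\dagger$. The central combinatorial task is then to understand which pairs $(\vec{j},\vec{i})$ survive this matching and what the resulting diagram looks like.

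For the off-diagonal case $\omega \ne \omega'$, I would argue $\hat{\vO}_{\omega'}^\dagger \hat{\vO}_\omega \in \vJ_{m-1}$ through two complementary mechanisms, according to how the two diagrams differ. First, if $\omega$ places two upper nodes in the same block while $\omega'$ separates them, then $\hat{\vO}_\omega$ demands equal indices at those positions while $\hat{\vO}_{\omega'}$ demands distinct ones; the only surviving configurations identify two supposedly-distinct propagating-block labels, collapsing a propagating block and placing the result in $\vJ_{m-1}$. Second, if the top partitions agree but the choice of which top blocks are propagating differs between $\omega$ and $\omega'$, then some propagating block of $\omega$ is matched through the middle row with a non-propagating (top-only in $\omega'$) block of $\omega'^\dagger$, so its propagation does not continue up to the product's top, again reducing the propagating count. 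Under the noncrossing rule the top partition together with the choice of propagating blocks fully determines a noncrossing $m$-factor, so these two cases exhaust every $\omega \ne \omega'$.

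For the diagonal case $\omega = \omega'$, the matching is automatic and the surviving diagram connects bottom-left node $a$ to top-left node $a$ through the common block index, producing an identity on the leftmost $m$ legs while the rightmost $k-m$ legs on both sides are summed freely, exactly realizing the isolated-nodes structure of $\vec{e}_m$. The coefficient $c(\omega)$ then assembles from three pieces: (i) the free sum over the $k-m$ isolated bottom indices gives $N^{k-m}$; (ii) the sum over the $m$ distinct propagating-block labels on the middle row together with the distinct labels required for each closed loop formed by a non-propagating top block of $\omega$, of which there are $d(\omega^\dagger,\omega)$ many, gives the descending product $(N-m)(N-m-1)\cdots(N-m-d(\omega^\dagger,\omega)+1)$; (iii) the $N^{-(k-m)}$ normalization of $\vec{e}_m$ together with dropping all ``collision'' configurations, which again land in $\vJ_{m-1}$, matches what remains to $c(\omega)\,\vec{e}_m$.

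The main obstacle is the bookkeeping in the diagonal case: cleanly separating the genuine $m$-propagating contributions from those that accidentally lose a propagating block due to index coincidences, and tracking how each non-propagating top block of $\omega$ contributes a factor $N-m-(\text{prior loops})$ rather than the naive $N$ that would appear in ordinary $\vO_\omega$ multiplication. This is exactly where the distinctness convention built into $\hat{\vO}_\omega$ pays off, and producing the factorial form of $c(\omega)$ precisely, up to terms in $\vJ_{m-1}$, is the delicate step of the proof.
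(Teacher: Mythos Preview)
Your approach mirrors the paper's proof closely: both arguments (i) force the top-row block partitions of $\omega$ and $\omega'$ to agree via the distinctness constraint in $\hat{\vO}$, (ii) force the choice of propagating blocks to agree via the quotient by $\vJ_{m-1}$, and (iii) compute the diagonal coefficient by summing the closed-loop indices of the non-propagating top blocks subject to distinctness. So the overall strategy is correct and essentially identical to the paper's three-step argument.

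One point needs sharpening. In your first off-diagonal mechanism you write that ``the only surviving configurations identify two supposedly-distinct propagating-block labels, collapsing a propagating block and placing the result in $\vJ_{m-1}$.'' This is not right: if the top-row block partitions of $\omega$ and $\omega'$ differ, the distinctness built into $\hat{\vO}$ makes the middle-row contraction vanish \emph{identically}, not merely modulo $\vJ_{m-1}$. Concretely, if nodes $a,b$ lie in the same top block of $\omega$ but different top blocks of $\omega'$, then $\hat{\vO}_\omega$ forces $i_a=i_b$ while $\hat{\vO}_{\omega'}^\dagger$ forces $j_a\ne j_b$; since gluing sets $i_a=j_a$ and $i_b=j_b$, no term survives. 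The paper's Step~1 states exactly this: mismatched upper partitions give zero. Your second mechanism (mismatched propagating-block choices) is then the only place the quotient is genuinely needed in the off-diagonal case, and there your reasoning is sound.

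Your diagonal bookkeeping is also slightly tangled. The factor $N^{k-m}$ in $c(\omega)$ arises purely from undoing the normalization $\vec{e}_m = N^{-(k-m)}\vI^{\otimes m}\otimes(\text{isolated nodes})$, not from an independent free sum as your item (i) suggests; the isolated bottom-right nodes of $\omega$ and $\omega'$ become the outer bra and ket isolated legs of the product and match the isolated part of $\vec{e}_m$ directly, with no additional multiplicative factor. The falling factorial then comes exactly as you say, from summing the $d(\omega^\dagger,\omega)$ loop indices distinct from the $m$ propagating labels and from each other.
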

The above will be very helpful in the calculation as it imposes $\omega = \omega'$.

\begin{figure}[t]
    \centering
    \includegraphics[width=0.4\textwidth]{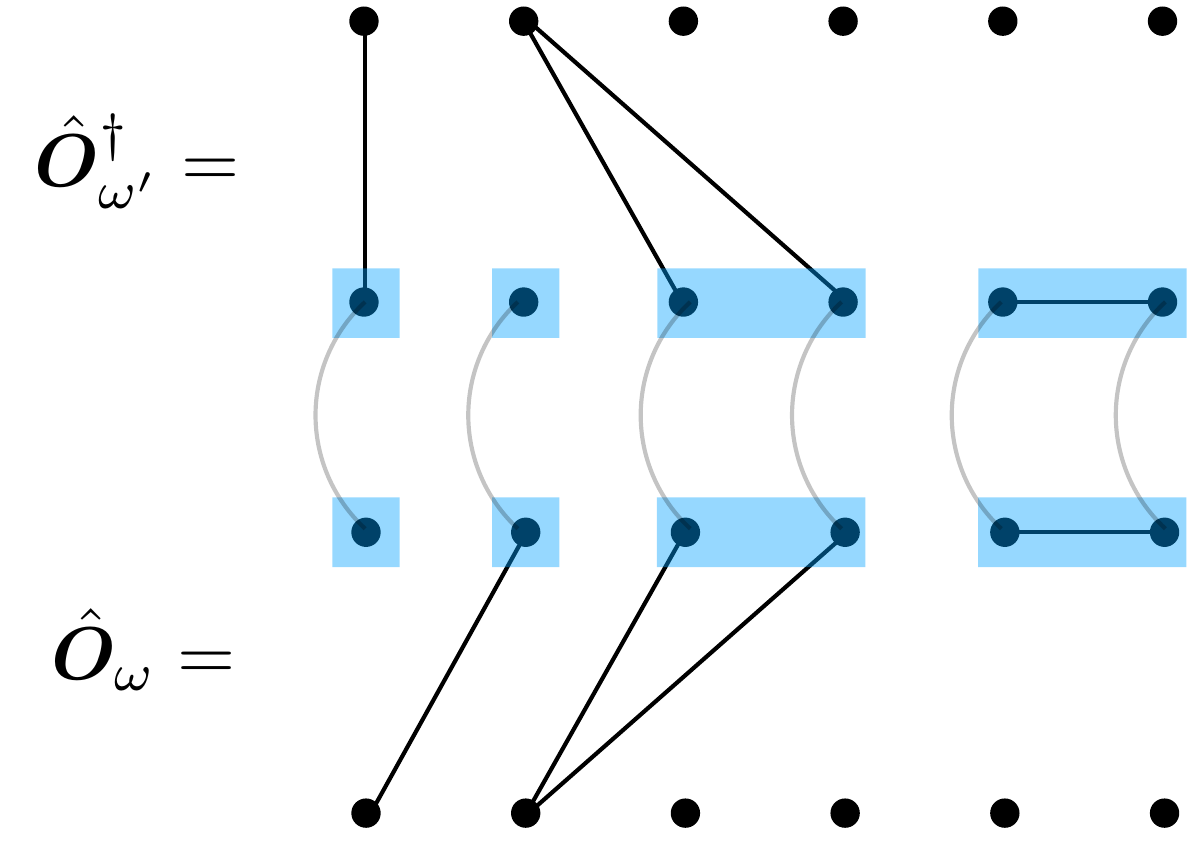}
    \caption{Multiplying two noncrossing $m$-factors $\hat{\vO}_{\omega'}^{\dagger}\cdot\hat{\vO}_{\omega}$. Due to the distinctness properties~\eqref{eq:hatvO}, the blocks in blue must match exactly, which holds in the above. However, since one propagating block of $\hat{\vO}_{\omega}$ is contracted with a nonpropagating block of $\hat{\vO}_{\omega'}$, the number of propagating blocks decreases. Thus, the above is an element of $\vJ_{1}$, and vanishes due to taking quotient.}
    \label{fig:OwOw}
\end{figure}

\begin{proof}[Proof of~\autoref{lem:multiplicative_Oomega}] 

\textbf{Step 1: Distinctness implies the set of upper blocks matches.} Recall that the $\hat{O}_\omega$ were constructed specifically so that the outgoing wires of $\hat{\vO}_\omega$ were distinct. As illustrated in \autoref{fig:OwOw}, this implies that the upper block of $\omega'$ must be exactly the same partition as that of $\omega$. Otherwise, two different blocks will be contracted to give zero. 

\textbf{Step 2: Fixing the propagating blocks.} To establish that the product is actually proportional to $\delta_{\omega'\omega}$, we need to use the properties of $\vJ_{m-1}$ and the fact that $\omega\in NF_m$ is noncrossing. Since the quotient $\vJ_{m-1}$ kills any diagram with fewer than $m$ propagating blocks, the product $\hat{\vO}_{\omega'}^{\dagger}\hat{\vO}_{\omega}$ vanishes unless the number of blocks remains $m$, which requires that the set of upper propagating blocks of $\omega'$ exactly matches the set of upper propagating blocks of $\omega$. Furthermore, since $\omega'$ is a noncrossing $m$-factor, this also uniquely fixes how the propagating blocks of $\omega'$ are wired to the lower leftmost  $m$ nodes (according to the rightmost element of each block). 

\textbf{Step 3: Calculating the constant.} Having established orthogonality, we can obtain the final form
\begin{align}
\hat{\vO}_{\omega}^{\dagger}\hat{\vO}_{\omega} &\propto \sum^N_{\text{distinct } i_1, \cdots, i_m =1 }\ket{i_1}\bra{i_1}\otimes \ket{i_2}\bra{i_2}\cdots \ket{i_m}\bra{i_m} \otimes \text{($2k-2m$ isolated nodes)} \\
    &\propto \vI_m \otimes \text{($2k-2m$ isolated nodes)} \quad \text{(mod $\vJ_{m-1}$)}\\
    &\propto \vec{e}_m \quad \text{(mod $\vJ_{m-1}$)}
\end{align}
The second line ``completes'' the identity by adding suitable elements in $\vJ_{m-1}$. Lastly, the constant is determined by contracting the upper \textit{nonpropagating} blocks, which gives the falling factorial in $c(\omega)$, and the normalization factor ($N^{k-m}$) of $\vec{e}_m$. 
\end{proof}

\begin{proof}[Proof of~\autoref{lem:orthonormal}]
We begin by verifying the algebra relation for matrix elements:
\begin{align}
    \vE_{ij}\vE_{k\ell} = \delta_{jk}\vE_{i\ell},\quad \vE^{\dagger}_{ij} = \vE_{ji}.
\end{align}
Indeed, we have that
\begin{align}
    \CQ_{\vJ}[\vec{v}_i\vec{v}^{\dagger}_j] \CQ_{\vJ}[\vec{v}_k\vec{v}^{\dagger}_{\ell}] &= \CQ_{\vJ}[\vec{v}_i\vec{v}^{\dagger}_j \vec{v}_k\vec{v}^{\dagger}_{\ell}]\tag*{(Since $\CQ_{\vJ}$ is a $\dagger$-homomorphism)}\\
    & = \frac{1}{n(\omega_1,t_1)n(\omega_2,t_2)n(\omega_3,t_3)n(\omega_4,t_4)}\CQ_{\vJ}[\hat{\vO}_{\omega_1}\vu_{t_1}\vec{p}_{\vec{\lambda^*}}\vec{p}_{\vec{\lambda^*}}^{\dagger}\vu^{\dagger}_{t_2}\hat{\vO}_{\omega_2}^{\dagger}\hat{\vO}_{\omega_3}\vu_{t_3}\vec{p}_{\vec{\lambda^*}}\vec{p}_{\vec{\lambda^*}}^{\dagger}\vu^{\dagger}_{t_4}\hat{\vO}_{\omega_4}^{\dagger}]\tag*{($i,j,k,\ell\rightarrow \omega_1,t_1,\ldots,\omega_4, t_4$)}\\
    & = \frac{c(\omega_2)}{n(\omega_1,t_1)n(\omega_2,t_2)n(\omega_3,t_3)n(\omega_4,t_4)} \CQ_{\vJ}[\hat{\vO}_{\omega_1}\vec{e}_m\vu_{t_1}\vec{p}_{\vec{\lambda^*}}\vec{p}_{\vec{\lambda^*}}^{\dagger}\vu^{\dagger}_{t_2}\vu_{t_3}\vec{p}_{\vec{\lambda^*}}\vec{p}_{\vec{\lambda^*}}^{\dagger}\vu^{\dagger}_{t_4}\hat{\vO}_{\omega_4}^{\dagger}]\tag*{(By~\autoref{lem:multiplicative_Oomega})}\\
    & = \frac{c(\omega_2)}{n(\omega_1,t_1)n(\omega_2,t_2)n(\omega_3,t_3)n(\omega_4,t_4)} \CQ_{\vJ}[\hat{\vO}_{\omega_1}\vec{e}_m\vu_{t_1}\vec{p}_{\vec{\lambda^*}}\vec{p}_{\vec{\lambda^*}}^{\dagger}\vu^{\dagger}_{t_4}\hat{\vO}_{\omega_4}^{\dagger}]\tag*{(By~\autoref{lem:ONbasis_Sm})}\\
    & = \frac{c(\omega_2)}{n(\omega_1,t_1)n(\omega_2,t_2)n(\omega_3,t_3)n(\omega_4,t_4)} \CQ_{\vJ}[\hat{\vO}_{\omega_1}\vu_{t_1}\vec{p}_{\vec{\lambda^*}}\vec{p}_{\vec{\lambda^*}}^{\dagger}\vu^{\dagger}_{t_4}\hat{\vO}_{\omega_4}^{\dagger}]\tag*{(By definition of $\vec{e}_m$)}\\
    & = \delta_{jk} \CQ_{\vJ}[\vec{v}_i\vec{v}^{\dagger}_{\ell}], 
\end{align}
and that
\begin{align}
     \L( \CQ_{\vJ}[\vec{v}_i\vec{v}^{\dagger}_j] \R)^{\dagger} = \CQ_{\vJ}[\vec{v}_j\vec{v}^{\dagger}_i] \tag*{(Since $\CQ_{\vJ}$ is an $\dagger$-homomorphism)},
\end{align}
as required.

\end{proof}

\subsubsection{Projector onto irreducible representations as a sum over diagrams}  
We have constructed an explicit orthonormal basis but the abstract quotient projector $\CQ_{\vJ_m}$ that is required to ensure orthogonality remains somewhat mysterious.
\begin{align}
    \CQ_{\vJ_{m-1}}[\cdot] = \sum_{\labs{\vec{\lambda^*}}\ge m } \vec{\Pi}_{\vec{\lambda}}\cdot\vec{\Pi}_{\vec{\lambda}}\label{eq:Q_jm_writing}.
\end{align}  
This section further gives a \textit{low-degree} expansion of $\vec{\Pi}_{\vec{\lambda}}$ in terms of the diagram operator $\vO_{\Pi}$.

\begin{lem}[A recursive formula for the projector onto irreducible representations]\label{lem:degree_projector}
For each $\vec{\lambda}\in \Lambda_{k,N}$, 
\begin{align}
    \vec{\Pi}_{\vec{\lambda}} = \sum_{i} \L( \vec{v^{\lambda}}_i\vec{v}^{\vec{\lambda}\dagger}_i - \sum_{\labs{\vec{\mu}^*}\le \labs{\vec{\lambda}^*}-1 } \vec{\Pi}_{\vec{\mu}}\vec{v^{\lambda}}_i\vec{v}^{\vec{\lambda}\dagger}_i \R).
\end{align}
Thus, solving the recursion yields
\begin{align}
    \vec{\Pi}_{\vec{\lambda}} = \sum_{\Pi} c(\Pi,\vec{\lambda}) \vO_{\Pi}
\end{align}
where $c(\Pi,\vec{\lambda})$ is a rational function of $N$ with poles at integers $0,\ldots, k$, each with multiplicity at most $k$.
\end{lem}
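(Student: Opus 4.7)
The plan is to prove the recursive identity by direct algebra and then unfold it inductively while tracking the pole structure. For the recursion, set $m := \labs{\vec{\lambda}^*}$ and define $A_{\vec{\lambda}} := \sum_i \vec{v}^{\vec{\lambda}}_i \vec{v}^{\vec{\lambda}\dagger}_i \in \vP_k(N)$. By the matrix-element identity of \autoref{lem:orthonormal}, applying $\CQ_{\vJ_{m-1}}$ to $A_{\vec{\lambda}}$ produces exactly $\vec{\Pi}_{\vec{\lambda}}$, so $A_{\vec{\lambda}} - \vec{\Pi}_{\vec{\lambda}} \in \vJ_{m-1}$. By \autoref{thm:structure_finite_dim}, $\vJ_{m-1}$ decomposes as the sum of the factors labeled by $\vec{\mu}$ with $\labs{\vec{\mu}^*}\le m-1$. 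Since the $\vec{\Pi}_{\vec{\mu}}$ are mutually orthogonal central idempotents with $\vec{\Pi}_{\vec{\mu}}\vec{\Pi}_{\vec{\lambda}}=0$ for $\vec{\mu}\ne\vec{\lambda}$, applying $\vI - \sum_{\labs{\vec{\mu}^*}\le m-1}\vec{\Pi}_{\vec{\mu}}$ to $A_{\vec{\lambda}}$ kills the lower-factor components and leaves $\vec{\Pi}_{\vec{\lambda}}$, which is precisely the asserted recursion.

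Next I would expand $A_{\vec{\lambda}}$ in the diagram basis. Plugging in $\vec{v}^{\vec{\lambda}}_i = c(\omega)^{-1/2}\hat{\vO}_\omega \vec{u}_t \vec{p}_{\vec{\lambda}^*}$ from \autoref{thm:ourExplicitBasis}, each summand becomes $c(\omega)^{-1}\hat{\vO}_\omega \vec{u}_t \vec{p}_{\vec{\lambda}^*}\vec{p}_{\vec{\lambda}^*}^\dagger\vec{u}_t^\dagger \hat{\vO}_\omega^\dagger$. The numerator is a product of diagrams $\vO_\Pi$ and elements of $\BC\vS_m \subset \vP_k(N)$, so by the multiplication rule \eqref{eq:O2O1=O21} it expands in the $\{\vO_\Pi\}$ basis with coefficients that are \emph{polynomials} in $N$ (the $N$-dependence enters only through the $N^{d(\Pi_2,\Pi_1)}$ factors; the rest is $N$-independent combinatorial data). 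The only denominator is $c(\omega) = N^{k-m}(N-m)(N-m-1)\cdots(N-m-d(\omega^\dagger,\omega)+1)$, a polynomial with a zero of order $k-m$ at $N=0$ and simple zeros at integers in $\{m,\ldots,m+d-1\}\subseteq\{0,\ldots,k-1\}$, contributing total pole multiplicity at most $k$ at integers in $\{0,\ldots,k\}$.

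Finally I would close the induction on $m$. Assuming each $\vec{\Pi}_{\vec{\mu}} = \sum_\Pi c(\Pi,\vec{\mu})\vO_\Pi$ with $\labs{\vec{\mu}^*}<m$ has the advertised pole structure, the product $\vec{\Pi}_{\vec{\mu}} A_{\vec{\lambda}}$ expanded via \eqref{eq:O2O1=O21} introduces only polynomial $N^{d}$ factors, so its pole set stays inside $\{0,\ldots,k\}$. Combining with the recursion yields the claim for $\vec{\Pi}_{\vec{\lambda}}$. The main obstacle is the multiplicity bound of $k$ itself: naively unfolding the recursion stacks denominators from each level and could push multiplicity beyond $k$. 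The resolution is to exploit the intrinsic characterization of $\vec{\Pi}_{\vec{\lambda}}$ as the central idempotent of its block in $\vP_k(N)$: via the hook formula \autoref{lem:dimension_S_lambda}, the irrep dimensions $\dim\vS_{\vec{\mu}}(N)$ are polynomials in $N$ of degree at most $k$ with only simple integer zeros, so any apparent higher-order pole in the unfolded recursion must cancel identically, capping the total multiplicity at $k$.
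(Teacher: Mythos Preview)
Your derivation of the recursive identity is correct and essentially identical to the paper's: both observe that $\CQ_{\vJ_{m-1}}\big[\sum_i \vec{v}^{\vec{\lambda}}_i\vec{v}^{\vec{\lambda}\dagger}_i\big]=\vec{\Pi}_{\vec{\lambda}}$ and then rewrite $\CQ_{\vJ_{m-1}}$ as $\vI-\sum_{|\vec{\mu}^*|\le m-1}\vec{\Pi}_{\vec{\mu}}$ (dropping one of the two flanking projectors because $\vec{\Pi}_{\vec{\mu}}$ is central). Your diagram expansion of $A_{\vec{\lambda}}=\sum_i \vec{v}^{\vec{\lambda}}_i\vec{v}^{\vec{\lambda}\dagger}_i$ also matches the paper: polynomial numerators from \eqref{eq:O2O1=O21} over the single denominator $c(\omega)$ from \autoref{lem:orthonormal}.

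The gap is in your final multiplicity argument. Your proposed resolution---that the hook formula forces cancellation of higher-order poles in the unfolded recursion---does not work: \autoref{lem:dimension_S_lambda} controls the polynomial $\dim\vS_{\vec{\lambda}}(N)$, not the rational coefficients of $\vec{\Pi}_{\vec{\lambda}}$ in the $\vO_\Pi$ basis, and there is no mechanism by which simple zeros of the former cap pole orders of the latter. The paper instead bounds the multiplicity directly by tracking the \emph{depth} of the recursion: unfolding expresses $\vec{\Pi}_{\vec{\lambda}}$ as a noncommutative polynomial of degree $|\vec{\lambda}^*|\le k$ in the family $\{A_{\vec{\mu}}\}_{\vec{\mu}}$. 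Since each $A_{\vec{\mu}}$ contributes only the single denominator $c(\omega)$ (poles at integers in $\{0,\ldots,k\}$, asserted to be simple), and diagram multiplication contributes only polynomial $N^d$ factors, a product of at most $k$ such building blocks accumulates pole multiplicity at most $k$. You correctly sensed the danger of stacking denominators but missed that the recursion depth itself is bounded by $|\vec{\lambda}^*|$, which is the actual mechanism limiting the multiplicity---no cancellation argument is needed.
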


\begin{proof}[Proof of~\autoref{lem:degree_projector}]
\begin{align}
    \vec{\Pi}_{\vec{\lambda}}
    & = \sum_{i} \CQ_{\vJ_{m-1}}[ \vec{v^{\lambda}}_i\vec{v}^{\vec{\lambda}\dagger}_i ]\tag*{(Resolution of identity)}\\
    & = \sum_{i} \L( \vec{v^{\lambda}}_i\vec{v}^{\vec{\lambda}\dagger}_i - \sum_{\labs{\vec{\mu}^*}\le \labs{\vec{\lambda}^*}-1 }   \vec{\Pi}_{\vec{\mu}}\vec{v^{\lambda}}_i\vec{v}^{\vec{\lambda}\dagger}_i\vec{\Pi}_{\vec{\mu}} \R)\tag*{(Rewriting~\eqref{eq:Q_jm_writing})}\\
    & = \sum_{i} \L( \vec{v^{\lambda}}_i\vec{v}^{\vec{\lambda}\dagger}_i - \sum_{\labs{\vec{\mu}^*}\le \labs{\vec{\lambda}^*}-1 }   \vec{\Pi}_{\vec{\mu}}\vec{v^{\lambda}}_i\vec{v}^{\vec{\lambda}\dagger}_i \R)\tag*{(Since $\vec{v^{\lambda}}_i\vec{v}^{\vec{\lambda}\dagger}_i\in\vP_k(N)$)}.
\end{align} 
The third line drops one of the projectors since $\vec{\Pi}_{\vec{\mu}}$ are projectors for a factor. Solving the recursion implies the projector can be expressed as a formal noncommutative polynomial of the basis representatives
\begin{align}
    \vec{\Pi}_{\vec{\lambda}} = \poly\L(\{\sum_i \vec{v}^{\vec{\lambda}}_i\vec{v}^{\vec{\lambda}\dagger}_i\}_{\vec{\lambda}\in \Lambda_{k,N}}\R) \quad \text{of degree}\quad \labs{\vec{\lambda}}.
\end{align}
Further rewriting the basis representatives 
\begin{align}
    \vec{v^{\lambda}}_i\vec{v}^{\vec{\lambda}\dagger}_i &= \frac{1}{n(\omega)^2} \hat{\vO}_{\omega}\vu_{t}\vec{p}_{\vec{\lambda^*}}\vec{p}_{\vec{\lambda^*}}^{\dagger}\vu^{\dagger}_{t}\hat{\vO}_{\omega}^{\dagger}\\
    &= \sum_{\Pi} c(\Pi,\vec{\lambda}) \vO_{\Pi}
\end{align}
where each $c(\Pi,\vec{\lambda})$ is a rational function of $N$ with pole at integers $0,\ldots, k$ with multiplicity at most 1 by \autoref{lem:orthonormal}. Thus,
\begin{align}
     \vec{\Pi}_{\vec{\lambda}} = \sum_{\Pi} c(\Pi,\vec{\lambda}) \vO_{\Pi}
\end{align}
with poles at integers $0,\cdots, k$ each with multiplicity at most $k$. We will not ever need to worry about the numerator degree, as it will already be controlled by the denominator degree for the ultimate quantity of interest (bounded by $\CO( N^0) $). 
\end{proof}

\subsection{Extending test operators}
\label{sec:extending_test_ops}
However, we are still faced with the challenge that the distinguishing probability also depends on the quantum algorithm,  which amounts to all possible input states and observables in the parallel query model. To make the distinguishing probability a polynomial, we must specify how we \textit{extend} the quantum algorithm to other dimensions $N'$. In most circumstances, that would be very unnatural but here, the fact that the partition algebra is \textit{independent} of $N$ for $N \ge 2k$ allows us to define a family of ``equivalent'' quantum algorithms across larger and smaller dimensions $N' \ne N$. Doing so amounts to choosing the appropriate input state and observable. 

First, we need to reduce the size of the ancilla to control the diamond distances. 
\begin{lem}[Variational expression for diamond distance for (left and right) permutation invariant channels]\label{lem:small_ancilla}
    For any two channels $\CN_1, \CN_2$, 
    \begin{align}
        \lnormp{\CN_{2k,\vS}\circ\bigg( \CN_{1} -\CN_{2} \bigg) \circ \CN_{2k,\vS}}{\diamond}  = \sup_{\vrho,\vO} \tr\L[\vO (\CN_1-\CN_2 )\L[\vrho\R]\R]
    \end{align}
    where the normalized state $\vrho$ and operator $\vO$ are supported on bipartite Hilbert space with dimension $N^k\times \max_{\vec{\lambda}\in \Lambda_{k,N}} d_{\lambda}$, where $d_{\lambda}$ are the dimensions of the factors in the partition algebra.
\end{lem}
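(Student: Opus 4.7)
The plan is to use the standard dual formulation of the diamond norm and exploit the Schur--Weyl structure of $\vP_k(N)$ to collapse the ancilla dimension block by block.

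First, I would invoke the dual form $\normp{\CM}{\diamond} = \sup_{\vrho,\vO} \tr[\vO\,(\CM\otimes \CI)[\vrho]]$, where $\vrho$ is a normalized state on $A\otimes A'$ with $\dim A' = \dim A = N^k$ and $\|\vO\|_\infty \le 1$. Since $\CM = \CN_{2k,\vS}\circ(\CN_1-\CN_2)\circ\CN_{2k,\vS}$ begins and ends with the group-average $\CN_{2k,\vS}$, which is self-adjoint and idempotent, I would replace $\vrho \to (\CN_{2k,\vS}\otimes\CI)[\vrho]$ and $\vO \to (\CN_{2k,\vS}\otimes\CI)[\vO]$ without loss of generality. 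Both are then $\vS(N)$-invariant on $A$, so they live in $\vP_k(N)\otimes\vB(A')$.

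Next, I would apply the Schur--Weyl decomposition $\vP_k(N) = \bigoplus_{\vec{\lambda}\in\Lambda_{k,N}}\vB(\CH_{\vP_{\vec{\lambda}}})\otimes\vI_{\vS_{\vec{\lambda}}}$, which induces $(\BC^N)^{\otimes k} = \bigoplus_{\vec{\lambda}} \CH_{\vP_{\vec{\lambda}}}\otimes\CH_{\vS_{\vec{\lambda}}}$. Any $\vS(N)$-invariant operator on $A\otimes A'$ takes the block form
\begin{align}
\vrho = \bigoplus_{\vec{\lambda}} \vrho_{\vec{\lambda}} \otimes \vI_{\vS_{\vec{\lambda}}},\qquad \vO = \bigoplus_{\vec{\lambda}} \vO_{\vec{\lambda}} \otimes \vI_{\vS_{\vec{\lambda}}},
\end{align}
with $\vrho_{\vec{\lambda}},\vO_{\vec{\lambda}}\in\vB(\CH_{\vP_{\vec{\lambda}}})\otimes\vB(A')$. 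Because $\CN_1-\CN_2$ is sandwiched between the two symmetrizations, the induced action decomposes blockwise $\CM = \bigoplus_{\vec{\lambda}}\CM_{\vec{\lambda}}$, with each $\CM_{\vec{\lambda}}$ acting on $\vB(\CH_{\vP_{\vec{\lambda}}})$. The trace pairing, trace norm, and operator norm all decompose: $\tr[\vO\CM[\vrho]] = \sum_{\vec{\lambda}} \dim(\vS_{\vec{\lambda}})\,\tr[\vO_{\vec{\lambda}}\CM_{\vec{\lambda}}[\vrho_{\vec{\lambda}}]]$, $\normp{\vrho}{1} = \sum_{\vec{\lambda}}\dim(\vS_{\vec{\lambda}})\normp{\vrho_{\vec{\lambda}}}{1}$, and $\normp{\vO}{\infty} = \max_{\vec{\lambda}}\normp{\vO_{\vec{\lambda}}}{\infty}$. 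The usual concavity argument then shows the supremum is attained by placing all weight in a single block $\vec{\lambda}^\star$.

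Finally, having restricted to a single $\vec{\lambda}^\star$, the optimization becomes the induced $1$-$1$ norm of $\CM_{\vec{\lambda}^\star}$ acting on $\vB(\CH_{\vP_{\vec{\lambda}^\star}})$, tensored with an identity on $A'$. By the standard fact that the diamond norm of a channel on a $d$-dimensional space requires only a $d$-dimensional ancilla to witness (Schmidt decomposition of a pure optimizer), it suffices to take $\dim A' = d_{\vec{\lambda}^\star} \le \max_{\vec{\lambda}}d_{\vec{\lambda}}$. One can then embed this $d_{\vec{\lambda}^\star}$-dimensional ancilla into a fixed space of dimension $\max_{\vec{\lambda}}d_{\vec{\lambda}}$ uniformly across $\vec{\lambda}^\star$, and repackage the resulting $\vrho,\vO$ as operators on $N^k\times\max_{\vec{\lambda}}d_{\vec{\lambda}}$ whose symmetrization reproduces the diamond norm. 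The main obstacle is the bookkeeping in the third step: keeping track of the $\vI_{\vS_{\vec{\lambda}}}$ multiplicities and verifying that reducing the ancilla for a single block does not decrease the supremum over $\vrho,\vO$ in the original expression; this is handled by pulling back the optimal block input through the Schur--Weyl isometry.
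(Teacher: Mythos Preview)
Your proposal is essentially correct and follows the same approach as the paper: symmetrize the input and observable into the partition algebra, use Schur--Weyl to get a block structure indexed by $\vec{\lambda}$, reduce to a single block by convexity, and then compress the ancilla via Schmidt decomposition to dimension $d_{\vec{\lambda}}$.

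One minor caveat: your intermediate claim that $\CM = \bigoplus_{\vec{\lambda}}\CM_{\vec{\lambda}}$ decomposes blockwise is stronger than needed and, for \emph{general} $\CN_1,\CN_2$ as in the lemma statement, need not hold (the middle map could mix $\vec{\lambda}$-blocks, with only the outer symmetrizations projecting back). The paper sidesteps this entirely: it observes that the symmetrized \emph{input} alone has the form $\bigoplus_{\vec{\lambda}} \vrho'_{\vec{\lambda}}\otimes\vec{\tau}_{\vec{\lambda}}$, and then applies convexity of the trace norm directly to this convex combination to reduce to a single pure $\vrho_{\vec{\lambda}}$ without ever needing $\CM$ itself to be block-diagonal. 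Your ``usual concavity argument'' is exactly this step, so the fix is simply to drop the block-diagonality claim for $\CM$ and apply convexity to the input decomposition instead.
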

In particular, assuming the local dimension is large, specifically $2k \le N$, the irrep dimension $d_{\vec{\lambda}}$ will only depend on $k$ and is independent of $N$. 
\begin{proof}
For any input $\vrho$ that may be entangled with an $N^k$-dimensional ancilla, the permutation average enforces the structure
\begin{align}
    \CN_{2k,\vS}[\vrho] &= \bigoplus_{\vec{\lambda}} \vrho'_{\vec{\lambda}}\otimes \vec{\tau}_{\vec{\lambda}}.
\end{align}
where $\vrho'_{\vec{\lambda}}$ acts on the factors of the partition algebra and the ancilla. Further, since the trace distance is convex, the maximum will be attained at a pure input on one of the factors $\vrho_{\vec{\lambda}}$. For such a state, there is a further unitary $\vU_{anc}$ acting on the ancilla that ``compresses'' the entanglement (i.e., by Schur decomposition)
\begin{align}
\vrho'_{\vec{\lambda}} = \vU_{anc}\L(\ket{0}\bra{0} \otimes \vrho_{\vec{\lambda}}\R)\vU_{anc}^{\dagger}
\end{align}
where $\vrho_{\vec{\lambda}}$ is a pure state entangled between 
a $d_{\vec{\lambda}}$-dimensional subspace $\CH_{anc}$ and the $\vec{\lambda}$ factor in the partition algebra. Further, the distinguishing operator $\vO$ achieving maximal distinguishing probability will also be supported on the subspace $\CH_{anc} \otimes (\BC^{N})^{\otimes k}.$ Thus, the maximal distinguishing probability can be attained using an ancilla dimension as little as the largest dimension of the factors of the partition algebra $\max_{\vec{\lambda}\in \Lambda_{k,N}} d_{\lambda}$. 
\end{proof}
Second, since the input is ``fixed,'' and is independent of the dimension, we can define a family of ``equivalent'' states across dimensions.

\begin{lem}[Extending given inputs and observable to other dimensions]\label{lem:extensions}
Consider $\CN_{ \vec{j}, \{\vW_i\}}$ as in~\autoref{lem:freeness_random_P}. Then, assuming $N \ge 2k$, for any observable and state $\vA_{N},\vrho_{N} \in \vB((\mathbb{C}^N)^{\otimes k})$, there exist matrices $\vA_{N'},\vrho_{N'} \in \vB((\mathbb{C}^{N'})^{\otimes k})$ such that function 
\begin{align}
   f \L(\frac{1}{N'}\R) := \tr\L[ \vA_{N'}\CN_{ \vec{j}, \{\vW_i\}}[\vrho_{N'}]\R] \quad \text{for each integer} \quad N' \ge 2k
\end{align}
is rational polynomial of $N$ with poles at integers $0,\ldots, 2k\ell-1$ each with multiplicity at most $2k\ell+3k+2$.
\end{lem}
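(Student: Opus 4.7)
The plan is to leverage the stability of the partition algebra $\vP_k(N)$ for $N \geq 2k$ (\autoref{thm:stablize}) together with the explicit orthonormal basis from \autoref{thm:ourExplicitBasis}. As a preliminary reduction, I would note that in the intended application, this lemma is used after the sandwiching $\CN_{2k,\vZ}\circ \CN_{\vec{j},\{\vW_i\}}\circ \CN_{2k,\vZ}$ appearing in \autoref{lem:freeness_random_P}; absorbing the outer $\CN_{2k,\vZ}$ twirls into the test operators yields effective inputs lying in the commutant of phased permutations---a subalgebra of $\vP_k(N)$ by \autoref{lem:commZ}---so I may assume WLOG that $\vA_N,\vrho_N \in \vP_k(N)$.

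Next, I would invoke \autoref{thm:ourExplicitBasis} and \autoref{lem:orthonormal} to obtain an orthonormal basis $\{\vE^{\vec{\lambda}}_{ij}(N) := \CQ_{\vJ}[\vec{v}^{\vec{\lambda}}_i(N)\vec{v}^{\vec{\lambda}\dagger}_j(N)]\}_{\vec{\lambda},i,j}$ of $\vP_k(N)$, whose index set depends only on $k$ once $N \geq 2k$. Expanding $\vA_N = \sum a^{\vec{\lambda}}_{ij}\vE^{\vec{\lambda}}_{ij}(N)$ and $\vrho_N = \sum r^{\vec{\lambda}}_{ij}\vE^{\vec{\lambda}}_{ij}(N)$, I would define the extensions to any dimension $N'\geq 2k$ by reusing the same numerical coefficients:
\begin{align*}
\vA_{N'} := \sum_{\vec{\lambda},i,j} a^{\vec{\lambda}}_{ij}\vE^{\vec{\lambda}}_{ij}(N'),\qquad \vrho_{N'} := \sum_{\vec{\lambda},i,j} r^{\vec{\lambda}}_{ij}\vE^{\vec{\lambda}}_{ij}(N').
\end{align*}
By construction, this recovers $(\vA_N,\vrho_N)$ when $N'=N$.

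To establish rationality, I would apply \autoref{lem:degree_projector} to write each basis element as $\vE^{\vec{\lambda}}_{ij}(N') = \sum_{\Pi} c(\Pi,\vec{\lambda},i,j;N')\vO_{\Pi}^{(N')}$, where $c$ is rational in $N'$ with poles at integers $0,\ldots,k$ of multiplicity at most $k$ (from the projector $\vec{\Pi}_{\vec{\lambda}}$), plus an additional $k+2$ poles arising from the normalizations $n(\omega,t)^{-2}=c(\omega)^{-1}$ (the falling factorial in \autoref{lem:orthonormal}). Substituting into $f(1/N')$ produces a linear combination of inner traces of the form $\tr[\vO_{\Pi_2}^{(N')} \CN^{(N')}_{\vec{j},\{\vW_i\}}[\vO_{\Pi_1}^{(N')}]]$, each rational in $N'$ with poles at integers $\leq 2k\ell-1$ of multiplicity $\leq 2k\ell$ by \autoref{lem:ONO_polyN}. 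Adding the contributions gives the advertised bound: poles at integers $\leq 2k\ell-1$ of total multiplicity at most $2k\ell + k + k + (k+2) = 2k\ell+3k+2$.

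The main obstacle is the detailed pole bookkeeping in the last step: verifying that the poles from (i) the quotient projector $\vec{\Pi}_{\vec{\lambda}}$, (ii) the basis normalization constants $c(\omega)$, and (iii) the channel coefficients of \autoref{lem:ONO_polyN} all fall within the integer range $[0,2k\ell-1]$ and combine additively without unanticipated cancellations or higher multiplicities. A secondary subtlety is justifying the preliminary reduction to $\vP_k(N)$, which relies on the commutant characterization $\BC\text{-Span}\{\vO_\Pi : \Pi \in B_{2k}\} \subset \vP_k(N)$ from \autoref{lem:commZ} to faithfully encode the relevant test-operator information after absorbing the $\CN_{2k,\vZ}$ twirls.
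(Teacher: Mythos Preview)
Your outline tracks the paper's proof closely: reduce to $\vP_k(N)$ via the outer twirls, fix coefficients in the stable irrep basis, and extend by transporting those coefficients to $N'$. There is, however, a substantive gap. You expand $\vA_N$ and $\vrho_N$ in the \emph{same} basis $\vE^{\vec{\lambda}}_{ij}\simeq\ket{v_i}\bra{v_j}\otimes\vI_{\vS_{\vec{\lambda}}}$ and carry the coefficients verbatim. But $\tr[\vE^{\vec{\lambda}}_{ii}(N')]=\tr[\vI_{\vS_{\vec{\lambda}}}(N')]$ grows polynomially in $N'$ by the hook formula (\autoref{lem:dimension_S_lambda}); hence your extended $\vrho_{N'}$ has trace blowing up, and $f(1/N')$ is unbounded. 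While the lemma as literally stated might not demand otherwise, its only purpose downstream is to supply the a~priori bound $|f(1/N')|\le 2$ that feeds the Markov interpolation (see~\eqref{eq:apriori}), and that collapses with your extension.

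The paper repairs this by treating state and observable \emph{asymmetrically}: $\vrho_N$ is expanded block-by-block with an extra factor $1/\tr[\vI_{\vS_{\vec{\lambda}}}]$, so that $\tr[\vrho_{N'}]=\sum_{\vec{\lambda},i}\rho^{\vec{\lambda}}_{ii}=1$ for every $N'\ge 2k$, while $\vA_N$ is left un-rescaled so that $\|\vA_{N'}\|_\infty=\max_{\vec{\lambda}}\|[A^{\vec{\lambda}}_{ij}]\|$ is preserved. The extra $1/\tr[\vI_{\vS_{\vec{\lambda}}}(N')]$ then contributes further poles (again via the hook formula), which enter the paper's count alongside the projector and channel contributions. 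Your tally happens to land on $2k\ell+3k+2$, but the decomposition is different; in particular, the off-diagonal normalization you wrote as $n(\omega,t)^{-2}=c(\omega)^{-1}$ is really $n(\omega_i)^{-1}n(\omega_j)^{-1}=\big(c(\omega_i)c(\omega_j)\big)^{-1/2}$, which is not obviously rational in $N'$. You correctly flagged the pole bookkeeping as the main obstacle, and this is exactly where it bites.
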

\begin{proof}[Proof of~\autoref{lem:extensions}]
For ease of notation, we begin with the case without ancillas, and then argue that the same argument works in the presence of ancillas. Indeed, what is changing is the basis for the partition algebra and the ancilla is fixed. Given any operator and state $\vA_{N},\vrho_{N} \in \vP_k(N) \subset \vB((\mathbb{C}^N)^{\otimes k})$, they can be written as
\begin{align}
    \vrho_{N} &=\bigoplus_{\vec{\lambda} \in \Lambda_{k,N}} \sum_{i,j} \rho_{ij}^{\vec{\lambda}}\ket{v_i}\bra{v_j}\otimes \frac{\vI_{\vS_{\vec{\lambda}}}}{\tr[\vI_{\vS_{\vec{\lambda}}}]}\\
    &=\sum_{\vec{\lambda} \in \Lambda_{k,N}} \frac{1}{\tr[\vI_{\vS_{\vec{\lambda}}}]} \sum_{i,j} \CQ_{\vec{\lambda}}[\vec{v}_i\vec{v}^{\dagger}_j] \rho_{ij}^{\vec{\lambda}} \\
    \vA_{N}  &=\bigoplus_{\vec{\lambda} \in \Lambda_{k,N}} \sum_{i,j} A_{ij}^{\vec{\lambda}}\ket{v_i}\bra{v_j}\otimes \vI_{\vS_{\vec{\lambda}}}\\
    &=\sum_{\vec{\lambda} \in \Lambda_{k,N}} \sum_{i,j} \CQ_{\vec{\lambda}}[\vec{v}_i\vec{v}^{\dagger}_j] A_{ij}^{\vec{\lambda}}
\end{align}
such that
\begin{align}
    \tr[\vrho_N] = \sum_{\lambda\in \Lambda_{k,N}}\sum_{i} \vrho_{ii}^{\vec{\lambda}} = 1 \quad \text{and}\quad \norm{\vA_N}_{\infty} = \max_{\lambda} \norm{\vA^{\vec{\lambda}}}_{\infty}
\end{align}
and for each $\vO_{\Pi}\in \vP_k(N)$, 
\begin{align}
    \tr[\vO_{\Pi} \vrho_{N}]&= \sum_{\vec{\lambda}\in \Lambda_{k,N}}\sum_{i,j} \frac{\tr[\vO_{\Pi}\CQ_{\vec{\lambda}}[\vec{v}_i\vec{v}^{\dagger}_j]]}{\tr[\vI_{\vS_{\vec{\lambda}}}]} \rho_{ji}^{\vec{\lambda}}\\
    \tr[\vO_{\Pi} \vA_{N}]&= \sum_{\vec{\lambda}\in \Lambda_{k,N}} \sum_{i,j} \tr[\vO_{\Pi}\CQ_{\vec{\lambda}}[\vec{v}_i\vec{v}^{\dagger}_j]]A_{ji}^{\vec{\lambda}}.
\end{align}
The above can be extended to other dimensions $N' \ge 2k$ as $\vA_{N'},\vrho_{N'} \in \vB((\mathbb{C}^{N'})^{\otimes k})$, which is
\begin{align}
    \vrho_{N'} &=\sum_{\vec{\lambda}' \in \Lambda_{k,N'}} \frac{1}{\tr[\vI_{\vS_{\vec{\lambda}'}}]} \sum_{i,j} \CQ_{\vec{\lambda}'}[\vec{v}_i\vec{v}^{\dagger}_j] \rho_{ij}^{\vec{\lambda}} \\
    \vA_{N'} &=\sum_{\vec{\lambda}' \in \Lambda_{k,N'}} \sum_{i,j} \CQ_{\vec{\lambda}'}[\vec{v}_i\vec{v}^{\dagger}_j] A_{ij}^{\vec{\lambda}'}.
\end{align}
Note that
\begin{align}
    \tr[\vrho_{N'}] = \sum_{\lambda' \in \Lambda_{k,N'} }\sum_{i} \vrho_{ii}^{\vec{\lambda}'} = 1 \quad \text{and}\quad \norm{\vA_{N'}}_{\infty} = \max_{\lambda'\in \Lambda_{k,N'}} \norm{\vA^{\vec{\lambda}'}}_{\infty}.
\end{align}
Also, we relate the extension $ N'\ge 2k $ from $N$ by setting
\begin{align}
    \rho_{ij}^{\vec{\lambda'}} = \rho_{ij}^{\vec{\lambda}}\quad \text{and}\quad A_{ij}^{\vec{\lambda'}} = A_{ij}^{\vec{\lambda}} \quad \text{for each}\quad i,j,\quad \text{if}\quad \labs{\vec{\lambda}^{'*}}=\labs{\vec{\lambda}^*}
\end{align}
since the irreducible representations $\vec{\lambda}$ of $\vP_k(N)$ can be identified with $\vec{\lambda}'$ of $\vP_k(N')$, and similarly for $A_{ij}^{\vec{\lambda}}$. We evaluate the expression by inserting resolutions of identity
\begin{align}
    \tr\L[ \vA_{N'}\CN_{ \vec{j}, \{\vW_i\}}[\vrho_{N'}]\R] &= (\vA_{N'}|\vec{\CN}_{ \vec{j}, \{\vW_i\}}|\vrho_{N'})\\
    &= \sum_{\Pi_1,\Pi_2\vdash 2k } \frac{(\vA_{N'}|\vO'_{\Pi_2})(\vO'_{\Pi_2}|\vec{\CN}_{ \vec{j}, \{\vW_i\}}|\vO'_{\Pi_1})(\vO'_{\Pi_1}|\vrho_{N'})}{\norm{\vO'_{\Pi_2}}^2_2\norm{\vO'_{\Pi_1}}^2_2}\\
    & = \sum_{\Pi_1,\Pi_2\vdash 2k } \frac{(\vA_{N'}|\vO'_{\Pi_2})(\vO'_{\Pi_2}|\vec{\CN}_{ \vec{j}, \{\vW_i\}}|\vO'_{\Pi_1})(\vO'_{\Pi_1}|\vrho_{N'})}{\norm{\vO'_{\Pi_2}}^2_2\norm{\vO'_{\Pi_1}}^2_2}.
\end{align}
Let's count the poles for each:

(i) By~\eqref{eq:norm_O'pi},
\begin{align}
    \frac{1}{\norm{\vO'_{\Pi_2}}^2_2}, \frac{1}{\norm{\vO'_{\Pi_2}}^2_2} = \frac{1}{N(N-1)\cdots}
\end{align}
which individually has poles at integers $0,\cdots, k$ each with multiplicity at most $1$.

(ii) Rewrite in terms of the $\vO_{\Pi}$'s by~\eqref{eq:Kinv}
\begin{align}
    (\vO'_{\Pi_2}|\vec{\CN}|\vO'_{\Pi_1})&= \sum_{\Pi'_1,\Pi'_2\vdash 2k  } K^{-1*}_{\Pi_2\Pi'_2} K^{-1}_{\Pi_1\Pi'_1}(\vO_{\Pi'_2}|\vec{\CN}|\vO_{\Pi'_1}),
\end{align}
which, by~\autoref{lem:ONO_polyN}, which has poles at integers $0,\ldots, 2k\ell-1$ each with multiplicity at most $2k\ell$.

(iii)
\begin{align}
    (\vO'_{\Pi_1}|\vrho_{N'}) &= \sum_{\Pi'_1\vdash 2k } K^{-1*}_{\Pi_1\Pi'_1} (\vO_{\Pi'_1}|\vrho_{N'}) \tag{By~\eqref{eq:Kinv}}\\
    & = \sum_{\Pi'_1\vdash 2k }\sum_{i,j,\vec{\lambda}}  K^{-1*}_{\Pi_1\Pi'_1} \frac{\tr[\vO_{\Pi'_1}\CQ_{\vec{\lambda}}[\vec{v}_i\vec{v}^{\dagger}_j]]}{\tr[\vI_{\vS_{\vec{\lambda}}}]}\rho_{ji}^{\vec{\lambda}}. 
\end{align}

The poles came from $\CQ_{\vec{\lambda}} = \sum_{\vec{\lambda}} \Pi_{\vec{\lambda}}$ and $\vec{v}_i\vec{v}_j^{\dagger}$ (\autoref{lem:degree_projector}) and $\tr[\vI_{\vS_{\vec{\lambda}}}]$ (\autoref{lem:dimension_S_lambda}), which has poles at integers $0,\ldots, 2k-1$ each with multiplicity at most $2k$.

(iv)
\begin{align}
    (\vO'_{\Pi_1}|\vA_{N'}) &= \sum_{\Pi'_1\vdash 2k } K^{-1*}_{\Pi_1\Pi'_1} (\vO_{\Pi'_1}|\vA_{N'}) \tag{By~\eqref{eq:Kinv}}\\
    &= \sum_{\Pi'_1\vdash 2k }\sum_{i,j,\vec{\lambda}}  K^{-1*}_{\Pi_1\Pi'_1} \tr[\vO_{\Pi'_1}\CQ_{\vec{\lambda}}[\vec{v}_i\vec{v}^{\dagger}_j]] A_{ji}^{\vec{\lambda}}
\end{align}
which like the case of (iii), has poles at integers $0,\ldots, 2k\ell-1$ each with multiplicity at most $2k\ell+k$.

Altogether, 
\begin{align}
f \L(\frac{1}{N'}\R) := \tr\L[ \vA_{N'}\CN_{ \vec{j}, \{\vW_i\}}[\vrho_{N'}]\R] \quad \text{for each integer} \quad N' \ge 2k, 
\end{align}
which has poles at integers $0,\ldots, 2k\ell-1$ each with multiplicity at most $2k\ell+3k+2$.

To extend to the case with ancillas, we run the above argument for each irrep $\vec{\lambda} \in \Lambda_{k,N}$
\begin{align}
    \vrho_{\vec{\lambda}} &= \sum_{i,j} \rho_{ij}^{\vec{\lambda}}\ket{v_i}\bra{v_j}\otimes \frac{\vI_{\vS_{\vec{\lambda}}}}{\tr[\vI_{\vS_{\vec{\lambda}}}]}\otimes \vO_{ij}\\
     \vA &=\bigoplus_{\vec{\lambda} \in \Lambda_{k,N}} \sum_{i,j} A_{ij}^{\vec{\lambda}}\ket{v_i}\bra{v_j}\otimes \vI_{\vS_{\vec{\lambda}}} \otimes \vO'_{ij}
\end{align}
were $\vO_{ij}, \vO'_{ij}$ acts on space of dimension $d_{\vec{\lambda}}$, which encompasses all possible operator $\vA$ acting on $(\BC^{N})^{\otimes k}\otimes \BC^{d_{\vec{\lambda}}}$ and inheriting the operator norm bounds.
\end{proof}
\begin{lem}[Rational polynomial for Haar channel]\label{lem:poly_Haar_rhoO}
In the setting of~\autoref{lem:extensions}, the Haar random channel evaluates on the same set of extensions $\vrho_{N'}, \vA_{N'}$
    \begin{align}
   f \L(\frac{1}{N'}\R) := \tr\L[ \vA_{N'}\CN_{ 2k, \vU}[\vrho_{N'}]\R] \quad \text{for each integer} \quad N' \ge 2k
\end{align}
is rational polynomial of $N$ at integers $-(k-1), \cdots, 0,1,2k -1$, each with multiplicity bounded by $4k+2$.
\end{lem}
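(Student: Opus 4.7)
The strategy mirrors the proof of~\autoref{lem:extensions}, with the key substitution being the unitary Weingarten formula in place of the partition-algebra expansion of $\vec{\CN}_{\vec{j},\{\vW_i\}}$. Specifically, writing permutations $\vsigma \in \vS_k$ as elements of $\BC\vS_k \subset \vP_k(N')$ (so each $\vsigma$ corresponds to a diagram $\vO_{\Pi_{\vsigma}}$), the Weingarten formula gives
\begin{align}
\vec{\CN}_{2k,\vU} = \sum_{\vsigma,\vtau\in\vS_k} \mathrm{Wg}(\vsigma\vtau^{-1},N')\, |\vsigma)(\vtau|, \quad \mathrm{Wg}(\vpi,N') = \frac{1}{(k!)^2}\sum_{\vec{\mu}\vdash k}\frac{d_{\vec{\mu}}^2 \chi_{\vec{\mu}}(\vpi)}{s_{\vec{\mu}}(N')}.
\end{align}
Hence
\begin{align}
f\!\left(\tfrac{1}{N'}\right) = \sum_{\vsigma,\vtau\in\vS_k} \mathrm{Wg}(\vsigma\vtau^{-1},N')\,(\vA_{N'}|\vsigma)(\vtau|\vrho_{N'}).
\end{align}

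The plan is to bound the pole order of each of the three factors on the right-hand side. First, for the inner products $(\vA_{N'}|\vsigma)$ and $(\vtau|\vrho_{N'})$: since $\vsigma,\vtau\in\vS_k\subset \vP_k(N')$ are specific diagram elements, these are evaluated exactly as in steps (iii) and (iv) of the proof of~\autoref{lem:extensions}. Using the expansion of $\vrho_{N'}$ and $\vA_{N'}$ in the basis $\CQ_{\vec{\lambda}'}[\vec{v}_i\vec{v}_j^{\dagger}]$ from~\autoref{lem:orthonormal}, the poles are contributed by the projector $\CQ_{\vec{\lambda}'}$ (poles at integers $0,\dots,k$ with multiplicity $\le k$ by~\autoref{lem:degree_projector}), by the normalization $1/c(\omega)$ in $\vec{v}_i\vec{v}_j^{\dagger}$ (poles at integers $0,\dots,2k-1$ with multiplicity $\le 1$), and, for the state, by $1/\tr[\vI_{\vS_{\vec{\lambda}'}}]$ (poles at integers $0,\dots,2k-1$ of multiplicity $\le 1$ by~\autoref{lem:dimension_S_lambda}). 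Second, for the Weingarten factor: each $s_{\vec{\mu}}(N')$ for $\vec{\mu}\vdash k$ is a polynomial of degree $k$ whose zeros lie in $\{-(k-1),\dots,k-1\}$, each of multiplicity one (hook-content formula), so $\mathrm{Wg}(\vpi,N')$ is rational in $N'$ with poles confined to this integer set.

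Combining these three contributions, $f(1/N')$ is a rational polynomial in $N'$ whose poles lie in $\{-(k-1),\dots,0,1,\dots,2k-1\}$, with total multiplicity bounded by the sum of the three per-factor multiplicity bounds, which is at most $4k+2$ after careful accounting. As in~\autoref{lem:extensions}, the argument extends to the case with ancillas by running the same decomposition per irrep $\vec{\lambda}\in\Lambda_{k,N}$ and absorbing the ancilla factor without affecting the pole structure.

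The main obstacle, as in~\autoref{lem:extensions}, is the careful multiplicity bookkeeping: the Weingarten poles at negative integers and the partition-algebra poles at nonnegative integers must be combined without double-counting. This is eased by the fact that the nonoverlapping location ranges $\{-(k-1),\dots,-1\}$ (from Weingarten only) and $\{k,\dots,2k-1\}$ (from the state/observable only) are treated separately, while the overlap $\{0,\dots,k-1\}$ receives contributions from all three factors and dictates the final multiplicity bound.
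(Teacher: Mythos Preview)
Your approach is essentially the paper's: mirror the proof of \autoref{lem:extensions} and replace the channel matrix element (step (ii) there) with the unitary Weingarten function, whose poles sit at $-(k-1),\ldots,k-1$ with multiplicity at most $k$. One minor correction: your claim that each content polynomial $s_{\vec{\mu}}(N')$ has only simple zeros is false (e.g., $s_{(2,2)}(N)=N(N+1)(N-1)N$ has a double zero at $0$, since contents repeat along the diagonal); the correct and sufficient statement is that $\deg s_{\vec{\mu}}=k$, bounding the pole multiplicity of $\mathrm{Wg}$ at any point by $k$, which is exactly what the paper invokes and what your final count needs.
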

\begin{proof}
The change in the argument of~\autoref{lem:extensions} is that
\begin{align}
    (\vO'_{\Pi_2}|\vec{\CN}|\vO'_{\Pi_1})&= \sum_{\Pi'_1,\Pi'_2\vdash 2k  } K^{-1*}_{\Pi_2\Pi'_2} K^{-1}_{\Pi_1\Pi'_1}(\vO_{\Pi'_2}|\vec{\CN}|\vO_{\Pi'_1})
\end{align}
And we directly invoke the Weingarten function for the unitary group~\cite[Theorem 4.3]{collins2022weingarten}, which has poles at $-(k-1),\cdots,0,\cdots (k-1)$ with multiplicity at most $k$.
\end{proof}
\section{Proof of main result}\label{sec:proof_main}
This section contains the formal proof of our main result (\autoref{thm:parallelPRU}), which we restate as follows.

\parallelPRU*

Since the proof requires several lemmas, we first discuss them locally and combine them in the end.
\subsection{Expanding one exponential into words}
We first expand the single exponential and collect the distinct words
\begin{align}
    \e^{\ri \theta\vA_m} &= \exp \L(\frac{\ri\theta}{\sqrt{2m}}\sum_{a=1}^m (\vZ_a+\vZ^{\dagger}_a) \R)\\
    &= \vI + \frac{\ri\theta}{\sqrt{2m}}\sum_{a=1}^m \L(\vZ_a+\vZ^{\dagger}_a\R) - \frac{\theta^2}{2}\frac{1}{2m} \L(\sum_{a=1}^m \vZ_a+\vZ^{\dagger}_a\R)^2 \cdots \\
    & = v(\theta) \vI + \sum_{i} w_i(m,\theta) \vW_i \quad \text{where}\quad \vW_i \in \CW(\{\vZ_a\}_{a=1}^m).
\end{align}

\begin{lem}[Normalized, bounded weights]\label{lem:property_weights}
\begin{align}
    \sum_i \labs{w_i}^2 &= 1\\
    \max_i{\labs{w_i}} &\le \frac{1}{\sqrt{m}}.
\end{align}
\end{lem}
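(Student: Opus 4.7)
The plan is to reinterpret the expansion
\begin{align*}
    \e^{\ri\theta\vA_m} \;=\; v(\theta)\vI + \sum_i w_i(m,\theta)\vW_i
\end{align*}
inside the \emph{free group algebra} $\BC[F_m]$, where $F_m=\langle \vZ_1,\ldots,\vZ_m\rangle$ is the free group on $m$ generators. Concretely, I would work in the left regular representation of $F_m$ on $\ell^2(F_m)$, in which the generators act as unitary permutations of the canonical orthonormal basis $\{\delta_g\}_{g\in F_m}$. Because distinct free words are then literally orthogonal basis vectors, the coefficients $v$ and $w_i$ are unambiguously defined formal quantities (independent of the finite-$N$ realization), and both claimed bounds reduce to elementary Hilbert-space and symmetry arguments.

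\textbf{Step 1 (normalization).} In the regular representation, $\vA_m=(2m)^{-1/2}\sum_a(\vZ_a+\vZ_a^{\dagger})$ is bounded (norm at most $\sqrt{2m}$ by the triangle inequality) and Hermitian, so $\e^{\ri\theta\vA_m}$ is a unitary operator. Applying it to $\delta_e$,
\begin{align*}
    \e^{\ri\theta\vA_m}\,\delta_e \;=\; v(\theta)\,\delta_e + \sum_i w_i(m,\theta)\,\delta_{\vW_i}
\end{align*}
is a unit vector, so Parseval's identity gives $|v(\theta)|^2 + \sum_i |w_i|^2 = 1$, which implies the first claim.

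\textbf{Step 2 (max bound via symmetry).} Each $\sigma\in\vS_m$ lifts to an automorphism of $F_m$ via $\vZ_a\mapsto\vZ_{\sigma(a)}$ and hence to a unitary $\vU_\sigma$ on $\ell^2(F_m)$ that fixes $\delta_e$ and commutes with $\vA_m$ (which is manifestly symmetric in its generators). Therefore $\vU_\sigma$ also commutes with $\e^{\ri\theta\vA_m}$, forcing $w_{\sigma\cdot\vW}=w_\vW$: the weight function is constant on $\vS_m$-orbits. Any nonidentity reduced word uses some $1\le s\le m$ distinct generators, so its $\vS_m$-orbit has size $m(m-1)\cdots(m-s+1) \ge m$. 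Combining with Step 1 gives
\begin{align*}
    m\,|w_i|^2 \;\le\; |\mathrm{Orb}(\vW_i)|\cdot|w_i|^2 \;=\; \sum_{\vW\in\mathrm{Orb}(\vW_i)}|w_\vW|^2 \;\le\; 1,
\end{align*}
hence $|w_i|\le 1/\sqrt{m}$.

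\textbf{Main obstacle.} The only real conceptual step is the initial change of viewpoint: read naively on the finite-dimensional matrix $\e^{\ri\theta\vA_m}$, accidental relations between the random phased permutations could a priori conflate distinct free words and make the $w_i$ ambiguous. Passing to the free group algebra turns the $w_i$ into purely formal coefficients, after which both bounds are immediate from unitarity plus the manifest $\vS_m$ symmetry of $\vA_m$. No explicit calculation of any particular $w_i$ is required.
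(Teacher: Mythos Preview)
Your proof is correct and largely parallels the paper's, though with a cleaner implementation of the normalization step.

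For Step 1, the paper argues via the finite-$N$ random matrix realization: it computes $1 = \lim_{N\to\infty}\BE\,\btr[\e^{-\ri\theta\vA_m}\e^{\ri\theta\vA_m}]$ and uses that the normalized trace of any nontrivial word vanishes in the large-$N$ limit, so distinct words become effectively orthonormal. You instead pass to the left regular representation of $F_m$ on $\ell^2(F_m)$, where distinct words are \emph{literally} orthonormal basis vectors, and unitarity of $\e^{\ri\theta\vA_m}$ gives Parseval directly. Your route is more elementary in that it avoids invoking any random-matrix/large-$N$ fact, while the paper's version stays within the large-$N$ framework that drives the rest of the work. Both arguments really establish $|v(\theta)|^2 + \sum_i |w_i|^2 = 1$; the stated equality $\sum_i |w_i|^2 = 1$ holds at the distinguished angle $\theta_m$ with $v_m(\theta_m)=0$ (chosen immediately after the lemma), so your ``which implies the first claim'' carries the same implicit caveat as the paper's own proof.

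For Step 2 the two arguments are the same idea. The paper uses the cyclic subgroup of $\vS_m$ (``cycling through the labels $a=1,\ldots,m$''), which already gives orbit size exactly $m$ for any nontrivial reduced word; you use the full $\vS_m$ action and compute the orbit size as $m!/(m-s)!\ge m$. Your justification that the stabilizer fixes each used generator pointwise (because a generator-relabeling automorphism sends reduced words to reduced words, forcing letter-by-letter equality) is correct and makes the orbit count precise.
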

\begin{proof}
Fix $m$ and taking the large-$N$ limit: 
\begin{align}
    1= \lim_{N\rightarrow \infty}\BE\btr[\e^{-\ri \theta\vA_m}\e^{\ri \theta\vA_m}]= \sum_i \labs{w_i}^2\tag{large-$N$ limit}.
\end{align}

The second inequality uses the fact that the normalization trace of a nontrivial word vanishes in the large-$N$ limit. The second claim follows from the symmetry of the paths: each path with nonzero length must map to $m$ other paths by cycling through the labels ($a =1,\cdots, m$). Thus, there cannot be a path contributing more than $1/\sqrt{m}$ due to the sum constraints $\sum_i \labs{w_i}^2 =1$.
\end{proof}

The coefficient $v(\theta)$ for the identity operator is a sum over weighted trivial words (i.e., loops on the Cayley graph of the free group of $m$ elements). For example, 
\begin{align}
     \frac{1}{2m} \L(\sum_{a=1}^m \vZ_a+\vZ^{\dagger}_a\R)^2 &= 1\cdot \vI + (\text{nontrival words})\\
     \frac{1}{4m^2} \L(\sum_{a=1}^m \vZ_a+\vZ^{\dagger}_a\R)^2 &= (1+ \frac{m-1}{2m})\cdot \vI + (\text{nontrival words}).
\end{align}
The sum over all paths at order $p$ is exactly the moments of Kesten-McKay distribution~\cite{Kesten_1959,McKay1981TheEE}. Thus, the exponentially weighted sum is then the characteristic function
\begin{align}
    v_m(\theta) = \int_{-\infty}^{\infty} \e^{\ri \theta x} p_m(x) \rd x.
\end{align}

\begin{defn}[Kesten-McKay distribution~\cite{Kesten_1959,McKay1981TheEE}]
    The Kesten-McKay distribution is defined by
    \begin{align}
        p_m(x) := \begin{cases} \frac{\sqrt{4\frac{m-1}{m}-x^2}}{2\pi(1-\frac{x^2}{m})}\quad &\text{if}\quad \labs{x} \le 2\sqrt{1-\frac{1}{m}}\\
        0 &\text{else}.
        \end{cases}
    \end{align}
\end{defn}
In the limit $m\rightarrow \infty$, the Kesten-McKay distribution converges to Wigner's semi-circle
\begin{align}
        p_m(x) := \begin{cases} \frac{\sqrt{4-x^2}}{2\pi}\quad &\text{if}\quad \labs{x} \le 2\\
        0 &\text{else},
        \end{cases}
\end{align}
and the above gives an exact formula for the correction due to $m$. Thus, we may eliminate the trivial word $\vI$ by choosing the zeros of this function:
\begin{align}
    \theta_m \quad \text{such that}\quad v_m(\theta_m) =0,
\end{align}
i.e., making the exponential traceless in the large-$N$ limit. In particular, the angle is always bounded by an absolute constant $\theta_m = \CO(1)$ for any $m$.

The nice consequence of choosing $v_m(\theta_m)=0$ is that there is \textit{no cancellation} due to taking the product of exponentials. For example, in the product of two independent copies
\begin{align}
\e^{\ri \theta\vA_m}\e^{\ri \theta\vA'_m}&= \sum_{i} w_i(m,\theta) \vW_i  \cdot  \sum_{i} w_i(m,\theta) \vW'_i\\
& =\sum_{i,j} w_i(m,\theta) w_j(m,\theta) \vW_i\vW'_j,
\end{align}
each of the product words is different so that
\begin{align}
\vW_i\vW'_j = \vW_{i'}\vW'_{j'} \quad \text{if and only if}\quad (i,j) = (i',j').
\end{align}
Thus, taking products yields an exponentially growing number of words that we can easily keep track of. Note that taking powers of the same $(\e^{\ri \theta_m \vA_{m}})^2$ could also work in principle with a more careful combinatorial analysis for the cancellation of words. 

\subsection{Proof of~\autoref{thm:parallelPRU}}
We begin with a proof sketch. For any integers $m,\ell$, the product of $\ell$ independent exponentials of $2m$-sparse random matrices can be expanded into distinct words
\begin{align}
        \prod_{j=1}^{\ell} \e^{\ri \theta\vA^{(j)}_m} &= \sum_{\vec{i}} w_{\vec{i}}(\theta)\vW_{\vec{i}}\tag*{(Distinct words: $\vW_{\vec{i}}\ne \vW_{\vec{j}}$ for each $\vec{i}\ne \vec{j}$.)}
\end{align}
with fairly spread-out weights $w_{\vec{i}}$ (calculated from the individual properties (\autoref{lem:property_weights}))
\begin{align}
    \sum_{\vec{i}}\labs{w_{\vec{i}}}^2 &= (\sum_{i}\labs{w_{i}}^2)^\ell = 1\notag\\
    \max_{\vec{i}} \labs{w_{\vec{i}}} &= \L(\max_{i} \labs{w_{i}}\R)^{\ell} \le \frac{1}{\sqrt{m}^{\ell}}\label{eq:w_vec_i}.
\end{align}

Intuitively, we expect the resulting product to be a very dense matrix (sparsity $\Omega(m)^{\ell}$) as we expect the distinct words of permutations to map the same bit string to distinct bit strings. 

\subsubsection{\texorpdfstring{The large-$N$ limit}{The large-N limit}}
Indeed, if the dimension $N$ is infinite (fixing the length of the words and the number of queries $k$), then there will almost surely be no \textit{collision} between the distinct words, and we expect them can be replaced with \textit{independent} permutations. This can be captured by the following equality of limits, with the precise order of limit and choice of norm stated in $\autoref{lem:freeness_random_P}$. 
\begin{align}
    \vZ_L\L(\sum_{\vec{i}} w'_{\vec{i}}(\theta)\vW'_{\vec{i}}\R) \vZ_R&\stackrel{N\rightarrow \infty}{=} \vZ_L\L(\sum_{\vec{i}} w'_{\vec{i}}(\theta)\vZ_{\vec{i}}\R)\vZ_R\tag*{(distinct words are free in the large-$N$ limit:\autoref{lem:freeness_random_P})}\\
    &\stackrel{m^{\ell}\rightarrow\infty}{=} \vG. \tag*{(Matrix central limit theorem in $k$-fold diamond norm)}
\end{align}
Observe that the summands match the second (mixed) moments of a Gaussian matrix
\begin{align}
    \BE[ \vZ^{(\dagger)}_i \otimes \vZ^{(\dagger)}_i] = \BE[ \vG^{(\dagger)}_i \otimes \vG^{(\dagger)}_i]. 
\end{align}
Thus, in the limit of many summands (with some requirement that the weights all be reasonably small), the sum converges to a Gaussian; see~\autoref{lem:lindeberg_largeN} for the nonasymptotic bounds.

The Gaussian is a simple random matrix that is left and right unitarily invariant. The Gaussian is nonetheless not the same as a Haar random unitary. In fact, it is not even a unitary matrix, as its singular values are distributed according to the Marchenko-Pastur distribution, which differs from the unitary case. Nevertheless, the Gaussian is effectively Haar in the parallel access model. 
\begin{align}
    \vG \stackrel{N\rightarrow \infty}{=}
    \vU_{Haar}\tag{in the diamond distance, \autoref{lem:Ginibre_is_Haar}}.
\end{align}
Intuitively, for \textit{arbitrary fixed} input state uncorrelated with $\vG$, it is nearly \textit{isometric} with \textit{high probability}, as can be seen from the expected value $\BE\vG^{\dagger}\vG = \vI$. Regardless of the interpretation, the presence of Gaussians $\vG$ is merely a proof artifact, and all that matters in this context is merely the $k$-fold CP map $\BE \CN_{2k,\vG}$.

The assertions above together imply that  
$\vZ_L \L(\prod_{j=1}^{\ell} \e^{\ri \theta\vA^{(j)}_m}\R) \vZ_R \approx \vU_{Haar}$ as $k$-fold channels in the limits of $N\rightarrow \infty, m^{\ell}\rightarrow \infty.$

\subsubsection{\texorpdfstring{Interpolating to finite $N$}{Interpolating to finite N}}
To complete the proof, the main technical argument is to handle finite-$N$ corrections. We isolate the finite-$N$ bounds as the following lemma. The crux of the argument is an interpolation argument for the variable $x = \frac{1}{N}$ from the large-$N$ limits. We begin with a sketch of the argument by invoking the partial results we already proved. 
Define the two channels of interest:
\begin{align}
    \CN_1: = \CN_{2k,\vV}\quad \text{and}\quad \CN_2: = \CN_{2k,\vU_{Haar}}.
\end{align}
By~\autoref{lem:small_ancilla}, the diamond distance can be controlled at ancilla size $\max_{\lambda}d_{\lambda}$
    \begin{align}
        \lnormp{\CN_{2k,\vS}\circ\bigg( \CN_{1} -\CN_{2} \bigg) \circ \CN_{2k,\vS}}{\diamond}  &= \sup_{\vrho,\vO} \tr\L[\vO\CN_1\vrho\R] - \tr\L[\vO\CN_2\vrho\R].
    \end{align}
    where the supremum is over symmetrized operators and states that may involve ancillas. (Even though we have in fact a stronger symmetry $\CN_{2k,\vS}$, it suffices to exploit the $\CN_{2k,\vS}$ symmetry.) We will show that the distinguishing probability is small for each input and output by constructing the function 
    \begin{align}
        f_{\vrho,\vO}(\frac{1}{N'}) &=  \tr\L[\vO^{(N')}\CN^{(N')}_1\vrho^{(N')}\R]-\tr\L[\vO^{(N')}\CN^{(N')}_2\vrho^{(N')}\R]\\
        & = f_1(N') - f_2(N')
    \end{align}
    as an extension from the original dimension $N'=N$ by defining a family of operators and superoperators
    \begin{align}
    &\L(\vO^{(N')},\CN_1^{(N')},\CN_2^{(N')}, \vrho^{(N')} \R) \quad \text{for each }\quad 2k \le  N' \le \infty\\ \text{such that} \quad &\L(\vO^{(N)},\CN_1^{(N)},\CN_2^{(N)}, \vrho^{(N)} \R)= \L(\vO, \CN_1,\CN_2, \vrho\R).
    \end{align}
    Of course, the operators $\vrho^{(N)},\vO^{(N)}$ at other dimensions $N'$ has to be ``similar'' to the original $\vrho, \vO$ at dimension $N$. In order to interpolate using Markov's inequality, we spell out the requirements for the extension:
\begin{itemize}
    \item Rational low-degree function (\autoref{lem:extensions}, \autoref{lem:poly_Haar_rhoO}): we extend to other dimensions by choosing the appropriate extensions of the inputs and outputs. Since the exponential is an infinite series, we have to truncate the Taylor expansion (\autoref{lem:exponential_expand}) to ensure that the product $\prod_{j=1}^{\ell} \e^{\ri \theta_m\vA^{(j)}_m}$ is approximated to error $\epsilon$ in the operator norm. See~\autoref{sec:proof_f_12_trun} for the proofs of the following lemma.
\begin{lem}[Low-degree approximation]\label{lem:f1_truncation}
For any $\epsilon,k$, truncating each exponentials $\e^{\ri \theta_m\vA^{(j)}_m}$ at degree $d_1 = \CO( \sqrt{m}\log(k\ell/\epsilon))$ suffices to ensure 
\begin{align}
    \labs{f_{\vrho,\vO,1}(\frac{1}{N}) - f^{(trun)}_{\vrho,\vO,1}(\frac{1}{N})} \le \epsilon.
\end{align}
\end{lem}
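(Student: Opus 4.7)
The plan is to choose the per-exponential Taylor truncation degree $d_1$ so that each truncation error is small, and then to telescope the error across the $\ell$-fold product and the $k$-fold tensor product by a standard hybrid argument. The analysis is pointwise in the random variables $\vZ_a$, so the expectation over these variables plays no role beyond a final application of Jensen's inequality.

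First, I would fix a realization of the permutations and phases, so that $\vA_m^{(j)}$ is a fixed Hermitian matrix. By the triangle inequality and $\normp{\vZ_a}{\infty}=1$, its operator norm obeys $\normp{\vA_m^{(j)}}{\infty} \le \sqrt{2m}$, and therefore $\normp{\ri\theta_m\vA_m^{(j)}}{\infty} \le \theta_m\sqrt{2m} = \CO(\sqrt{m})$ since $\theta_m=\CO(1)$. By the standard remainder estimate for the matrix exponential,
\begin{align*}
\lnormp{ \e^{\ri\theta_m\vA_m^{(j)}} - T_{d_1}\L(\e^{\ri\theta_m\vA_m^{(j)}}\R)}{\infty} \;\le\; \sum_{p>d_1}\frac{(\theta_m\sqrt{2m})^p}{p!} \;\le\; \L(\frac{e\theta_m\sqrt{2m}}{d_1+1}\R)^{d_1+1},
\end{align*}
so choosing $d_1 = \CO\!\L(\sqrt{m}\log(k\ell/\epsilon)\R)$ makes this error bounded by any desired $\epsilon_1 \le \epsilon/(16\,k\ell)$.

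Next, write $\vU_j:=\e^{\ri\theta_m\vA_m^{(j)}}$ and $\tilde\vU_j:=T_{d_1}(\vU_j)$, and analogously $\vV = \vZ_L\prod_j \vU_j\,\vZ_R$ and $\tilde\vV = \vZ_L\prod_j\tilde\vU_j\,\vZ_R$. Since $\normp{\vU_j}{\infty}=1$ and $\normp{\tilde\vU_j}{\infty}\le 1+\epsilon_1$, a standard telescoping argument gives
\begin{align*}
\normp{\vV-\tilde\vV}{\infty} \;\le\; \sum_{j=1}^{\ell}\normp{\vU_j-\tilde\vU_j}{\infty}\prod_{j'\ne j}\max(\normp{\vU_{j'}}{\infty},\normp{\tilde\vU_{j'}}{\infty}) \;\le\; \ell\,\epsilon_1(1+\epsilon_1)^{\ell-1} \;\le\; 2\ell\,\epsilon_1,
\end{align*}
provided $\epsilon_1\ell\le 1$. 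Another telescoping across the $k$ tensor factors yields $\normp{\vV^{\otimes k}-\tilde\vV^{\otimes k}}{\infty} \le 2k\ell\,\epsilon_1\cdot(1+2\ell\epsilon_1)^{k-1} \le 4k\ell\,\epsilon_1$, and similarly for $\vV^{\dagger\otimes k}$. Combined with $\normp{\vA\vrho\vA'^\dagger - \vB\vrho\vB'^\dagger}{1} \le \normp{\vA-\vB}{\infty}\normp{\vA'}{\infty}+\normp{\vB}{\infty}\normp{\vA'-\vB'}{\infty}$ applied to $\vA=\vV^{\otimes k}$, $\vB=\tilde\vV^{\otimes k}$ and $\vA'=\vV^{\otimes k}$, $\vB'=\tilde\vV^{\otimes k}$, this gives a pointwise bound
\begin{align*}
\lnormp{\vV^{\otimes k}\vrho\vV^{\dagger\otimes k}-\tilde\vV^{\otimes k}\vrho\tilde\vV^{\dagger\otimes k}}{1}\;\le\; 16\,k\ell\,\epsilon_1\,\normp{\vrho}{1}.
\end{align*}

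Finally, writing the difference as
\begin{align*}
f_{\vrho,\vO,1}\!\L(\tfrac{1}{N}\R) - f^{(\mathrm{trun})}_{\vrho,\vO,1}\!\L(\tfrac{1}{N}\R) \;=\; \BE\L[\tr\!\L[\vO\,\bigl(\vV^{\otimes k}\vrho\vV^{\dagger\otimes k}-\tilde\vV^{\otimes k}\vrho\tilde\vV^{\dagger\otimes k}\bigr)\R]\R],
\end{align*}
I would take expectations outside and apply H\"older's inequality with $\normp{\vO}{\infty}\le 1$, $\normp{\vrho}{1}\le 1$, obtaining $|f - f^{(\mathrm{trun})}| \le 16\,k\ell\,\epsilon_1 \le \epsilon$ for the stated choice of $d_1$. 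No step is technically hard; the only subtlety is keeping the hybrid bounds tight enough that the exponent-in-$k\ell$ blowup is merely multiplicative, which is why each per-exponential truncation error must be $\CO(\epsilon/(k\ell))$, translating into the $\log(k\ell/\epsilon)$ factor in $d_1$.
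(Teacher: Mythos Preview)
Your proof is correct and follows essentially the same approach as the paper: bound the per-exponential Taylor remainder using $\norm{\theta_m\vA_m}=\CO(\sqrt{m})$, then telescope the error across all $2k\ell$ occurrences of the exponential in $f_{\vrho,\vO,1}$. The paper's version is slightly more compact, writing the accumulated error directly as $(1+\text{per-exp error})^{2k\ell}-1$ rather than separating the $\ell$-product and $k$-tensor telescopes, but the content is the same.
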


Meanwhile, truncated function $f^{(trun)}_{\vrho,\vO}(\frac{1}{N})$
is a rational polynomial of $N$ at integers 
\begin{align}
0,1,\cdots, 2k\ell d_1 -1\quad\text{each with multiplicity at most}\quad 2k\ell d_1+ 3k + 2
\end{align}
where $\ell \cdot d_1$ is the maximal length of words of permutations, as coming from the product of $\ell$-many degree $d_1$ words.
The case for $f_{\vrho,\vO,2}$ (\autoref{lem:poly_Haar_rhoO}) has fewer multiplicities in a smaller range of integers, and that of $f_{\vrho,\vO,1}$ will dominate.
The choices for the degrees $d_1$ will depend on $N,m,\ell, k$.
    
    \item A priori bounds: This merely came from the structural fact that the expression is a difference between probabilities, bounded between one and zero:
\begin{align}
    \labs{f_{\vrho,\vO}(\frac{1}{N'})} \le \lnorm{\CN_1^{(N')}-\CN^{(N')}_2}_{1-1} \norm{\vrho_{N'}}_1 \norm{\vO_{N'}}_{\infty} \le 2\quad \text{for each integer} \quad N'.\label{eq:apriori}
\end{align}
    \item Large-$N$ limits (\autoref{lem:freeness_random_P}, \autoref{lem:Ginibre_is_Haar}): The large-$N$ limit coincide with the ``free'' model: sum over i.i.d. permutations. Also, the Haar random Channel can be replaced with Ginibre ensembles 
\begin{align}
   f_{1,\vrho,\vO}(\frac{1}{N'})  &= \tr\L[\vO_{N'}\CN_1\vrho_{N'}\R] \stackrel{N'\rightarrow \infty}{=}\tr\L[\vO_{N'}\CN^{(free)}_1\vrho_{N'}\R] = f^{(free)}_{1,\vrho,\vO}(\frac{1}{N'})\notag\\
    f_{2,\vrho,\vO}(\frac{1}{N'}) &=\tr\L[\vO_{N'}\CN_2\vrho_{N'}\R] \stackrel{N'\rightarrow \infty}{=}\tr\L[\vO_{N'}\CN^{(Ginibre)}_2\vrho_{N'}\R] = f^{(Ginibre)}_{2,\vrho,\vO}(\frac{1}{N'})\label{eq:f2_free}.
\end{align}
    \item Comparing the ``nicer'' models (\autoref{lem:lindeberg_largeN}): We further take a Lindeberg principle to argue that these large-$N$ limits are very close to each other. While the most general Lindeberg argument works at any finite dimension $N$, the large-$N$ limits simplify the calculations.
\begin{align}
    \labs{f_{\vrho,\vO}(\frac{1}{\infty})} &= \labs{\lim_{N'\rightarrow \infty } f_{1,\vrho_{N'},\vO_{N'}} - f_{2,\vrho_{N'},\vO_{N'}}}\\
    &= \labs{ \lim_{N'\rightarrow \infty } f^{(free)}_{1,\vrho_{N'},\vO_{N'}} - f^{(Ginibre)}_{2,\vrho_{N'},\vO_{N'}}}\tag{\eqref{eq:f2_free}}\\
    &\le \frac{1}{8} \sum_{\vec{i}} \L( (4k\labs{w_{\vec{i}}})^4 + (4k\labs{w_{\vec{i}}})^{2k}\R) + \frac{1}{8} \sum_{{\vec{i}}} \L( (4k\labs{w'_{\vec{i}}})^4 + (4k\labs{w'_{\vec{i}}})^{2k}\R) \tag{Lindeberg argument for both:\autoref{lem:lindeberg_largeN}}\\
     &\le \frac{(4k)^2}{8} (\sum_{\vec{i}}\labs{w_{\vec{i}}}^2)\cdot  \max_{\vec{i}} \L( (4k\labs{w_{\vec{i}}})^2 + (4k\labs{w_{\vec{i}}})^{2k-2}\R) + (w_{\vec{i}} \rightarrow w'_{\vec{i}})\tag{Holder's}\\
    &\le \frac{(4k)^2}{8} 1 \cdot  \L( \frac{16k^2}{m^{\ell}} + (\frac{16k^2}{m^{\ell}})^{k-1}\R) + (m,\ell\rightarrow m',\ell')\\ \tag{Bounds on the weights $w_{\vec{i}}$:\eqref{eq:w_vec_i}}\\
    &\stackrel{\ell'=2, m' \rightarrow \infty}{\le} \frac{(4k)^2}{8}\L( \frac{16k^2}{m^{\ell}}+ (\frac{16k^2}{m^{\ell}})^{k-1}\R) \tag{Second term drops due to $m'\rightarrow \infty$}\\
    &\le \min\L( \frac{128k^4}{m^{\ell}},2\R).
\end{align}
The last inequality recalls the a priori bound $\labs{f_{\vrho,\vO}(\frac{1}{\infty})} \le 2$ to drop the higher order term since we must have $\frac{16k^2}{m^{\ell}} \le 1$ for the bound to be better than the a priori bounds.
\end{itemize}

\begin{proof}[Proof of~\autoref{thm:parallelPRU}]
We put the above together. For any $m,\ell, N$,  
\begin{align}
    \lnormp{\CN_{2k,\vS}\circ\bigg( \CN_{1} -\CN_{2} \bigg) \circ \CN_{2k,\vS}}{\diamond} &\le 
    \sup_{\vrho,\vO} \labs{ f_{\vrho,\vO}(\frac{1}{N})}.
\end{align}
We further introduce a truncation error (for the exponential in $f_1$)
\begin{multline}
    \labs{ f_{\vrho,\vO}(\frac{1}{N})} \le \labs{ f_{\vrho,\vO}(\frac{1}{N})-f_{\vrho,\vO}^{(trun)}(\frac{1}{N})} + \labs{ f^{(trun)}_{\vrho,\vO}(\frac{1}{N})-f_{\vrho,\vO}^{(trun)}(\frac{1}{\infty})} \\
    + \labs{ f^{(trun)}_{\vrho,\vO}(\frac{1}{\infty})-f_{\vrho,\vO}(\frac{1}{\infty})} + \labs{ f_{\vrho,\vO}(\frac{1}{\infty})}
\end{multline}
We instantiate the truncation error from $\CN_1$
\begin{align}
    \labs{ f_{\vrho,\vO}(\frac{1}{N})-f_{\vrho,\vO}^{(trun)}(\frac{1}{N})} &\le \epsilon\tag{By~\autoref{lem:f1_truncation}}\quad \text{for any $N$.}
\end{align}

The truncated function $f_1$ now has a total number of poles $2k\ell d_1(2k\ell d_1+3k+2)$ (the real part has a total number of poles $d = 4k\ell d_1(2k\ell d_1+3k+2)$) and bound on the pole locations $B \le 2k\ell d_1 -1$ (which is strictly worse than the Haar random channel (\autoref{lem:poly_Haar_rhoO})), so we can invoke Markov's inequality for $N_0 \ge d^2 +8bB-1$:
\begin{align}
     \labs{\text{Re} f^{(trun)}_{\vrho,\vO}(\frac{1}{N})-\text{Re} f_{\vrho,\vO}^{(trun)}(\frac{1}{\infty})} 
     &\le \frac{4d^2 (N_0 + 10dB)}{N} \sup_{N\ge N_0}\L( \labs{\text{Re} f^{(trun)}_{\vrho,\vO}(\frac{1}{N})} \R)\tag{By~\autoref{lem:large_N_interpolate}}\\
     &\le \frac{8d^2 (N_0 + 10dB)}{N} \sup_{N\ge N_0}\L( \labs{f^{(trun)}_{\vrho,\vO}(\frac{1}{N})} \R)\\
     &\le \CO\L(\frac{(k^2 \ell^2 d_1^2)^4}{N}\R), 
\end{align}
using that $\labs{\text{Re} f^{(trun)}_{\vrho,\vO}(\frac{1}{N_0})}\le \labs{\text{Re} f_{\vrho,\vO}(\frac{1}{N_0})}+\epsilon\le 2+\epsilon \le 3$ and similarly for the imaginary parts. To consolidate the error,
\begin{align}
        \lnormp{\CN_{2k,\vS}\circ\bigg( \CN_{1} -\CN_{2} \bigg) \circ \CN_{2k,\vS}}{\diamond} &\le \epsilon + \CO\L(\frac{k^8\ell^8(m^4\log^{8}(k\ell/\epsilon))}{N}\R) + \min\L( \frac{128k^4}{m^{\ell}},2\R)\\
        &\le \CO\L(\frac{k^8\ell^8m^4n^8}{N}+\frac{k^4}{m^{\ell}}\R)\tag{Setting $\epsilon = \CO(\frac{k\ell}{N})$ }.
\end{align} The last line drops the $\min(\cdot)$ since the quantity is a priori bounded by 2.
\end{proof}

\subsection{Truncation error for exponentials}\label{sec:proof_f_12_trun}
This section includes a standard truncation argument for Taylor expansion of exponentials.
\begin{lem}[Taylor expansion for exponentials]\label{lem:exponential_expand}
For every degree $d$ there exists a polynomial of $h_d$ such that
\begin{align}
    \norm{\e^{\ri \vH} - h_d(\vH)} \le \frac{\norm{\vH}^{d+1}}{(d+1)!}\quad \text{for any Hermitian matrix}\quad \vH. 
\end{align}
In particular, choosing 
    \begin{align}
    d = \CO\L(\frac{\norm{\vH}\log(1/\epsilon)}{\log(\log(1/\epsilon))}\R) \quad \text{ensures that}\quad \norm{\e^{\ri \vH} - h_d(\vH)} \le \epsilon.
    \end{align}
\end{lem}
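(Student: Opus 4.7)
The plan is to take $h_d$ to be the degree-$d$ truncation of the Taylor series for $\e^{\ri x}$, i.e.
\begin{align}
    h_d(\vH) := \sum_{j=0}^{d} \frac{(\ri\vH)^{j}}{j!},
\end{align}
and bound the remainder in operator norm. Since $\vH$ is Hermitian, it is unitarily diagonalizable, so $\e^{\ri\vH}-h_d(\vH)$ is simultaneously diagonalized with eigenvalues $\e^{\ri\lambda}-\sum_{j=0}^{d}\frac{(\ri\lambda)^j}{j!}$ for the eigenvalues $\lambda$ of $\vH$. Thus the operator norm equals $\sup_{\lambda\in\mathrm{spec}(\vH)} |\e^{\ri\lambda}-h_d(\lambda)|$, and it suffices to prove the scalar estimate $|\e^{\ri x}-h_d(x)|\le |x|^{d+1}/(d+1)!$ for all real $x$ with $|x|\le \|\vH\|$.

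The scalar estimate is the standard Lagrange form of the Taylor remainder: for any real $x$,
\begin{align}
    \e^{\ri x}-\sum_{j=0}^{d}\frac{(\ri x)^{j}}{j!} = \frac{(\ri x)^{d+1}}{d!}\int_0^1 (1-t)^d \e^{\ri t x}\, \rd t,
\end{align}
whose modulus is bounded by $|x|^{d+1}/(d+1)!$ since $|\e^{\ri t x}|=1$ and $\int_0^1 (1-t)^d\, \rd t=1/(d+1)$. Substituting the eigenvalues of $\vH$ and taking the supremum yields the advertised bound.

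For the quantitative ``in particular'' claim, it remains to choose $d$ so that $\|\vH\|^{d+1}/(d+1)!\le\epsilon$. Applying Stirling in the form $(d+1)!\ge ((d+1)/\e)^{d+1}$, it suffices to have $(\e \|\vH\|/(d+1))^{d+1}\le\epsilon$, i.e.\ $(d+1)\log((d+1)/(\e\|\vH\|))\ge \log(1/\epsilon)$. Setting $d+1 = C\|\vH\|\log(1/\epsilon)/\log\log(1/\epsilon)$ for a sufficiently large absolute constant $C$ makes the left side at least $\log(1/\epsilon)$, completing the argument.

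Essentially nothing is subtle here: the reduction to the scalar problem via spectral decomposition of $\vH$ is immediate once Hermiticity is used, the scalar Taylor remainder is textbook, and the logarithmic factorial inversion is a routine Stirling estimate. The only minor care needed is to confirm that the degree $d$ should depend on $\|\vH\|$ linearly (not polynomially), which is dictated by the scaling $(d+1)!\gg\|\vH\|^{d+1}$ crossing over exactly when $d\gtrsim\|\vH\|$.
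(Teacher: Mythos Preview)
Your proof is correct. The paper does not actually supply a proof of this lemma---it is stated without proof as a standard fact and then immediately invoked in the subsequent proof of \autoref{lem:f1_truncation}---so your argument (Taylor polynomial, spectral reduction to the scalar Lagrange remainder, then Stirling to invert the factorial) is exactly the intended standard justification and fills in what the paper left implicit.
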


\begin{proof}[Proof of~\autoref{lem:f1_truncation}]
We begin with Taylor expansion
    \begin{align}
        \norm{ \e^{\ri \theta_m \vA_m} - h_{d_1}(\vA_m)} \le \frac{\norm{\theta_m\vA_m}^{d_1+1}}{(d_1+1)!}.
    \end{align}
    Since the exponential appears $2k\cdot\ell$ times in $f_{\vrho,\vO,1}(\frac{1}{N})$, the error can accounted for by triangle inequality 
    \begin{align}
        \labs{f_{\vrho,\vO,1}(\frac{1}{N}) - f^{(trun)}_{\vrho,\vO,1}(\frac{1}{N})}&\le \L(1 +  \frac{\norm{\theta_m\vA_m}^{d_1+1}}{(d_1+1)!}\R)^{2k\ell} - 1\\
        &\le \exp\L( 2k\ell\frac{\norm{\theta_m\vA_m}^{d_1+1}}{(d_1+1)!}\R) - 1 \tag{Since $1+x\le \e^{x}$}\\
        &\le \epsilon \tag{Since $\theta_m\norm{\vA_m} \le \CO(\sqrt{m})$) and $\e^{x} \le 1+ 2 x$ for $x\le 1$.}.
    \end{align}
    That last line chooses the suitable $d_1 = \CO(\sqrt{m}\log(k\ell/\epsilon))$.
\end{proof}

\appendix
\section{Lindeberg arguments}
One step of our argument converts an independent sum to a matrix with Gaussian entries. This is not terribly surprising and appeared in~\cite{chen2023sparse,chen2024efficient} in a different form. We quickly derive a version suitable for our context.
\begin{lem}[A Lindeberg argument assuming large-$N$]\label{lem:lindeberg_largeN}
For any $\vO = \sum_{j} w_j\vZ_j$ such that $\sum_i \labs{w_i}^2 =1$, In the large-$N$ limit, 
\begin{align}
    \lim_{N\rightarrow \infty}\lnormp{ \CN_{2k,\vO} - \CN_{2k, \vG}}{\diamond} \le \frac{1}{8} \sum_{j=1} \L( (4k\labs{w_j})^4 + (4k\labs{w_j})^{2k}\R).
\end{align}
\end{lem}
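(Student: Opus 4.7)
The plan is to apply a matrix Lindeberg interpolation: progressively replace each $\vZ_j$ by an independent Ginibre copy $\vG_j$, one at a time, and telescope. Introducing i.i.d. copies $\vG_j \stackrel{i.i.d.}{\sim} \vG$ and the hybrid ensembles
\begin{align}
    \vO^{(t)} := \sum_{j < t} w_j \vG_j + \sum_{j \ge t} w_j \vZ_j \quad \text{for}\quad t = 1, \ldots, m+1,
\end{align}
we have $\vO^{(1)} = \vO$ and $\vO^{(m+1)} = \sum_j w_j \vG_j \stackrel{dist.}{=} \vG$, since the complex Gaussian law is rotation-invariant under unit-norm coefficients $\{w_j\}$. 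By the triangle inequality, it suffices to bound each single-swap difference $\lnormp{\CN_{2k,\vO^{(t)}} - \CN_{2k,\vO^{(t+1)}}}{\diamond}$ and sum over $t$.

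For a fixed $t$, write $\vO^{(t)} = w_t \vZ_t + R_t$ and $\vO^{(t+1)} = w_t \vG_t + R_t$, with $R_t$ independent of both $\vZ_t$ and $\vG_t$. Multinomially expanding both $2k$-fold tensors over subsets $(S_1, S_2) \subseteq [k] \times [k]$ tracking which slots pick the swap summand versus $R_t$, and using independence to factor the expectation, the single-swap difference becomes
\begin{align}
    \sum_{S_1, S_2 \subseteq [k]} w_t^{|S_1|}(w_t^*)^{|S_2|}\, \Delta_{S_1, S_2} \otimes \BE\L[R_t^{\otimes([k]\setminus S_1)} \otimes R_t^{*\otimes([k]\setminus S_2)}\R],
\end{align}
where $\Delta_{S_1, S_2} := \BE[\vZ_t^{\otimes S_1} \otimes \vZ_t^{*\otimes S_2}] - \BE[\vG_t^{\otimes S_1} \otimes \vG_t^{*\otimes S_2}]$ is placed into the appropriate tensor slots.

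The core observation is that, in the large-$N$ limit, $\Delta_{S_1, S_2} \equiv 0$ whenever $p := |S_1| + |S_2| \le 3$: the $p \le 1$ cases vanish by centering; the $p = 2$ ``parallel'' moments $\BE \vZ_t \otimes \vZ_t$ and $\BE \vG_t \otimes \vG_t$ vanish by phase/Gaussian rotation symmetry (and likewise for conjugates); the $p = 2$ mixed moment $\BE \vZ_t \otimes \vZ_t^*$ coincides with the Ginibre analogue $\BE \vG_t \otimes \vG_t^*$ by a direct computation from $\vZ_t = \vD_z \vS$ (both equal the swap operator scaled by $1/N$); and the $p = 3$ moments vanish for both by odd-moment considerations. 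For $p \ge 4$, the tensor moments $\BE \vZ_t^{\otimes S_1} \otimes \vZ_t^{*\otimes S_2}$ and $\BE \vG_t^{\otimes S_1} \otimes \vG_t^{*\otimes S_2}$ each have diamond norm at most $1$ in the large-$N$ limit (the former by unitarity of $\vZ_t$, the latter by a standard Wick/freeness computation), and the residual $\BE R_t^{\otimes \cdots}$ is likewise diamond-norm bounded by $1$ via \autoref{cor:CP_large_N}. Combining with $\binom{2k}{p} \le (2k)^p$ yields the per-swap bound
\begin{align}
    \lnormp{\CN_{2k,\vO^{(t)}} - \CN_{2k,\vO^{(t+1)}}}{\diamond} \le 2\sum_{p=4}^{2k} \binom{2k}{p} |w_t|^p.
\end{align}
Splitting this at $p = 4$ (dominant small-$|w_t|$ regime, giving a term $\sim (4k|w_t|)^4$) and bounding the tail for $p \ge 5$ geometrically (giving $\sim (4k|w_t|)^{2k}$) reproduces the advertised per-swap bound; summing over $t$ completes the proof.

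The main obstacle is the explicit verification of the $p = 2$ matching at the superoperator level (including ancilla), i.e.\ checking that the mixed moments $\BE \vZ_t \otimes \vZ_t^*$ and $\BE \vG_t \otimes \vG_t^*$ agree as superoperators in the large-$N$ limit --- the phase structure $\vZ_t = \vD_z \vS$ is crucial here, as the uniformly random phases ensure that only ``paired'' contractions survive, matching the Gaussian Wick pairings. A secondary technical point is the uniform diamond-norm bound on $\BE R_t^{\otimes \cdots}$ independent of $t$, which follows from the same isometric/contractivity arguments together with the freeness result of \autoref{lem:freeness_random_P}.
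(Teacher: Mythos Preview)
Your Lindeberg hybrid and the cancellation of moments through third order are correct and match the paper's approach exactly. The gap is in the $p\ge 4$ remainder bound. Your decomposition into $\Delta_{S_1,S_2}\otimes \text{(residual)}$ does \emph{not} respect the superoperator structure when $S_1\ne S_2$: the ``$\Delta$ piece'' applies $\vZ_t$ on ket-slots $S_1$ but $\vZ_t^\dagger$ on bra-slots $S_2$, so it is not a map $\vB((\BC^N)^{\otimes|S_1|})\to\vB((\BC^N)^{\otimes|S_1|})$, and the diamond norm does not factor as a product. Consequently your appeal to \autoref{cor:CP_large_N} for the residual is invalid---that corollary is about genuine CP maps $\CN_{2k',\cdot}$, which requires $S_1=S_2$, not merely $|S_1|=|S_2|$. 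Falling back on the crude completely-bounded bound $\|\rho\mapsto A\rho B^\dagger\|_{\diamond}\le\|A\|\,\|B\|$ would instead yield $\BE\|R_t\|^{2k-p}$, which is not $\CO(1)$ uniformly in the number of summands (the operator norm of $R_t$ is not bounded by $1$).

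The paper's fix is precisely the device you omit: each cross term is polarized (\autoref{lem:polarize}) into a signed average of \emph{genuinely CP} maps of the form $\CN_{2,\vS_-}\circ\CN_{2,(a_j\vA_j+s\,s_-\vS_-)}\circ\cdots$, whose $1$--$1$ norm equals $\|\CN^\dagger[\vI]\|_\infty$ and, in the large-$N$ limit via \autoref{cor:Z+G}, collapses to a product of normalized traces $(\btr[\vS_-^\dagger\vS_-])^{(2k-q)/2}\,|w_j|^q\le|w_j|^q$. This is what gives the clean $\binom{2k}{q}|w_j|^q$ bound. Incidentally, the ``main obstacle'' you flag is not one: the $p=2$ matching $\BE[\vZ\otimes\vZ^*]=\BE[\vG\otimes\vG^*]$ is a two-line computation and holds \emph{exactly} at every finite $N$, not just asymptotically; the real work is entirely in the $p\ge 4$ norm control.
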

This is an adaptation of the matrix Lindeberg argument~\cite{chen2023sparse} with simplification from the large-$N$ limit. The main observation is that the (complex conjugated) second moments are equal for random permutations and Gaussians:
\begin{align}
    \BE[\vZ\otimes \vZ^{\dagger}] &= \BE[\vG\otimes \vG^{\dagger}],
\end{align}
and all other (mixed) moments vanished up to $3$ copies,
\begin{align}
     \BE[\vG^{(\dagger)}] &= \BE[\vZ^{(\dagger)}] = 0\\
     \BE[\vG^{(\dagger)}\otimes \vG^{(\dagger)}\otimes \vG^{(\dagger)}] &= \BE[\vZ^{(\dagger)}\otimes \vZ^{(\dagger)} \otimes \vZ^{(\dagger)}] = 0.
\end{align}
Since we are comparing CP maps, the following polarization identity will be useful to turn any superoperator into a weighted sum over CP maps.
\begin{lem}[Polarization] \label{lem:polarize} For any square matrix $\vA, \vB$,
    $$\vA[\cdot]\vB^{\dagger} = \BE_{s = \pm 1, \pm \ri} s(\vA+s\vB)[\cdot] (\vA+s\vB)^{\dagger}.$$
\end{lem}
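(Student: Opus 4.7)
The plan is to prove this identity by direct expansion, since it is essentially a fourth-root-of-unity orthogonality calculation, closely analogous to the standard polarization identity for inner products.

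First I would expand the bilinear superoperator. For a fixed $s\in\{\pm1,\pm\ri\}$, write
\begin{align*}
s(\vA+s\vB)[\cdot](\vA+s\vB)^{\dagger}
&= s\vA[\cdot]\vA^{\dagger} + s^{2}\vB[\cdot]\vA^{\dagger} + s\bar{s}\,\vA[\cdot]\vB^{\dagger} + s|s|^{2}\vB[\cdot]\vB^{\dagger}.
\end{align*}
Here I used that $(\vA+s\vB)^{\dagger}=\vA^{\dagger}+\bar s\vB^{\dagger}$ and that $s\cdot s=s^{2}$, $s\cdot\bar s=|s|^{2}$. Since $|s|^{2}=1$ for every $s\in\{\pm1,\pm\ri\}$, the first and fourth coefficients are $s$, the second is $s^{2}$, and the third is $1$.

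Next I would average uniformly over $s\in\{\pm1,\pm\ri\}$ and apply the discrete orthogonality relations for the fourth roots of unity:
\begin{align*}
\tfrac14\sum_{s\in\{\pm1,\pm\ri\}} s &= 0, &
\tfrac14\sum_{s\in\{\pm1,\pm\ri\}} s^{2} &= 0, &
\tfrac14\sum_{s\in\{\pm1,\pm\ri\}} 1 &= 1.
\end{align*}
Applying these term-by-term to the expansion above, the $\vA[\cdot]\vA^{\dagger}$, $\vB[\cdot]\vA^{\dagger}$, and $\vB[\cdot]\vB^{\dagger}$ contributions vanish, and the $\vA[\cdot]\vB^{\dagger}$ contribution survives with coefficient $1$. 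This yields exactly
\begin{align*}
\BE_{s\in\{\pm1,\pm\ri\}}\, s(\vA+s\vB)[\cdot](\vA+s\vB)^{\dagger} = \vA[\cdot]\vB^{\dagger},
\end{align*}
as claimed.

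There is no substantive obstacle here: the identity is a two-line computation once the expansion is written out, and the algebra does not use any property of $\vA,\vB$ beyond the existence of the matrix product and adjoint. The only thing to be careful about is bookkeeping the conjugates $\bar s$ versus $s$ when distributing the adjoint through $(\vA+s\vB)^{\dagger}$, and noting that each $s$ in the averaging set lies on the unit circle so that $|s|^{2}=1$ collapses the $\vB[\cdot]\vB^{\dagger}$ coefficient to $s$ rather than something more complicated. The choice of four equally spaced points on the unit circle is what makes the three unwanted sums vanish simultaneously; any set of $n\ge3$-th roots of unity with $n\ge 3$ would also work, but four is the minimal convenient choice.
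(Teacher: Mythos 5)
Your proof is correct: the expansion, the bookkeeping of $s$ versus $\bar{s}$, and the fourth-root-of-unity orthogonality relations $\BE[s]=\BE[s^2]=0$, $s\bar{s}=1$ are exactly what is needed, and the paper itself states \autoref{lem:polarize} without proof, treating it as this standard polarization computation. Nothing is missing.
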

\begin{proof}[Proof of~\autoref{lem:lindeberg_largeN}]
Define the interpolant
\begin{align}
    \vO_j := \sum^j_{i=1} \tvA_i +  \sum^m_{j+1} \vA_i \quad \text{where}\quad \vA_j := w_j \vZ_j\quad \text{and}\quad \tvA_j := w_j \vG_j.
\end{align}
    Then, 
\begin{align}
     \lnormp{ \CN_{2k,\vO} - \CN_{2k, \vG}}{\diamond} &= \sup_{\rho,\vO} \tr[\vO(\CN_{2k,\vO} - \CN_{2k, \vG})\vrho].
\end{align}
By \autoref{lem:small_ancilla}, it suffices to take the state to be entangled with ancilla of dimension $\max_{\vec{\lambda}} d_{\vec{\lambda}}$. Nevertheless, this will not be used anywhere in the proof. 

We can expand the difference as a telescoping sum where we exchange one term at a time
\begin{align}
    \tr[\vO(\CN_{2k,\vO} - \CN_{2k, \vG})\vrho] = \sum_{j=1}^{m} (n_j-n_{j-1}) \quad \text{where}\quad
    n_j := \tr[\vO\CN_{2k,\vO_j}[\vrho]].
\end{align}
In order to analyze the difference, for each intermediate step $j$,  denote 
\begin{align}
    \vO_- := \vO_{j}-\tvA_{j} = \vO_{j-1}-\vA_{j}.
\end{align}
When expanding powers of a sum of matrices, keep in mind the scalar binomial expansion:
\begin{align}
    (x+y)^{2k} &= \sum_{q = 0}^{2k} \binom{2k}{q} x^{2k-q}y^q. 
\end{align}
For our matrices of interest, the expansion takes the form
\begin{align}
    (\vO_- + \vA_{j})^{\otimes k} [\cdot] (\vO_- + \vA_{j})^{\dagger \otimes k} & = \sum_{words} \vS_- \otimes \cdots \vA_j \otimes \vS_-\cdots \vA_j\vS_- [\cdot] \otimes (\vA_j \vS_-\cdots)^{\dagger}\\
    &=:\sum_{q=0}^{2k} \vM_q\tag{regroup by the occurrences $q$ of $\vA_j$}.
\end{align}

To control the norm of the operators $\vM_q$, we first polarize the error terms to turn the super-operator into a complex linear combination of completely positive maps (By~\autoref{lem:polarize}). There are different patterns of $\vS_-$ and $\vA_j$:
(1) If all $q$-many $\vA_j$ only paired with $\vS_-$, there are $q$ locations where we need to polarize 
\begin{align}
    &\labs{\tr[ \vO \BE \vS_- \otimes \cdots \vA_j \otimes \vS_-\cdots \vA_j\vS_- [\vrho] \otimes (\vA_j \vS_-\cdots)^{\dagger} ]}\\
    & = \labs{\BE_{s} s\tr[ \vO \CN_{2,S_-}\circ \CN_{2,(a_j\vA_j+ s s_-\vS_-)}\circ \cdots [\vrho] ]} \tag{By~\autoref{lem:polarize}}\\
    & \le \BE_{s} \lnormp{ \CN^{\dagger}_{2,S_-}\circ \CN^{\dagger}_{2,(a_j\vA_j+ s s_-\vS_-)}[\vI]}{\infty}\tag{Norm bounds for CP maps, $\labs{s} =1$}\\
    & \stackrel{N\rightarrow \infty}{=} \BE_{s} \btr[ \CN^{\dagger}_{2,S_-}\circ \CN^{\dagger}_{2,(a_j\vA_j+ s s_-\vS_-)}[\vI] ]\tag{By~\autoref{cor:Z+G}}\\
    & \stackrel{N\rightarrow \infty}{=}\BE\L[(\btr\L[\vS_-^{\dagger}\vS_-\R])^{k - q} \cdot \btr\L[ \labs{a_j}^2\vA_j^{\dagger}\vA_j + \labs{s_-}^2 \vS_-^{\dagger}\vS_-\R]^{q}\R]\\
    & = \BE_-(\btr\L[\vS_-^{\dagger}\vS_-\R])^{\frac{2k-q}{2}} \cdot \BE_j\btr\L[ \vA_j^{\dagger}\vA_j\R]^{q/2}\tag{Optimize parameters and exploit independence}\\
    & \stackrel{N\rightarrow \infty}{=} (1-\labs{w_j}^2)^{\frac{2k-q}{2}} \cdot  \labs{w_j}^{q} \le  \labs{w_j}^{q}\tag{By $\vA_j^{\dagger}\vA_j=\vI$ and \autoref{cor:Z+G}}.
    \end{align}
    The fourth inequality uses the permutation symmetry of both channels to use~\autoref{cor:Z+G}. The last equality minimizes the expression for suitable real random scalar variables
\begin{align}
 \labs{a_j}^2 = \sqrt{\frac{\btr[\vS_-^{\dagger}\vS_-]}{\btr[\vA_j^{\dagger}\vA_j]}} \quad \text{and}\quad \labs{s_-}^2 = \sqrt{\frac{\btr[\vA_j^{\dagger}\vA_j]}{\btr[\vS_-^{\dagger}\vS_-]}}.
\end{align}
The case where $\vA_j$ are paired up with $\vA_j^{\dagger}$ enjoys the same bound using that the normalization trace is deterministic in the large-$N$ limit for the Gaussian case
\begin{align}
    \lim_{N\rightarrow \infty}\btr[\vG^{\dagger}\vG] \stackrel{dist.}{=} 1. 
\end{align}

Likewise, $(\vO_- + \tvA_{j})^{\otimes k} [\cdot] (\vO_- + \tvA_{j})^{\dagger \otimes k} = \sum_{i=1}^m \tilde{\vM}_i$ with analogous bounds for the summands. 

To proceed, since the first and second (possibly conjugated) moments of the random matrices $\vA_i$ and $\tvA_i$ match for each index $i$, 
\begin{align}
    \Expect \vA^{(\dagger)}_i \otimes \vB &= \Expect \tvA^{(\dagger)}_i \otimes \vB; \\ 
    \Expect \vA_i^{(\dagger)} \otimes \vA^{(\dagger)}_i\otimes \vB  &=\Expect \tvA^{(\dagger)}_i \otimes \tvA^{(\dagger)}_i \otimes \vB \\
    \Expect \vA_i^{(\dagger)} \otimes \vA_i^{(\dagger)}\otimes \vA^{(\dagger)}_i\otimes \vB  &=\Expect \tvA^{(\dagger)}_i \otimes \tvA^{(\dagger)}_i \otimes \tvA_i^{(\dagger)}\otimes \vB \quad \text{for arbitrary $\vB$ independent from $\vA_i, \tvA_i$}.
\end{align}
Crucially, this implies that subtracting the expected moments completely \textit{cancels} the first to third order terms $\vM_1,\vM_2, \vM_3$ and  $\tilde{\vM}_1,\tilde{\vM}_2, \tilde{\vM}_3$
\begin{align}
    \labs{n_j-n_{j-1}} &=\labs{ \sum_{q=4}^{2k} \tr[\vO(\vM_q - \tilde{\vM}_q)[\vrho]]}\\
    &\stackrel{N\rightarrow \infty}{\le} \sum_{q=4}^{2k} \binom{2k}{q} \L(\btr\L[ \BE\vA_j^{\dagger}\vA_j\R]^{q/2} + \btr\L[\BE \tvA_j^{\dagger}\tvA_j\R]^{q/2}\R)\\
    &\le 2 \sum_{q=4}^{2k} \binom{2k}{q} \labs{w_j}^q\\
    &\le 2 \sum_{q=4}^{2k} (2k)^q\labs{w_j}^q \tag{Since $\binom{2k}{q}\le (2k)^q$}\\
    &\le \frac{(4k\labs{w_j})^4 + (4k\labs{w_j})^{2k}}{8}.
    \label{eq:update}
\end{align}
The last inequality bounds the geometric series using the elementary numerical inequality 
\begin{align}
    \sum_{q=4}^{2k} {x^q} = \sum_{q=4}^{2k} {2^{-q}(2x)^q} \le \frac{1}{8}( (2x)^4+(2x)^{2k})
    \quad\text{for $x \geq 0$.}
\end{align}
Sum over the index $j$ to conclude the proof.
\end{proof}

\section{Comparing the Gaussian CP map with a unitary channel}
\label{sec:comparingG_Haar}

\begin{lem}[Ginibre matrices are effectively Haar unitaries in the diamond distance]\label{lem:Ginibre_is_Haar}
\begin{align}
    \lim_{N\rightarrow \infty} \normp{\CN_{2k,\vU} - \CN_{2k,\vG}}{\diamond}=0
\end{align}    
\end{lem}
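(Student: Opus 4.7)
The plan is to reduce the diamond distance to an explicit term-by-term comparison via the moment formulas for $\vG$ and $\vU$. Since both $\CN_{2k,\vG}$ and $\CN_{2k,\vU}$ are invariant under left and right conjugation by $\vS(N)^{\otimes k}$ (indeed by the full unitary group), \autoref{lem:small_ancilla} yields
\begin{align}
\lnormp{\CN_{2k,\vG}-\CN_{2k,\vU}}{\diamond} = \sup_{\vrho,\vO}\tr\!\bigl[\vO\,(\CN_{2k,\vG}-\CN_{2k,\vU})[\vrho]\bigr],
\end{align}
where $\vrho,\vO$ are of bounded norm on $(\BC^{N})^{\otimes k}\otimes\BC^{d}$ with ancilla dimension $d\le\max_{\vec\lambda\in\Lambda_{k,N}}d_{\vec\lambda}$ depending only on $k$ for $N\ge 2k$ (by \autoref{thm:stablize}).

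Next, I would compute both channels in closed form. Wick's theorem for complex Gaussians with $\BE|\vG_{ij}|^{2}=1/N$ expands $\BE\,\vG^{\otimes k}\otimes(\vG^{*})^{\otimes k}$ as a sum over perfect pairings of the $k$ copies of $\vG$ with the $k$ copies of $\vG^{*}$: each pairing $\sigma\in\vS_{k}$ contributes $N^{-k}$ times a permutation-type element of $\vP_k(N)$ (a matching diagram with $k$ propagating blocks of size two). For Haar, the Weingarten formula gives the analogous expansion with coefficients $\mathrm{Wg}_{N}(\sigma\tau^{-1})$ summed over two independent permutations $\sigma,\tau\in\vS_{k}$. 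The classical large-$N$ asymptotics $\mathrm{Wg}_{N}(\sigma)=N^{-k}(\delta_{\sigma,e}+O(1/N))$ (see e.g.~\cite{collins2022weingarten}) then show that the Haar expansion is dominated by the diagonal $\sigma=\tau$ contribution, which exactly matches the Wick expansion at leading order; the off-diagonal terms contribute only $O(N^{-k-1})$ each.

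Converting to a trace-distance bound on the bounded-ancilla inputs, the $N^{k}$ operator-norm scale of each permutation diagram cancels the $N^{-k}$ factor coming from the moments, while the $O(1/N)$ error in the Weingarten asymptotics propagates unchanged, leaving
\begin{align}
|\tr[\vO(\CN_{2k,\vG}-\CN_{2k,\vU})[\vrho]]|=O_{k}(1/N)
\end{align}
uniformly in the bounded inputs and observables. Passing to $N\to\infty$ gives the claim. The main obstacle is tracking these delicate $N^{\pm k}$ cancellations, since a naive triangle-inequality bound over the $(k!)^{2}$ permutation-pair terms gives a useless $O(N^{k})$ estimate. The cleanest bookkeeping decomposes both channels along the Schur--Weyl isotypic components of $\vU^{\otimes k}$ (equivalently, along $\BC\vS_{k}\subset\vB((\BC^{N})^{\otimes k})$), where both channels are block diagonal and the matrix elements are direct ratios of Weingarten weights to the Gaussian weight $N^{-k}$, making the $1/N$ convergence manifest.
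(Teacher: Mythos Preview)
Your approach is correct and genuinely different from the paper's. The paper does not touch the Weingarten function at all here: it expands the symmetrized input in the permutation-diagram basis $\vO_\sigma$, passes to the distinct-index version $\vO'_\sigma$ at $O(1/N)$ cost, and then exploits the fact that distinct rows of $\vG$ are independent to \emph{decouple} the action on different blocks into independent copies $\vG,\vG',\ldots$; the claim then reduces to $\BE[\vG\,\vI\,\vG^\dagger]=\vI$. This is the ``freeness/decoupling'' picture that permeates the rest of the paper.

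Your route---Wick for $\vG$ versus Weingarten for $\vU$, followed by Schur--Weyl block-diagonalization---is more algebraic and in fact sharper: carried through, it shows that on any unitarily symmetrized input one has $(\CN_{2k,\vG}-\CN_{2k,\vU})[\vrho]=\vrho\cdot\bigl(\BE[(\vG\vG^\dagger)^{\otimes k}]-\vI\bigr)$ on the system register, and since $\BE[(\vG\vG^\dagger)^{\otimes k}]=\sum_{\pi\in\vS_k}N^{c(\pi)-k}\vO_\pi$ with $\|\vO_\pi\|=1$, the operator-norm deviation from $\vI$ is at most $\sum_{\pi\neq e}N^{c(\pi)-k}=O_k(1/N)$, giving an explicit rate. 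The paper's decoupling argument is more self-contained within its partition-algebra framework and avoids citing Weingarten asymptotics, while yours yields the quantitative $1/N$ bound directly and makes the cancellation mechanism transparent (each isotypic block carries a scalar $c_\lambda=1+O_k(1/N)$). Your identification of the obstacle---that a naive triangle inequality over the $(k!)^2$ Weingarten terms fails---and its resolution via the Schur--Weyl blocks is exactly right.
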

\begin{proof}
    Since the Ginibre ensemble is left and right unitarily invariant, its action maps between irreps of the commutant of the unitary group. In particular, this commutant algebra has bounded dimensions ($k!$ for fixed $k$) as the dimension $N$ goes to infinity. We express general normalized PSD symmetrized input in terms of diagrams
    \begin{align}
        \vrho = \L(\sum_{\vsigma}c_{\vsigma} \frac{\vO_{\vsigma}}{\normp{\vO_{\vsigma}}{2}}\R)\L(\sum_{\vsigma}c_{\vsigma} \frac{\vO_{\vsigma}}{\normp{\vO_{\vsigma}}{2}}\R)^{\dagger}\quad \text{such that}\quad\tr[\vrho]=1. 
    \end{align}
The normalization can be determined by 
    \begin{align}
        1=\tr\L[ \L(\sum_{\vsigma}c_{\vsigma} \frac{\vO_{\vsigma}}{\normp{\vO_{\vsigma}}{2}}\R)\L(\sum_{\vsigma}c_{\vsigma} \frac{\vO_{\vsigma}}{\normp{\vO_{\vsigma}}{2}}\R)^{\dagger}\R] &= \sum_{\vsigma} \labs{c_{\vsigma}}^2 + (\text{cross terms}.)\\
        &\stackrel{N\rightarrow\infty}{=} \sum_{\vsigma} \labs{c_{\vsigma}}^2.
    \end{align}
    The second line uses that, in the large $N$ limit, the cross terms vanish since different permutations are orthogonal in the limit
    \begin{align}
\tr[\vO_{\vsigma}\vO^{\dagger}_{\vsigma'}] = \CO\L(\frac{1}{N} \normp{\vO_{\vsigma}}{2}\normp{\vO_{\vsigma'}}{2}\R)\quad \text{if $\vsigma\ne \vsigma'$}. 
    \end{align}
    Now, we control the difference in 1-norm for any PSD input
\begin{align}
     \lnormp{ (\CN_{2k,\vG}-\CN_{2k,\vU})[\vrho]}{1} &\le \sum_{\vsigma,\vsigma'}\labs{c_{\vsigma}c_{\vsigma'}}\lnormp{ (\CN_{2k,\vG}-\CN_{2k,\vU})[\frac{\vO_{\vsigma}}{\normp{\vO_{\vsigma}}{2}}\frac{\vO^{\dagger}_{\vsigma'}}{\normp{\vO_{\vsigma'}}{2}}]}{1}\\
     &\le \L(\sum_{\vsigma,\vsigma'}\labs{c_{\vsigma}c_{\vsigma'}}\R) \sup_{\vsigma} \frac{\lnormp{ (\CN_{2k,\vG}-\CN_{2k,\vU})[\vO_{\vsigma}}{1}}{\normp{\vO_{\vsigma}}{1}}\tag{Since $\vO_{\vsigma}\vO^{\dagger}_{\vsigma'}=\vO_{\vsigma\vsigma'^{-1}}$}.
\end{align}
    
    Therefore, it suffices to compare the limiting behavior between two maps on each individual diagrams:
    \begin{align}
        \lnormp{ (\CN_{2k,\vG}-\CN_{2k,\vU})[\vO_{\vsigma}]}{1} &= \normp{ (\CN_{2k,\vG}-1)[\vO_{\vsigma}]}{1}\tag{Haar random channel acts identity}\\
        &=\lnormp{ (\CN_{2k,\vG}-1)[\vO'_{\vsigma}]}{1} + o(1)\norm{\vO_{\vsigma}}_1\tag{$\norm{\vO'_{\vsigma}-\vO_{\vsigma}}_1=o(1)\norm{\vO_{\vsigma}}_1$ is subleading in $N$}\\
        &=\lnormp{ (\CN_{2k,\vG,\vG',\cdots,}-1)[\vO'_{\vsigma}]}{1}+o(1)\norm{\vO_{\vsigma}}_1\tag{Decouple distinct blocks}\\
        &=\lnormp{ (\CN_{2k,\vG,\vG',\cdots,}-1)[\vO_{\vsigma}]}{1}+o(1)\norm{\vO_{\vsigma}}_1\tag{Restoring $\vO_{\vsigma}$}\\
        &=o(1)\norm{\vO_{\vsigma}}_1\tag{Since $\BE[\vG\cdot\vI\cdot\vG^{\dagger}] = \vI$}.
    \end{align}
    The second line makes the indices distinct at the $\frac{1}{N}$ cost in 1-norm 
    \begin{align}
      \vO_{\vsigma} = \sum_{\Pi\ge \vsigma} \vO'_{\Pi}\quad \text{such that}\quad \norm{ \vO_{\vsigma} -\vO'_{\vsigma}}_1 \le \frac{f(k)}{N} \norm{ \vO_{\vsigma}}_1
    \end{align}
    for some combinatorial factors $f(k)$.
    (Since each $\vO'_{\Pi}$ have all blocks propagating if $\Pi\ge \vsigma$, the 1-norm is $\norm{\vO'_{\Pi}}_1= M'_{\Pi}$, which is at least $\CO(\frac{1}{N})$ smaller than $M'_{\vsigma}$.) The third line is a decoupling argument: distinct rows of Ginibre matrices are independent. Therefore, the distinct blocks of $\vsigma$ (which are the $k$ wires connecting the bottom to the top) are effectively acted by \textit{independent} pairs of $\vG$s
    \begin{align}
        \CN_{2k,\vG}[\vO'_{\vsigma}] = \CN_{2k,\vG,\vG',\cdots,}[\vO'_{\vsigma}]
    \end{align}
where the non-CP map $\CN_{2k,\vG,\vG',\cdots,}$ depends on the blocks of $\vsigma$. The fourth line removes the distinctness constraint at error $\norm{\CN_{2k,\vG,\vG',\cdots,}[\vO_{\vsigma}-\vO'_{\vsigma}]}_{1}\le \norm{\vO_{\vsigma}-\vO'_{\vsigma}}_{1}$ (using asymptotic CPTP property of $\CN_{2k,\vG,\vG',\cdots,}$~\autoref{lem:freeness_random_P_NI}).
To conclude the proof for the diamond norm, consider arbitrary inputs with positive and negative parts, recall that the prefactor $
\sum_{\vsigma,\vsigma'}\labs{c_{\vsigma}c_{\vsigma'}}$ is finite in the large-$N$ limit, and uses that the ancilla dimension for the diamond norm is also independent of $N$ (similar to~\autoref{lem:small_ancilla}).
\end{proof}

\section{\texorpdfstring{Unitary $k$-designs from $k'$-wise permutations}{Unitary k-designs from k'-wise permutations}}\label{sec:optimal_k_design}

Our construction of random unitaries from random permutations can recover efficient unitary $k$-designs in the diamond distance by replacing the truly random phased permutations with $\tCO(k)$-wise independent phased permutations. The goal of this section is to prove this argument as the detailed version of~\autoref{cor:k_design_from_k_wise}.

\begin{cor}[Unitary $k$-designs from $k'$-wise independence]\label{cor:k_design_from_k_wise_full}
There is a constant $c$ such that for $k \le 2^{cn}$, an $\epsilon$-approximate quantum unitary $k$-design can be implemented
using 
\begin{align}
\CO(\log^2(k/\epsilon))&\quad \text{one and two-qubit gates and }\\
\CO(\log^2(k/\epsilon))&\quad \text{controlled-queries}
\end{align}
to $\CO(\log(k/\epsilon))$-many independent samples of $\CO(\epsilon N^{-4k'})$-approximate $k'$-wise random permutations and their inverses, and to the $\CO(\log(k/\epsilon))$ independent samples of exact $2k'$-wise random functions $[N]\rightarrow [2k'+1]$ for some almost linear
\begin{align}
    k' = \CO(k\log(k/\epsilon)).
\end{align}
\end{cor}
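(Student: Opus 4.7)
The plan is to instantiate the main technical result (\autoref{thm:parallelPRU}) with constant $m$ and logarithmic $\ell$, and then replace the truly random phased permutations by their efficient pseudorandom counterparts. First I would set $m=2$ and $\ell=\Theta(\log(k/\epsilon))$ so that $k^4/m^\ell \le \epsilon/2$. Assuming $k\le 2^{cn}$ for a sufficiently small absolute constant $c>0$, the first term of the diamond-norm bound in \autoref{thm:parallelPRU}, namely $\CO(k^8\ell^8(m^4 n^8+n^8)/N)$, is exponentially small in $n$ and so also at most $\epsilon/2$. Each exponential $\e^{\ri\theta_m \vA_m^{(j)}}$ is then implemented to precision $\epsilon'=\epsilon/\poly(k/\epsilon)$ via \autoref{prop:cost_exponential}; with constant $m$ this costs only $\CO(\log(k/\epsilon))$ one and two-qubit gates and $\CO(\log(k/\epsilon))$ controlled queries to a SELECT unitary over the $m$ phased permutations in that block. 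Multiplying by $\ell$ blocks and adding the two symmetrizing permutations $\vZ_L,\vZ_R$ yields the claimed total complexities of $\CO(\log^2(k/\epsilon))$ one and two-qubit gates and $\CO(\log^2(k/\epsilon))$ controlled queries, with $\CO(\log(k/\epsilon))$ independent samples of phased permutations used in total.

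Next I would trace through the proof of \autoref{thm:parallelPRU} to locate every use of the randomness in each $\vZ_a=\vD_z\vS$. By \autoref{lem:f1_truncation} the exponentials may be truncated at Taylor degree $d_1=\CO(\log(k/\epsilon))$, after which the argument manipulates a fixed polynomial expression in finitely many $\vZ_a$'s. Counting occurrences of any single $\vZ_a$ across the $k$-fold tensor channel, the underlying permutation $\vS_a$ is queried at most $\CO(k d_1)=\CO(k\log(k/\epsilon))$ times in total, and the diagonal phase $\vD_z$ contributes moments of total degree at most twice that. Setting $k'=\CO(k\log(k/\epsilon))$, the entire analysis depends on each $\vS_a$ only through its $k'$-wise marginals and on each $\vD_z$ only through its $2k'$-th moments. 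For the phases, substituting $z_i=\e^{2\pi\ri f(i)/(2k'+1)}$ for an exact $2k'$-wise independent $f:[N]\to[2k'+1]$ reproduces the Haar $\vU(1)$ moments up to order $2k'$ exactly, so this substitution is entirely error-free at the level of the proof.

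For the permutation replacement I would invoke the known efficient classical constructions of approximate $k'$-wise independent permutations with inverses \cite{Kassabov2007Sym,caprace2023tame}, together with a hybrid argument over the $\CO(\ell)$ independent samples used in the construction. The only quantitative question is what accuracy $\delta$ in the $k'$-wise marginals suffices. Tracking the proof, the relevant scalar is $\tr[\vA_{N'}\CN_{\vec{j},\{\vW_i\}}[\vrho_{N'}]]$ from \autoref{lem:extensions}, which through the diagrammatic expansion of \autoref{lem:ONO_polyN} and the pole-multiplicity bounds of \autoref{lem:degree_projector} can be written as a sum of $\poly(k')$ diagram contractions with coefficient magnitudes at most $N^{\CO(k')}$. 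Replacing one truly random permutation by a $\delta$-approximate $k'$-wise independent one therefore shifts each contraction by at most $\delta\cdot N^{\CO(k')}$, so choosing $\delta=\CO(\epsilon N^{-4k'})$ keeps the total perturbation across the $\CO(\ell)$ hybrid steps bounded by $\epsilon/2$. Combined with the $\epsilon/2$ approximation from \autoref{thm:parallelPRU}, this yields the claimed $\epsilon$-approximate $k$-design.

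The main obstacle I expect is the precise bookkeeping behind the exponent $4k'$ in the required accuracy $\CO(\epsilon N^{-4k'})$. Pinning this down forces one to carefully combine the pole-multiplicity counts from \autoref{lem:degree_projector} and \autoref{lem:extensions} with the orthonormal basis of \autoref{thm:ourExplicitBasis}, so as to ensure that tiny deviations in the $k'$-wise marginals are not magnified uncontrollably through the $\poly(k',n)$ diagram contractions that appear when the distinguishing probability is expanded. A coarser worst-case bound would give a larger exponent, but matching the stated $4k'$ requires exploiting the partition-algebra cancellations from \autoref{sec:basis_parti} in the hybrid step where a single truly random permutation is swapped for its pseudorandom analog.
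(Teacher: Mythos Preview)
Your parameter choices ($m=2$, $\ell=\Theta(\log(k/\epsilon))$), the implementation-cost accounting via \autoref{prop:cost_exponential}, and the determination of $k'=\CO(k\log(k/\epsilon))$ from the per-exponential query count all match the paper's proof. The phase replacement by discrete roots of unity indexed by an exact $2k'$-wise independent function is also essentially what the paper does (\autoref{sec:optimal_k_design}, Lemma~C.2).

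The real divergence is in your third and fourth paragraphs. To pass from \emph{approximate} to \emph{exact} $k'$-wise independent permutations, the paper does not track perturbations through the diagram expansion or the partition-algebra basis at all. Instead it invokes a black-box result of Alon and Lovett (\autoref{thm:approx_to_exact}, \cite{Alon2012Almostkwise}): any $\delta$-approximate $k'$-wise independent permutation distribution is within total-variation distance $\CO(\delta N^{4k'})$ of some \emph{exact} $k'$-wise independent permutation distribution. Since the whole construction queries each permutation at most $k'$ times, an exact $k'$-wise independent permutation is indistinguishable from truly random (including for inverse queries, via $\vS^{-1}=\vS^T$ and linearity of partial transpose, eq.~(C.5)--(C.6)), and the $\CO(\delta N^{4k'})$ total-variation error passes directly to the distinguishing probability by data processing. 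This is where the exponent $4k'$ in the accuracy requirement $\delta=\CO(\epsilon N^{-4k'})$ comes from---it is the constant in Alon--Lovett, not something extracted from the pole multiplicities of \autoref{lem:degree_projector} or \autoref{lem:extensions}.

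So the obstacle you flag in your last paragraph is an artifact of the route you chose: tracking $\delta$ through the interpolation proof would require controlling how small marginal errors propagate through rational functions with $\poly(k')$ poles, which is delicate. The paper sidesteps this entirely by lifting the marginal closeness to closeness of the full permutation distributions before ever entering the proof of \autoref{thm:parallelPRU}.
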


The first step of the argument is to replace the truly random phases in each term of the product with $2k'$-wise independent phases.
If the classical $2k'$-wise independent phased permutations are exact, then the substitution argument is immediate (even with inverses;~\eqref{eq:SST}). Indeed, this can be done exactly for the phases by replacing them with efficiently implementable, exact $2k'$-wise independent phases.
\begin{lem}[$2k'$-wise independent phases]
    For any integer $0<k'<2^{cn}$, consider the set of $ N$ evenly spread points on the unit circle
    \begin{align}
        S_{N} := \{ 1, \e^{\ri \frac{1}{N}2\pi},\e^{\ri \frac{2}{N}2\pi},\cdots\}.
    \end{align}
    Then, a $2k'$-wise independent function $f: [N] \rightarrow S_{N}$ acting on the diagonal matches exactly the $2k'$ (mixed) moments of diagonal random phases 
    \begin{align}
        \BE[ \vD_{f}^{(\dagger)2k'} ] = \BE[ \vD_{z}^{(\dagger)2k'} ].
    \end{align}
    Further, $\vD_f$ can be implemented using $\tCO(nk')$ gates.
\end{lem}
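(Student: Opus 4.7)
The claim has two parts: exact moment matching and efficient implementation. For moment matching, the plan is to use that for each index $i$, a single discrete phase $f(i)$ uniform on $S_N$ satisfies
\begin{align}
\BE[f(i)^{a}] = \frac{1}{N}\sum_{m=0}^{N-1}\e^{\ri 2\pi m a/N} = \indicator(N\mid a),
\end{align}
while the continuous Haar phase $z_i\stackrel{i.i.d.}{\sim}\vU(1)$ gives $\BE[z_i^{a}] = \indicator(a=0)$. These two indicators coincide whenever $|a|<N$. Now $\BE[\vD_f^{(\dagger)2k'}]$ is diagonal, and its entry at index $(i_1,\dots,i_{2k'})$ with sign pattern $\epsilon\in\{\pm 1\}^{2k'}$ is $\BE[\prod_{j=1}^{2k'} f(i_j)^{\epsilon_j}]$. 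Grouping by distinct values gives net exponents $a_i=\sum_{j:i_j=i}\epsilon_j$ with $\sum_i |a_i|\le 2k'$; in particular each $|a_i|\le 2k' < N$ once $k' \le 2^{cn}$ with $c<1$. By $2k'$-wise independence, the expectation factorizes over the at most $2k'$ distinct indices appearing, so
\begin{align}
\BE\prod_i f(i)^{a_i} = \prod_i \BE[f(i)^{a_i}] = \prod_i \indicator(a_i=0) = \BE\prod_i z_i^{a_i},
\end{align}
which is exactly the analogous entry of $\BE[\vD_z^{(\dagger)2k'}]$. This proves $\BE[\vD_f^{(\dagger)2k'}]=\BE[\vD_z^{(\dagger)2k'}]$.

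For efficient implementation, I will invoke the standard polynomial construction of $k$-wise independent functions. Identify $[N]$ with $\mathbb{F}_{2^n}$ via binary expansion and define
\begin{align}
f(x) := \sum_{j=0}^{2k'-1} a_j\, x^j \in \mathbb{F}_{2^n},
\end{align}
with coefficients $a_0,\dots,a_{2k'-1}$ drawn i.i.d. uniformly from $\mathbb{F}_{2^n}$; the resulting function is exactly $2k'$-wise independent as a map $[N]\to[N]$, and composing with the bijection $y\mapsto\e^{\ri 2\pi y/N}$ gives a $2k'$-wise independent function into $S_N$ (bijections preserve $k'$-wise independence).

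The unitary $\vD_f$ is implemented in the standard compute-phase-uncompute pattern. First, given the classical seed $(a_j)$, evaluate $f(x)$ into an $n$-qubit ancilla using Horner's rule with $2k'-1$ multiplications and additions in $\mathbb{F}_{2^n}$; each field operation costs $\tCO(n)$ gates, so this step costs $\tCO(nk')$. Second, apply the diagonal phase $\ket{y}\mapsto \e^{\ri 2\pi y/N}\ket{y}$ by decomposing $y=\sum_{j=0}^{n-1} y_j 2^j$ and performing a product of $n$ single-qubit phase rotations of angles $2\pi\cdot 2^j/N$ controlled on the ancilla bits, for a cost of $\CO(n)$. Third, uncompute the ancilla at the same $\tCO(nk')$ cost. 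The grand total is $\tCO(nk')$ one- and two-qubit gates, as claimed. No obstacle is anticipated: the only delicate point is the mild condition $2k'<N$, which is ensured by the hypothesis $k'\le 2^{cn}$ with $c<1$.
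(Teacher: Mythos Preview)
Your proposal is correct and follows essentially the same approach as the paper: both arguments reduce moment matching to the single-variable identity $\BE_{S_N}[z^p]=\BE_{\vU(1)}[z^p]$ for $|p|\le 2k'<N$ via $2k'$-wise independence, and both implement $\vD_f$ by the compute--phase--uncompute pattern using a random degree-$(2k'-1)$ polynomial over $\mathbb{F}_{2^n}$ followed by a tower of controlled single-qubit rotations. If anything, your write-up is slightly more explicit (Horner's rule, the bitwise phase decomposition) than the paper's sketch.
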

\begin{proof}
    First, we use $2k'$-wise independence to replace $f$ with truly random functions on the diagonal. Now that each of the entries is decoupled, and we merely require the discrete points on the unit circle to reproduce the moments,  
    \begin{align}
        \BE_{S_{N}}[z^{p}] =  \BE_{\vU(1)}[z^{p}] = 0 \quad \text{for each integer}\quad \labs{p} \le 2k'
    \end{align}
    which can be seen from the properties of the discrete Fourier transform. The implementation cost has two parts: first, the cost of classically evaluating a $k'$-wise independent function $f:[N] \rightarrow [N]$, which is merely $\tilde{O}(nk')$. This is obtained using the classic construction using a random degree $k'-1$ polynomial over the field $\mathbb{F}_{2^n}$ ~\cite{Joffe, wegman1981new,ALON1986fastsimple,Alon2012Almostkwise}).
    Second, is the cost of then applying these values in $[N]$ as phases in $S_N$, which can be done by the usual controlled tower of geometrically decreasing rotations at cost $O(n)$.
    Third, we then uncompute the function $f$. 
    The net cost is $\tilde{O}(nk')$.
\end{proof}

The second step of the argument is to replace the truly random permutations with (approximate) $2k'$-wise independent permutations.
While the independence of the phases can be exact, the best known efficient classical $k$-wise independent permutation for large values of $k$ is approximate. Thus, we have to demand the classical $k$-wise independent permutation to achieve very high precision at very low costs to compensate for the distance conversion rates against quantum queries.
\begin{defn}[Classical $\epsilon$-approximate $k$-wise independent permutations]\label{defn:classical_kwise}
    We say an ensemble of permutations gives a classical $\epsilon$-approximate $k$-wise independent permutations if
    \begin{align}
        \lnormp{\BE_{\vS'} \vS^{'k}\ket{\vec{i}} - \BE_{\vS} \vS^{k}\ket{\vec{i}}}{1} \le \epsilon \quad \text{for each}\quad \vec{i}\in [1,N]^k.\label{eq:Sk_eps}
    \end{align}
\end{defn}
In the above, we defined the vector $\ell_1$ norm
\begin{align}
        \lnormp{ \sum_{\vec{i}} c_{\vec{i}}\ket{\vec{i}}}{1} := \sum_{\vec{i}} \labs{c_{\vec{i}}}
\end{align}
which captures the statistical distance when applied to the difference of two probability distributions.

In our use cases, the above classical statistical notion of distance is not immediately applicable because (1) the quantum inputs can be entangled and (2) our implementation of the unitary makes adaptive queries to both $\vS$ and $\vS^{-1}$. Fortunately, the above complication can be circumvented entirely by a powerful black-box result that allows us to effectively use $2k$-wise independent permutations as \textit{exact} in other parts of the paper, provided the precision is high enough.
\begin{thm}[{\cite[Theorem 1.1]{Alon2012Almostkwise}}]\label{thm:approx_to_exact}
    Suppose there exists an $\epsilon$-approximate classical $k$-wise independent permutation $\vS_{approx}$. Then, there exists an exact classical $k$-wise independent permutation $\vS_{exact}$ that is $\CO(\epsilon\cdot N^{4k})$-statistically close on the distribution of permutations.
\end{thm}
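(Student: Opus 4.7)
The plan is a convex-rounding argument. The set $\mathcal{K}_k \subset \mathbb{R}^{N!}$ of exactly $k$-wise independent distributions on $\vS(N)$ is a convex polytope, cut out from the probability simplex by the $M \le (N)_k^2 \le N^{2k}$ linear equalities $\Pr[\vS(\vec{i}) = \vec{j}] = 1/(N)_k$, one for each ordered pair of $k$-tuples of distinct entries. The uniform distribution $U_{\vS(N)}$ lies in $\mathcal{K}_k$. An $\epsilon$-approximate $k$-wise independent distribution $\mu_{approx}$ satisfies~\eqref{eq:Sk_eps}, i.e., violates each marginal equality by at most $\epsilon$ in total variation. The goal is to exhibit an exactly $k$-wise independent $\mu_{exact} \in \mathcal{K}_k$ with $\lnormp{\mu_{exact} - \mu_{approx}}{1} = \CO(\epsilon N^{4k})$.

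I would proceed in two steps. First, project $\mu_{approx}$ onto the affine hull of $\mathcal{K}_k$: find a signed measure $\tilde\mu$ satisfying all marginal equalities exactly, minimizing $\lnormp{\tilde\mu - \mu_{approx}}{1}$. By an LP-duality argument applied to the constraint matrix (whose dual is controlled by the number of marginals), this minimum is at most the total constraint violation, yielding $\lnormp{\tilde\mu - \mu_{approx}}{1} = \CO(\epsilon N^{2k})$. Second, restore nonnegativity. The measure $\tilde\mu$ already satisfies the marginals exactly but may assign negative mass to some atoms. Pick a reference distribution $\mu^* \in \mathcal{K}_k$ with well-understood per-atom mass, and set $\mu_{exact} = (1-\lambda)\tilde\mu + \lambda \mu^*$, tuning $\lambda$ just large enough to cancel the largest negative atom of $\tilde\mu$. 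Because affine combinations of elements of the affine hull of $\mathcal{K}_k$ stay in the affine hull, and the result is nonnegative with total mass one, $\mu_{exact}$ is a genuine element of $\mathcal{K}_k$.

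The hard part is Step~2: controlling the mixing weight $\lambda$ so that the total TV cost of the mixing is only a further polynomial factor of $N^{2k}$, rather than blowing up like $N!$. Naively, if the negativity of $\tilde\mu$ concentrates on a single atom and one mixes with $\mu^* = U_{\vS(N)}$, whose per-atom mass is only $1/N!$, then $\lambda$ must be chosen exponentially large, ruining the bound. The remedy is to use for $\mu^*$ a $k$-wise independent distribution supported on at most $\mathrm{poly}(N^k)$ atoms (such distributions exist from prior explicit constructions, or alternatively can be produced by iterating the same rounding argument), so that the per-atom mass of $\mu^*$ is large compared to the worst per-atom negativity of $\tilde\mu$. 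Combined with a combinatorial bound showing that the negativity of $\tilde\mu$ cannot concentrate too badly (via the structure of the constraint dual space on $\vS(N)$), this gives $\lambda = \CO(\epsilon N^{2k})$ and a total bound of $\CO(\epsilon N^{4k})$. The interplay between the projection error and the nonnegativity restoration --- each contributing a factor of $N^{2k}$ --- is precisely the technical core of the Alon--Lovett argument and accounts for the quadratic blowup in the exponent.
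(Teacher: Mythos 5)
You should first note a point of context: the paper does not prove this statement at all --- it is imported verbatim as Theorem~1.1 of Alon--Lovett \cite{Alon2012Almostkwise} and used as a black box, so there is no internal proof to compare against; your proposal is effectively an attempt to reprove Alon--Lovett. Your convex-geometry skeleton (project onto the affine hull of the exactly $k$-wise independent polytope, then restore nonnegativity by mixing) is the right shape, but both steps have genuine gaps as written. In Step~1, ``by an LP-duality argument \ldots this minimum is at most the total constraint violation'' is not an argument: the $\ell_1$ distance from a point that approximately satisfies a system of linear equalities to the affine subspace they define is governed by the conditioning of the constraint system (equivalently, by exhibiting a well-behaved preimage of the marginal error under the marginal map), not by the number of constraints. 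This is exactly where the real work lies. The Alon--Lovett route is to exploit the transitivity of the $\vS(N)$-action on $k$-tuples of distinct elements: each needed marginal adjustment is spread uniformly over the coset of permutations realizing that adjustment, which gives an explicit signed correction whose $\ell_1$ norm is $\epsilon\cdot\poly(N^k)$ and --- crucially for the next step --- whose $\ell_\infty$ norm is $\epsilon\cdot\poly(N^k)/N!$.

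In Step~2 your remedy fails as stated: mixing with an exactly $k$-wise independent $\mu^*$ supported on only $\poly(N^k)$ atoms cannot cancel negative mass sitting at atoms outside the support of $\mu^*$, and nothing in your construction confines the negativity of $\tilde\mu$ to that support (a basic LP solution places the correction on essentially arbitrary atoms). Moreover, the existence of exact $k$-wise independent permutation families of size $\poly(N^k)$ is itself a much deeper, non-constructive result (Kuperberg--Lovett--Peled), and ``iterating the same rounding argument'' to produce $\mu^*$ is circular. The working fix is the opposite of what you propose: make the \emph{correction} spread out, so that the per-atom negativity of $\tilde\mu$ is $O(\epsilon\,\poly(N^k)/N!)$, and then mix with the uniform distribution on all of $\vS(N)$, whose full support and per-atom mass $1/N!$ allow a mixing weight $\lambda=O(\epsilon\,\poly(N^k))$ to restore nonnegativity while preserving the exact marginals; this is what yields the $\CO(\epsilon\, N^{4k})$ closeness claimed in the theorem.
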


This Theorem of \cite{Alon2012Almostkwise} is very powerful, as it allows one to transfer the error $\epsilon$ in the $k$-design to an error in the distinguishing probability of the algorithm. This is by a simple hybrid argument: it's easy to see that for any distributions $D,D'$ on inputs, the probability a quantum (or classical) algorithm accepts on a random input from $D$ vs $D'$ is upper bounded by twice their total variation distance\footnote{This is simply because the probability the quantum algorithm accepts is a bounded function in $[0,1]$ over the domain -- so a change of the input distribution by $\epsilon$ can change the value of the expectation by at most $2\epsilon$. }. So given a quantum algorithm $\vA$ trying to distinguish our ensemble (built from $\epsilon$-approximate $k$-wise independent permutations) from Haar, substituting in our ensemble with the same one built with $\vS_{exact}$ at most changes the acceptance probability\footnote{Here one must invoke data processing inequality as well to say the ``lifted'' ensembles on unitaries are also close in total variation distance.} (diamond norm value) by $\epsilon N^{4k}$. As we will discuss shortly, $\epsilon$ can be set to be exponentially small which makes this change negligible.

\begin{proof}[Proof of~\autoref{cor:k_design_from_k_wise}]
Recall that our construction
\begin{align}
 \vV:=\vZ_L \L( \prod_{j=1}^{\ell} \e^{\ri \theta_m \vA^{(j)}_m} \R)\vZ_R,
\end{align}
    can be implemented to precision $\CO(\epsilon/k)$ using $2+m\ell$-many random samples of $k'$-wise independent permutations and their inverses (\autoref{prop:cost_exponential}) with
\begin{align}
    k' = \CO\L( k\cdot \underset{\text{queries per $\e^{\ri \theta_m\vA_m}$}}{\underbrace{(\sqrt{m} +\log(k\ell/\epsilon))}} \R).
\end{align}
In particular, we may choose (assuming $k\le 2^{cn}$), 
\begin{align}
    m = 2,\quad \ell = \CO( \log(k/\epsilon)) \quad \text{such that}\quad \normp{\CN_{2k,\vV}-  \CN_{2k,\vU_{Haar}}}{\diamond}= \CO\L(\frac{k^8\ell^8(m^4n^8+ n^{8})}{N}+\frac{k^4}{m^{\ell}}\R)\le \CO(\epsilon).
\end{align}

If the classical $k$-wise independent permutations are exact, then the claim follows even in the presence of inverses $\vS^{-1}$ since the partial transpose is linear (and that $\vS^{-1}=\vS^{T}$)
\begin{align}
    \BE_{\vS}[\vS^{\otimes k}] &= \BE_{\vS'}[\vS^{'\otimes k}]\\
    \text{implies}\quad \BE_{\vS}[\vS^{\otimes k_1}\otimes (\vS^{T})^{\otimes (k- k_1)}] &= \BE_{\vS'}[\vS^{'\otimes k_1}\otimes (\vS^{'T})^{\otimes (k- k_1)}].\label{eq:SST}
\end{align}
Now, by~\autoref{thm:approx_to_exact}, we may simply replace the truly random permutation $\vS$ with the exact $k$-wise independent permutations while paying an additive error $\epsilon$ in the distinguishing probability between $\vS_{exact}$ and $\vS_{approx}$. 
\end{proof}
Lastly, we invoke the highly nontrivial construction of efficient $k$-wise independent permutations~\cite{Kassabov2007Sym,caprace2023tame} as a black box. Generators that give Cayley graph expanders for the symmetry group (and others) have been known~\cite{Kassabov2007Sym,caprace2023tame,breuillard2011suzuki}. These explicit constructions~\cite{Kassabov2007Sym,caprace2023tame} can also be efficiently evaluated (given $n$-bit inputs)~\cite{explicit_circuit2024}. For our purposes, the recent elegant construction (\cite[Theorem 1.5]{caprace2023tame})\footnote{We thank Martin Kassabov for this pointing us to this reference.} can be efficiently computed classically using finite field arithmetics on large primes, which in turn gives the following classical circuit complexity on $k$-wise independent permutations on $2^n$ many elements.

\begin{thm}[Efficient classical approximate $k$-wise independent permutations{~\cite{Kassabov2007Sym,caprace2023tame,explicit_circuit2024}}]\label{thm:efficient_k_wise}

There exist classical $\epsilon$-approximate $k$-wise independent permutations on $2^n$ elements with time complexity 
\begin{align}
    \CO\bigg(\poly(n) \L(nk+ \log(\frac{1}{\epsilon})\R)\bigg).
\end{align}
\end{thm}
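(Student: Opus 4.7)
The plan is to combine three ingredients: (i) an explicit expanding generating set for $\vS(2^n)$, (ii) a standard expander-mixing argument on the $k$-tuple action, and (iii) an efficient circuit for evaluating a single generator on an $n$-bit input.

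For (i), I would invoke the construction of Caprace--Le Boudec~\cite{caprace2023tame} (or alternatively Kassabov~\cite{Kassabov2007Sym}), which produces an $\CO(1)$-size set of generators $\{g_1,\dots,g_s\} \subset \vS(2^n)$ such that the Cayley graph of $\vS(2^n)$ with respect to these generators is a two-sided expander with a spectral gap bounded below by an absolute constant, uniformly in $n$. The key point is that the generators are \emph{explicit}: they can be described by finite-field arithmetic on a prime field of size $\Theta(2^n)$, so a single application $i \mapsto g_j(i)$ for $i \in [2^n]$ costs only $\poly(n)$ bit operations.

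For (ii), I would lift the spectral gap on $\vS(2^n)$ to a spectral gap for the natural action on ordered $k$-tuples of distinct elements of $[2^n]$. Since the action on ordered $k$-tuples is a subrepresentation of the left regular representation, a constant spectral gap on the Cayley graph induces a spectral gap (also bounded below by an absolute constant, independent of $n$ and $k$, after standard projection onto the orthogonal complement of the uniform eigenvector) on the $k$-tuple walk. A standard random-walk mixing argument then shows that a walk of length $L$ drives the induced distribution on $k$-tuples to within total variation distance $\epsilon$ of uniform, provided
\begin{align}
    L = \CO\!\left( \log\!\Big( \frac{|[2^n]^{(k)}|}{\epsilon} \Big) \right) = \CO\!\left( nk + \log\tfrac{1}{\epsilon} \right),
\end{align}
where $[2^n]^{(k)}$ denotes the set of ordered $k$-tuples of distinct elements. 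Closeness in total variation on $k$-tuples is exactly the definition of a classical $\epsilon$-approximate $k$-wise independent permutation (\autoref{defn:classical_kwise}).

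For (iii), the overall construction samples a uniformly random word of length $L$ in the generators and outputs the corresponding permutation (stored implicitly as a circuit rather than as a full permutation table). Evaluating this permutation (or its inverse) on a given input $i \in [2^n]$ requires composing $L$ generator applications, each of cost $\poly(n)$, for a total classical time of $\CO(\poly(n) \cdot L) = \CO(\poly(n)(nk + \log(1/\epsilon)))$, matching the advertised bound. The main subtlety is (ii): one must verify that the spectral gap is preserved (up to constants) on the $k$-tuple action uniformly in $k$, which is exactly where the ``two-sided expander for the full group'' property from~\cite{Kassabov2007Sym,caprace2023tame} is used rather than expansion only on a specific irrep. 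The efficient circuit evaluation of the generators is then the content of~\cite{explicit_circuit2024}.
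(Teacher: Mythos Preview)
Your proof sketch is correct and follows essentially the same approach the paper outlines. The paper does not give a self-contained proof of this theorem; it invokes the result as a black box, remarking only that the Cayley graph expanders of~\cite{Kassabov2007Sym,caprace2023tame} provide the needed spectral gap and that the generators from~\cite[Theorem 1.5]{caprace2023tame} can be efficiently evaluated via finite-field arithmetic on large primes~\cite{explicit_circuit2024}. Your sketch fills in exactly the argument the paper gestures at: constant spectral gap on $\vS(2^n)$ transfers to the $k$-tuple action (as a subrepresentation of the regular representation), so $L = \CO(nk + \log(1/\epsilon))$ steps suffice for $\epsilon$-mixing in total variation, and each step costs $\poly(n)$.
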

Therefore, we can take a small error $\epsilon = 1/2^{\Omega(nk)}$ to feed into~\autoref{thm:approx_to_exact} to obtain nearly exact $k$-wise independent permutations at the overall cost $\CO( k\cdot\poly\log(n))$.

\subsection{Bootstrapping nonadaptive designs to adaptive designs}

Throughout the paper, we have mostly discussed nonadaptive designs in the sense of the diamond norm. This can be bootstrapped to adaptive security at high enough precision (for example, by iterating the same random unitary many times~\cite[Lemma XI.7]{chen2024efficient}.).

\begin{lem}[From nonadaptive $1$-norm to adaptive queries]
\label{lem:1norm_to_adaptive}
Consider two unitary ensembles. The distinguishing probability for any quantum algorithm using $k$ adaptive queries can be controlled by trace distance for parallel distinguishability with additional dimensional factors:
\begin{align}
    \labs{\BE\CA(\CN_{2k,\vU}) -\BE\CA(\CN_{2k,\vV})}\le 4^k N^{4k}\cdot \normp{ \BE\CN_{2k,\vU} - \BE\CN_{2k,\vV}}{1-1}.
\end{align}
The above continues to hold if the unitaries at each query are different with $\vU_1,\cdots, \vU_k$ and $\vV_1,\cdots, \vV_k$ on both RHS and LHS.
\end{lem}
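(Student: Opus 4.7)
The plan is to reduce $k$ adaptive queries to $k$ parallel queries via gate teleportation, paying a dimensional factor at each step. First, I would construct, from any adaptive algorithm $\CA$, a nonadaptive algorithm $\CA'$ as follows: prepare $k$ fresh maximally entangled states $|\Phi\rangle^{\otimes k}$ on $2k$ new $N$-dimensional registers, and apply $\vU^{\otimes k}$ to one half of each Bell pair. This single parallel query — exactly one application of $\CN_{2k,\vU}$ on the relevant $k$ subsystems, with the remaining Bell halves playing the role of an ancilla — replaces all future adaptive queries to $\vU$. The simulation of $\CA$ then proceeds as usual, except that whenever $\CA$ would make its $j$-th adaptive query, one performs a Bell measurement between $\CA$'s current data register and the $j$-th ``pre-queried'' Bell half, post-selecting on the trivial outcome. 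This gate-teleports the current state through the already-applied $\vU$, reproducing exactly the state $\CA$ would have at that point. Each post-selection succeeds with probability $N^{-2}$, so post-selecting on all $k$ teleportations being trivial contributes a factor $N^{-2k}$.

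Consequently, there is a fixed input state $\rho_0$ (consisting of the Bell pairs and $\CA$'s initial workspace) and a fixed POVM element $\vM$ (encoding jointly $\CA$'s acceptance and the success of all $k$ post-selections) such that
\begin{align}
\BE \CA(\CN_{2k,\vU}) = N^{2k} \cdot \tr\bigl[\vM \, (\CN_{2k,\vU} \otimes \CI_{\mathrm{anc}})[\rho_0]\bigr],
\end{align}
with the analogous identity for $\vV$. Subtracting and bounding by the diamond norm gives
\begin{align}
\labs{\BE \CA(\CN_{2k,\vU}) - \BE \CA(\CN_{2k,\vV})} \le N^{2k} \cdot \normp{\BE \CN_{2k,\vU} - \BE \CN_{2k,\vV}}{\diamond}.
\end{align}
Next, I would convert the diamond norm to the $1$-$1$ norm via the standard dimensional inequality $\normp{\CN}{\diamond} \le d_{\mathrm{in}}^{2} \normp{\CN}{1-1}$. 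This follows by expanding any bipartite Hermitian operator as $X = \sum_{ij} X_{ij} \otimes |i\rangle\langle j|$ in an ancilla basis, using the contraction $\normp{X_{ij}}{1} \le \normp{X}{1}$, and summing over the $d_{\mathrm{in}}^{2}$ blocks by the triangle inequality. With $d_{\mathrm{in}} = N^k$, this contributes an additional $N^{2k}$, giving the advertised $N^{4k}$. The remaining $4^k$ prefactor absorbs multiplicative constants of order $O(1)$ per query arising from splitting Hermitian operators into positive and negative parts, and from the Pauli branches in each teleportation step (if one prefers to sum rather than post-select).

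The main technical care needed is the bookkeeping of forward ($\vU$) and inverse ($\vU^{-1}$) queries, possibly with controls, that the adaptive algorithm may make. Fortunately, $\CN_{2k,\vU}[\,\cdot\,] = \vU^{\otimes k}(\,\cdot\,)\vU^{\dagger\otimes k}$ naturally encodes both $\vU^{\otimes k}$ on the right and $\vU^{\dagger\otimes k}$ on the left, so forward queries can be teleported through one set of Bell halves and inverse queries through the complementary set, with no extra queries required. Controlled queries are handled analogously by teleporting through controlled Bell preparations. Finally, the extension to distinct unitaries at each query ($\vU_1,\ldots,\vU_k$ versus $\vV_1,\ldots,\vV_k$) is immediate: prepare $k$ Bell pairs and apply $\vU_j$ to the $j$-th in parallel, leaving the teleportation scheme otherwise unchanged.
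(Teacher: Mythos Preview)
Your approach is correct but genuinely different from the paper's. The paper expands the adaptive acceptance probability directly as a linear functional in the entries of the channel, writing
\[
\CA(\CN_{2k,\vU})-\CA(\CN_{2k,\vV})=\sum_{\vec{i'}\vec{i}\vec{j}\vec{j'}}\alpha_{\vec{i'}\vec{i}\vec{j}\vec{j'}}\,\bra{\vec{i'}}\delta\CN[\ket{\vec{i}}\bra{\vec{j}}]\ket{\vec{j'}},
\]
and then bounds each coefficient $\labs{\alpha_{\vec{i'}\vec{i}\vec{j}\vec{j'}}}\le 4^k$ via the polarization identity (\autoref{lem:polarize}), which rewrites $\ket{i'}\bra{i}\otimes\ket{j}\bra{j'}$ as a signed average of CP maps with Kraus operators bounded by $2\vI$ on each tensor factor; summing over the $N^{4k}$ entries gives the stated bound. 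Your gate-teleportation route is more structural: the $N^{2k}$ post-selection overhead together with the $N^{2k}$ diamond-to-$(1\!\to\!1)$ conversion reproduce the $N^{4k}$, and in fact your argument does not actually need the $4^{k}$ factor at all---the bound you obtain is slightly sharper, and your invocation of $4^{k}$ to ``absorb constants'' is unnecessary. One quibble: your claim that inverse queries can be handled because $\vU^{\dagger}$ appears on the left of $\vU^{\otimes k}(\cdot)\vU^{\dagger\otimes k}$ is not correct---the Choi state of $\vU$ lets you teleport through $\vU$ or $\vU^{T}$, not $\vU^{-1}$---but this is moot since the lemma (and the paper's own proof) concern forward queries only.
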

\begin{proof}
    Consider the super operator entries of the channel
\begin{align}
   \CA(\CN_{2k,\vU}) -\CA(\CN_{2k,\vV}) = \sum_{ \vec{i'}\vec{i}\vec{j}\vec{i'}} \alpha_{\vec{i'}\vec{i}\vec{j}\vec{i'}}  \bra{\vec{i'}} \delta\CN[ \ket{\vec{i}}\bra{\vec{j}} ]\ket{\vec{j'}}. 
\end{align}
for $\vec{j} = (j_1,\cdots,j_k) \in [N]^{k}$. Expecting dimensional factors, we will content ourselves with bounding the entry-wise coefficients $\alpha_{\vec{i'}\vec{i}\vec{j}\vec{i'}}$. To do so,
consider a polarization argument (\autoref{lem:polarize})
for each system ($i,i',j,j'\in [1,N]$) that turns it into a weighted sum over CP maps
\begin{align}
\ket{i'}\bra{i}\otimes \ket{j}\bra{j'} = \BE_{z=\pm1,\pm \ri} \L[ z\bigg(\ket{i'}\bra{i} +z\ket{j'}\bra{j}\bigg)\otimes \bigg(\ket{i'}\bra{i} +z\ket{j'}\bra{j}\bigg)^{\dagger}\R] .
\end{align}
In particular, the Kraus operator is bounded as
\begin{align}
    \bigg(\ket{i'}\bra{i} +z\ket{j'}\bra{j}\bigg)^{\dagger}\bigg(\ket{i'}\bra{i} +z\ket{j'}\bra{j}\bigg) \le 4\vI.
\end{align}

Thus, the Kraus operator tensored across systems
\begin{align}
    \vO_{\vec{z}}:=\bigg(\ket{i_1'}\bra{i_1} +z_1\ket{j'_1}\bra{j_1}\bigg)\otimes \cdots \otimes \bigg(\ket{i_k'}\bra{i_k} +z_k\ket{j'_k}\bra{j_k}\bigg)
\end{align}
and the corresponding CP map (with suitable normalization) 
\begin{align}
    \CN_{\vec{z}}: = \frac{1}{4^k}\vO_{\vec{z}}[\cdot]\vO_{\vec{z}}^{\dagger}  
\end{align}
define a trace-non-increasing map since $\CN_{\vec{z}}^{\dagger}[\vI] \le \frac{1}{4^k} (4 \vI)^{\otimes k} \le \vI$,  and thus $\CA(\CN_{\vec{z}})$ is a ``valid experiment,'' with some possibility of failure. Thus $\labs{\CA(\CN_{\vec{z}})}\le 1$. 

Combining the above, we obtain the entry-wise bound
\begin{align}
    \labs{\alpha_{\vec{i'}\vec{i}\vec{j}\vec{i'}}} = \labs{ 4^{k}\BE_{\vec{z}} z_1\cdots z_{k}\CA(\vO_{\vec{z}})}
    \le 4^{k}\BE_{\vec{z}}\labs{ \CA(\vO_{\vec{z}})}\le 4^k.
\end{align}
Altogether,
\begin{align}
    \labs{\CA(\CN_{2k,\vU}) -\CA(\CN_{2k,\vV})} &= \labs{ \sum_{ \vec{i'}\vec{i}\vec{j}\vec{i'}} \alpha_{\vec{i'}\vec{i}\vec{j}\vec{i'}}  \bra{\vec{i'}} \delta\CN[ \ket{\vec{i}}\bra{\vec{j}} ]\ket{\vec{j'}}}\\
    &\le \sum_{ \vec{i'}\vec{i}\vec{j}\vec{i'}} \abs{\alpha_{\vec{i'}\vec{i}\vec{j}\vec{i'}}}\normp{\delta\CN}{1-1}\\
    &\le 4^k N^{4k} \normp{\BE\CN_{2k,\vU} - \BE\CN_{2k,\vV}}{1-1}
\end{align}
as advertised. In fact, the identical argument holds if the unitaries at each query are different with $\vU_1,\cdots, \vU_k$ and $\vV_1,\cdots, \vV_k$ on both RHS and LHS.
\end{proof}

\bibliographystyle{amsalpha}
\bibliography{ref}

\end{document}